\let\coloneqq\relax
\definecolor{myrefcolor}{rgb}{0.067,0.5,0.5}
\newcolumntype{x}[1]{>{\centering\arraybackslash}p{#1}}
\def\clearthms#1{ \@for\tname:=#1\do{\cleartheorem\tname} }
\newtheorem*{thm*}{Theorem}
\newtheorem{thm}{Theorem}
\newtheorem{prop}[thm]{Proposition}
\newtheorem*{prop*}{Proposition}
\newtheorem{lemma}[thm]{Lemma}
\newtheorem*{lemma*}{Lemma}
\newtheorem{cor}[thm]{Corollary}
\newtheorem*{cor*}{Corollary}
\newtheorem*{cj*}{Conjecture}
\newtheorem{Def}[thm]{Definition}
\newtheorem*{Def*}{Definition}
\newtheorem{problem}[thm]{Problem}
\newtheorem{remark}[thm]{Remark}
\def\thmhead@plain#1#2#3{%
  \thmname{#1}\thmnumber{\@ifnotempty{#1}{ }\@upn{#2}}%
  \thmnote{ {\the\thm@notefont#3}}}
\let\thmhead\thmhead@plain
\theoremstyle{definition}
\newtheorem{rem}[thm]{Remark}
\newcommand{\bb}{\begin{equation}\begin{aligned}\hspace{0pt}}
\newcommand{\be}{\begin{equation}\begin{aligned}\hspace{0pt}}
\newcommand{\bbb}{\begin{equation*}\begin{aligned}}
\newcommand{\ee}{\end{aligned}\end{equation}}
\newcommand{\eee}{\end{aligned}\end{equation*}}
\newcommand*{\coloneqq}{\mathrel{\vcenter{\baselineskip0.5ex \lineskiplimit0pt \hbox{\scriptsize.}\hbox{\scriptsize.}}} =}
\newcommand{\eqt}[1]{\stackrel{\mathclap{\scriptsize \mbox{#1}}}{=}}
\newcommand{\leqt}[1]{\stackrel{\mathclap{\scriptsize \mbox{#1}}}{\leq}}
\newcommand{\ketbra}[1]{\ket{#1}\!\!\bra{#1}}
\newcommand{\ketbraa}[2]{\ket{#1}\!\!\bra{#2}}
\newcommand{\sumno}{\sum\nolimits}
\newcommand{\prodno}{\prod\nolimits}
\newcommand{\G}{\mathrm{\scriptscriptstyle G}}
\newcommand{\id}{\mathds{1}}
\newcommand{\R}{\mathds{R}}
\newcommand{\N}{\mathds{N}}
\newcommand{\C}{\mathds{C}}
\DeclareMathOperator{\Tr}{Tr}
\DeclareMathAlphabet{\pazocal}{OMS}{zplm}{m}{n}
\DeclareMathOperator{\Id}{Id}
\DeclareMathOperator{\diag}{diag}
\newcommand{\NN}{\mathcal{N}}
\newcommand{\lsmatrix}{\left(\begin{smallmatrix}}
\newcommand{\rsmatrix}{\end{smallmatrix}\right)}
\newcommand*\rel@kern[1]{\kern#1\dimexpr\macc@kerna}
\newcommand*\widebar[1]{%
  \begingroup
  \def\mathaccent##1##2{%
    \rel@kern{0.8}%
    \overline{\rel@kern{-0.8}\macc@nucleus\rel@kern{0.2}}%
    \rel@kern{-0.2}%
  }%
  \macc@depth\@ne
  \let\math@bgroup\@empty \let\math@egroup\macc@set@skewchar
  \mathsurround\z@ \frozen@everymath{\mathgroup\macc@group\relax}%
  \macc@set@skewchar\relax
  \let\mathaccentV\macc@nested@a
  \macc@nested@a\relax111{#1}%
  \endgroup
}
\newcommand{\fakepart}[1]{
 \par\refstepcounter{part}
  \sectionmark{#1}
}
\tikzset{meter/.append style={draw, inner sep=10, rectangle, font=\vphantom{A}, minimum width=30, line width=.8, path picture={\draw[black] ([shift={(.1,.3)}]path picture bounding box.south west) to[bend left=50] ([shift={(-.1,.3)}]path picture bounding box.south east);\draw[black,-latex] ([shift={(0,.1)}]path picture bounding box.south) -- ([shift={(.3,-.1)}]path picture bounding box.north);}}}
\tikzset{roundnode/.append style={circle, draw=black, fill=gray!20, thick, minimum size=10mm}}
\tikzset{squarenode/.style={rectangle, draw=black, fill=none, thick, minimum size=10mm}}
\definecolor{Blues5seq1}{RGB}{239,243,255}
\definecolor{Blues5seq2}{RGB}{189,215,231}
\definecolor{Blues5seq3}{RGB}{107,174,214}
\definecolor{Blues5seq4}{RGB}{49,130,189}
\definecolor{Blues5seq5}{RGB}{8,81,156}
\definecolor{Greens5seq1}{RGB}{237,248,233}
\definecolor{Greens5seq2}{RGB}{186,228,179}
\definecolor{Greens5seq3}{RGB}{116,196,118}
\definecolor{Greens5seq4}{RGB}{49,163,84}
\definecolor{Greens5seq5}{RGB}{0,109,44}
\definecolor{Reds5seq1}{RGB}{254,229,217}
\definecolor{Reds5seq2}{RGB}{252,174,145}
\definecolor{Reds5seq3}{RGB}{251,106,74}
\definecolor{Reds5seq4}{RGB}{222,45,38}
\definecolor{Reds5seq5}{RGB}{165,15,21}
\newcommand{\raisemath}[1]{\mathpalette{\raisem@th{#1}}}
\newcommand{\raisem@th}[3]{\raisebox{#1}{$#2#3$}}
\newcommand{\TT}{\pazocal{T}}
\newcommand{\PP}{\pazocal{P}}
\newcommand{\QQ}{\pazocal{Q}}
\pgfplotsset{width=10cm,compat=1.9}
\newcommand{\fu}{\small Dahlem Center for Complex Quantum Systems, Freie Universit\"{a}t Berlin, 14195 Berlin, Germany}
\algrenewcommand\algorithmicrequire{\textbf{Input:}}
\algrenewcommand\algorithmicensure{\textbf{Output:}}
\begin{document}

\title{Optimal estimates of trace distance between bosonic Gaussian states and applications to learning}
\vspace{-1cm}
\author{Lennart Bittel}
\email{bittel@hhu.de}
\affiliation{\fu}

\author{Francesco Anna Mele}
\email{francesco.mele@sns.it}
\affiliation{NEST, Scuola Normale Superiore and Istituto Nanoscienze, Piazza dei Cavalieri 7, IT-56126 Pisa, Italy}

\author{Antonio Anna Mele}
\email{a.mele@fu-berlin.de}
\affiliation{\fu}

\author{Salvatore Tirone}
\email{s.tirone@uva.nl}
\affiliation{QuSoft, Science Park 123, 1098 XG Amsterdam, the Netherlands}
\affiliation{Korteweg--de Vries Institute for Mathematics, University of Amsterdam, Science Park 105-107, 1098 XG Amsterdam, the Netherlands}

\author{Ludovico Lami}
\email{ludovico.lami@gmail.com}
\affiliation{Scuola Normale Superiore, Piazza dei Cavalieri 7, 56126 Pisa, Italy}
\affiliation{QuSoft, Science Park 123, 1098 XG Amsterdam, the Netherlands}
\affiliation{Korteweg--de Vries Institute for Mathematics, University of Amsterdam, Science Park 105-107, 1098 XG Amsterdam, the Netherlands}
\affiliation{Institute for Theoretical Physics, University of Amsterdam, Science Park 904, 1098 XH Amsterdam, the Netherlands}

\begin{abstract}
Gaussian states of bosonic quantum systems enjoy several technological applications and are ubiquitous in nature. Their significance lies in their simplicity, which in turn rests on the fact that they are uniquely determined by two experimentally accessible quantities, their first and second moments. But what if these moments are only known \emph{approximately}, as is inevitable in any realistic experiment? What is the resulting error on the Gaussian state itself, as measured by the most operationally meaningful metric for distinguishing quantum states, namely, the trace distance? In this work, we fully resolve this question by demonstrating that if the first and second moments are known up to an error $\varepsilon$, the trace distance error on the state also scales as $\varepsilon$, and this functional dependence is optimal. To prove this, we establish tight bounds on the trace distance between two Gaussian states in terms of the norm distance of their first and second moments. As an application, we improve existing bounds on the sample complexity of tomography of Gaussian states. In our analysis, we introduce the general notion of derivative of a Gaussian state and uncover its fundamental properties, enhancing our understanding of the structure of the set of Gaussian states.
\end{abstract}

\maketitle

\let\oldaddcontentsline\addcontentsline
\renewcommand{\addcontentsline}[3]{}
\fakepart{Main text}

\noindent \textbf{\em Introduction.}--- Continuous variable (CV) systems~\cite{HOLEVO-CHANNELS-2,BUCCO,weedbrook12,CERF,adesso14,BARNETT-RADMORE}, such as quantum optical systems, play a key role in our quantum technologies, including quantum computation~\cite{Lloyd1999,Gottesman2001,Menicucci2006,Mirrahimi_2014,Ofek_nature2016,error_corr_boson,Guillaud_2019,alexander2024manufacturable,brenner2024factoringintegeroscillatorsqubit}, communication~\cite{Wolf2007,TGW,PLOB,Die-Hard-2-PRL,mele2023maximum, mele2024quantum,Die-Hard-2-PRA,mele2023optical}, and sensing~\cite{SGravi2,PhysRevLett.121.160502, PhysRevLett.86.5870,q_sensing_cv,CVCompressed}.
Additionally, CV systems have become increasingly popular in efforts to demonstrate quantum advantage, particularly through boson sampling~\cite{Borealis,Zhong_2020,SupremacyReview} and quantum simulation experiments~\cite{QuantumPhotoThermodynamics}. 
Among all CV quantum states, \emph{Gaussian states}~\cite{BUCCO} are arguably among the most important ones. 
Indeed, on the one hand Gaussian states are the most common class of states arising in Nature, while on the other they are relatively simple to analyse mathematically.

Gaussian states are the quantum analog of Gaussian probability distributions. Just as a Gaussian probability distribution is uniquely identified by its classical first moment and its classical covariance matrix, a Gaussian state is uniquely identified by its \emph{first moment} $\mathbf{m}$ and its \emph{covariance matrix} $V$~\cite{BUCCO}. Importantly, these quantities can be experimentally estimated through methods such as homodyne or heterodyne detection~\cite{BUCCO}, which are routinely performed in quantum optical laboratories. Thus, estimating a Gaussian state requires only the estimation of its first moment and covariance matrix. In practice, however, we can never estimate the first moment and the covariance matrix \emph{exactly}; instead, we will always have a non-zero error $\varepsilon$ in such estimates, meaning that we can only estimate a Gaussian state up to some error. The goal of this work is to understand how an estimation error $\varepsilon$ in the first moment and the covariance matrix of a Gaussian state propagates into the overall error in estimating the Gaussian state itself:
\begin{problem}[(Informal version)]\label{Problem1}
    Let $\rho(V,\mathbf{m})$ be the Gaussian state with first moment $\mathbf{m}$ and covariance matrix $V$. Let $\tilde{V}$ and $\tilde{\mathbf{m}}$ be $\varepsilon$-approximations of $V$ and $\mathbf{m}$, respectively:
    \bb
\tilde{\mathbf{m}}\approx_\varepsilon\mathbf{m}\,,\qquad \tilde{V}\approx_\varepsilon V\,.
    \ee
    The Gaussian state $\rho(\tilde{V},\tilde{\mathbf{m}})$ with first moment $\tilde{\mathbf{m}}$ and covariance matrix $\tilde{V}$ constitutes an approximation of $\rho(V,\mathbf{m})$:
    \bb\label{eq_approx}
        \rho(\tilde{V},\tilde{\mathbf{m}})\approx\rho(V,\mathbf{m})\,.
    \ee 
    The problem is to understand what is the error incurred in the approximation in Eq.~\eqref{eq_approx} in terms of the error $\varepsilon$ incurred in the approximations of the first moment and of the covariance matrix.
\end{problem}
How can we measure the error involved in estimating a quantum state? The most operationally meaningful approach is to define this error using the \emph{trace distance} between the true state and the estimated state, as established by the well-known Holevo--Helstrom theorem~\cite{HELSTROM, Holevo1976}. This theorem gives the trace distance a strong operational interpretation, identifying it as the appropriate metric for quantifying the probability of error in discriminating $\rho$ and $\sigma$ with quantum measurements~\cite{HELSTROM, Holevo1976}. The trace distance is ubiquitous in quantum information theory, playing a central role in evaluating the performance of numerous quantum algorithms~\cite{NC,MARK,Sumeet_book,WATROUS}. Mathematically, the trace distance between two quantum states $\rho$ and $\sigma$ is given by $\frac12\|\rho-\sigma\|_1$, where $\|\Theta\|_1=\Tr\sqrt{\Theta^\dagger\Theta}$ denotes the \emph{trace norm} of a linear operator $\Theta$. 

Given the operational interpretation of the trace distance, it is thus a fundamental problem to determine what the error incurred in the trace distance is when one estimates the first moment and the covariance matrix of an (unknown) Gaussian state up to a precision $\varepsilon$. Equivalently, it is a fundamental problem to estimate the trace distance between two Gaussian states in terms of the norm distances between their first moments and their covariance matrices:

\begin{problem}[(Formal version of Problem~\ref{Problem1})]\label{prob_2}
    Let $\rho(V,\mathbf{m})$ be the Gaussian state with first moment $\mathbf{m}$ and covariance matrix $V$. Similarly, let $\rho(\tilde{V},\tilde{\mathbf{m}})$ be the Gaussian state with first moment $\tilde{\mathbf{m}}$ and covariance matrix $\tilde{V}$. The problem is to determine stringent bounds on the trace distance  
    \bb
        \frac12\left\|\rho(V,\mathbf{m})-\rho(\tilde{V},\tilde{\mathbf{m}})\right\|_1
    \ee
    between $\rho(V,\mathbf{m})$ and $\rho(\tilde{V},\tilde{\mathbf{m}})$ in terms of the norm distance between their first moments $\| \mathbf{m}-\tilde{\mathbf{m}}\|$ and the norm distance between their covariance matrices $\| V-\tilde{V}\|$. (Note that, since the first moments and covariance matrices are finite-dimensional matrices, the choice of the norm $\|\cdot\|$ used to measure distances between them is not crucial, as all norms are equivalent in finite dimensions.)
\end{problem}
   

Deriving such bounds is of independent interest, as 
observed also by Alexander S.\ Holevo in~\cite{holevo2024estimatestracenormdistancequantum, holevo2024estimatesburesdistancebosonic}. Indeed, they can be useful in analysing tomography of Gaussian states~\cite{mele2024learningquantumstatescontinuous}, characterising quantum Gaussian observables~\cite{holevo2024estimatestracenormdistancequantum}, testing properties of Gaussian systems~\cite{bittel2024optimalestimatestracedistance}, and learning Gaussian processes~\cite{Haah_2023}. 
Problem~\ref{prob_2} 
was initially posed 
in Ref.~\cite{mele2024learningquantumstatescontinuous}, where some preliminary yet unsatisfactory results were also obtained. Recently, some further progress has been made by Holevo~\cite{holevo2024estimatestracenormdistancequantum}. However, neither the results in~\cite{mele2024learningquantumstatescontinuous} nor those in~\cite{holevo2024estimatestracenormdistancequantum} could fully resolve the problem of what a tight upper bound on the trace distance might look like.

In this work, we provide a complete solution to this problem, deriving tight bounds on the trace distance between two Gaussian states based on the norm distance of their first moments and covariance matrices. Our bounds demonstrate that estimating the first moment and covariance matrix of a Gaussian state with precision $\varepsilon$ results in a trace distance error that scales linearly with $\varepsilon$ (and vice versa). We expect that these new trace distance bounds will serve as an important technical tool in advancing the field of quantum optics and CV quantum information. 

To date, CV researchers have primarily relied on fidelity~\cite{Banchi_2015} rather than trace distance to estimate the distance between two Gaussian states. This preference stems from the availability of a well-known closed formula for the fidelity between Gaussian states~\cite{Banchi_2015}. However, this approach has two significant drawbacks: (i)~fidelity lacks the strong operational interpretation that trace distance offers, and (ii)~the fidelity formula is cumbersome, making it challenging to use in analytical calculations. In contrast, the trace distance bounds introduced in this work are both operationally meaningful and analytically tractable, providing a practical alternative to the widely used fidelity formula~\cite{Banchi_2015} for quantifying the distance between Gaussian states.

As an application of our bounds, we establish a tighter sample complexity bound for Gaussian state tomography, outperforming 
current state-of-the-art result~\cite{mele2024learningquantumstatescontinuous}. This contributes to the rapidly growing literature on quantum learning theory with CV systems~\cite{mele2024learningquantumstatescontinuous, anshu2023survey, aolita_reliable_2015, gandhari_precision_2023, becker_classical_2023, oh2024entanglementenabled, fawzi2024optimalfidelityestimationbinary, möbus2023dissipationenabledbosonichamiltonianlearning,upreti2024efficientquantumstateverification, fanizza2024efficienthamiltonianstructuretrace}.

In our analysis, we rigorously define and discover fundamental properties of the \emph{derivative} of a Gaussian state with respect to its first moment and covariance matrix. Remarkably, we demonstrate that this derivative can be expressed through a relatively simple formula, offering new insights into the underlying structure of the geometry of the set of Gaussian states. We believe that this result will have broad applications in quantum metrology with CV systems~\cite{PhysRevA.98.012114, Giova_metro1, giova_metro2, BUCCO}. In fact, the computation of state derivatives is a central component in the analysis of quantum metrology protocols, essentially because the celebrated quantum Cramér–Rao bound~\cite{BUCCO} inherently relies on state derivatives. Until now, derivatives of Gaussian states in the context of quantum metrology have been computed only for very specific examples~\cite{ref_der_mark_1, ref_der_mark_2, ref_der_mark_3}. In contrast, our results apply to arbitrary Gaussian states, providing new tools that can advance the field of quantum metrology.

\medskip \noindent 
\textbf{\em Prior work.\,}--- Let us recap the recent rapid advances around the solution of Problem~\ref{prob_2}. First, this problem was posed in~\cite{mele2024learningquantumstatescontinuous}, where some upper and lower bounds on the trace distance between Gaussian states in terms of the norm distance between the first moments and covariance matrices were established. Specifically, the bounds proved in~\cite{mele2024learningquantumstatescontinuous} guarantee that an $\varepsilon$-error in the moments leads to a trace distance error that scales at least as $\varepsilon$ and at most as $\sqrt{\varepsilon}$~\cite{mele2024learningquantumstatescontinuous}. Later, Holevo elegantly proved a new trace distance upper bound that can outperform the one in~\cite{mele2024learningquantumstatescontinuous} in some regimes~\cite{holevo2024estimatestracenormdistancequantum}. However, Holevo's new upper bound still leads to a trace distance error that scales as $\sqrt{\varepsilon}$~\cite{holevo2024estimatestracenormdistancequantum} in most cases of practical interest. This suboptimal error scaling has serious drawbacks, as it can lead to a significant overestimation of the performance of quantum algorithms involving Gaussian states, especially in the context of quantum learning~\cite{mele2024learningquantumstatescontinuous}.  Given the central role of Gaussian states in optical quantum technologies, addressing this limitation and determining the optimal error dependence is thus a crucial, pressing problem. In this work, we improve on this dependence and fully resolve the problem, establishing that the trace distance error must scale as $\varepsilon$.

One might hope that a good trace distance upper bound could be derived by looking at the fidelity instead, which can be computed explicitly for Gaussian states~\cite{Banchi_2015}. However, this approach ultimately fails because the fidelity formula derived in~\cite{Banchi_2015} is too complicated to yield a simple bound in terms of the differences between the moments. Also, it would be much worse than our bound due to the square root loss in the Fuchs--van de Graaf inequality~\cite{NC,MARK,WATROUS}.

In the fermionic setting, tight bounds have been recently derived for the trace distance between two fermionic Gaussian states~\cite{bittel2024optimaltracedistanceboundsfreefermionic}. However, while the latter can in principle be computed numerically (for system with a small number of fermionic modes), the trace distance between two bosonic Gaussian states is more challenging to compute numerically (even for a single-mode system), as doing that would require the diagonalisation of an infinite-dimensional matrix. This underscores the critical importance of deriving such trace distance bounds in the bosonic setting, where they are even more essential than in the fermionic case and arguably considerably more challenging to establish due to the inherent difficulties of working with infinite-dimensional quantum systems.

\medskip \noindent 
\textbf{\em Notation.\,}--- By definition, a CV system with $n$ \emph{modes} is represented by the Hilbert space $L^2(\mathbb R^n)$, which comprises all square-integrable complex-valued functions over $\R^n$. The first moment $\mathbf{m}$ and the covariance matrix $V$ of a quantum state $\rho$ are defined as $\mathbf{m}\coloneqq \Tr\big[\mathbf{\hat{R}}\,\rho\big]$ and $V\coloneqq\Tr\big[\big\{\mathbf{(\hat{R}-m\,\hat{\mathbb{1}}),(\hat{R}-m\,\hat{\mathbb{1}})}^{\raisebox{-1.9pt}{\scriptsize $\intercal$}}\big\}\rho\big]$, where $(\cdot)^\intercal$ denotes the transpose operation, $\{\cdot,\cdot\}$ represents the anti-commutator, and $\mathbf{\hat{R}}\coloneqq (\hat{x}_1,\hat{p}_1,\dots,\hat{x}_n,\hat{p}_n)^{\intercal}$ is the quadrature operator vector, with $\hat{x}_1,\hat{p}_1,\dots,\hat{x}_n,\hat{p}_n$ being the well-known position and momentum operators of each mode~\cite{BUCCO}. The quadrature operator vector satisfies the canonical commutation relation $[\mathbf{\hat{R}},\mathbf{\hat{R}}^{\intercal}]=i\,\Omega\,\mathbb{\hat{1}}$, where $\Omega\coloneqq\bigoplus_{i=1}^n \lsmatrix 0&1 \\ -1&0 \rsmatrix$. 

The set of $n$-mode Gaussian states is in one-to-one correspondence with the set of pairs $(V,\mathbf{m})$, where $V$ is a $2n\times 2n$ real matrix satisfying the \emph{uncertainty relation} $V+i\Omega \ge 0$, and $\mathbf{m}$ is a $2n$-dimensional real vector~\cite{BUCCO}. Specifically, given such a pair $(V,\mathbf{m})$ with $V$ satisfying the uncertainty relation, there exists a unique Gaussian state with covariance matrix $V$ and first moment $\mathbf{m}$~\cite{BUCCO}. Conversely, the covariance matrix of any (Gaussian) state satisfies the uncertainty relation~\cite{BUCCO}. For the rest of the paper, we will denote as $\rho(V,\mathbf{m})$ the Gaussian state with covariance matrix $V$ and first moment $\mathbf{m}$.

Given a matrix $V$, we write $|V|\coloneqq \sqrt{V^\dagger V}$ 
for its modulus, $\|V\|_\infty$ 
for its operator norm (i.e.~the maximum singular value of $V$), $\|V\|_1\coloneqq\Tr|V|$ 
for its trace norm, and $\|V\|_2\coloneqq \sqrt{\Tr [V^\dagger V]}$ 
for its Hilbert--Schmidt norm. The Euclidean norm of a vector $\mathbf{m}$ 
is denoted as $\|\mathbf{m}\|_2\coloneqq \sqrt{\mathbf{m}^\intercal\mathbf{m}}$.

\medskip \noindent 
\textbf{\em Tight upper bound.\,}--- The following theorem establishes a tight upper bound on the trace distance between two Gaussian states in terms of the norm distances between their first moments and covariance matrices.

\begin{thm}[(Tight upper bound on the trace distance between Gaussian states)]\label{thm_main000}
    Let $\rho(V,\mathbf{m})$ be the Gaussian state with first moment $\mathbf{m}$ and covariance matrix $V$. Similarly, let $\rho(W,\mathbf{t})$ be the Gaussian state with first moment $\mathbf{t}$ and covariance matrix $W$. Then, their trace distance can be upper bounded as follows:
    \bb
    \label{eq_ineq_main}
        \frac{1}{2}\left\|\rho(V,\mathbf{m})\!-\!\rho(W,\mathbf{t})\right\|_1 &\le \tfrac{1+\sqrt{3}}{8}\Tr\!\left[|V\!-\!W|\,\Omega^\intercal\!\!\left(\tfrac{V+W}{2}\right)\!\Omega\right] \\
        &\ + \sqrt{\tfrac{\min(\|V\|_\infty,\|W\|_\infty)}{2}}\, \|\mathbf{m}\!-\!\mathbf{t}\|_2\, .
    \ee
\end{thm}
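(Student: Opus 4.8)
The plan is to control the trace distance by separating the two error sources — the first moment and the covariance matrix — via the triangle inequality through an intermediate Gaussian state,
\[
\tfrac12\|\rho(V,\mathbf m)-\rho(W,\mathbf t)\|_1 \le \tfrac12\|\rho(V,\mathbf m)-\rho(W,\mathbf m)\|_1 + \tfrac12\|\rho(W,\mathbf m)-\rho(W,\mathbf t)\|_1 ,
\]
and handle the two legs independently. The key point that yields the $\min(\|V\|_\infty,\|W\|_\infty)$ is that the covariance leg is \emph{independent} of the common first moment (displacing both states by the same vector is a trace-distance-preserving unitary), so I may run the splitting in either order and keep the smaller displacement bound. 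For the first-moment leg I would write $\rho(W,\mathbf t)=D\,\rho(W,\mathbf m)\,D^\dagger$ for the displacement $D$ shifting the mean by $\mathbf t-\mathbf m$, express the difference as the integral of $\tfrac{d}{ds}\rho(W,\mathbf m+s(\mathbf t-\mathbf m))=-i[\hat H,\,\cdot\,]$ with $\hat H$ linear in the quadratures, and apply the elementary estimate $\|[\hat H,\sigma]\|_1\le 2\sqrt{\mathrm{Var}_\sigma(\hat H)}$ (Hölder after centring $\hat H$). Since the variance of a linear quadrature observable in a Gaussian state is governed by $\|W\|_\infty$, this reproduces exactly $\sqrt{\min(\|V\|_\infty,\|W\|_\infty)/2}\,\|\mathbf m-\mathbf t\|_2$.

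The covariance leg is the hard part. Here I would interpolate linearly, $V_s=(1-s)V+sW$, which stays a valid covariance matrix by convexity of the uncertainty cone, and bound $\tfrac12\|\rho(V)-\rho(W)\|_1\le\tfrac12\int_0^1\|\tfrac{d}{ds}\rho(V_s)\|_1\,ds$. The first technical step is a closed formula for the derivative of a zero-mean Gaussian state in its covariance matrix: working with the characteristic function $\chi(\xi)=e^{-\frac14\xi^\intercal\Omega V\Omega^\intercal\xi}$ and using that multiplication of $\chi$ by $\xi_a$ corresponds (up to convention-dependent factors) to the commutator $[\hat{R}_a,\,\cdot\,]$, I would obtain $\frac{\partial\rho}{\partial V}[G]=-\tfrac14\sum_{ab}(\Omega G\Omega^\intercal)_{ab}[\hat{R}_a,[\hat{R}_b,\rho]]$. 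The whole theorem then follows from the per-step estimate $\|\frac{\partial\rho}{\partial V}[G]\|_1\le\tfrac{1+\sqrt3}{4}\Tr[|G|\,\Omega^\intercal V\Omega]$, since $\int_0^1 V_s\,ds=\tfrac{V+W}{2}$ reproduces the midpoint and the constant $\tfrac{1+\sqrt3}{8}$.

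To prove the per-step estimate I would diagonalise the direction, $G=\sum_k g_k e_k e_k^\intercal$, turning the derivative into $\sum_k g_k[\hat Q_k,[\hat Q_k,\rho]]$ with $\hat Q_k=(\Omega e_k)^\intercal\hat{R}$ a single quadrature satisfying $\langle \hat Q_k^2\rangle_\rho=\tfrac12 e_k^\intercal\Omega^\intercal V\Omega e_k$. The triangle inequality then reduces everything to the single-quadrature bound $\|[\hat Q,[\hat Q,\rho]]\|_1\le 2(1+\sqrt3)\,\langle \hat Q^2\rangle_\rho$. This sub-bound is invariant under the full symplectic group, so I may standardise $\hat Q=\hat{x}_1$, decouple mode $1$ by a symplectic map fixing $\hat{x}_1$, and bring the reduced single-mode state to thermal form (tracking the quadrature, which by covariance and scaling stays a single quadrature), reducing the claim to the scalar inequality $\|[\hat{x},[\hat{x},\tau_\nu]]\|_1\le(1+\sqrt3)\,\nu$ for thermal states $\tau_\nu$, $\nu\ge1$. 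A direct Fock-basis computation shows $[\hat{x},[\hat{x},\tau_\nu]]$ is block-tridiagonal (coupling $|n\rangle$ to $|n\rangle$ and $|n\pm2\rangle$); at the extremal point $\nu=1$ (vacuum) the nonzero part is the entry $-1$ on $|1\rangle$ together with the $2\times2$ block $\left(\begin{smallmatrix}1&1/\sqrt2\\ 1/\sqrt2&0\end{smallmatrix}\right)$ on $\{|0\rangle,|2\rangle\}$, whose eigenvalues $\tfrac{1\pm\sqrt3}{2}$ give trace norm exactly $1+\sqrt3$, saturating the bound.

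The main obstacle is exactly this last step: proving that the vacuum $\nu=1$ is the \emph{global} maximiser of $\|[\hat{x},[\hat{x},\tau_\nu]]\|_1/\nu$ over all $\nu\ge1$, which requires controlling the trace norm (sum of absolute eigenvalues) of the infinite tridiagonal matrix uniformly in $\nu$, not merely evaluating it at the extremum. Beyond this, the infinite-dimensional setting demands care at several points that I would need to secure rigorously: finiteness of the trace norms of the unbounded double-commutator operators, trace-norm differentiability of $s\mapsto\rho(V_s)$ together with justification of the integral bound, and a continuity/limiting argument to reach pure (boundary) states where the covariance matrix becomes singular.
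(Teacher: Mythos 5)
Your overall architecture coincides with the paper's: the same triangle-inequality split through $\rho(W,\mathbf m)$, the same linear interpolation $V_s=(1-s)V+sW$, the same compact double-commutator formula for the derivative, and the same per-step estimate $\|\partial_{X,0}\rho(V,0)\|_1\le\frac{1+\sqrt3}{4}\Tr[|X|\Omega^\intercal V\Omega]$ as the engine of the proof. Your first-moment leg is a genuinely different but valid route: instead of purifying via the Gaussian noise channel, using data processing, and invoking the exact pure-state overlap formula as the paper does, you integrate the generator $i\sum_j\tilde x_j[\hat R_j,\rho]$ and use $\|[\hat H,\sigma]\|_1\le 2\sqrt{\mathrm{Var}_\sigma(\hat H)}$; this reproduces $\sqrt{\|W\|_\infty/2}\,\|\mathbf m-\mathbf t\|_2$ with the correct constant.

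The gap is in the covariance leg, at exactly the point you flag. Your reduction of the per-step estimate to the scalar inequality $\|[\hat x,[\hat x,\tau_\nu]]\|_1\le(1+\sqrt3)\nu$ leaves two things unproven. First, the extremal computation at $\nu=1$ only shows the inequality is \emph{saturated} at the vacuum; it does not show the vacuum is the global maximiser of $\|[\hat x,[\hat x,\tau_\nu]]\|_1/\nu$ over $\nu\ge1$, and controlling the trace norm of the infinite band matrix uniformly in $\nu$ is the entire difficulty. Second, the reduction from a correlated multimode Gaussian state to a single-mode thermal state is asserted, not established: a symplectic map that fixes $\hat x_1$ while simultaneously decoupling mode $1$ from the rest does not exist for generic covariance matrices (the constraint that the first row of $S$ be $e_1^\intercal$ is incompatible with block-diagonalising an arbitrary $V$), and $\|[\hat x_1,[\hat x_1,\rho]]\|_1$ is not a functional of the reduced state on mode $1$ alone. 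The paper sidesteps both issues entirely: it splits $\sum_{ij}\tilde X_{ij}[\hat R_i,[\hat R_j,\rho]]$ into the anticommutator piece $\hat{\mathbf R}^\intercal\tilde X\hat{\mathbf R}\rho+\rho\,\hat{\mathbf R}^\intercal\tilde X\hat{\mathbf R}$ and the piece $2\sum_{ij}\tilde X_{ij}\hat R_i\rho\hat R_j$, bounds the first by $\sqrt{\Tr[(\hat{\mathbf R}^\intercal\tilde X\hat{\mathbf R})^2\rho]}$ via Cauchy--Schwarz together with an exact fourth-moment formula for Gaussian states, and the second by Hölder after diagonalising $\tilde X$; the constant $\frac{1+\sqrt3}{4}$ then arises as $\frac{\sqrt3}{4}+\frac14$ with no extremality argument over thermal states needed. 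To complete your route you would need either a direct proof of the $\nu$-uniform scalar bound for arbitrary (not necessarily thermal, not necessarily decoupled) Gaussian marginal data, or to switch to an argument of the paper's type. The functional-analytic points you list at the end (trace-norm differentiability, validity of the integral representation) are also nontrivial and occupy a dedicated section of the paper, but they are technical rather than conceptual obstacles.
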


To prove the above result, we develop a new mathematical framework based on the notion of \emph{derivative of a Gaussian state}, which can be of independent interest (see e.g.~\cite{huang2024informationgeometrybosonicgaussian}), especially for the field of quantum metrology~\cite{BUCCO,Giova_metro1,giova_metro2}. See Section~\ref{Sec_proof_derivative} of the Supplemental Material (SM)~\cite{SM}.

Crucially, H\"older's inequality implies that the term $\Tr\!\left[|V-W|\Omega^\intercal\left(\frac{V+W}{2}\right)\Omega\right]$ in~\eqref{eq_ineq_main} can be further upper bounded either by \( \max(\|V\|_\infty,\|W\|_\infty) \|V-W\|_1 \), or by $\max(\Tr V, \Tr W) \|V-W\|_\infty $. This is important because it implies that the upper bound in Theorem~\ref{thm_main000} provides a linear dependence on the norm of $V-W$, improving upon the square root dependence previously established in~\cite{mele2024learningquantumstatescontinuous} and~\cite{holevo2024estimatestracenormdistancequantum}.  Remarkably, the lower bound presented in~\cite[Eq.~S265]{mele2024learningquantumstatescontinuous} also exhibits the same linear dependence as our new upper bound. Consequently, it follows that our new bound provides a complete solution to Problem~\ref{prob_2}. This fully answers a fundamental question for the field of quantum optics and CV quantum information~\cite{mele2024learningquantumstatescontinuous}, 
demonstrating that if the first moment and the covariance matrix of a quantum Gaussian state are known within an error $\varepsilon$, the associated trace distance 
uncertainty scales linearly with $\varepsilon$ (and this functional dependence is optimal).

Notably, the bound in Theorem~\ref{thm_main000} is \emph{tight} with respect to its dependence on $\varepsilon\coloneqq \|V-W\|_\infty$ and $a\coloneqq \max(\|V\|_\infty,\|W\|_\infty)$. Specifically, this bound establishes that $\frac12\|\rho(V,0)-\rho(W,0)\|_1\le \frac{1+\sqrt{3}}{8}a\varepsilon$. As proved in Lemma~\ref{lemma_tight} in the SM, the trace distance between the two Gaussian states with covariance matrices \( V \coloneqq \lsmatrix a & 0 \\ 0 & a^{-1} \rsmatrix \) and \( W\coloneqq \lsmatrix a & 0 \\ 0 & a^{-1} + \varepsilon \rsmatrix \) and zero first moments satisfies 
\bb
    \lim\limits_{\varepsilon\rightarrow0^+}\frac{\frac{1}{2}\|\rho(W,0)-\rho(V,0)\|_1}{\varepsilon}=\frac{1+\sqrt{3}}{8}a\quad  \!\!\forall a\ge 1\,,
\ee
establishing that the constant $\frac{1+\sqrt{3}}{8}$ in~\eqref{eq_ineq_main} is optimal.
We can achieve have optimality also for the first moment dependence as all approximation in \eqref{step_same_cov} can be made tight for coherent states.

\medskip
\noindent \textbf{\em Application to quantum state tomography.\,}--- As a direct application of Theorem~\ref{thm_main000}, we can provide an upper bound on the sample complexity of tomography of Gaussian states~\cite{mele2024learningquantumstatescontinuous} that outperforms the one provided in~\cite{mele2024learningquantumstatescontinuous}. Let us start by briefly introducing the fundamental problem of tomography of Gaussian states~\cite{mele2024learningquantumstatescontinuous}.
\begin{problem}[(Tomography of Gaussian states)]\label{prob1}
Given $N$ copies of an unknown $n$-mode Gaussian state $\rho$ satisfying the energy constraint  $\Tr[\rho\hat{E}] \leq E$, the goal is to construct a classical description of a Gaussian state $\tilde{\rho}$ such that  
\bb
    \Pr\left(\frac{1}{2}\|\tilde{\rho} - \rho\|_1 \leq \varepsilon\right) \geq 1 - \delta\,.
\ee
In words, $\tilde{\rho}$ must be such that the probability that it is $\varepsilon$-close to $\rho$ (in trace distance) is not smaller than $1-\delta$. Here, $\hat{E}\coloneqq \frac12\hat{\mathbf{R}}^\intercal \hat{\mathbf{R}}$ denotes the energy operator~\cite{BUCCO}.
\end{problem}
Given the number of modes $n$, the energy constraint $E$, the trace distance error $\varepsilon$, and the failure probability $\delta$, the \emph{sample complexity} of tomography of Gaussian states is defined as the minimum number of copies $N$ that allows one to solve Problem~\ref{prob1}. In the following theorem, we find an upper bound on such a sample complexity.

\begin{thm}[(Upper bound on the sample complexity of tomography of energy-constrained Gaussian states)]
\label{th:main2}
Let $\varepsilon, \delta \in (0,1]$.
Let \(\rho\) be an unknown \(n\)-mode Gaussian state with (unknown) covariance matrix $V$ and arbitrary first moment, satisfying the energy constraint \(\Tr[\rho \hat{E}] \leq E\). A number 
\begin{align}
    N &= O\!\left(\frac{(\|V\|_1\|V\|_{\infty})^2}{\varepsilon^2} \left( n + \log(2/\delta) \right) \right) \\&\le O\!\left(\frac{E^4}{\varepsilon^2} \left( n + \log(2/\delta) \right) \right) 
\end{align}
of copies of \(\rho\) suffices to construct a classical description of a Gaussian state \(\tilde{\rho}\) such that $\Pr\left( \frac{1}{2}\|\tilde{\rho} - \rho\|_1 \leq \varepsilon \right) \geq 1 - \delta$.
\end{thm}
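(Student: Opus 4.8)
The plan is to reduce the quantum tomography task to a classical estimation problem and then convert the resulting errors into a trace-distance guarantee via Theorem~\ref{thm_main000}. Concretely, I would measure each of the $N$ copies of $\rho$ by \emph{heterodyne detection}: for a Gaussian state $\rho(V,\mathbf m)$ these measurements return i.i.d.\ outcomes drawn from a classical Gaussian distribution on $\R^{2n}$ with mean $\mathbf m$ and covariance matrix $V+\id$, the added identity being the unavoidable vacuum noise. The quantum problem thus collapses to the classical one of estimating the mean and covariance of a multivariate Gaussian from $N$ samples. I would take $\tilde{\mathbf m}$ to be the empirical mean and set $\tilde V\coloneqq \hat\Sigma-\id$, where $\hat\Sigma$ is the empirical covariance matrix, and output $\tilde\rho\coloneqq\rho(\tilde V,\tilde{\mathbf m})$.

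The second step is to control the two estimation errors with concentration inequalities. The empirical mean of a Gaussian with covariance $V+\id$ obeys $\|\tilde{\mathbf m}-\mathbf m\|_2^2\lesssim (\Tr V+n+\|V\|_\infty\log(1/\delta))/N$ with probability at least $1-\delta/2$. The more delicate bound is on the operator norm of the centred sample covariance of Gaussian data, for which standard matrix-concentration results yield
\bb
\|\tilde V-V\|_\infty=\|\hat\Sigma-(V+\id)\|_\infty\lesssim \|V+\id\|_\infty\left(\sqrt{\tfrac{n+\log(1/\delta)}{N}}+\tfrac{n+\log(1/\delta)}{N}\right)
\ee
with probability at least $1-\delta/2$, the leading square-root term dominating in the relevant regime. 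A union bound makes both estimates hold simultaneously with probability at least $1-\delta$.

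The third step invokes Theorem~\ref{thm_main000} together with the H\"older bound $\Tr[|V-\tilde V|\,\Omega^\intercal(\tfrac{V+\tilde V}{2})\Omega]\le \max(\Tr V,\Tr\tilde V)\,\|V-\tilde V\|_\infty$ recorded after it, giving
\bb
\tfrac12\|\tilde\rho-\rho\|_1\lesssim \max(\Tr V,\Tr\tilde V)\,\|V-\tilde V\|_\infty+\sqrt{\|V\|_\infty}\,\|\tilde{\mathbf m}-\mathbf m\|_2\,.
\ee
Using $\Tr\tilde V\approx\Tr V=\|V\|_1$ and $\|V+\id\|_\infty\le 2\|V\|_\infty$ (legitimate because physical covariance matrices satisfy $\|V\|_\infty\ge1$), the covariance contribution scales as $\|V\|_1\|V\|_\infty\sqrt{(n+\log(1/\delta))/N}$, while the first-moment contribution is of smaller order. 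Setting the right-hand side equal to $\varepsilon$ and solving for $N$ reproduces $N=O\!\big((\|V\|_1\|V\|_\infty)^2\varepsilon^{-2}(n+\log(2/\delta))\big)$. The $E^4$ bound then follows from the energy constraint: since $V\ge0$, expanding $\Tr[\rho\hat E]=\tfrac14\Tr V+\tfrac12\|\mathbf m\|_2^2\le E$ gives $\|V\|_1=\Tr V\le 4E$ and $\|V\|_\infty\le\Tr V\le 4E$, whence $(\|V\|_1\|V\|_\infty)^2=O(E^4)$.

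The main obstacle is twofold. First, the operator-norm concentration of the Gaussian sample covariance must be proved with the correct effective-dimension dependence $n$, rather than with a cruder Hilbert--Schmidt or ambient-dimension estimate that would be too lossy; establishing this with the right constants is where most of the technical effort lies, via a matrix-concentration or net argument. Second, the raw estimator $\tilde V$ need not satisfy the uncertainty relation $\tilde V+i\Omega\ge0$, so $\rho(\tilde V,\tilde{\mathbf m})$ may fail to be a genuine Gaussian state; I would therefore insert a correction step replacing $\tilde V$ by the nearest physical covariance matrix (e.g.\ by slightly inflating its symplectic eigenvalues to meet the constraint) and check that this enlarges $\|\tilde V-V\|_\infty$ by at most a constant factor, leaving the final scaling intact.
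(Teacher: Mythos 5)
Your proposal follows essentially the same route as the paper's proof: heterodyne detection reduces the task to classical Gaussian mean/covariance estimation, standard concentration bounds control both errors, and Theorem~\ref{thm_main000} in its H\"older form ($\max(\Tr V,\Tr\tilde V)\,\|V-\tilde V\|_\infty$ plus the first-moment term) converts them into a trace-distance guarantee, with the energy constraint giving $\Tr V\le 4E$. The only cosmetic differences are that the paper's Husimi covariance is $(V+\id)/2$ rather than $V+\id$ (a convention factor that does not affect the scaling) and that the paper enforces the uncertainty relation by simply adding $\tfrac{\varepsilon}{2}\id$ to the raw estimator rather than inflating symplectic eigenvalues.
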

The proof is provided in Theorem~\ref{th:appmain2} and Corollary~\ref{cor:energy_constraint} in the SM~\cite{SM}, where explicit sample-complexity bounds (without big-$O$ notation) are provided. 
The above Theorem~\ref{th:main2} improves upon the upper bound on the sample complexity of tomography of Gaussian states presented in Ref.~\cite{mele2024learningquantumstatescontinuous}, which was $N = O\!\left(\frac{n^3E^4}{\varepsilon^4}\right)$. The improvement is evident in both the scaling with the number of modes $n$ and the scaling with the trace distance error $\varepsilon$, ultimately attributed to our trace distance bound in Theorem~\ref{thm_main000}.

\medskip
\noindent \textbf{\em Discussion and conclusions.\,}--- In this work, we have presented novel and optimal bounds on the trace distance between bosonic Gaussian states in terms of the norm difference of their first moments and covariance matrices. These bounds allow one to understand how errors in estimating the first moment and covariance matrix of a Gaussian state propagate to the trace distance error, which is widely regarded as the most operationally meaningful measure of error in the estimation of a quantum state. This result fully resolves an open question posed in~\cite{mele2024learningquantumstatescontinuous} and subsequently explored by Holevo~\cite{holevo2024estimatestracenormdistancequantum}. Our analysis can be viewed as a quantum analog of the estimates of the total variation distance between Gaussian probability distributions in the classical probability literature~\cite{Barsov1987,devroye2023total,arbas2023polynomial}, which play a significant role in machine learning applications~\cite{arbas2023polynomial}.

As an application of our new trace distance bounds, we have improved the existing bounds on the sample complexity of tomography of Gaussian states~\cite{mele2024learningquantumstatescontinuous}.
This result contributes to the rapidly growing literature on quantum learning theory and state learning~\cite{anshu2023survey, odonnell2015efficient, Haah_2017, kueng2014low, chen2023does, guta2018fast, fanizza2023learning, huang2024learning, arunachalam2023optimal, montanaro2017learning, grewal2023efficient, leone2023learning, hangleiter2024bell, aaronson2023efficient, mele2024efficient}, particularly at its interface with CV systems, which have been relatively unexplored until recently~\cite{aolita_reliable_2015, gandhari_precision_2023, becker_classical_2023, oh2024entanglementenabled, fawzi2024optimalfidelityestimationbinary, möbus2023dissipationenabledbosonichamiltonianlearning,upreti2024efficientquantumstateverification, fanizza2024efficienthamiltonianstructuretrace}. Since Gaussian states 
play a major role both theoretically and experimentally~\cite{BUCCO}, we believe that our results on their learnability 
hold both fundamental 
as well as practical value.

Finally, we have introduced the notion of derivative of Gaussian states and uncovered its fundamental properties, deepening our understanding of the structure of the set of Gaussian states. We believe that these findings will have applications in quantum metrology and sensing~\cite{BUCCO,Giova_metro1,giova_metro2}. In fact, since the first posting of this paper on arXiv, progress in this research direction has already begun~\cite{Mark_der}.

As additional future research directions, we propose extending our results to derive analogous bounds for the distance between Gaussian channels~\cite{BUCCO} and investigating their learnability~\cite{anshu2023survey, Haah_2023}. Additionally, applying our trace-distance bounds to other learning tasks, such as property testing of bosonic Gaussian states~\cite{anshu2023survey, bittel2024optimaltracedistanceboundsfreefermionic}, offers a compelling avenue for exploration. Together with our findings and the potential applications of Gaussian state derivatives in quantum metrology and sensing, these directions present exciting opportunities to further enhance the role of CV systems in advancing quantum technologies.

\begin{acknowledgments}
\smallskip
\noindent \emph{Acknowledgements.} We thank Salvatore F.~E. Oliviero, Lorenzo Leone, Jonathan Conrad, and Vittorio Giovannetti for helpful discussions. We thank Marco Fanizza for help with the sample complexity analysis. L.B. and A.A.M. have been been supported by the BMWK (EniQmA), BMBF (FermiQP, MuniQCAtoms, DAQC), the ERC (DebuQC) and the DFG (CRC 183). F.A.M. acknowledge financial support by MUR (Ministero dell'Istruzione, dell'Universit\`a e della Ricerca) through the following projects: PNRR MUR project PE0000023-NQSTI, PRIN 2017 Taming complexity via Quantum Strategies: a Hybrid Integrated Photonic approach (QUSHIP) Id.\ 2017SRN-BRK, and project PRO3 Quantum Pathfinder. LL acknowledges support from MIUR (Ministero dell'Istruzione, dell'Universit\`{a} e della Ricerca) through the project `Dipartimenti di Eccellenza 2023--2027' of the `Classe di Scienze' department at the Scuola Normale Superiore. 

\smallskip
\noindent \emph{Note.}  Concurrently with our work, another study~\cite{fanizza2024efficienthamiltonianstructuretrace} demonstrates a method for learning Gaussian states in trace distance, achieving a quadratic scaling in precision and a polynomial dependence on the number of modes by leveraging a newly introduced trace distance bound for Gaussian states, which holds under certain additional restrictions on the class of Gaussian states considered compared to our work.
\end{acknowledgments}

\medskip


\bibliographystyle{apsrev4-2}
\bibliography{biblio}

\appendix
\noindent \textbf{\em Technical tools.}---
To establish our trace distance bounds, we introduce numerous technical tools that may hold independent interest for the field of quantum optics, CV quantum information, and quantum metrology. Among these tools, we introduce the concept of the \emph{derivative of a Gaussian state} (see Section~\ref{Sec_Der}). Specifically, for an \( n \)-mode Gaussian state \( \rho(V, \mathbf{m}) \) with covariance matrix \( V \) and first moment \( \mathbf{m} \), and given a \( 2n \times 2n \) real symmetric matrix \( X \) and a \( 2n \)-dimensional real vector \( \mathbf{x} \), we define the derivative of \( \rho(V, \mathbf{m}) \) in the direction \( (X, \mathbf{x}) \) as the following limit of incremental ratios:
\bb
    &\frac{\mathrm{d}}{\mathrm{d}\alpha}\rho(V+\alpha X,\textbf{m}+\alpha \textbf{x})\Bigg|_{\alpha=0}\\
    &\,\coloneqq \lim_{\alpha \to 0^+} 
        \frac{\rho(V + \alpha X, \mathbf{m} + \alpha \mathbf{x}) 
        - \rho(V, \mathbf{m})}{\alpha}\,.
\ee
The existence of such a limit is rigorously proved in Section~\ref{sec:proof_diff} in the SM from a functional analysis perspective. Notably, the following lemma, proved in Theorem~\ref{lemma_der_state} in the SM, demonstrates that the derivative of a Gaussian state can be expressed with a relatively simple formula:
\begin{lemma}[(Compact formula for the derivative of a Gaussian state)]
\label{lemma_formula_comp}
    The derivative of the Gaussian state $\rho(V,\mathbf{m})$ along the direction $(X,\mathbf{x})$ can be expressed as:
    \begin{align}\label{eq_der_main_end}
        &\frac{\mathrm{d}}{\mathrm{d}\alpha}\rho(V+\alpha X,\textbf{m}+\alpha \textbf{x})\Bigg|_{\alpha=0}\\
        &\, = -\frac{1}{4} \sum_{k,j=1}^{2n} \tilde{X}_{kj} 
        [\hat{R}_k, [\hat{R}_j, \rho(V, \mathbf{m})]]  + i \sum_{j=1}^{2n} \tilde{x}_j 
        [\hat{R}_j, \rho(V, \mathbf{m})],
        \nonumber
    \end{align}
    where $\tilde{X} \coloneqq \Omega^\intercal X \Omega$, 
    $\tilde{\mathbf{x}} \coloneqq \Omega \mathbf{x}$, and $\hat{R}_j$ denotes the $j$th quadrature operator.
\end{lemma}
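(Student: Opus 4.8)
The plan is to prove the identity at the level of \emph{characteristic functions}, exploiting the fact that a Gaussian state is uniquely determined by its Weyl characteristic function $\chi_\rho(\mathbf{r}) \coloneqq \Tr[\rho\, e^{i\hat{\mathbf{R}}^\intercal \Omega \mathbf{r}}]$, which for the Gaussian state takes the explicit form $\chi_{\rho(V,\mathbf{m})}(\mathbf{r}) = \exp\!\big(-\tfrac14 \mathbf{r}^\intercal \Omega^\intercal V \Omega\, \mathbf{r} + i\,\mathbf{m}^\intercal\Omega\,\mathbf{r}\big)$. Since the map $A \mapsto \chi_A$ is injective on trace-class operators and continuous (indeed $|\chi_A(\mathbf{r})| \le \|A\|_1$ for every $\mathbf{r}$), it suffices to show that the two sides of Eq.~\eqref{eq_der_main_end} have the same characteristic function. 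I will assume, as established separately in Section~\ref{sec:proof_diff} of the SM, that the derivative on the left-hand side exists as a limit in trace norm; continuity of $\chi$ then guarantees that $\chi$ of the derivative equals the pointwise $\alpha$-derivative of $\chi_{\rho(V+\alpha X,\mathbf{m}+\alpha\mathbf{x})}$.

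First I would differentiate the Gaussian characteristic function directly. Substituting $V\to V+\alpha X$ and $\mathbf{m}\to\mathbf{m}+\alpha\mathbf{x}$ and differentiating at $\alpha=0$ yields a quadratic-times-Gaussian,
\begin{align}
\frac{\mathrm{d}}{\mathrm{d}\alpha}\chi_{\rho(V+\alpha X,\mathbf{m}+\alpha\mathbf{x})}(\mathbf{r})\Big|_{\alpha=0} = \Big(\!-\tfrac14 \mathbf{r}^\intercal \Omega^\intercal X \Omega\,\mathbf{r} + i\,\mathbf{x}^\intercal\Omega\,\mathbf{r}\Big)\,\chi_{\rho(V,\mathbf{m})}(\mathbf{r}). \nonumber
\end{align}
The second step is to set up the dictionary that turns operator commutators into polynomial multipliers of $\chi$. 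From the Weyl relation $\hat{R}_j\, e^{i\hat{\mathbf{R}}^\intercal\Omega\mathbf{r}} = e^{i\hat{\mathbf{R}}^\intercal\Omega\mathbf{r}}(\hat{R}_j + r_j)$ one obtains $[\hat{R}_j, e^{i\hat{\mathbf{R}}^\intercal\Omega\mathbf{r}}] = r_j\, e^{i\hat{\mathbf{R}}^\intercal\Omega\mathbf{r}}$, and hence, using cyclicity of the trace, $\chi_{[\hat{R}_j,\rho]}(\mathbf{r}) = -r_j\,\chi_\rho(\mathbf{r})$ and, iterating, $\chi_{[\hat{R}_k,[\hat{R}_j,\rho]]}(\mathbf{r}) = r_k r_j\,\chi_\rho(\mathbf{r})$.

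With this dictionary in hand, the characteristic function of the right-hand side of Eq.~\eqref{eq_der_main_end} is immediate: it equals $\big(\!-\tfrac14 \mathbf{r}^\intercal \tilde{X}\,\mathbf{r} - i\,\tilde{\mathbf{x}}^\intercal \mathbf{r}\big)\chi_{\rho(V,\mathbf{m})}(\mathbf{r})$. Substituting $\tilde{X} = \Omega^\intercal X \Omega$ and $\tilde{\mathbf{x}} = \Omega\mathbf{x}$, and using $\tilde{\mathbf{x}}^\intercal\mathbf{r} = \mathbf{x}^\intercal\Omega^\intercal\mathbf{r} = -\mathbf{x}^\intercal\Omega\mathbf{r}$, this reduces exactly to the quadratic-times-Gaussian computed above. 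The whole point of the conjugation $\tilde{X} = \Omega^\intercal X\Omega$ and the rotation $\tilde{\mathbf{x}} = \Omega\mathbf{x}$ is to absorb the symplectic factors $\Omega$ appearing in $\chi_{\rho(V,\mathbf{m})}$; matching coefficients then pins down both the constant $-\tfrac14$ and the factor $i$. Injectivity of $A\mapsto\chi_A$ finally promotes the equality of characteristic functions to the claimed operator identity.

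The main obstacle is analytic rather than algebraic: justifying that the pointwise $\alpha$-derivative of $\chi$ is genuinely the characteristic function of a trace-class operator, and that this operator coincides with the trace-norm derivative of the state. Granting the existence result of Section~\ref{sec:proof_diff}, the cleanest route is to invoke continuity of $\chi$ on the trace class to interchange $\chi$ with the trace-norm limit; separately, one must check that the double commutators $[\hat{R}_k,[\hat{R}_j,\rho(V,\mathbf{m})]]$ are themselves trace class, which follows because $\chi_{\rho(V,\mathbf{m})}$ is a Schwartz function (a Gaussian), so multiplying it by polynomials keeps it in Schwartz class and the associated operators trace class. The remaining work—tracking the $\Omega$ factors and signs consistently through the chosen conventions—is routine once the dictionary is established.
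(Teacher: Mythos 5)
Your proof is correct, and it is essentially the mirror image of the paper's argument. The paper (Theorem~\ref{lemma_der_state}) starts from the Fourier--Weyl representation of $\rho(V+\alpha X,\mathbf{m}+\alpha\mathbf{x})$, differentiates the Gaussian characteristic function under the integral (justified via Theorem~\ref{thm:gauss_regularity}), and then identifies the resulting integrals $\int \frac{\mathrm{d}\mathbf{r}}{(2\pi)^n}\, r_i r_j\, \chi_\rho(\mathbf{r})\, \hat{D}_{\mathbf{r}}^\dagger$ with nested commutators via Lemma~\ref{lemma_comm_char}; you instead compute the characteristic functions of both sides and invoke injectivity of $A\mapsto\chi_A$ on trace class. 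Since the Fourier--Weyl transform is a bijection, these are logically the same step run in opposite directions, and both rest on the identical ingredients: the explicit $\alpha$-derivative of the Gaussian $\chi$, the commutator identity $[\hat{R}_j,\hat{D}_{\mathbf{r}}]=r_j\hat{D}_{\mathbf{r}}$ (the paper's Lemma~\ref{identity_commutator}), the trace-norm differentiability from Section~\ref{sec:proof_diff}, and the Schwartz-operator machinery to ensure the commutators $[\hat{R}_k,[\hat{R}_j,\rho]]$ are trace class. Your version buys a small simplification: the bound $|\chi_{A}(\mathbf{r})-\chi_{B}(\mathbf{r})|\le\|A-B\|_1$ lets you pass the trace-norm limit through $\chi$ pointwise, so you never need to interchange a limit with the Weyl integral (the paper's step~(i)); the price is that you must separately verify trace-classness of the right-hand side before injectivity applies, whereas the paper gets this for free by constructing the right-hand side as a Weyl transform of a Schwartz function. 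Your sign bookkeeping ($\chi_{[\hat{R}_j,\rho]}=-r_j\chi_\rho$, hence $\chi_{[\hat{R}_k,[\hat{R}_j,\rho]]}=r_kr_j\chi_\rho$, and $-i\tilde{\mathbf{x}}^\intercal\mathbf{r}=+i\,\mathbf{x}^\intercal\Omega\mathbf{r}$) checks out against the paper's conventions.
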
 
Such a compact formula is crucial for the proof of our trace-distance bounds and allows to shed light on the geometry of set of Gaussian states. Remarkably, in the following lemma, proved in Theorem~\ref{exact_der} in the SM, we show how to calculate the trace norm of the derivative of a pure Gaussian state \emph{exactly} in terms of its covariance matrix:
\begin{lemma}[(Closed formula for the trace norm of the derivative of a pure Gaussian state)]\label{exact_der_main}
Let $\rho(V,\textbf{m})$ be a pure Gaussian state with a (pure) covariance matrix $V$ and first moment $\textbf{m}$. Then, the trace norm of the derivative of the Gaussian state $\rho(V,\textbf{m})$ along the direction $(X,0)$ is given by 
	\bb
	&\left\|  \frac{\mathrm{d}}{\mathrm{d}\alpha}\rho(V+\alpha X,\textbf{m})\Bigg|_{\alpha=0} \right\|_1\\&= \frac14\sqrt{ (\Tr[\tilde{X}V])^2+2\Tr[\tilde{X}V\tilde{X}V]+2\Tr[\Omega \tilde{X}\Omega \tilde{X}]
 }\\&\quad+\frac14\|\sqrt{V+i\Omega}\tilde{X}\sqrt{V+i\Omega}\|_1\,,
	\ee
	where $\tilde{X}\coloneqq \Omega X\Omega^\intercal$.
\end{lemma}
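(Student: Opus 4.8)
The plan is to start from the compact formula of Lemma~\ref{lemma_formula_comp}. Setting $\mathbf{x}=0$ there (and noting that its $\tilde X=\Omega^\intercal X\Omega$ coincides with the present $\tilde X=\Omega X\Omega^\intercal$, since $\Omega^\intercal=-\Omega$), the derivative is the Hermitian operator $D\coloneqq -\tfrac14\sum_{k,j}\tilde X_{kj}[\hat R_k,[\hat R_j,\rho]]$, where $\rho=\rho(V,\mathbf{m})=\ketbra{\psi}$ is pure. Expanding the double commutator and using $\tilde X=\tilde X^\intercal$, I would rewrite $D=A+B$ with
\be
    A \coloneqq -\tfrac14\big(\hat Q\rho+\rho\hat Q\big)\,,\qquad B \coloneqq \tfrac12\sum_{k,j}\tilde X_{kj}\,\hat R_k\,\rho\,\hat R_j\,,
\ee
where $\hat Q\coloneqq\sum_{k,j}\tilde X_{kj}\hat R_k\hat R_j$ is the quadratic Hamiltonian associated with $\tilde X$.

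The conceptual heart of the argument is a Fock-space grading that forces $A$ and $B$ onto orthogonal subspaces, so that their trace norms add. Let $\hat b_1,\dots,\hat b_n$ be the annihilation operators for which $\ket{\psi}$ is the vacuum, $\hat b_i\ket{\psi}=0$, and grade the Hilbert space by the number of $\hat b$-excitations. Since each $\hat R_k$ is a linear combination of the $\hat b_i,\hat b_i^\dagger$, the vectors $\hat R_k\ket{\psi}$ lie entirely in the one-excitation sector, whereas $\hat Q\ket{\psi}$ lies in the (zero $\oplus$ two)-excitation sector. Hence the ranges — and, both operators being Hermitian, also the co-ranges — of $A$ and $B$ are orthogonal, so $A^\dagger B=0$ and $\lvert D\rvert=\lvert A\rvert+\lvert B\rvert$, giving $\|D\|_1=\|A\|_1+\|B\|_1$.

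It then remains to evaluate the two pieces. Writing $A=-\tfrac14(\ketbraa{\phi}{\psi}+\ketbraa{\psi}{\phi})$ with $\ket{\phi}\coloneqq\hat Q\ket{\psi}$, I would diagonalise the resulting $2\times 2$ matrix in the plane spanned by $\ket{\psi}$ and the normalised two-excitation component of $\ket{\phi}$, obtaining $\|A\|_1=\tfrac12\sqrt{\braket{\psi|\hat Q^2|\psi}}$. The fourth moment is computed by Wick's theorem with the Gaussian contraction $G_{ab}=\braket{\psi|\hat R_a\hat R_b|\psi}=\tfrac12 V_{ab}+\tfrac{i}{2}\Omega_{ab}$: the disconnected term yields $\tfrac14(\Tr[\tilde X V])^2$ and the connected terms yield $\tfrac12\Tr[\tilde X V\tilde X V]+\tfrac12\Tr[\Omega\tilde X\Omega\tilde X]$, the single-$\Omega$ cross terms vanishing by $\tilde X=\tilde X^\intercal$ and $\Omega=-\Omega^\intercal$; this reproduces the first term of the claim. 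For $B$, I would write $B=\tfrac12\,T\tilde X T^\dagger$ with $T\coloneqq\sum_k \hat R_k\ket{\psi}\bra{k}$, whose Gram operator is $T^\dagger T=G=\tfrac12(V+i\Omega)\ge 0$; taking a polar decomposition $T=U\sqrt{G}$ and using isometric invariance of the trace norm gives $\|B\|_1=\tfrac12\|\sqrt{G}\,\tilde X\sqrt{G}\|_1=\tfrac14\|\sqrt{V+i\Omega}\,\tilde X\sqrt{V+i\Omega}\|_1$, exactly the second term.

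I expect the main obstacle to be the orthogonal-support decomposition $\|D\|_1=\|A\|_1+\|B\|_1$ in the infinite-dimensional setting: one must verify that all vectors and operators involved are genuinely trace-class — which relies on the rapid decay of the Fock coefficients of a Gaussian state — and that $A^\dagger B=0$ together with the orthogonality of the two co-ranges. A secondary subtlety is the identification of $\|B\|_1$, where the Gram operator $G=\tfrac12(V+i\Omega)$ is in general only rank $n$ (hence non-invertible), so the polar-decomposition step must be carried out on $\operatorname{range}(\sqrt{G})$. By contrast, once the contraction $G$ is in hand the Wick computation for $A$ is routine.
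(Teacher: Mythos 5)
Your proposal follows essentially the same route as the paper's proof (Theorem~\ref{exact_der} in the SM): the same splitting of the compact-formula derivative into an anticommutator piece and a $\sum_{kj}\tilde{X}_{kj}\hat{R}_k\rho\hat{R}_j$ piece, orthogonality of their supports so that the trace norms add, a rank-two computation for the first piece, and the Gram-matrix/polar-decomposition trick for the second. Your two local variations are harmless and equivalent to the paper's: the Fock-grading argument for orthogonality is just a repackaging of the paper's appeal to vanishing first and third moments (Lemma~\ref{lemma_3_moment}), and the Wick evaluation of $\braket{\psi|\hat{Q}^2|\psi}$ reproduces the paper's Lemma~\ref{lemma_xx_p_gauss} in the zero-mean case.

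One step does fail as literally written: you keep a general first moment $\mathbf{m}$ throughout, but every subsequent claim silently assumes $\mathbf{m}=0$. If $\mathbf{m}\neq 0$, the quadratures expressed in the $\hat{b}$-modes acquire constant shifts, $\hat{R}_k = (\text{linear in }\hat{b},\hat{b}^\dagger) + m_k\hat{\mathbb{1}}$, so $\hat{R}_k\ket{\psi}$ has a zero-excitation component and the grading argument for $A^\dagger B=0$ breaks; likewise the Gram operator becomes $\tfrac12(V+i\Omega)+\mathbf{m}\mathbf{m}^\intercal$ and the fourth moment picks up the $\mathbf{m}$-dependent terms visible in Lemma~\ref{lemma_xx_p_gauss}. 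The fix is one line and should be stated: since the direction is $(X,0)$, displacement covariance gives $\partial_{X,0}\rho(V,\mathbf{m})=\hat{D}_{\mathbf{m}}\big(\partial_{X,0}\rho(V,0)\big)\hat{D}_{\mathbf{m}}^\dagger$, and unitary invariance of the trace norm reduces everything to $\mathbf{m}=0$ — which is exactly the setting in which the paper's SM proof is carried out. With that reduction made explicit, your argument is complete.
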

In the general case of \emph{mixed} Gaussian states, we show that the trace norm of the derivative can be bounded in terms of the covariance matrix as follows (see Lemma~\ref{lemma_upp_mixed} in the SM):
\bb
    \left\|  \frac{\mathrm{d}}{\mathrm{d}\alpha}\rho(V+\alpha X,\textbf{m})\Bigg|_{\alpha=0} \right\|_1\leq \frac{1+\sqrt{3}}{4}\Tr[|X|\Omega^\intercal V\Omega]\,,
\ee
which is crucial for the proof of our main result in Theorem~\ref{thm_main000}.

Previous studies have explored the derivatives of specific examples of Gaussian states (e.g.,~\cite{ref_der_mark_1,ref_der_mark_2,ref_der_mark_3}). In contrast, our work introduces a comprehensive framework for defining the derivative of Gaussian states and establishes a general formula applicable to any Gaussian state (see Lemma~\ref{lemma_formula_comp}). Additionally, another work~\cite{Mark_der}, posted after the first version of this work, establishes a formula for the derivative
of a Gaussian state with respect to its first moment and its Hamiltonian matrix~\cite{Mark_der}, which is related to our analysis.

As additional technical contributions of independent interest, in Section~\ref{Sec_proof_LL} in the SM we discover new properties of the \emph{Gaussian noise channel}~\cite[Chapter 5]{BUCCO}, which is widely recognised as one of the most important Gaussian channels. Moreover, we derive an exact formula for the second moment of a quadratic observable on a Gaussian state: 
\begin{lemma}[(Second moment of a quadratic observable on a Gaussian state)]\label{lemma_xx_p_gauss_main}
	Let $\rho(V,\mathbf{m})$ be an $n$-mode Gaussian state with covariance matrix $V$ and first moment $\mathbf{m}$. Let $X$ be an $2n\times2n$ real symmetric matrix. Then, it holds that
	\begin{align}
	&\Tr\!\left[ (\hat{\mathbf{R}}^\intercal X \hat{\mathbf{R}})^2\, \rho(V,\mathbf{m}) \right]\\&  =  \left(\frac{1}{2}\Tr[VX]+\mathbf{m}^\intercal X\mathbf{m}\right)^2 + \frac{1}{2}\Tr[XVXV] \\&\quad+ \frac{1}{2}\Tr[\Omega X\Omega X]+2\mathbf{m}^\intercal XVX\mathbf{m} \,.
\end{align}
\end{lemma}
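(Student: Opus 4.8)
The plan is to reduce this quartic expectation to Gaussian second moments via the (quantum) Wick/Isserlis theorem, after centering the quadratures. First I would write $\hat{\mathbf{R}} = \hat{\mathbf{r}} + \mathbf{m}\hat{\id}$ with $\hat{\mathbf{r}} \coloneqq \hat{\mathbf{R}} - \mathbf{m}\hat{\id}$ the centered quadratures, which have vanishing first moment in $\rho(V,\mathbf{m})$. Since $X$ is symmetric, this expands the observable as $\hat{\mathbf{R}}^\intercal X \hat{\mathbf{R}} = A + B + c$ with $A \coloneqq \hat{\mathbf{r}}^\intercal X \hat{\mathbf{r}}$, $B \coloneqq 2\,\mathbf{m}^\intercal X \hat{\mathbf{r}}$, and the scalar $c \coloneqq \mathbf{m}^\intercal X \mathbf{m}$. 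Writing $\langle\,\cdot\,\rangle \coloneqq \Tr[\,\cdot\,\rho(V,\mathbf{m})]$, the key simplification is that $\rho$ centered at $\mathbf{m}$ is invariant under $\hat{\mathbf{r}}\mapsto -\hat{\mathbf{r}}$ (its characteristic function is even), so every odd moment of $\hat{\mathbf{r}}$ vanishes; hence $\langle B\rangle = 0$ and the cubic cross-terms $\langle AB + BA\rangle$ drop out, leaving $\langle(A+B+c)^2\rangle = \langle A^2\rangle + \langle B^2\rangle + 2c\langle A\rangle + c^2$.

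All pieces are built from the single contraction $\langle \hat{r}_j\hat{r}_k\rangle = \tfrac12(V+i\Omega)_{jk} \eqqcolon G_{jk}$, whose symmetric part is the covariance-matrix definition and whose antisymmetric part comes from the canonical commutation relation $[\hat{r}_j,\hat{r}_k]=i\Omega_{jk}$. The linear piece is immediate: with $\mathbf{b}\coloneqq X\mathbf{m}$, $\langle B^2\rangle = 4\,\mathbf{b}^\intercal G\,\mathbf{b} = 2\,\mathbf{m}^\intercal X V X\mathbf{m}$, the $\Omega$-part vanishing by antisymmetry of $\Omega$. Likewise $\langle A\rangle = \sum_{jk}X_{jk}G_{jk} = \tfrac12\Tr[XV]$ (the $\Omega$-part again vanishing since $X=X^\intercal$), so that $c^2 + 2c\langle A\rangle + \langle A\rangle^2 = (\tfrac12\Tr[VX] + \mathbf{m}^\intercal X\mathbf{m})^2$ reproduces the first term of the claim, provided $\langle A^2\rangle$ is shown to equal $\langle A\rangle^2$ plus the two desired trace terms.

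The heart of the computation, and the main obstacle, is the genuine fourth moment $\langle A^2\rangle$. Here I would invoke the quantum Wick theorem, $\langle \hat{r}_j\hat{r}_k\hat{r}_l\hat{r}_m\rangle = G_{jk}G_{lm} + G_{jl}G_{km} + G_{jm}G_{kl}$, with each contraction an order-preserving two-point function; this is the step that must be justified, e.g.\ by differentiating the Gaussian characteristic function or citing the standard bosonic Wick identity. Contracting against $X_{jk}X_{lm}$, the first pairing gives $\big(\sum_{jk}X_{jk}G_{jk}\big)^2 = \langle A\rangle^2$, while the other two pairings both collapse, after relabeling and using $X=X^\intercal$, to the single trace $\Tr[XGXG^\intercal]$, so that $\langle A^2\rangle = \langle A\rangle^2 + 2\Tr[XGXG^\intercal]$. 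The careful bookkeeping of turning paired index sums into matrix products, and then expanding $\Tr[XGXG^\intercal] = \tfrac14\Tr[X(V+i\Omega)X(V-i\Omega)]$, is where the real work lies: the expansion yields $\tfrac14\Tr[XVXV] + \tfrac14\Tr[X\Omega X\Omega]$ together with cross-terms $\pm\tfrac{i}{4}\Tr[X\Omega XV]$, which I would show vanish by transposing and using that $X$ is symmetric and $\Omega$ antisymmetric (forcing $\Tr[X\Omega XV]=-\Tr[X\Omega XV]=0$). This gives $\langle A^2\rangle = \langle A\rangle^2 + \tfrac12\Tr[XVXV] + \tfrac12\Tr[\Omega X\Omega X]$, and assembling this with $\langle B^2\rangle$, $2c\langle A\rangle$, and $c^2$ yields exactly the claimed identity.
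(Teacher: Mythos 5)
Your proof is correct, and it takes a genuinely different route from the paper's. The paper never invokes Wick's theorem directly: it first symmetrises the quartic observable into nested anticommutators, which is exactly where the reordering correction $\frac12\Tr[\Omega X\Omega X]$ appears (Lemma~\ref{lemma_xx_p}), then computes the fully symmetrised fourth moment by fourth-order differentiation of the characteristic function via the anticommutator identity for $\partial_{r_j}\hat{D}_{\mathbf{r}}$ (Lemmas~\ref{lemma_anticomm_dr} and~\ref{lemma_4_moment}), and finally recombines the pieces. Your centering $\hat{\mathbf{R}}=\hat{\mathbf{r}}+\mathbf{m}\hat{\id}$ plus the ordered bosonic Wick identity with the complex two-point function $G=\tfrac12(V+i\Omega)$ packages the symmetric (covariance) and antisymmetric (commutator) data into a single contraction, so the $\Tr[\Omega X\Omega X]$ term emerges automatically from $\Tr[XGXG^\intercal]=\tfrac14\Tr[X(V+i\Omega)X(V-i\Omega)]$ rather than from a separate reordering bookkeeping step; your parity argument for killing the cubic cross-terms plays the role of the paper's vanishing-odd-moments lemma. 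The one step you rightly flag as needing justification is the ordered quantum Wick identity $\langle\hat{r}_j\hat{r}_k\hat{r}_l\hat{r}_m\rangle=G_{jk}G_{lm}+G_{jl}G_{km}+G_{jm}G_{kl}$; this is standard for zero-mean bosonic Gaussian states but is not an immediate consequence of differentiating $\chi_\rho$ (which yields only Weyl-symmetrised moments), so you should either cite it or derive it by combining the symmetrised moments with the canonical commutation relations — which is, in effect, what the paper's Lemmas~\ref{lemma_xx_p} and~\ref{lemma_4_moment} do by hand. Your approach buys a shorter and more transparent computation; the paper's buys self-containedness and reusable intermediate results (the displacement-operator derivative and the symmetrised fourth moment are used elsewhere, e.g.\ in Lemma~\ref{lemma_3_moment} and Theorem~\ref{lemma_der_state}).
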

The proof is detailed in Lemma~\ref{lemma_xx_p_gauss} in the SM. While this formula was previously established for the special case of Gaussian states with vanishing first moments~\cite{Wilde_second_moment}, we present an independent proof that extends its validity to Gaussian states with non-zero first moments.
 
\subsection{Alternative bounds}
\label{sub:alternative}
An improved upper bound, surpassing the one in Theorem~\ref{thm_main000}, can be achieved by applying the \emph{Williamson decomposition} to the covariance matrices~\cite{BUCCO}. Any covariance matrix $V$ can be written in the Williamson decomposition as follows~\cite{BUCCO}:
\bb\label{def_will}
    V=SDS^\intercal\,,
\ee
where $S$ is a \emph{symplectic} matrix, i.e.~a matrix satisfying $S\Omega S^\intercal=\Omega$, and $D$ is a diagonal matrix of the form $D=\bigoplus_{j=1}^n  \left(\begin{matrix}d_j&0\\0&d_j\end{matrix}\right)$, with $d_1,d_2,\ldots , d_n\ge1$.
\begin{thm}[(Improved upper bound)]\label{thm_improvement}
    Let $\rho(V,\mathbf{m}),\rho(W,\mathbf{t})$ be Gaussian states with covariance matrices $V,W$ and first moments $\mathbf{m},\mathbf{t}$, respectively. Moreover, let $V = S_1 D_1 S_1^\intercal$ and $W = S_2 D_2 S_2^\intercal$ be the Williamson decompositions of $V$ and $W$, as in~\eqref{def_will}. Then, the following bound holds:
\bb\label{eq_ineq_main2}
& \frac{1}{2}\left\|\rho(V,\mathbf{m})-\rho(W,\mathbf{t})\right\|_1\\&    
 \quad\le\frac{1+\sqrt{3}}{8}\Tr\!\left[|V-W|\Omega^\intercal\left(\frac{S_1  S_1^\intercal+ S_2 S_2^\intercal}{2}\right)\Omega\right]\\&\quad\quad+\frac{\min(\|S_1\|_\infty,\|S_2\|_\infty)}{\sqrt{2} }\|\mathbf{m}-\mathbf{t}\|_2\,. 
\ee
\end{thm}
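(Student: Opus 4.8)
The plan is to re-run the interpolation argument behind Theorem~\ref{thm_main000}—triangle inequality, followed by integrating the trace norm of the derivative along a straight-line path—but to replace the mixed-state derivative estimate $\|\frac{\mathrm{d}}{\mathrm{d}\alpha}\rho(V+\alpha X,\mathbf{m})|_{\alpha=0}\|_1\le \frac{1+\sqrt3}{4}\Tr[|X|\Omega^\intercal V\Omega]$ by a strictly sharper one that exploits the purity content encoded in the Williamson decomposition. The key structural observation is that, since every symplectic eigenvalue satisfies $d_j\ge 1$, one has $D_1\succeq I$ and hence $V=S_1 D_1 S_1^\intercal \succeq S_1 S_1^\intercal$; the matrix $S_1 S_1^\intercal$ is itself a (pure) covariance matrix, and $Y_1\coloneqq V-S_1 S_1^\intercal = S_1(D_1-I)S_1^\intercal\succeq 0$. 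Therefore $\rho(V,\mathbf{m})$ is the image of the pure state $\rho(S_1 S_1^\intercal,\mathbf{m})$ under the additive Gaussian noise channel $\NN_{Y_1}$, which acts as $A\mapsto A+Y_1$ on covariance matrices and trivially on first moments, and which is CPTP precisely because $Y_1\succeq 0$.

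First I would establish the following improved derivative bound: for any valid covariance matrix $P$ with $P\preceq V$ (so that $Y\coloneqq V-P\succeq 0$) and any symmetric direction $X$,
\bb\label{plan_der}
    \left\|\frac{\mathrm{d}}{\mathrm{d}\alpha}\rho(V+\alpha X,\mathbf{m})\Big|_{\alpha=0}\right\|_1\le \frac{1+\sqrt3}{4}\Tr\!\left[|X|\,\Omega^\intercal P\,\Omega\right]\,.
\ee
Indeed, the noise channel $\NN_Y$ with fixed $Y$ sends $\rho(P+\alpha X,\mathbf{m})\mapsto\rho(V+\alpha X,\mathbf{m})$ for all small $\alpha$; being linear and norm-continuous it commutes with the incremental-ratio limit, so the derivative of $\rho(V+\alpha X,\mathbf{m})$ equals $\NN_Y$ applied to the derivative of $\rho(P+\alpha X,\mathbf{m})$. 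Trace-norm contractivity of the CPTP map $\NN_Y$ then bounds the former by the latter, to which I apply the mixed-state derivative estimate at $P$, yielding \eqref{plan_der}. An entirely analogous argument applied to the first-moment direction—routing through the pure state via $\NN_{Y_1}$ and invoking the displacement-derivative estimate used in the proof of Theorem~\ref{thm_main000}—gives $\|\frac{\mathrm{d}}{\mathrm{d}\alpha}\rho(V,\mathbf{m}+\alpha\mathbf{x})|_{\alpha=0}\|_1\le \sqrt{\|S_1 S_1^\intercal\|_\infty/2}\,\|\mathbf{x}\|_2=(\|S_1\|_\infty/\sqrt2)\|\mathbf{x}\|_2$, where I used $\|S_1 S_1^\intercal\|_\infty=\|S_1\|_\infty^2$.

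Then I would assemble the bound exactly as in Theorem~\ref{thm_main000}. Splitting by the triangle inequality into a covariance change at fixed first moment and a first-moment change at fixed covariance, the covariance part is treated by interpolating $V(\alpha)=(1-\alpha)W+\alpha V$ and choosing as inner states $P(\alpha)\coloneqq(1-\alpha)S_2 S_2^\intercal+\alpha S_1 S_1^\intercal$. Each $P(\alpha)$ is a valid covariance matrix because the uncertainty region $\{A:A+i\Omega\succeq0\}$ is convex, and $P(\alpha)\preceq V(\alpha)$ since $D_j\succeq I$ for both states; moreover $P(\alpha)$ lies in the interior for $\alpha\in(0,1)$, so \eqref{plan_der} applies along the path with $X=V-W$. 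Integrating and using $\int_0^1 P(\alpha)\,\mathrm{d}\alpha=\frac{S_1 S_1^\intercal+S_2 S_2^\intercal}{2}$ produces the first term of \eqref{eq_ineq_main2}; the prefactor $\frac{1+\sqrt3}{8}$ arises from the $\frac12$ of the trace distance. For the first-moment part I route the displacement through whichever of the two states has the smaller $\|S\|_\infty$, which yields the factor $\min(\|S_1\|_\infty,\|S_2\|_\infty)/\sqrt2$.

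The main obstacle I expect is the careful justification of \eqref{plan_der}: one must verify that the additive Gaussian noise channel genuinely implements the covariance shift $P\mapsto P+Y$ while commuting with the derivative (linearity plus continuity of $\NN_Y$), that it is trace-norm contractive as a CPTP map, and that the intermediate matrices $P(\alpha)$ remain admissible covariance matrices along the entire path (convexity of the uncertainty cone) and interior enough for the derivative to be well-defined away from the endpoints. Establishing $V\succeq SS^\intercal$ from $D\succeq I$ and recognising that this slack is exactly what a noise channel can absorb is the conceptual crux; once \eqref{plan_der} is in place, the remaining steps are a verbatim repetition of the computations proving Theorem~\ref{thm_main000}, now evaluated at $P(\alpha)$ rather than $V(\alpha)$.
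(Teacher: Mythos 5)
Your proposal is correct and follows essentially the same route as the paper's proof (Theorem~\ref{thm_improved_sm} in the SM): both exploit $V\succeq S_1S_1^\intercal$, $W\succeq S_2S_2^\intercal$ to write each state along the interpolating path as the Gaussian noise channel $\NN_{K}$ applied to a state with covariance matrix $P(\alpha)=(1-\alpha)S_2S_2^\intercal+\alpha S_1S_1^\intercal$, invoke trace-norm contractivity to transfer the estimate of Lemma~\ref{lemma_upp_mixed} from $V(\alpha)$ to $P(\alpha)$, and integrate, while the first-moment term is routed through the pure core exactly as in \eqref{step_same_cov}. The only (immaterial) difference is that you apply data processing directly to the derivative, whereas the paper applies it to the finite differences of a discretised path before passing to the limit.
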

The proof can be found in Theorem~\ref{thm_improved_sm} in the SM. Note that the bound in Theorem~\ref{thm_improvement} is strictly stronger than the one in Theorem~\ref{thm_main000} since $S_1 S_1^\intercal \le V$ and $S_2 S_2^\intercal \le W$. Furthermore, the improvements are especially significant for highly mixed Gaussian states. 

Additionally, in the forthcoming Lemma~\ref{bound_trace_distance_thm_main_LL}, proved in Section~\ref{Sec_proof_LL} in the SM, we derive an alternative upper bound that outperforms the above bounds in the regime where the moments are sufficiently far apart. Interestingly, this bound is always finite, even when the distance between the covariance matrices is very large.
\begin{thm}[(Alternative upper bound)] \label{bound_trace_distance_thm_main_LL}
Let $\widebar{\phi}: [0,\infty) \to \R^+$ be a concave function such that $\widebar{\phi}(x) \geq \phi(x)$ for all $x\geq 0$, where the function $\phi: [0,\infty) \to [0,1]$ defined by
\bb
\phi(x) \coloneqq \frac{1-e^{-x}}{2} + e^{-x/2} \sqrt{\sinh^2(x/2) + 1 - e^{-\frac{x^2}{1+4x}}}\, .
\ee
    Let $\rho(V,\mathbf{m})$ and $\rho(W,\mathbf{t})$ be Gaussian states with covariance matrices $V,W$ and first moments $\mathbf{t},\mathbf{m}$, respectively. Then, it holds that
\bb
&\frac12 \left\| \rho(V,\mathbf{m}) - \rho(W,\mathbf{t}) \right\|_1 \\ &\quad\leq   2\, \widebar{\phi}\!\left(\frac18 \max\big\{\|V\|_\infty,\,\|W\|_\infty\big\}\, \|V-W\|_1 \right)\\& \qquad+\sqrt{1 - e^{-\frac12 \min\{\|V\|_\infty,\,\|W\|_\infty\}\, \|\mathbf{t}-\mathbf{m}\|_2^2}}\,.
\ee
\end{thm}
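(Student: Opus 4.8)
The plan is to split the two-sided estimate into two one-sided ones via the triangle inequality, and to control the covariance contribution through the additive Gaussian noise channel $\NN_Y$, which acts as $\rho(V,\mathbf{m})\mapsto\rho(V+Y,\mathbf{m})$ for any $Y\ge 0$. Assume without loss of generality that $\|W\|_\infty\le\|V\|_\infty$ and interpolate through $\rho(W,\mathbf{m})$:
\[
\tfrac12\|\rho(V,\mathbf{m})-\rho(W,\mathbf{t})\|_1\le \tfrac12\|\rho(V,\mathbf{m})-\rho(W,\mathbf{m})\|_1+\tfrac12\|\rho(W,\mathbf{m})-\rho(W,\mathbf{t})\|_1 .
\]
The second summand changes only the first moment and will produce the $\sqrt{1-e^{-\cdots}}$ term with $\min\{\|V\|_\infty,\|W\|_\infty\}=\|W\|_\infty$, while the first changes only the covariance matrix and will produce the $2\,\widebar{\phi}(\cdots)$ term. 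Since the covariance estimate will be symmetric in $V,W$, the routing choice matters only for obtaining the $\min$ in the moment term.

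For the first-moment term, two Gaussian states with common covariance $W$ and first moments differing by $\mathbf{d}\coloneqq\mathbf{t}-\mathbf{m}$ are related by a displacement, with fidelity $F=\exp(-\tfrac14\mathbf{d}^\intercal W^{-1}\mathbf{d})$. The Fuchs--van de Graaf inequality then gives $\tfrac12\|\rho(W,\mathbf{m})-\rho(W,\mathbf{t})\|_1\le\sqrt{1-F^2}=\sqrt{1-e^{-\frac12\mathbf{d}^\intercal W^{-1}\mathbf{d}}}$, and it remains to estimate $\mathbf{d}^\intercal W^{-1}\mathbf{d}\le\|W^{-1}\|_\infty\|\mathbf{d}\|_2^2\le\|W\|_\infty\|\mathbf{d}\|_2^2$. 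The last step uses that the uncertainty relation $W+i\Omega\ge 0$ forces $\lambda_{\min}(W)\lambda_{\max}(W)\ge 1$, hence $\|W^{-1}\|_\infty=\lambda_{\min}(W)^{-1}\le\lambda_{\max}(W)=\|W\|_\infty$. This reproduces the claimed second term exactly.

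For the covariance term I would first use unitary invariance of the trace norm under displacements to set the common first moment to $0$. The key device is to reach a single, noisier reference state from both $\rho(V,0)$ and $\rho(W,0)$ by adding classical Gaussian noise: set $\widebar{V}\coloneqq\tfrac{V+W}{2}+\tfrac12|V-W|$, $Y_1\coloneqq\widebar{V}-V$, and $Y_2\coloneqq\widebar{V}-W$. Diagonalising $V-W$ shows that $Y_1,Y_2\ge 0$ are the negative and positive parts of $V-W$, so they are legitimate noise covariances with $\|Y_1\|_1+\|Y_2\|_1=\|V-W\|_1$, and $\NN_{Y_1}(\rho(V,0))=\rho(\widebar{V},0)=\NN_{Y_2}(\rho(W,0))$. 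The triangle inequality through $\tau\coloneqq\rho(\widebar{V},0)$ yields the two ``disturbance'' terms $\tfrac12\|\rho(V,0)-\NN_{Y_1}(\rho(V,0))\|_1$ and $\tfrac12\|\rho(W,0)-\NN_{Y_2}(\rho(W,0))\|_1$. I would bound each by a single-mode estimate of the form $\phi(\tfrac14\|V\|_\infty\|Y_1\|_1)\le\widebar{\phi}(\tfrac14\|V\|_\infty\|Y_1\|_1)$, and then apply concavity of $\widebar{\phi}$ through $\widebar{\phi}(a)+\widebar{\phi}(b)\le 2\,\widebar{\phi}(\tfrac{a+b}{2})$, together with $\|Y_1\|_1+\|Y_2\|_1=\|V-W\|_1$ and $\|V\|_\infty,\|W\|_\infty\le\max\{\|V\|_\infty,\|W\|_\infty\}$, to collapse the two contributions into $2\,\widebar{\phi}\big(\tfrac18\max\{\|V\|_\infty,\|W\|_\infty\}\,\|V-W\|_1\big)$. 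This is exactly where the assumed concavity of $\widebar{\phi}$, and the freedom to replace the (possibly non-concave) $\phi$ by a concave majorant, are used.

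The heart of the argument — and the main obstacle — is the per-channel disturbance bound $\tfrac12\|\rho(V,0)-\NN_Y(\rho(V,0))\|_1\le\phi(\tfrac14\|V\|_\infty\|Y\|_1)$, from which the precise form of $\phi$ must emerge. Writing $\NN_Y$ as an average of displacements $D(\mathbf{r})$ against a centred Gaussian weight of covariance $Y$, using convexity of the trace norm to pass to the average of displacement disturbances $\tfrac12\|\rho(V,0)-D(\mathbf{r})\rho(V,0)D(\mathbf{r})^\dagger\|_1\le\sqrt{1-e^{-\frac12\mathbf{r}^\intercal V^{-1}\mathbf{r}}}$, and applying Jensen already gives a bound of the simpler shape $\sqrt{1-e^{-\frac12\Tr[V^{-1}Y]}}$, with $\Tr[V^{-1}Y]\le\|V^{-1}\|_\infty\|Y\|_1\le\|V\|_\infty\|Y\|_1$ again by the uncertainty relation. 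However, the genuinely quantum contribution captured by the $\sinh^2(x/2)$ and $1-e^{-x^2/(1+4x)}$ terms in $\phi$ is lost at this crude level, so the sharp estimate requires an exact single-mode computation: passing to the Williamson frame of $V$, the worst case reduces to the trace distance between a single-mode thermal state and its noisy image, whose eigenvalues (geometric in the thermal occupation) must be handled directly. I expect the two delicate points to be (i) carrying out this single-mode trace-distance evaluation sharply enough to land on $\phi$ rather than a weaker function, and (ii) controlling the off-diagonal part of $Y$ in the Williamson frame so that only $\|V\|_\infty\|Y\|_1$ survives in the argument; the concave majorant $\widebar{\phi}\ge\phi$ is precisely what makes the final multimode combination step legitimate.
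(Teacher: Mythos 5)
Your overall architecture coincides with the paper's: the same triangle-inequality routing through $\rho(W,\mathbf{m})$, the same meeting point $\bar V$ with $Y_1=(W-V)_+$ and $Y_2=(V-W)_+$ so that $\NN_{Y_1}(\rho(V,0))=\rho(\bar V,0)=\NN_{Y_2}(\rho(W,0))$ and $\Tr Y_1+\Tr Y_2=\|V-W\|_1$, the same use of concavity of $\widebar{\phi}$ to merge the two disturbance terms into $2\widebar{\phi}\big(\tfrac18\max\{\|V\|_\infty,\|W\|_\infty\}\|V-W\|_1\big)$, and an equivalent treatment of the first-moment term (your Fuchs--van de Graaf step applied to the same-covariance fidelity $e^{-\frac14\mathbf{d}^\intercal W^{-1}\mathbf{d}}$ lands on exactly the paper's $\sqrt{1-e^{-\frac12\mathbf{d}^\intercal W^{-1}\mathbf{d}}}$, and your $\|W^{-1}\|_\infty\le\|W\|_\infty$ is the same Williamson/symplectic fact the paper invokes; the paper instead reduces to pure states by channel monotonicity and uses the exact pure-state overlap, but the two are interchangeable here).

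The genuine gap is the per-channel disturbance bound $\tfrac12\|(\NN_K-\Id)(\rho_\G)\|_1\le\widebar{\phi}\big(\tfrac14\|V(\rho_\G)\|_\infty\Tr K\big)$ with the \emph{specific} function $\phi$ of the statement: you correctly flag it as the heart of the argument, but you do not prove it, and the route you sketch would not deliver it. The crude bound you do carry out (convexity over displacements plus Jensen) gives $\sqrt{1-e^{-\frac12\Tr[V^{-1}K]}}$, which scales as $\sqrt{\Tr K}$ for small noise and therefore destroys the linear dependence on $\|V-W\|_1$ that is the entire point of the theorem. Your proposed sharpening --- reduction to ``the trace distance between a single-mode thermal state and its noisy image'' --- is not how $\phi$ arises: in the paper one first reduces to the \emph{vacuum} (a mixed Gaussian state is a mixture of displaced pure ones with covariance $\theta\le V$, and a symplectic unitary maps $\theta$ to $\id$ at the cost $\Tr K\mapsto\Tr[\theta^{-1}K]\le\|V\|_\infty\Tr K$), and then bounds $\tfrac12\|(\NN_K-\Id)(\ketbra{0})\|_1$ by interpolating through $\mu\ketbra{0}$ with $\mu=e^{-\frac14\Tr K}$. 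The terms $\tfrac{1-e^{-x}}{2}$ and $\sinh^2(x/2)$ come from the commuting comparison of $\mu\ketbra{0}$ with the Williamson normal form of $\id+K$, while the term $1-e^{-x^2/(1+4x)}$ comes from a non-commutative contribution, namely a lower bound on the overlap $|\braket{0|U_S|0}|^2=2^n/\sqrt{\det(\id+\gamma)}$ obtained via a matrix-geometric-mean estimate in which the linear terms in $K$ cancel. A single-mode thermal reduction sees none of this (the noisy vacuum $\NN_K(\ketbra{0})$ is a general, possibly squeezed, multimode Gaussian state, not a thermal state in the original frame), so the inequality with this particular $\phi$ remains unsupported in your write-up.
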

By looking at a plot of the function $\phi$, it seems clear that $\phi$ itself is concave. However, proving this explicitly seems quite difficult. If this could be done, then in the above theorem we could take $\widebar{\phi} \coloneqq \phi$.

\let\addcontentsline\oldaddcontentsline

\clearpage
\fakepart{Supplemental Material}

\onecolumngrid
\begin{center}
\vspace*{\baselineskip}
{\textbf{\large Supplemental Material}}\\
\end{center}

\renewcommand{\theequation}{S\arabic{equation}}
\renewcommand{\thethm}{S\arabic{thm}}
\renewcommand{\thefigure}{S\arabic{figure}}
\setcounter{page}{1}
\makeatletter

\setcounter{secnumdepth}{2}

\tableofcontents

The supplemental material is organised as follows. Section~\ref{sub:prelCV} provides essential preliminaries on continuous variable systems necessary for understanding the proofs of our results. In Section~\ref{Sec_proof_derivative}, we present our results on the upper bounds on the trace distance between Gaussian states, leveraging properties of the \emph{derivative of a Gaussian state}. Section~\ref{sec:proof_diff} rigorously defines the derivative of a Gaussian state from a functional analysis perspective, providing the foundation for certain steps used in Section~\ref{Sec_proof_derivative}. Moreover, Section~\ref{Sec_proof_LL} introduces an alternative, insightful method for obtaining upper bounds on the trace distance between Gaussian states by exploring new properties of the Gaussian noise channel, which may be of independent interest. Finally, in Section~\ref{sec_sample} we apply our trace-distance bounds to improve on the existing bounds on the sample complexity of tomography of Gaussian states~\cite{mele2024learningquantumstatescontinuous},

\section{Preliminaries on continuous variable systems}
\label{sub:prelCV}
In this section, we briefly review the basics of quantum information with continuous variable systems~\cite{BUCCO}. By definition, a continuous variable system is a quantum system associated with the Hilbert space $L^2(\R^n)$, which comprises all square-integrable complex-valued functions over $ \R^n$. Since $L^2(\R^n) = (L^2(\R))^{\otimes n}$, a continuous variable system can be viewed as comprising $n$ subsystems, or \emph{modes}. In this setting, $n$ is called the \emph{number of modes}, and a quantum state on $L^2(\R^n)$ is referred to as an \emph{$n$-mode state}.

Let $\hat{x}_1,\hat{p}_1,\dots,\hat{x}_n,\hat{p}_n$ be the well-known position and momentum operators of each mode~\cite{BUCCO}. The \emph{quadrature operator vector} is defined as
\bb
    \mathbf{\hat{R}}\coloneqq (\hat{x}_1,\hat{p}_1,\dots,\hat{x}_n,\hat{p}_n)^{\intercal}
\ee
and it satisfies the following relation, known as \emph{canonical commutation relation}:
\bb
    [\mathbf{\hat{R}},\mathbf{\hat{R}}^{\intercal}]=i\,\Omega\,\mathbb{\hat{1}}\,,
\ee
where 
\bb
    \Omega\coloneqq\bigoplus_{i=1}^n \left(\begin{matrix}0&1\\-1&0\end{matrix}\right)\,.
\ee
The characteristic function $\chi_\Theta: \R^{2n}\to \mathbb{C}$ of a bounded linear operator $\Theta$ on $L^2(\R^n)$ is given by 
\bb 
    \chi_\Theta(\mathbf{r})\coloneqq \Tr[ \Theta  \hat{D}_{\mathbf{r}} ] \, , 
\ee
where the displacement operator is defined as 
\bb
\hat{D}_{\mathbf{r}}\coloneqq e^{-i {\mathbf{r}}^{\intercal}\Omega \mathbf{\hat{R}}}\,.
\ee
Any bounded linear operator $\Theta$ on $L^2(\R^n)$ can be expressed in terms of its characteristic function as~\cite{BUCCO,Groenewold_1946}:
\bb\label{eq_fourier_weyl}
\Theta = \mathcal{W}\left(\chi_{\Theta}(\mathbf{r})\right) \coloneqq \int_{\R^{2n}}\frac{\mathrm{d}\mathbf{r}}{(2\pi)^n}\chi_\Theta(\mathbf{r}) \hat{D}_{\mathbf{r}}^\dagger \,,
\ee
which is known as the \emph{Fourier-Weyl transform}. This establishes a one-to-one correspondence between bounded operators and their characteristic functions. The first moment $\mathbf{m}(\rho)$ and the covariance matrix $V(\rho)$ of a quantum state $\rho$ are defined as follows:
\bb 
	\mathbf{m}(\rho)&\coloneqq \Tr\!\left[\mathbf{\hat{R}}\,\rho\right]\,,\\
	V(\rho)&\coloneqq\Tr\!\left[\left\{\mathbf{(\hat{R}-m(\rho)\,\hat{\mathbb{1}}),(\hat{R}-m(\rho)\,\hat{\mathbb{1}})}^{\intercal}\right\}\rho\right]\, ,
\ee
where $(\cdot)^\intercal$ denotes the transpose operation, $\{\hat{A},\hat{B}\} \coloneqq \hat{A}\hat{B}+\hat{B}\hat{A}$ denotes the anti-commutator. Any covariance matrix $V\!(\rho)$ satisfies the following matrix inequality, known as \emph{uncertainty relation}:
\bb
V\!(\rho)+i\Omega\ge0\,.
\ee
As a consequence, since $\Omega$ is skew-symmetric, any covariance matrix $V\!(\rho)$ is positive semi-definite on $\R^{2n}$. Conversely, for any symmetric $2n\times 2n$ matrix $W$ such that $W+i\Omega\ge0$ there exists an $n$-mode (Gaussian) state $\rho$ with covariance matrix $V\!(\rho)=W$~\cite{BUCCO}. A displacement transformation shifts the first moments without affecting the covariance matrices, i.e.
\bb
    &\mathbf{m}\!\left(\hat{D}_\mathbf{r}\rho \hat{D}_\mathbf{r}^\dagger\right)=\mathbf{m}(\rho)+\mathbf{r}\,,\\
    &V\!\left(\hat{D}_\mathbf{r}\rho \hat{D}_\mathbf{r}^\dagger\right)=V\!(\rho)\,,
\label{eq:bille}
\ee
as it follows from the fact that
\begin{align}
    \label{eq:displ}\hat{D}_\mathbf{r}\mathbf{\hat{R}}\hat{D}_\mathbf{r}^\dagger=\mathbf{\hat{R}}+\mathbf{r}\mathbb{\hat{1}}\,.
\end{align} 
Let us proceed with the definition of Gaussian states.
\begin{Def}[(Gaussian state)]\label{def_gauss_sm}
An $n$-mode state $\rho$ is said to be a Gaussian state if it can be written as a Gibbs state of a quadratic Hamiltonian $\hat{H}$ in the quadrature operator vector:
\bb
    \hat{H}\coloneqq \frac{1}{2}(\mathbf{\hat{R}}-\mathbf{m})^{\intercal}H(\mathbf{\hat{R}}-\mathbf{m})
\ee
for some $2n\times 2n$ symmetric positive-definite matrix $H$ and some vector $\mathbf{m}\in\R^{2n}$. The Gibbs states associated with the Hamiltonian $\hat{H}$ are given by 
\begin{equation}
\rho= \left(\frac{e^{-\beta \hat{H}}}{\Tr[e^{-\beta \hat{H}}]}\right)_{\beta\in(0,\infty]}\,,
\end{equation}
where the parameter $\beta$ is called the `inverse temperature'.
\end{Def}
\begin{remark}
Definition~\ref{def_gauss_sm} also covers pathological cases where both \( \beta \) and specific terms in \( H \) diverge (for instance, in tensor products of pure and mixed Gaussian states). To formalise this rigorously, one can define the set of Gaussian states as the trace-norm closure of the set of Gibbs states of quadratic Hamiltonians~\cite{G-resource-theories}.
\end{remark} 
The characteristic function of a Gaussian state $\rho$ is the Fourier transform of a Gaussian probability distribution, evaluated at $\Omega \mathbf{r}$, which can be written in terms of $\mathbf{m}(\rho) $ and $V\!(\rho)$ as~\cite{BUCCO}
\bb\label{eq:charact_Gauss_Prel}
\chi_{\rho}(\mathbf{r})=\exp\!\left( -\frac{1}{4}(\Omega \mathbf{r})^{\intercal}V\!(\rho)\Omega \mathbf{r}+i(\Omega \mathbf{r})^{\intercal}\mathbf{m}(\rho) \right)\,.
\ee
Since any quantum state is uniquely identified by its characteristic function, it follows that any Gaussian state is uniquely identified by its first moment and covariance matrix.

Notably, any covariance matrix $V$ can be written in the so-called \emph{Williamson decomposition} as follows~\cite{BUCCO}:
\bb\label{eq:will}
    V=SDS^\intercal\,,
\ee
where $S$ is a \emph{symplectic} matrix, i.e.~a matrix satisfying $S\Omega S^\intercal=\Omega$, and $D$ is a diagonal matrix of the form $D=\bigoplus_{j=1}^n  \left(\begin{matrix}d_j&0\\0&d_j\end{matrix}\right)$, with $d_1,d_2,\ldots , d_n\ge1$ being the so-called \emph{symplectic eigenvalues}.

A Gaussian state $\rho$ is pure if and only if all the symplectic eigenvalues of its covariance matrix $V(\rho)$ are exactly one~\cite{BUCCO}. In particular, the Williamson decomposition of the covariance matrix of a pure Gaussian state $\psi$ reads $V(\psi)=SS^\intercal$, for some symplectic matrix $S$.
\subsection{Gaussian noise channel}\label{subsubsec_gauss_noise}
The \emph{Gaussian noise channel} is a Gaussian channel that will turn out to be crucial for our analysis. For any $2n\times 2n$ positive semi-definite matrix $K$, the \emph{Gaussian noise channel} $\NN_K$ is the convex combination of displacement transformations $(\cdot)\mapsto \hat{D}_u(\cdot)\hat{D}_u^\dagger$, where $u\in\R^{2n}$ is distributed according to a Gaussian probability distribution with vanishing mean value and covariance matrix equal to $K$~\cite[Chapter 5]{BUCCO}. Mathematically, $\NN_K$ can be written as
\bb\label{Gaussian_noise}
\NN_K(\Theta) \coloneqq \int_{\R^{2n}} \mathrm{d}u\ P_K(u)\, \hat{D}_u \Theta \hat{D}_{u}^\dagger\,,
\ee
where $P_K(u)$ is a Gaussian probability distribution with vanishing mean and covariance matrix equal to $K$, and $\hat{D}_{u}\coloneqq e^{i u^{\intercal}\Omega \mathbf{\hat{R}}}$ is the displacement operator. Of course, in the case when $K$ is strictly positive, $P_K(u)$ can be written as
\bb\label{def_pku}
    P_K(u)\coloneqq \frac{e^{-\frac12 u^\intercal K^{-1} u}}{(2\pi)^{n}\sqrt{\det K}}\,.
\ee
Instead, when $K$ has some zero eigenvalues, the definition of $P_K(u)$ involves Dirac deltas as follows. Let $K=\sum_{i=1}^{2n}\lambda_iv_iv_i^\intercal$ be a spectral decomposition of $K$, where $(v_i)_{i=1,\ldots,2n}$ are orthonormal eigenvectors, while $\lambda_1,\ldots,\lambda_r>0$ and $\lambda_{r+1},\ldots,\lambda_{2n}=0$ are its eigenvalues, with $r$ being the rank of $K$. Then, $P_K(u)$ can be written as
\bb
    P_K(u)\coloneqq \frac{e^{-\frac12 u^\intercal\left(\sum_{i=1}^{r}\lambda_i^{-1} v_i v_i^\intercal \right)u
 }}{(2\pi)^{(2n-r)/2}\sqrt{\lambda_1\ldots\lambda_r}}\prod_{i=r+1}^{2n}\delta(v_i^\intercal u)\,,
\ee
where $\delta(\cdot)$ denotes the Dirac delta distribution.

It turns out that $\NN_K$ is a Gaussian channel that leaves the first moments unchanged, and it acts on the covariance matrix of the input state by adding $K$, as stated by the following Lemma~\cite[Chapter 5]{BUCCO}. 
\begin{lemma}\emph{(\cite[Chapter 5]{BUCCO})}\label{lemma_add_gauss}
    Let $K\in\R$ be a $2n\times 2n$ positive semi-definite matrix. The Gaussian noise channel $\NN_K$, defined in~\eqref{Gaussian_noise}, is a Gaussian channel that acts on the first moments and covariance matrices as 
    \begin{eqnarray}
        \mathbf{m}\!\left(\NN_K(\rho)\right)&=&\mathbf{m}\!\left(\rho\right)\,,\\
        V\!\left(\NN_K(\rho)\right)&=& V\!(\rho)+K\,\,,
    \end{eqnarray}
    for any $n$-mode input state $\rho$.
\end{lemma}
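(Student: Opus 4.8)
The plan is to verify the two claimed identities at the level of characteristic functions, exploiting the Fourier--Weyl correspondence \eqref{eq_fourier_weyl}, which guarantees that a state is uniquely fixed by its characteristic function. Concretely, I would compute $\chi_{\NN_K(\rho)}(\mathbf{r}) = \Tr[\NN_K(\rho)\hat{D}_{\mathbf{r}}]$ directly from the definition \eqref{Gaussian_noise}, interchanging the integral over $u$ with the trace, so that
\[
\chi_{\NN_K(\rho)}(\mathbf{r}) = \int_{\R^{2n}}\mathrm{d}u\, P_K(u)\,\Tr\!\left[\hat{D}_u\rho\hat{D}_u^\dagger\hat{D}_{\mathbf{r}}\right].
\]
Using cyclicity of the trace together with the conjugation rule $\hat{D}_u^\dagger\hat{D}_{\mathbf{r}}\hat{D}_u = e^{i\mathbf{r}^\intercal\Omega u}\hat{D}_{\mathbf{r}}$ (which follows from \eqref{eq:displ} and the Weyl commutation relations, and is a pure phase linear in $u$ whose sign is immaterial since $P_K$ is even), the $u$-dependence factors out of the trace, leaving $\Tr[\rho\hat{D}_{\mathbf{r}}] = \chi_\rho(\mathbf{r})$ multiplied by the character $e^{i\mathbf{r}^\intercal\Omega u}$.

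The next step is to carry out the Gaussian integral $\int P_K(u)\,e^{i\mathbf{r}^\intercal\Omega u}\,\mathrm{d}u$, which is precisely the classical characteristic function of the mean-zero Gaussian $P_K$ evaluated at $\Omega^\intercal\mathbf{r}$; it equals $\exp\!\big(-\tfrac12(\Omega^\intercal\mathbf{r})^\intercal K(\Omega^\intercal\mathbf{r})\big)$. Substituting the explicit Gaussian form \eqref{eq:charact_Gauss_Prel} of $\chi_\rho$ and collecting the quadratic and linear terms in $\mathbf{r}$, I would recognize the product as exactly the characteristic function of a Gaussian state whose linear (first-moment) term is unchanged and whose quadratic form carries covariance $V(\rho) + K$. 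Invoking uniqueness of the characteristic function then yields both $\mathbf{m}(\NN_K(\rho)) = \mathbf{m}(\rho)$ and $V(\NN_K(\rho)) = V(\rho)+K$, and in particular shows that $\NN_K$ preserves Gaussianity, hence is a Gaussian channel. As a hands-on cross-check one can argue directly: the first moment is the $P_K$-average of the displaced first moments $\mathbf{m}(\rho)+u$, which equals $\mathbf{m}(\rho)$ because $P_K$ has vanishing mean; and the covariance obeys a law of total variance, equal to the average of the displacement-invariant covariances $V(\rho)$ plus the covariance of the random first-moment shifts, giving $V(\rho)+K$.

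The main obstacle I anticipate is analytic rather than algebraic: justifying the exchange of the $u$-integration with the trace, and making sense of the manipulations when $K$ is only positive semidefinite rather than strictly positive. In the rank-deficient case $P_K$ is supported on a subspace and carries Dirac-delta factors along $\ker K$, so the identities must be read distributionally; the cleanest route is to prove the statement first for $K>0$, where all integrals converge absolutely and Fubini applies, and then extend to general $K\ge0$ by approximating $K$ with $K+\eta\,\id$ and taking $\eta\to0^+$ in trace norm. One should also track the quadrature and covariance normalization conventions carefully (notably the factor-of-two built into the anticommutator definition of $V$) to ensure the added covariance comes out as exactly $K$. Since the statement is essentially classical and already attributed to \cite{BUCCO}, it is equally legitimate to simply cite that reference, with the computation above recorded for completeness.
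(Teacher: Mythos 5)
The paper does not actually prove this lemma: it is stated with a citation to \cite[Chapter 5]{BUCCO} and used as a known fact, so there is no in-paper argument to compare against. Your characteristic-function route is the standard textbook derivation (and essentially the one behind the cited reference): the structural steps --- linearity of the trace, the conjugation identity $\hat{D}_u^\dagger\hat{D}_{\mathbf{r}}\hat{D}_u = e^{i\mathbf{r}^\intercal\Omega u}\hat{D}_{\mathbf{r}}$, the factorisation of the classical character, uniqueness of characteristic functions, and the regularisation $K\to K+\eta\,\id$ to handle rank-deficient $K$ --- are all sound, and the Fubini exchange is harmless because $\|\hat{D}_u\rho\hat{D}_u^\dagger\|_1=1$ and $P_K$ is a probability density.

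There is, however, one quantitative point that you flag but do not close, and as your formulas stand it contradicts the claimed conclusion. With the normalisation in~\eqref{def_pku}, $P_K$ is the classical normal density with second moment $\E[uu^\intercal]=K$, so its characteristic function is $e^{-\frac12 t^\intercal K t}$, exactly as you write. Multiplying $\chi_\rho(\mathbf{r})=\exp\!\big(-\frac14(\Omega\mathbf{r})^\intercal V\Omega\mathbf{r}+\cdots\big)$ by $e^{-\frac12(\Omega\mathbf{r})^\intercal K\Omega\mathbf{r}}=e^{-\frac14(\Omega\mathbf{r})^\intercal(2K)\Omega\mathbf{r}}$ produces a Gaussian state with covariance $V+2K$, not $V+K$. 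The same factor of $2$ surfaces in your law-of-total-variance cross-check: the anticommutator convention makes the quantum covariance twice the symmetrised second moment, so the random first-moment shifts contribute $2\,\E[uu^\intercal]=2K$ (e.g.\ the vacuum, $V=\id$, displaced by $u\sim\mathcal{N}(0,K)$ ends up with $V=\id+2K$). To obtain $V+K$ one must take $P_K\propto e^{-u^\intercal K^{-1}u}$, i.e.\ classical covariance $K/2$, which is the convention of \cite{BUCCO}. The discrepancy is thus an inconsistency between~\eqref{def_pku} and the lemma rather than a flaw in your method --- the rest of the paper (e.g.\ Lemma~\ref{lemma_mixed_gauss}) relies on the additive action $V\mapsto V+K$, so it is the explicit density that needs the corrected normalisation --- but a complete proof has to resolve this one way or the other rather than assert that the added covariance ``comes out as exactly $K$''.
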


Lemma~\ref{lemma_add_gauss} directly implies the following useful known result~\cite{BUCCO}.
\begin{lemma}\label{lemma_mixed_gauss}
    Let $\rho$ be a Gaussian state and let us consider the Williamson decomposition of its covariance matrix, $V(\rho)=SDS^\intercal$. Additionally, let $\psi$ be the Gaussian state with the same first moment as $\rho$ and with covariance matrix $SS^\intercal$. Then, $\rho$ can be written as the output of the Gaussian noise channel $\NN_{S(D-\mathbb{1})S^\intercal}$ applied to $\psi$:
    \bb
        \rho=\NN_{S(D-\mathbb{1})S^\intercal}(\psi)\,.
    \ee
    In particular, any mixed Gaussian state with covariance matrix $V$ can be written as a convex combination of pure Gaussian states with all the same covariance matrix $W$, which satisfies the matrix inequality $W\le V$.
\end{lemma}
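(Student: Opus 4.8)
The plan is to deduce the identity $\rho=\NN_{S(D-\mathbb{1})S^\intercal}(\psi)$ from the additivity of the Gaussian noise channel on covariance matrices (Lemma~\ref{lemma_add_gauss}) combined with the fact that a Gaussian state is uniquely determined by its first moment and covariance matrix. This reduces the whole statement to a short piece of matrix bookkeeping.

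First I would check that all objects in the statement are well defined. Writing $K\coloneqq S(D-\mathbb{1})S^\intercal$, positivity holds because $D-\mathbb{1}=\bigoplus_{j}\left(\begin{smallmatrix}d_j-1&0\\0&d_j-1\end{smallmatrix}\right)\ge 0$ (as each symplectic eigenvalue satisfies $d_j\ge 1$) and conjugation by $S$ preserves positive semi-definiteness; hence $\NN_K$ is a legitimate Gaussian noise channel. Moreover $\psi$ is a genuine pure Gaussian state: its covariance matrix $SS^\intercal=S\,\mathbb{1}\,S^\intercal$ is already in Williamson form with all symplectic eigenvalues equal to $1$, so it satisfies the uncertainty relation and corresponds to a pure state.

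The core step is then to compute the first moment and covariance matrix of $\NN_K(\psi)$ and match them against those of $\rho$. By Lemma~\ref{lemma_add_gauss}, $\NN_K$ leaves the first moment unchanged, so $\mathbf{m}(\NN_K(\psi))=\mathbf{m}(\psi)=\mathbf{m}(\rho)$ by construction of $\psi$, while it adds $K$ to the covariance matrix, giving $V(\NN_K(\psi))=SS^\intercal+S(D-\mathbb{1})S^\intercal=S\bigl(\mathbb{1}+D-\mathbb{1}\bigr)S^\intercal=SDS^\intercal=V(\rho)$. Since $\NN_K(\psi)$ is again a Gaussian state (the output of a Gaussian channel on a Gaussian input, by Lemma~\ref{lemma_add_gauss}) that shares both moments with $\rho$, uniqueness of Gaussian states forces $\NN_K(\psi)=\rho$.

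For the final ``in particular'' claim I would simply unfold the definition of the noise channel as the mixture $\NN_K(\psi)=\int \mathrm{d}u\,P_K(u)\,\hat{D}_u\psi\hat{D}_u^\dagger$. Each integrand is a displaced copy of the pure state $\psi$, and displacements preserve purity while leaving the covariance matrix invariant by~\eqref{eq:bille}; hence every $\hat{D}_u\psi\hat{D}_u^\dagger$ is a pure Gaussian state with the common covariance matrix $W\coloneqq SS^\intercal$. Thus $\rho$ is a convex combination of pure Gaussian states all sharing covariance matrix $W$, and $W=SS^\intercal\le SDS^\intercal=V$ because $SDS^\intercal-SS^\intercal=S(D-\mathbb{1})S^\intercal\ge 0$. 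I do not expect a genuine obstacle here: the argument is essentially bookkeeping on top of Lemma~\ref{lemma_add_gauss} and the uniqueness of Gaussian states, and the only points requiring a moment of care are the positive semi-definiteness of $K$ and the observation that $SS^\intercal$ is a valid pure covariance matrix.
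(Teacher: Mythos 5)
Your proposal is correct and follows exactly the route the paper intends: the paper gives no separate proof of this lemma, stating only that it is a direct consequence of Lemma~\ref{lemma_add_gauss}, and your argument (moment matching via the additivity of $\NN_K$ on covariance matrices, uniqueness of Gaussian states given first moment and covariance matrix, and unfolding $\NN_K$ as a mixture of displaced copies of the pure state $\psi$) is precisely the bookkeeping being left implicit. The details you supply — positivity of $S(D-\mathbb{1})S^\intercal$ from $d_j\ge 1$, purity of the state with covariance matrix $SS^\intercal$, and $W=SS^\intercal\le V$ — are all accurate.
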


\section{Proof of our results}\label{Sec_proof_derivative}
In this section, we address the problem of deriving upper bounds on the trace distance between Gaussian states in terms of the norm distance between their first moments and covariance matrices. The section is organised as follows:
\begin{itemize}
    \item In Subsection~\ref{Subsec_exact_exp}, we derive an elegant formula for the second moment of a quadratic observable on a Gaussian state, expressed solely in terms of the covariance matrix and first moment.
    \item In Subsection~\ref{Subsec_van_third}, we prove that all odd moments of a Gaussian state with vanishing first moment are zero.
    \item In Subsection~\ref{Sec_Der}, we introduce the concept of the \emph{derivative of a Gaussian state} and establish some useful properties. Specifically, we show that: (i) this derivative can be represented through a compact formula involving commutators between quadrature operators and the Gaussian state; (ii) we derive a simple upper bound on the trace norm of the derivative of a Gaussian state in terms of the first moment and covariance matrix; and (iii) we find an exact formula for the trace norm of the derivative of a \emph{pure} Gaussian state in terms of its first moment and covariance matrix.
    \item In Subsection~\ref{Sec_upppp_bounds}, leveraging the tools developed in the previous subsections, we prove upper bounds on the trace distance between Gaussian states and demonstrate that these bounds are tight.

\end{itemize}

\subsection{Exact formula for the second moment of a quadratic observable on a Gaussian state}\label{Subsec_exact_exp}
In this subsection, we derive a closed formula for the second moment of a quadratic observable on a Gaussian state in terms of its covariance matrix and its first moment. This formula was already known in the case of Gaussian states with vanishing first moment~\cite{Wilde_second_moment}. Here, we report an alternative proof.
\begin{lemma}[(Second moment of a quadratic observable on a Gaussian state)]\label{lemma_xx_p_gauss}
	Let $\rho(V,\mathbf{m})$ be an $n$-mode Gaussian state with covariance matrix $V$ and first moment $\mathbf{m}$. Let $X$ be an $2n\times2n$ real symmetric matrix. Then, it holds that
	\begin{equation}
	\Tr\!\left[ (\hat{\mathbf{R}}^\intercal X \hat{\mathbf{R}})^2\, \rho(V,\mathbf{m}) \right]  =  \left(\frac{1}{2}\Tr[VX]+\mathbf{m}^\intercal X\mathbf{m}\right)^2 + \frac{1}{2}\Tr[XVXV] + \frac{1}{2}\Tr[\Omega X\Omega X]+2\mathbf{m}^\intercal XVX\mathbf{m} \,.
	\end{equation}
\end{lemma}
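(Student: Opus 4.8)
**The plan is to compute the fourth-order moment $\Tr[(\hat{\mathbf{R}}^\intercal X\hat{\mathbf{R}})^2\rho]$ by reducing everything to expectation values of products of quadrature operators, which for a Gaussian state are governed by Wick's theorem (Isserlis' theorem).** Writing $\hat{\mathbf{R}}^\intercal X\hat{\mathbf{R}}=\sum_{a,b}X_{ab}\hat R_a\hat R_b$, the square becomes $\sum_{a,b,c,d}X_{ab}X_{cd}\,\hat R_a\hat R_b\hat R_c\hat R_d$, so the whole task is to evaluate the fourth moment $\Tr[\hat R_a\hat R_b\hat R_c\hat R_d\,\rho]$ and then contract against $X\otimes X$. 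The first reduction I would make is to displace to zero first moment: since $\hat{D}_{\mathbf m}$ shifts $\hat{\mathbf R}\mapsto\hat{\mathbf R}+\mathbf m\hat{\mathbb 1}$ (Eq.~\eqref{eq:displ}), I can substitute $\hat R_a\to\hat R_a+m_a$ in the expression evaluated on the centered Gaussian state $\rho(V,0)$. This cleanly separates the computation into the $\mathbf m=0$ core plus correction terms linear, quadratic, cubic, and quartic in $\mathbf m$, where the odd-in-$\hat{\mathbf R}$ pieces vanish because all odd moments of a centered Gaussian state are zero (this is exactly the fact established in Subsection~\ref{Subsec_van_third}).

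For the centered state I would use the moment identities $\Tr[\hat R_a\hat R_b\,\rho(V,0)]=\tfrac12 V_{ab}+\tfrac{i}{2}\Omega_{ab}$, which follow from the definition of $V$ together with the canonical commutation relation $[\hat R_a,\hat R_b]=i\Omega_{ab}$. Because $\rho(V,0)$ is Gaussian, the four-point function factorizes by Wick's theorem into a sum over the three pairings of $\{a,b,c,d\}$:
\begin{align}
\Tr[\hat R_a\hat R_b\hat R_c\hat R_d\,\rho(V,0)]
&= G_{ab}G_{cd}+G_{ac}G_{bd}+G_{ad}G_{bc}, \nonumber
\end{align}
where $G_{ab}\coloneqq\tfrac12 V_{ab}+\tfrac{i}{2}\Omega_{ab}$ is the two-point function respecting the operator ordering. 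Contracting this against $X_{ab}X_{cd}$ and using the symmetry of $X$, the three pairings collapse into the combination $\bigl(\tfrac12\Tr[VX]\bigr)^2+\tfrac12\Tr[XVXV]+\tfrac12\Tr[\Omega X\Omega X]$ (the imaginary cross-terms cancel because $\Omega$ is antisymmetric while $X,V$ are symmetric), which accounts for precisely the $\mathbf m$-independent part of the claimed formula.

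The remaining $\mathbf m$-dependent terms come from expanding $\prod(\hat R+m)$ and keeping the surviving even contributions: the purely quartic-in-$\mathbf m$ piece gives $(\mathbf m^\intercal X\mathbf m)^2$, the cross term between $(\mathbf m^\intercal X\mathbf m)$ and $\tfrac12\Tr[VX]$ supplies the $2\cdot\tfrac12\Tr[VX]\,\mathbf m^\intercal X\mathbf m$ needed to complete the square $\bigl(\tfrac12\Tr[VX]+\mathbf m^\intercal X\mathbf m\bigr)^2$, and the terms with two factors of $\mathbf m$ and two quadrature operators produce $2\,\mathbf m^\intercal XVX\mathbf m$ after again invoking the two-point function and discarding the antisymmetric $\Omega$-contribution against symmetric $X$. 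The main obstacle, and the place demanding real care, is the ordering bookkeeping: the $G_{ab}$ carry an antisymmetric imaginary part, so I must track which $i\Omega$ terms survive contraction against the symmetric $X$ tensors and verify that every would-be imaginary term cancels, leaving a manifestly real answer. A disciplined approach is to split each $G$ into its symmetric ($\tfrac12 V$) and antisymmetric ($\tfrac{i}{2}\Omega$) parts from the outset and only retain contractions with an even number of $\Omega$ factors; this makes the cancellations transparent and reduces the remaining work to routine trace manipulations.
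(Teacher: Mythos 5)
Your proposal is correct, and I verified the key contraction: with the ordered two-point function $G_{ab}=\tfrac12 V_{ab}+\tfrac{i}{2}\Omega_{ab}$, the pairings $(ac)(bd)$ and $(ad)(bc)$ each contribute $\Tr[XGXG^\intercal]=\tfrac14\Tr[XVXV]+\tfrac14\Tr[X\Omega X\Omega]$ (the mixed $V$--$\Omega$ traces vanish by symmetry/antisymmetry), so the three pairings indeed sum to $\bigl(\tfrac12\Tr[VX]\bigr)^2+\tfrac12\Tr[XVXV]+\tfrac12\Tr[\Omega X\Omega X]$, and the displacement expansion produces exactly the remaining $\mathbf m$-dependent terms, with the cross terms killed by the vanishing of odd moments. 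However, your route is genuinely different from the paper's. The paper never invokes the ordered Wick theorem: it first converts the ordered fourth moment into a fully nested anticommutator via the canonical commutation relations (Lemma~\ref{lemma_xx_p}, which is where the explicit $\tfrac12\Tr[\Omega X\Omega X]$ correction originates), and then evaluates the symmetrized fourth moment by taking four partial derivatives of the Gaussian characteristic function (Lemmas~\ref{lemma_anticomm_dr} and~\ref{lemma_4_moment}), where the Wick-type factorization appears with a purely real kernel involving only $V$ and $\mathbf m$. In your version the $\Omega X\Omega X$ term instead emerges from the product of the imaginary antisymmetric parts of the ordered contractions. The one thing you should be explicit about is that the ordered quantum Wick/Isserlis theorem for Gaussian states --- factorization of $\Tr[\hat R_a\hat R_b\hat R_c\hat R_d\,\rho(V,0)]$ into order-preserving pairings of $G$ --- is an external input: it is true and standard, but the paper only establishes the symmetrized analogue, so a self-contained writeup would need to prove it (e.g.\ again from the characteristic function) or cite it; once that is supplied, your argument is complete and arguably more economical than the paper's two-step symmetrize-then-differentiate route.
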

Before proving Lemma~\ref{lemma_xx_p_gauss}, let us establish some preliminary results.

\begin{lemma}\label{lemma_xx_p}
	Let $\rho$ be an $n$-mode state and let $X$ be a $2n\times2n$ real symmetric matrix. Then, it holds that
	\bb
	\Tr\!\left[ (\hat{\mathbf{R}}^\intercal X \hat{\mathbf{R}})^2 \rho \right]  =  \frac{1}{8}\sum_{i,j,l,k=1}^{2n}X_{ij}X_{kl}\Tr\!\left[ \{\hat{R}_i,\{\hat{R}_j,\{\hat{R}_k,\hat{R}_l\} \}\}\rho \right]   + \frac{1}{2}\Tr[\Omega X\Omega X] 
	\ee
\end{lemma}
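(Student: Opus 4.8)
The plan is to establish the claim as an operator identity,
\[
(\hat{\mathbf R}^\intercal X \hat{\mathbf R})^2 = \frac18\sum_{i,j,k,l}X_{ij}X_{kl}\{\hat R_i,\{\hat R_j,\{\hat R_k,\hat R_l\}\}\} + \tfrac12\Tr[\Omega X\Omega X]\,\hat{\mathbb 1}\,,
\]
and then take the trace against $\rho$, using $\Tr\rho=1$; the state enters only through this normalisation, so the content is purely algebraic and follows from the canonical commutation relations $[\hat R_a,\hat R_b]=i\Omega_{ab}$ together with $X=X^\intercal$. Throughout I would work on the dense domain of Schwartz (finite-energy) vectors, on which all the reorderings below are justified.

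First I would introduce the shorthand $Q\coloneqq \hat{\mathbf R}^\intercal X\hat{\mathbf R}=\sum_{i,j}X_{ij}\hat R_i\hat R_j$ and record the elementary fact that, since $X$ is symmetric and $\Omega$ antisymmetric, the commutator correction $\sum_{k,l}X_{kl}[\hat R_k,\hat R_l]=i\Tr[X\Omega^\intercal]$ vanishes, so that $\sum_{k,l}X_{kl}\{\hat R_k,\hat R_l\}=2Q$. Feeding this into the innermost anticommutator collapses the quadruple sum on the right-hand side to $\tfrac14\sum_{i,j}X_{ij}\{\hat R_i,\{\hat R_j,Q\}\}$. Expanding the remaining double anticommutator produces four terms, and the symmetry $i\leftrightarrow j$ of $X_{ij}$ merges them pairwise: the two outer terms give $2Q^2$ and the two ``sandwich'' terms give $2\sum_{i,j}X_{ij}\hat R_i Q\hat R_j$. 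Hence the symmetrized expression equals $\tfrac12 Q^2+\tfrac12\sum_{i,j}X_{ij}\hat R_i Q\hat R_j$, and it only remains to prove
\[
Q^2-\sum_{i,j}X_{ij}\hat R_i Q\hat R_j=\Tr[\Omega X\Omega X]\,\hat{\mathbb 1}\,.
\]

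To verify this, I would rewrite the left-hand side as $\sum_{i,j}X_{ij}\hat R_i[\hat R_j,Q]$ and evaluate the commutator through the CCR, obtaining $[\hat R_j,Q]=2i\sum_{l}(\Omega X)_{jl}\hat R_l$ after again using $X=X^\intercal$ to symmetrise the two contributions. This yields $2i\sum_{i,l}(X\Omega X)_{il}\hat R_i\hat R_l$. The crucial observation is that $X\Omega X$ is \emph{antisymmetric}, since $(X\Omega X)^\intercal=X\Omega^\intercal X=-X\Omega X$; therefore only the commutator part of $\hat R_i\hat R_l$ survives the contraction, collapsing the operator to a c-number:
\[
2i\sum_{i,l}(X\Omega X)_{il}\hat R_i\hat R_l = i\sum_{i,l}(X\Omega X)_{il}[\hat R_i,\hat R_l] = -\sum_{i,l}(X\Omega X)_{il}\Omega_{il}\,\hat{\mathbb 1}=\Tr[\Omega X\Omega X]\,\hat{\mathbb 1}\,,
\]
where the last equality uses $\sum_{i,l}(X\Omega X)_{il}\Omega_{il}=\Tr[X\Omega X\Omega^\intercal]=-\Tr[\Omega X\Omega X]$. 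Tracing against $\rho$ completes the proof.

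There is no genuine obstacle here; the only thing to watch is the bookkeeping — consistently exploiting $X=X^\intercal$ to identify reindexed terms and correctly tracking the factors of $i$ from the commutators. The single conceptually important point is the antisymmetry of $X\Omega X$, which is precisely what forces the four-operator product to reduce to the fully nested anticommutator plus a scalar multiple of the identity.
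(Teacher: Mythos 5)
Your proof is correct and is in essence the same computation as the paper's: both rest on the symmetry of $X$ to collapse the nested anticommutators and on the canonical commutation relations to generate the $\frac12\Tr[\Omega X\Omega X]$ correction, with the state entering only through $\Tr\rho=1$. The only difference is packaging — you establish an operator identity (collapsing the innermost anticommutator to $2Q$, then reducing $Q^2-\sum_{i,j}X_{ij}\hat{R}_i Q\hat{R}_j=\sum_{i,j}X_{ij}\hat{R}_i[\hat{R}_j,Q]$ to a scalar via the antisymmetry of $X\Omega X$), whereas the paper carries out the equivalent reordering term by term inside traces against $\rho$.
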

\begin{proof}
Note that
\bb
    \Tr\!\left[ (\hat{\mathbf{R}}^\intercal X \hat{\mathbf{R}})^2 \rho \right]  = \sum_{i,j,l,k=1}^{2n}X_{ij}X_{kl}\Tr\!\left[ \hat{R}_i\hat{R}_j\hat{R}_k\hat{R}_l \rho \right] \,. 
\ee
By exploiting that $X$ is symmetric, we have that
\bb\label{eq_gr}
\sum_{i,j,l,k=1}^{2n}X_{ij}X_{kl}\Tr\!\left[ \{\hat{R}_i,\{\hat{R}_j,\{\hat{R}_k,\hat{R}_l\} \}\}\rho \right]   
	&=2\sum_{i,j,l,k=1}^{2n}X_{ij}X_{kl}\Tr\!\left[ \{\hat{R}_i,\{\hat{R}_j,\hat{R}_k\hat{R}_l \}\}\rho \right]   \\
	&=2\sum_{i,j,l,k=1}^{2n}X_{ij}X_{kl}\Tr\!\left[ \hat{R}_i\hat{R}_j\hat{R}_k\hat{R}_l \rho \right]+2\sum_{i,j,l,k=1}^{2n}X_{ij}X_{kl}\Tr\!\left[  \hat{R}_i\hat{R}_k\hat{R}_l\hat{R}_j
	\rho \right]   \\
	&\,\quad+2\sum_{i,j,l,k=1}^{2n}X_{ij}X_{kl}\Tr\!\left[
	\hat{R}_j\hat{R}_k\hat{R}_l\hat{R}_i  \rho \right] +2\sum_{i,j,l,k=1}^{2n}X_{ij}X_{kl}\Tr\!\left[ \hat{R}_k\hat{R}_l\hat{R}_j \hat{R}_i\rho \right]   \\
	&=4\sum_{i,j,l,k=1}^{2n}X_{ij}X_{kl}\Tr\!\left[ \hat{R}_i\hat{R}_j\hat{R}_k\hat{R}_l \rho \right]+4\sum_{i,j,l,k=1}^{2n}X_{ij}X_{kl}\Tr\!\left[  \hat{R}_i\hat{R}_k\hat{R}_l\hat{R}_j
	\rho \right]   \,.
\ee
Moreover, the canonical commutation relation implies that
\bb
    \Tr\!\left[  \hat{R}_i\hat{R}_k\hat{R}_l\hat{R}_j
	\rho \right]&=   \Tr\!\left[  \hat{R}_i\hat{R}_k\hat{R}_j\hat{R}_l
	\rho \right]+i\Omega_{lj}\Tr\!\left[  \hat{R}_i\hat{R}_k 
	\rho \right]\\
 &=   \Tr\!\left[  \hat{R}_i\hat{R}_j\hat{R}_k\hat{R}_l
	\rho \right]+i\Omega_{kj}\Tr\!\left[  \hat{R}_i\hat{R}_l
	\rho \right]+i\Omega_{lj}\Tr\!\left[  \hat{R}_i\hat{R}_k 
	\rho \right]\,.
\ee
Using the fact that $X$ is symmetric and $\Omega$ is skew-symmetric, we get
\bb
    i\sum_{i,j,l,k=1}^{2n}X_{ij}X_{kl}\left(\Omega_{kj}\Tr\!\left[  \hat{R}_i\hat{R}_l
	\rho \right]+\Omega_{lj}\Tr\!\left[  \hat{R}_i\hat{R}_k 
	\rho \right]\right)&=i\sum_{i,j,l,k=1}^{2n}X_{ij}X_{kl}\left(\Omega_{jl}\Tr\!\left[  \hat{R}_k\hat{R}_i
	\rho \right]+\Omega_{lj}\Tr\!\left[  \hat{R}_i\hat{R}_k 
	\rho \right]\right)\\
 &=i\sum_{i,j,l,k=1}^{2n}X_{ij}X_{kl}\Omega_{jl}\left(\Tr\!\left[  [\hat{R}_k,\hat{R}_i]
	\rho \right]\right)\\
 &=-\sum_{i,j,l,k=1}^{2n}X_{ij}X_{kl}\Omega_{jl}\Omega_{ki}\\
 &=-\Tr[X\Omega X\Omega ]\,.
\ee
Consequently, we have that
\bb
    \sum_{i,j,l,k=1}^{2n}X_{ij}X_{kl}\Tr\!\left[  \hat{R}_i\hat{R}_k\hat{R}_l\hat{R}_j
	\rho \right]=\sum_{i,j,l,k=1}^{2n}X_{ij}X_{kl}\Tr\!\left[  \hat{R}_i\hat{R}_j\hat{R}_k\hat{R}_l
	\rho \right]-\Tr[X\Omega X\Omega ]
\ee
By substituting in~\eqref{eq_gr}, we finally obtain 
\bb
    \sum_{i,j,l,k=1}^{2n}X_{ij}X_{kl}\Tr\!\left[ \{\hat{R}_i,\{\hat{R}_j,\{\hat{R}_k,\hat{R}_l\} \}\}\rho \right]   = 8\sum_{i,j,l,k=1}^{2n}X_{ij}X_{kl}\Tr\!\left[  \hat{R}_i\hat{R}_j\hat{R}_k\hat{R}_l
	\rho \right]-4\Tr[X\Omega X\Omega ]\,,
\ee
which, after rearranging, proves the desired result.
\end{proof}

\begin{lemma}\label{lemma_anticomm_dr}
Let \( j \in \{1,2,\ldots,2n\} \). 
The partial derivative of the displacement operator \( \hat{D}_{\mathbf{r}} \coloneqq e^{-i \mathbf{r}^\intercal \Omega \hat{R}} \) with respect to the variable $r_j$ is given by
\begin{align}
    \frac{\partial}{\partial r_{j}} \hat{D}_{\mathbf{r}} = -\frac{i}{2} \{ (\Omega \hat{\mathbf{R}})_{j}, \hat{D}_{\mathbf{r}} \}.
\end{align}
\end{lemma}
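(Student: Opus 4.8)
The plan is to regard $\hat{D}_{\mathbf{r}}$ as a one-parameter family of exponentials in the single variable $r_j$ and to differentiate it directly, the only genuine subtlety being that the generator and its $r_j$-derivative fail to commute. Write $\hat{D}_{\mathbf{r}} = e^{A}$ with $A \coloneqq -i\,\mathbf{r}^\intercal\Omega\hat{\mathbf{R}} = -i\sum_{p,q}r_p\Omega_{pq}\hat{R}_q$, so that $A' \coloneqq \frac{\partial A}{\partial r_j} = -i(\Omega\hat{\mathbf{R}})_j$. Since both $A$ and $A'$ are linear combinations of the quadratures, which do not commute, one cannot naively write $\frac{\partial}{\partial r_j}e^{A} = A'e^{A}$; producing the correct \emph{symmetrised} expression is precisely the content of the lemma. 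The mechanism that makes everything work is that the commutator $[A,A']$ is a \emph{c-number} (a multiple of $\hat{\mathbb{1}}$), so that in particular $[A,[A,A']]=0$.

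First I would verify this centrality. Using the canonical commutation relation $[\hat{R}_q,\hat{R}_{q'}]=i\Omega_{qq'}\hat{\mathbb{1}}$ together with the symplectic identity $\Omega\Omega^\intercal=\hat{\mathbb{1}}$ (equivalently $\sum_{q'}\Omega_{jq'}\Omega_{qq'}=\delta_{jq}$),
\[
    [A,A'] = -\sum_{p,q,q'} r_p\,\Omega_{pq}\,\Omega_{jq'}\,[\hat{R}_q,\hat{R}_{q'}] = -i\sum_{p,q,q'} r_p\,\Omega_{pq}\,\Omega_{jq'}\,\Omega_{qq'}\,\hat{\mathbb{1}} = i\,(\Omega\mathbf{r})_j\,\hat{\mathbb{1}}\, ,
\]
which is indeed a multiple of the identity.

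With centrality in hand, I would apply the Duhamel formula $\frac{\partial}{\partial r_j}e^{A}=\int_0^1 e^{sA}A'e^{(1-s)A}\,\mathrm{d}s$ and simplify the integrand via the Hadamard lemma $e^{sA}A'e^{-sA}=A'+s[A,A']$, which truncates exactly because $[A,[A,A']]=0$. Integrating over $s\in[0,1]$ gives $\frac{\partial}{\partial r_j}e^{A}=A'e^{A}+\tfrac12[A,A']e^{A}$. To recognise the anticommutator, I would note that centrality of $[A,A']$ yields $e^{A}A'=A'e^{A}+[A,A']e^{A}$, whence $\tfrac12\{A',e^{A}\}=A'e^{A}+\tfrac12[A,A']e^{A}$; comparing with the previous expression shows $\frac{\partial}{\partial r_j}e^{A}=\tfrac12\{A',e^{A}\}$. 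Substituting $A'=-i(\Omega\hat{\mathbf{R}})_j$ then gives $\frac{\partial}{\partial r_j}\hat{D}_{\mathbf{r}}=-\tfrac{i}{2}\{(\Omega\hat{\mathbf{R}})_j,\hat{D}_{\mathbf{r}}\}$, as claimed.

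The main obstacle is exactly this non-commutativity: one must both confirm that $[A,A']$ is a c-number (which is where the symplectic relation $\Omega\Omega^\intercal=\hat{\mathbb{1}}$ is used) and then keep careful track of the resulting scalar terms to see that they reassemble into the symmetric anticommutator rather than into the one-sided product $A'e^{A}$. A shorter alternative that bypasses the Duhamel formula is to differentiate the Weyl composition law $\hat{D}_{\mathbf{r}+\mathbf{s}}=e^{\frac{i}{2}\mathbf{r}^\intercal\Omega\mathbf{s}}\hat{D}_{\mathbf{r}}\hat{D}_{\mathbf{s}}$ at $\mathbf{s}=\epsilon\mathbf{e}_j$ as $\epsilon\to0$, and then convert the resulting one-sided product into the anticommutator using $[(\Omega\hat{\mathbf{R}})_j,\hat{D}_{\mathbf{r}}]=(\Omega\mathbf{r})_j\hat{D}_{\mathbf{r}}$, which itself follows from the canonical commutation relation; some care with signs is needed here, but the two routes agree.
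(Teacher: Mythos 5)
Your proof is correct. The central computation --- that $[A,A']=i(\Omega\mathbf{r})_j\,\hat{\mathbb{1}}$ is a c-number, via the CCR and $\Omega\Omega^\intercal=\mathbb{1}$ --- checks out, the Hadamard expansion truncates exactly as you say, and the reassembly of $A'e^{A}+\tfrac12[A,A']e^{A}$ into $\tfrac12\{A',e^{A}\}$ is right. The paper takes a genuinely different computational route to the same identity: it first reduces to the single-mode case (using that $\hat{D}_{\mathbf{r}}$ is a tensor product of single-mode displacements), then writes the single-mode displacement in the two BCH-ordered forms $e^{-ir_1\hat{R}_2}e^{ir_2\hat{R}_1}e^{-\frac{i}{2}r_1r_2}$ and $e^{ir_2\hat{R}_1}e^{-ir_1\hat{R}_2}e^{+\frac{i}{2}r_1r_2}$, differentiates each ordered product to obtain the two one-sided identities $\partial_{r_1}\hat{D}_{\mathbf{r}}=-i\hat{R}_2\hat{D}_{\mathbf{r}}-\tfrac{i}{2}r_2\hat{D}_{\mathbf{r}}$ and $\partial_{r_1}\hat{D}_{\mathbf{r}}=-i\hat{D}_{\mathbf{r}}\hat{R}_2+\tfrac{i}{2}r_2\hat{D}_{\mathbf{r}}$, and adds them so the scalar terms cancel and the anticommutator emerges. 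Your version avoids both the single-mode reduction and the explicit factorisation by working directly with the multimode generator; the cost is invoking Duhamel's integral formula for the derivative of an operator exponential (whose rigorous justification for unbounded generators requires the same level of formal care that the paper itself elides in its power-series manipulations), while the benefit is a shorter, more uniform derivation in which the origin of the factor $\tfrac12$ is transparent. Both arguments ultimately rest on the same structural fact, namely that all the relevant commutators are multiples of the identity; your closing remark about differentiating the Weyl composition law is a third valid route, and your stated sign in $\hat{D}_{\mathbf{r}+\mathbf{s}}=e^{\frac{i}{2}\mathbf{r}^\intercal\Omega\mathbf{s}}\hat{D}_{\mathbf{r}}\hat{D}_{\mathbf{s}}$ is consistent with the paper's convention $\hat{D}_{\mathbf{r}}=e^{-i\mathbf{r}^\intercal\Omega\hat{\mathbf{R}}}$.
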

\begin{proof}
Since the displacement operator is the following tensor product of single-mode displacement operators
\begin{align}
    \hat{D}_\mathbf{r} &\coloneqq \hat{D}_{(r_1,r_2)}\otimes\hat{D}_{(r_3,r_4)}\otimes\ldots\otimes\hat{D}_{(r_{2n-1},r_{2n})}\,,
\end{align}
it suffices to consider the single-mode case $n=1$. By exploiting that 
	\bb
	e^{A+B}=e^Ae^Be^{ -\frac{1}{2}[A,B] }\,.
	\ee
for any $A, B$ such that $[A, B] \propto \mathbb{1}$, we obtain that
	\bb
	\frac{\partial}{\partial r_{1}} \hat{D}_{\mathbf{r}}&=\frac{\partial}{\partial r_{1}}\left[ e^{-i r_1\hat{R}_2+i r_2\hat{R}_1}   \right]\\
	&=\frac{\partial}{\partial r_{1}}\left[ e^{-i r_1\hat{R}_2}e^{i r_2\hat{R}_1}e^{-i\frac{1}{2}r_1r_2   }   \right]\\
	&=-i\hat{R}_2 \hat{D}_{\mathbf{r}}-i\frac{1}{2}r_2\hat{D}_{\mathbf{r}}\,.
	\ee
Similarly, by using that $e^{A+B}=e^Be^Ae^{ \frac{1}{2}[A,B] }$ for any $A, B$ such that $[A, B] \propto \mathbb{1}$, we also obtain that
	\bb
	\frac{\partial}{\partial r_{1}} \hat{D}_{\mathbf{r}}&=\frac{\partial}{\partial r_{1}}\left[ e^{-i r_1\hat{R}_2+i r_2\hat{R}_1}   \right]\\
	&=\frac{\partial}{\partial r_{1}}\left[ e^{i r_2\hat{R}_1}e^{-i r_1\hat{R}_2}e^{i\frac{1}{2}r_1r_2   }   \right]\\
	&=-i \hat{D}_{\mathbf{r}}\hat{R}_2 +i\frac{1}{2}r_2\hat{D}_{\mathbf{r}}\,.
	\ee
	By adding both the above identities, we obtain that 
	\bb\label{exxxx1}
	\frac{\partial}{\partial r_{1}} \hat{D}_{\mathbf{r}} =-\frac{i}{2}\{\hat{R}_2 ,\hat{D}_{\mathbf{r}}\}\,.
	\ee
	Analogously, we obtain that
	\bb
	\frac{\partial}{\partial r_{2}} \hat{D}_{\mathbf{r}}&=\frac{\partial}{\partial r_{2}}\left[ e^{-i r_1\hat{R}_2+i r_2\hat{R}_1}   \right]\\
	&=\frac{\partial}{\partial r_{2}}\left[ e^{-i r_1\hat{R}_2}e^{i r_2\hat{R}_1}e^{-i\frac{1}{2}r_1r_2   }   \right]\\
    &=i \hat{D}_{\mathbf{r}}\hat{R}_1-i\frac{1}{2}r_1\hat{D}_{\mathbf{r}}	
	\ee
	and that
	\bb
	\frac{\partial}{\partial r_{2}} \hat{D}_{\mathbf{r}}&=\frac{\partial}{\partial r_{2}}\left[ e^{-i r_1\hat{R}_2+i r_2\hat{R}_1}   \right]\\
	&=\frac{\partial}{\partial r_{2}}\left[ e^{i r_2\hat{R}_1}e^{-i r_1\hat{R}_2}e^{i\frac{1}{2}r_1r_2   }   \right]\\
	&=i \hat{R}_1 \hat{D}_{\mathbf{r}} +i\frac{1}{2}r_1\hat{D}_{\mathbf{r}}\,.
	\ee
By adding these two expressions, we get 
 \begin{align}\label{exxxx2}
    \frac{\partial}{\partial r_{2}} \hat{D}_{\mathbf{r}} =\frac{i}{2}\{\hat{R}_1,\hat{D}_{\mathbf{r}} \}\,.
 \end{align}
Note that~\eqref{exxxx1} and~\eqref{exxxx2} can be rewritten in a compact form as 
\bb
    \frac{\partial}{\partial r_{j}} \hat{D}_{\mathbf{r}} = -\frac{i}{2} \{ (\Omega \hat{\mathbf{R}})_{j}, \hat{D}_{\mathbf{r}} \}\,.
\ee
This concludes the proof.
\end{proof} 
As an example, Lemma~\ref{lemma_anticomm_dr} is useful to calculate the laplacian $\triangle\coloneqq \sum_{j=1}^{2n}\frac{\partial^2}{\partial r_j^2}$ of the displacement operator. Indeed, by exploiting such a lemma, one can prove that 
\bb
    \triangle \hat{D}_{\textbf{r}}=-\left( \hat{\textbf{R}}^\intercal\hat{\textbf{R}}-\textbf{r}^\intercal\hat{\textbf{R}}+\frac14 \textbf{r}^\intercal\textbf{r}\,\mathbb{1}\right)\hat{D}_{\textbf{r}}\,.
\ee
Moreover, Lemma~\ref{lemma_anticomm_dr} directly implies a simple expression for the derivative of the characteristic function. 
\begin{lemma}[(Derivative of a characteristic function)]\label{der_charact_general}
Let $\Theta$ be a linear operator on $L^2(\mathbb{R}^n)$ and let \( j \in \{1,2,\ldots,2n\} \).  The partial derivative of the characteristic function $\chi_{\Theta}(\mathbf{r}) \coloneqq \Tr[\Theta e^{-i \mathbf{r}^\intercal \Omega \hat{R}}]$ with respect to the variable $r_j$ is given by
\begin{align}
    \frac{\partial}{\partial r_{j}} \chi_{\Theta}(\mathbf{r})= -\frac{i}{2} \chi_{\{ \Theta, (\Omega \hat{\mathbf{R}})_{j} \}}(\textbf{r})\qquad\forall\,\textbf{r}\in\mathbb{R}^{2n}\,,
\end{align}
i.e.~it is proportional to the characteristic function of the anti-commutator between $\Theta$ itself and the quadrature $(\Omega \hat{\textbf{R}})_j$.
\end{lemma}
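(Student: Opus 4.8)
The plan is to differentiate the defining expression $\chi_\Theta(\mathbf{r}) = \Tr[\Theta\, \hat{D}_{\mathbf{r}}]$ directly with respect to $r_j$, substitute the derivative of the displacement operator supplied by Lemma~\ref{lemma_anticomm_dr}, and then reorganise the resulting expression using the cyclicity of the trace. Since Lemma~\ref{lemma_anticomm_dr} already delivers $\partial_{r_j}\hat{D}_{\mathbf{r}}$ in closed form as an anti-commutator, almost all of the work is done; what remains is a short algebraic manipulation.

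Concretely, I would first write
\bb
    \frac{\partial}{\partial r_j}\chi_\Theta(\mathbf{r}) = \Tr\!\left[\Theta\,\frac{\partial}{\partial r_j}\hat{D}_{\mathbf{r}}\right],
\ee
moving the derivative inside the trace. Substituting $\partial_{r_j}\hat{D}_{\mathbf{r}} = -\tfrac{i}{2}\{(\Omega\hat{\mathbf{R}})_j,\hat{D}_{\mathbf{r}}\}$ from Lemma~\ref{lemma_anticomm_dr} and expanding the anti-commutator into its two summands gives
\bb
    -\frac{i}{2}\Big(\Tr[\Theta\,(\Omega\hat{\mathbf{R}})_j\,\hat{D}_{\mathbf{r}}] + \Tr[\Theta\,\hat{D}_{\mathbf{r}}\,(\Omega\hat{\mathbf{R}})_j]\Big).
\ee
Applying cyclicity of the trace to the second term to move $(\Omega\hat{\mathbf{R}})_j$ to the front, one recognises the sum as $\Tr[(\Theta(\Omega\hat{\mathbf{R}})_j + (\Omega\hat{\mathbf{R}})_j\Theta)\hat{D}_{\mathbf{r}}] = \Tr[\{\Theta,(\Omega\hat{\mathbf{R}})_j\}\hat{D}_{\mathbf{r}}]$, which is exactly $\chi_{\{\Theta,(\Omega\hat{\mathbf{R}})_j\}}(\mathbf{r})$ by definition. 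This yields the claimed identity.

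The only genuine subtlety — and the step I expect to require care — is the justification for interchanging the partial derivative with the trace, since the quadrature operators $(\Omega\hat{\mathbf{R}})_j$ are unbounded and the operators $\Theta(\Omega\hat{\mathbf{R}})_j\hat{D}_{\mathbf{r}}$ need not a priori be trace-class. In a fully rigorous treatment I would either restrict attention to $\Theta$ in a suitable trace-class (or Schwartz-type) domain and invoke dominated convergence on the incremental ratios $\alpha^{-1}(\hat{D}_{\mathbf{r}+\alpha e_j}-\hat{D}_{\mathbf{r}})$, or appeal to the smoothness of the characteristic function established for the relevant operator classes elsewhere in the Supplemental Material. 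Given the standing assumptions on $\Theta$ in this section, however, the formal computation above is the substance of the argument, and the remaining analytic bookkeeping is routine.
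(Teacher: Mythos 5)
Your proposal is correct and follows essentially the same route as the paper, which presents this lemma as a direct consequence of Lemma~\ref{lemma_anticomm_dr}: differentiate under the trace, substitute $\partial_{r_j}\hat{D}_{\mathbf{r}} = -\tfrac{i}{2}\{(\Omega\hat{\mathbf{R}})_j,\hat{D}_{\mathbf{r}}\}$, and use cyclicity to reassemble the anti-commutator. Your added remark on justifying the interchange of derivative and trace is a reasonable point of care that the paper itself glosses over.
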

Lemma~\ref{der_charact_general} is  useful in order to calculate (symmetrised) moments of a quantum state given its characteristic function. This is true because the characteristic function of an operator evaluated in zero is exactly equal to the trace of the operator. As an example, Lemma~\ref{der_charact_general} implies that
\bb\label{nested_anticom}
    \Tr\!\left[\{(\Omega \hat{\mathbf{R}})_k,\{(\Omega \hat{\mathbf{R}})_l, \Theta  \}\}\right] =-4\,\frac{\partial}{\partial r_{k}} \frac{\partial}{\partial r_{l}}\chi_{\Theta}(0)\,. 
\ee
Note that, in general, the moment \(\Tr[\Theta \hat{R}_k \hat{R}_l]\) cannot be proportional to the second derivative \(\frac{\partial}{\partial r_k} \frac{\partial}{\partial r_l} \chi_{\Theta}(0)\). This discrepancy arises because \(\Tr[\Theta \hat{R}_k \hat{R}_l]\) lacks symmetry under the exchange \(k \leftrightarrow l\), whereas the second derivative \(\frac{\partial}{\partial r_k} \frac{\partial}{\partial r_l} \chi_{\Theta}(0)\) is symmetric under such an exchange (as partial derivatives commute). Consequently, a fully symmetrised structure --- such as one involving a nested anti-commutator as in~\eqref{nested_anticom} --- is necessary.

\begin{lemma}\label{lemma_4_moment}
    Let $\rho(V,\mathbf{m})$ be a Gaussian state with covariance matrix $V$ and first moment $\mathbf{m}$. Then, for any $i,j,k,l\in\{1,2,\ldots,2n\}$, it holds that
    \bb
        \Tr[\rho(V,\mathbf{m})\,\{\hat{R}_i,\{\hat{R}_j,\{\hat{R}_k,\hat{R}_l\} \}\}  ] &=2( V_{ij} V_{kl}+ V_{ik}  V_{jl}+ V_{il}  V_{jk})\\
 &\quad+4\left({V}_{kl}{m}_i{m}_j+{V}_{jl}{m}_i{m}_k+{V}_{kj}{m}_i{m}_l+{V}_{il}{m}_j{m}_k+{V}_{ik}{m}_j{m}_l+{V}_{ij}{m}_k{m}_l\right)\\
 &\quad+8{m}_i{m}_j{m}_k{m}_l\,.
    \ee
\end{lemma}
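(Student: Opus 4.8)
The plan is to reduce the four-index identity to a statement about a single observable and the fourth moment of a one-dimensional Gaussian. For a real vector $\mathbf{u}\in\R^{2n}$, set $\hat{B}\coloneqq \mathbf{u}^\intercal\hat{\mathbf{R}}=\sum_j u_j\hat{R}_j$. Since the anticommutator is bilinear in each slot, the triple-nested anticommutator is multilinear, so collapsing all four arguments onto the same operator gives, on the one hand,
\bb
\{\hat{B},\{\hat{B},\{\hat{B},\hat{B}\}\}\}=\sum_{i,j,k,l=1}^{2n}u_iu_ju_ku_l\,\{\hat{R}_i,\{\hat{R}_j,\{\hat{R}_k,\hat{R}_l\}\}\}\, ,
\ee
and, on the other hand, $\{\hat{B},\{\hat{B},\{\hat{B},\hat{B}\}\}\}=8\hat{B}^4$ because $\{\hat{B},\hat{B}\}=2\hat{B}^2$ and so on. Taking $\Tr[\rho(V,\mathbf{m})\,\cdot\,]$ therefore reduces the problem to computing the single number $\Tr[\rho(V,\mathbf{m})\,\hat{B}^4]$, at the cost of having to match a homogeneous degree-four polynomial in $\mathbf{u}$.

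Next I would compute $\Tr[\rho(V,\mathbf{m})\,\hat{B}^4]$ through the Gaussian characteristic function. Choosing $\mathbf{r}=t\,\Omega\mathbf{u}$ one checks that $\hat{D}_{\mathbf{r}}=e^{-it\hat{B}}$ (using $\Omega^\intercal\Omega=\mathbb{1}$), and plugging this into~\eqref{eq:charact_Gauss_Prel} gives $\Tr[\rho(V,\mathbf{m})\,e^{-it\hat{B}}]=\exp\!\big(-it\,\mathbf{u}^\intercal\mathbf{m}-\tfrac{t^2}{4}\mathbf{u}^\intercal V\mathbf{u}\big)$. This is exactly the classical characteristic function of a Gaussian variable $G\sim\mathcal{N}(\mu,\sigma^2)$ with $\mu\coloneqq\mathbf{u}^\intercal\mathbf{m}$ and $\sigma^2\coloneqq\tfrac12\mathbf{u}^\intercal V\mathbf{u}$; hence the distribution of the observable $\hat{B}$ on $\rho(V,\mathbf{m})$ is that Gaussian, and in particular $\Tr[\rho(V,\mathbf{m})\,\hat{B}^4]=\E[G^4]=\mu^4+6\mu^2\sigma^2+3\sigma^4$. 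Substituting $\mu$ and $\sigma^2$ and expanding $8(\mu^4+6\mu^2\sigma^2+3\sigma^4)$ as a polynomial in $\mathbf{u}$ produces precisely the three groups of terms on the right-hand side of the Lemma, with coefficients $2$, $4$, $8$ arising from the standard symmetric groupings of $V_{ab}V_{cd}$ into three pairings and of $m_am_bV_{cd}$ into six terms.

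Finally I would extract the individual components. The polynomial identity above only constrains the totally symmetric part of the tensor $\Tr[\rho(V,\mathbf{m})\{\hat{R}_i,\{\hat{R}_j,\{\hat{R}_k,\hat{R}_l\}\}\}]$, so I must separately argue that this tensor is itself invariant under all permutations of $(i,j,k,l)$. The cleanest route is via Lemma~\ref{der_charact_general}: iterating it four times expresses $\Tr[\{(\Omega\hat{\mathbf{R}})_i,\{(\Omega\hat{\mathbf{R}})_j,\{(\Omega\hat{\mathbf{R}})_k,\{(\Omega\hat{\mathbf{R}})_l,\rho\}\}\}\}]$ as $16\,\partial_{r_i}\partial_{r_j}\partial_{r_k}\partial_{r_l}\chi_\rho(0)$, which is manifestly symmetric since mixed partial derivatives commute; the symmetry then transfers to the $\hat{\mathbf{R}}$-indexed nested anticommutator through the invertible linear change of basis $\hat{\mathbf{R}}=\Omega^{-1}(\Omega\hat{\mathbf{R}})$, applied slot by slot using multilinearity. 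With symmetry in hand, comparing coefficients of each monomial $u_iu_ju_ku_l$ concludes the proof.

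I expect the genuine obstacle to be this last symmetry step rather than the moment computation: written out as operators, the eight orderings composing the nested anticommutator are not individually symmetric, and it is only after using the canonical commutation relations (equivalently, after the cancellation of all $\Omega$-dependent corrections) that the total object becomes symmetric. The derivative-formula argument is what makes this cancellation automatic; the alternative, namely expanding all eight orderings by Wick's theorem with two-point function $\tfrac12 V+\tfrac{i}{2}\Omega$ and checking that the antisymmetric $\Omega$-parts drop out, is correct but considerably more laborious.
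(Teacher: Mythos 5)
Your proposal is correct, and it takes a genuinely different route from the paper's proof. The paper computes the full multivariate fourth partial derivative $\partial_{r_i}\partial_{r_j}\partial_{r_k}\partial_{r_l}\chi_{\rho(V,\mathbf{m})}(\mathbf{r})\big|_{\mathbf{r}=0}$ explicitly (obtaining the three $VV$ pairings, the six $Vmm$ terms and the $mmmm$ term directly, in the tilded variables $\tilde V=\Omega^\intercal V\Omega$, $\tilde{\mathbf{m}}=\Omega\mathbf{m}$), identifies this derivative with $\tfrac18\Tr[\rho\,\{(\Omega\hat{\mathbf{R}})_i,\{(\Omega\hat{\mathbf{R}})_j,\{(\Omega\hat{\mathbf{R}})_k,(\Omega\hat{\mathbf{R}})_l\}\}\}]$ via Lemma~\ref{lemma_anticomm_dr}, and then contracts with $\Omega$'s to remove the tildes. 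You instead polarize: contracting all four slots with $\mathbf{u}$ reduces everything to the fourth moment of the scalar Gaussian variable $\hat{B}=\mathbf{u}^\intercal\hat{\mathbf{R}}$ (your identification of $\hat{D}_{t\Omega\mathbf{u}}=e^{-it\hat{B}}$ and of $\mu=\mathbf{u}^\intercal\mathbf{m}$, $\sigma^2=\tfrac12\mathbf{u}^\intercal V\mathbf{u}$ checks out, and $8(\mu^4+6\mu^2\sigma^2+3\sigma^4)$ does match the contraction of the claimed right-hand side), and you then recover the individual components from full permutation symmetry of the tensor, itself obtained by iterating Lemma~\ref{der_charact_general} and using commutativity of mixed partials. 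This buys you a one-dimensional moment computation in place of the multivariate Isserlis-type expansion --- the combinatorial coefficients $2,4,8$ come out of the standard pairing count rather than a term-by-term derivative bookkeeping --- at the price of the extra polarization-plus-symmetry layer, which in the end leans on the same characteristic-function differentiation machinery the paper uses directly. Two small points you should write out: (i) iterating Lemma~\ref{der_charact_general} yields the trace of the nested anticommutator with $\rho$ \emph{innermost}, $\Tr[\{A_i,\{A_j,\{A_k,\{A_l,\rho\}\}\}\}]$, and converting this to the form $\Tr[\rho\,\{A_i,\{A_j,\{A_k,A_l\}\}\}]$ appearing in the Lemma requires a short cyclicity-of-the-trace computation that produces a factor of $2$ and a reshuffling of the indices (harmless once symmetry is in hand, but it should be stated); and (ii) one should note that the claimed right-hand side is itself a fully symmetric tensor, so that agreement of the two symmetric tensors on all diagonal contractions $u^{\otimes 4}$ indeed forces them to coincide.
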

\begin{proof}
    On the one hand, by performing an explicit calculation of the partial derivatives of the characteristic function 
    \bb
        \chi_{\rho(V,\mathbf{m})}(\mathbf{r})=e^{ -\frac{1}{4}(\Omega \mathbf{r})^{\intercal}V\Omega \mathbf{r} }\,,
    \ee
    we obtain
    \bb
         \frac{\partial}{\partial r_{i}}\frac{\partial}{\partial r_{j}}\frac{\partial}{\partial r_{k}}\frac{\partial}{\partial r_{l}}\chi_{\rho(V,0)}(\mathbf{r})|_{\mathbf{r}=0}&= \frac{1}{4}(\tilde V_{ij}\tilde V_{kl}+\tilde V_{ik} \tilde V_{jl}+\tilde V_{il} \tilde V_{jk})\\
 &\quad+\frac{1}{2}\left(\tilde{V}_{kl}\tilde{m}_i\tilde{m}_j+\tilde{V}_{jl}\tilde{m}_i\tilde{m}_k+\tilde{V}_{kj}\tilde{m}_i\tilde{m}_l+\tilde{V}_{il}\tilde{m}_j\tilde{m}_k+\tilde{V}_{ik}\tilde{m}_j\tilde{m}_l+\tilde{V}_{ij}\tilde{m}_k\tilde{m}_l\right)\\
 &\quad+\tilde{m}_i\tilde{m}_j\tilde{m}_k\tilde{m}_l\,,
	\ee
	where 
    \bb
	\tilde{V}&\coloneqq   \Omega^{\intercal}V\Omega \,,\\
        \tilde{\mathbf{m}}&\coloneqq \Omega\mathbf{m}\,.
	\ee
In particular, it also holds that
\bb\label{eq_part1}
	&\sum_{i,j,k,l=1}^n\Omega_{ii'}\Omega_{jj'}\Omega_{kk'}\Omega_{ll'}\frac{\partial}{\partial r_{i'}}\frac{\partial}{\partial r_{j'}}\frac{\partial}{\partial r_{k'}}\frac{\partial}{\partial r_{l'}}\chi_{\rho(V,\mathbf{m})}(\mathbf{r})|_{\mathbf{r}=0}\\
 &\quad=\frac{1}{4}( V_{ij} V_{kl}+ V_{ik}  V_{jl}+ V_{il}  V_{jk})\\
 &\qquad+\frac{1}{2}\left({V}_{kl}{m}_i{m}_j+{V}_{jl}{m}_i{m}_k+{V}_{kj}{m}_i{m}_l+{V}_{il}{m}_j{m}_k+{V}_{ik}{m}_j{m}_l+{V}_{ij}{m}_k{m}_l\right)\\
 &\qquad+{m}_i{m}_j{m}_k{m}_l\,.
	\ee 
    On the other hand, Lemma~\ref{lemma_anticomm_dr} implies that
	\bb
	\frac{\partial}{\partial r_{i}}\frac{\partial}{\partial r_{j}}\frac{\partial}{\partial r_{k}}\frac{\partial}{\partial r_{l}}\chi_{\rho(V,0)}(\mathbf{r})&=\Tr\!\left[\rho(V,0)\frac{\partial}{\partial r_{i}}\frac{\partial}{\partial r_{j}}\frac{\partial}{\partial r_{k}}\frac{\partial}{\partial r_{l}}\hat{D}_{\mathbf{r}}\right]\\
    &=\frac{1}{16}\Tr\!\left[\rho(V,0)\,\{(\Omega \hat{\mathbf{R}})_i,\{(\Omega \hat{\mathbf{R}})_j,\{(\Omega \hat{\mathbf{R}})_k,\{(\Omega \hat{\mathbf{R}})_l, \hat{D}_{\mathbf{r}}  \}\}\}\}   \right]\,.
\ee
Consequently, by exploiting that $\hat{D}_0=\hat{\mathbb{1}}$, we obtain that
	\bb
	\frac{\partial}{\partial r_{i}}\frac{\partial}{\partial r_{j}}\frac{\partial}{\partial r_{k}}\frac{\partial}{\partial r_{l}}\chi_{\rho(V,0)}(\mathbf{r})|_{\mathbf{r}=0}&=\frac{1}{8}\Tr[\rho(V,0)\,\{(\Omega \hat{\mathbf{R}})_i,\{(\Omega \hat{\mathbf{R}})_j,\{(\Omega \hat{\mathbf{R}})_k,(\Omega \hat{\mathbf{R}})_l\} \}\}  ]\,.
\ee
In particular, we get
\bb\label{eq_part2}
	\sum_{i,j,k,l=1}^n\Omega_{ii'}\Omega_{jj'}\Omega_{kk'}\Omega_{ll'}\frac{\partial}{\partial r_{i'}}\frac{\partial}{\partial r_{j'}}\frac{\partial}{\partial r_{k'}}\frac{\partial}{\partial r_{l'}}\chi_{\rho(V,0)}(\mathbf{r})|_{\mathbf{r}=0}&= \frac{1}{8}\Tr[\rho(V,0)\,\{\hat{R}_i,\{\hat{R}_j,\{\hat{R}_k,\hat{R}_l\} \}\}  ]  \,.
\ee 
By putting~\eqref{eq_part1} and~\eqref{eq_part2} together, we conclude the proof.
\end{proof}

Now, we are now ready to prove Lemma~\ref{lemma_xx_p_gauss}.
\begin{proof}[Proof of Lemma~\ref{lemma_xx_p_gauss}]
     It holds that 
     \bb
	\Tr\!\left[ (\hat{\mathbf{R}}^\intercal X \hat{\mathbf{R}})^2 \rho(V,\mathbf{m}) \right]  &\eqt{(i)} \frac{1}{2}\Tr[\Omega X\Omega X]+ \frac{1}{8}\sum_{i,j,l,k=1}^{2n}X_{ij}X_{kl}\Tr\!\left[ \{\hat{R}_i,\{\hat{R}_j,\{\hat{R}_k,\hat{R}_l\} \}\}\rho(V,\mathbf{m}) \right]    \\
 &\eqt{(ii)}\frac{1}{2}\Tr[\Omega X\Omega X]+ \sum_{i,j,l,k=1}^{2n}X_{ij}X_{kl}\Bigg(\frac{1}{4}( V_{ij} V_{kl}+ V_{ik}  V_{jl}+ V_{il}  V_{jk})\\
 &\qquad+\frac{1}{2}\left({V}_{kl}{m}_i{m}_j+{V}_{jl}{m}_i{m}_k+{V}_{kj}{m}_i{m}_l+{V}_{il}{m}_j{m}_k+{V}_{ik}{m}_j{m}_l+{V}_{ij}{m}_k{m}_l\right)\\
 &\qquad+{m}_i{m}_j{m}_k{m}_l\Bigg)\\
 &=\frac{1}{2}\Tr[\Omega X\Omega X]+ \frac{1}{4}(\Tr[VX])^2 + \frac{1}{2}\Tr[XVXV] \\
 &\quad + \mathbf{m}^\intercal X\mathbf{m}\Tr[VX]+2\mathbf{m}^\intercal XVX\mathbf{m}+(\mathbf{m}^\intercal X\mathbf{m})^2\\
 &=  \left(\frac{1}{2}\Tr[VX]+\mathbf{m}^\intercal X\mathbf{m}\right)^2 + \frac{1}{2}\Tr[XVXV] + \frac{1}{2}\Tr[\Omega X\Omega X]+2\mathbf{m}^\intercal XVX\mathbf{m} \,.
\ee
Here, in (i), we exploited Lemma~\ref{lemma_xx_p}, and in (ii), we applied Lemma~\ref{lemma_4_moment}.
 \end{proof}

\subsection{A Gaussian state with vanishing first moment also has vanishing odd moments}\label{Subsec_van_third}
Thanks to the tools regarding developed above, we can easily show the following rather intuitive result.
\begin{lemma}[(A Gaussian state with vanishing first moment also has vanishing third moments)]\label{lemma_3_moment}
    Let $\rho(V,0)$ be a Gaussian state with zero first moment and covariance matrix $V$. Then, for any $j,k,l\in\{1,2,\ldots,2n\}$, it holds that
    \bb
    \Tr[\rho(V,0)\,\hat{R}_j\hat{R}_k\hat{R}_l   ]&=0\,.
    \ee 
\end{lemma}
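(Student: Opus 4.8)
The plan is to exploit the \emph{parity symmetry} of a zero-mean Gaussian state, which is most transparent at the level of its characteristic function. Since $\mathbf{m}=0$, the characteristic function
\[
\chi_{\rho(V,0)}(\mathbf{r})=\exp\!\left(-\tfrac14(\Omega\mathbf{r})^\intercal V\,\Omega\mathbf{r}\right)
\]
carries no linear term in $\mathbf{r}$ and is therefore an \emph{even} function, $\chi_{\rho(V,0)}(-\mathbf{r})=\chi_{\rho(V,0)}(\mathbf{r})$. Consequently every odd-order partial derivative of $\chi_{\rho(V,0)}$ vanishes at the origin; in particular $\partial_{r_i}\partial_{r_j}\partial_{r_k}\chi_{\rho(V,0)}(\mathbf{r})\big|_{\mathbf{r}=0}=0$ for all $i,j,k$. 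This analytic fact is the whole content of the lemma; the rest is translating it into operator language.

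I would carry out this translation using exactly the machinery of Lemma~\ref{lemma_anticomm_dr}, as was done for the fourth moment in Lemma~\ref{lemma_4_moment}. Iterating $\partial_{r_j}\hat{D}_{\mathbf{r}}=-\tfrac{i}{2}\{(\Omega \hat{\mathbf{R}})_j,\hat{D}_{\mathbf{r}}\}$ three times and using $\hat{D}_{\mathbf{0}}=\hat{\mathbb{1}}$ gives
\[
\partial_{r_i}\partial_{r_j}\partial_{r_k}\chi_{\rho(V,0)}(\mathbf{r})\big|_{\mathbf{r}=0}=\tfrac{i}{4}\,\Tr\!\left[\rho(V,0)\,\{(\Omega \hat{\mathbf{R}})_i,\{(\Omega \hat{\mathbf{R}})_j,(\Omega \hat{\mathbf{R}})_k\}\}\right],
\]
so the vanishing of the left-hand side forces the symmetrised third moment $\Tr[\rho(V,0)\{(\Omega \hat{\mathbf{R}})_i,\{(\Omega \hat{\mathbf{R}})_j,(\Omega \hat{\mathbf{R}})_k\}\}]$ to vanish for all $i,j,k$. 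Because $\Omega$ is invertible, the operators $(\Omega \hat{\mathbf{R}})_i$ span the same space as the $\hat{R}_a$, so this is equivalent to $\Tr[\rho(V,0)\{\hat{R}_a,\{\hat{R}_b,\hat{R}_c\}\}]=0$ for all $a,b,c$.

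The remaining step is to pass from this fully symmetrised moment to the \emph{ordered} moment in the statement, and this is where the hypothesis $\mathbf{m}=0$ is genuinely used. Expanding $\{\hat{R}_a,\{\hat{R}_b,\hat{R}_c\}\}=\hat{R}_a\hat{R}_b\hat{R}_c+\hat{R}_a\hat{R}_c\hat{R}_b+\hat{R}_b\hat{R}_c\hat{R}_a+\hat{R}_c\hat{R}_b\hat{R}_a$ and repeatedly applying $[\hat{R}_b,\hat{R}_c]=i\Omega_{bc}\hat{\mathbb{1}}$, one checks that any two of these four orderings differ only by a linear combination of single quadratures $\hat{R}_d$. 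Since $\Tr[\rho(V,0)\hat{R}_d]=m_d=0$, all four orderings have equal expectation value, whence $\Tr[\rho(V,0)\{\hat{R}_a,\{\hat{R}_b,\hat{R}_c\}\}]=4\,\Tr[\rho(V,0)\hat{R}_a\hat{R}_b\hat{R}_c]$. Combined with the vanishing of the left-hand side, this yields $\Tr[\rho(V,0)\hat{R}_j\hat{R}_k\hat{R}_l]=0$, as claimed.

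I expect the only real obstacle to be bookkeeping: tracking the numerical constant in the third-derivative identity and, more importantly, verifying that the reordering corrections are always proportional to \emph{first} moments (and never to the identity), so that the zero-mean hypothesis does the work. As a conceptually cleaner sanity check, the same conclusion is immediate from the parity operator $\hat{P}$ with $\hat{P}\hat{R}_j\hat{P}^\dagger=-\hat{R}_j$: conjugation by $\hat{P}$ leaves the covariance matrix unchanged and sends the first moment to $-\mathbf{m}=\mathbf{0}$, so $\rho(V,0)$ is parity-invariant, and therefore $\Tr[\rho(V,0)\hat{R}_j\hat{R}_k\hat{R}_l]=\Tr[\rho(V,0)(-\hat{R}_j)(-\hat{R}_k)(-\hat{R}_l)]=-\Tr[\rho(V,0)\hat{R}_j\hat{R}_k\hat{R}_l]$, which must then vanish.
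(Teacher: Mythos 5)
Your proof is correct and follows essentially the same route as the paper's: the vanishing of the odd partial derivatives of the even characteristic function is translated, via Lemma~\ref{lemma_anticomm_dr} and $\hat{D}_0=\hat{\mathbb{1}}$, into the vanishing of the fully symmetrised third moment, and the ordered moment is then recovered from the symmetrised one using the canonical commutation relation together with the vanishing of the first moment. Your closing parity-operator sanity check is also sound and is in fact the mechanism the paper invokes in its subsequent, more general lemma showing that \emph{all} odd moments of a zero-mean Gaussian state vanish.
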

\begin{proof}
    On the one hand, since the characteristic function of $\rho(V,0)$ is given by
    \bb
        \chi_{\rho(V,0)}(\mathbf{r})=e^{ -\frac{1}{4}(\Omega \mathbf{r})^{\intercal}V\Omega \mathbf{r} }\,,
    \ee
    it holds that
    \bb\label{eq_part1m}
    \frac{\partial}{\partial r_{j}}\frac{\partial}{\partial r_{k}}\frac{\partial}{\partial r_{l}}\chi_{\rho(V,0)}(\mathbf{r})|_{\mathbf{r}=0}&= 0\,.
\ee
On the other hand, Lemma~\ref{lemma_anticomm_dr} implies that
\bb
    \frac{\partial}{\partial r_{j}}\frac{\partial}{\partial r_{k}}\frac{\partial}{\partial r_{l}}\chi_{\rho(V,0)}(\mathbf{r})&=\Tr\!\left[\rho(V,0)\frac{\partial}{\partial r_{j}}\frac{\partial}{\partial r_{k}}\frac{\partial}{\partial r_{l}}\hat{D}_{\mathbf{r}}\right]=\frac{i}{8}\Tr\!\left[\rho(V,0)\,\{(\Omega \hat{\mathbf{R}})_j,\{(\Omega \hat{\mathbf{R}})_k,\{(\Omega \hat{\mathbf{R}})_l, \hat{D}_{\mathbf{r}}  \}\}\}   \right]\,.
\ee
Consequently, by exploiting that $\hat{D}_0=\hat{\mathbb{1}}$, we obtain that
	\bb
\frac{\partial}{\partial r_{j}}\frac{\partial}{\partial r_{k}}\frac{\partial}{\partial r_{l}}\chi_{\rho(V,0)}(\mathbf{r})|_{\mathbf{r}=0}&=\frac{i}{4}\Tr[\rho(V,0)\,\{(\Omega \hat{\mathbf{R}})_j,\{(\Omega \hat{\mathbf{R}})_k,(\Omega \hat{\mathbf{R}})_l\} \}  ]\,.
\ee
In particular, we get
\bb \label{eq_part2m}\sum_{j,k,l=1}^n\Omega_{jj'}\Omega_{kk'}\Omega_{ll'}\frac{\partial}{\partial r_{j}}\frac{\partial}{\partial r_{k}}\frac{\partial}{\partial r_{l}}\chi_{\rho(V,0)}(\mathbf{r})|_{\mathbf{r}=0}&=\frac{i}{4}\Tr[\rho(V,0)\,\{\hat{R}_j,\{\hat{R}_k,\hat{R}_l\} \}  ]\,.
\ee 
By putting~\eqref{eq_part1m} and~\eqref{eq_part2m} together, we obtain that 
\bb\label{ahahm}
    \Tr[\rho(V,0)\,\{\hat{R}_j,\{\hat{R}_k,\hat{R}_l\} \}  ]&=0\,.
\ee
Finally, note that
\bb
    \Tr[\rho(V,0)\,\hat{R}_j\hat{R}_k\hat{R}_l   ]&=\frac{\Tr[\rho(V,0)\,\hat{R}_j\{\hat{R}_k,\hat{R}_l\}   ]}{2}\\
    &=\frac{\Tr[\rho(V,0)\,\{\hat{R}_j\{\hat{R}_k,\hat{R}_l\}\}   ]}{4}\\
    &=0\,,
\ee
where the two equalities follow from the canonical commutation relation and from the fact that $\rho(V,0)$ has zero first moment, while the last equality follows from~\eqref{ahahm}.
\end{proof}
It is worth noting that the approach used in the lemma above can be easily generalised to show that \emph{any} odd moment of a Gaussian state with zero first moment vanishes. However, in order to prove such a more general result, we use a much simpler approach.
\begin{lemma}
    A Gaussian state with vanishing first moments has also vanishing odd moments.
\end{lemma}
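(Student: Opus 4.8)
The plan is to bypass the characteristic-function computation used for the third moment and instead exploit the \emph{parity symmetry} of centered Gaussian states, which is exactly the ``much simpler approach'' being advertised. The key observation is that $S\coloneqq -\id$ is a symplectic matrix, since $S\Omega S^\intercal=(-\id)\,\Omega\,(-\id)=\Omega$. Hence there is a Gaussian unitary $\hat{U}$ (the parity operator) implementing it, i.e.\ satisfying $\hat{U}^\dagger\hat{\mathbf{R}}\hat{U}=S\hat{\mathbf{R}}=-\hat{\mathbf{R}}$, or componentwise $\hat{U}^\dagger\hat{R}_j\hat{U}=-\hat{R}_j$ for every $j\in\{1,\ldots,2n\}$.

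First I would record that $\rho(V,0)$ is invariant under conjugation by $\hat U$. A symplectic unitary associated with $S$ transforms a Gaussian state by $(V,\mathbf{m})\mapsto(SVS^\intercal,\,S\mathbf{m})$; here $SVS^\intercal=(-\id)V(-\id)=V$ and $S\mathbf{0}=\mathbf{0}$, so
\begin{equation}
\hat U\,\rho(V,0)\,\hat U^\dagger=\rho\!\left(SVS^\intercal,\,S\mathbf{0}\right)=\rho(V,0)\,.
\end{equation}
The vanishing of the first moment is precisely what makes the transformed first moment $S\mathbf{0}$ coincide with the original, so the symmetry is special to centered states.

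Next I would combine invariance with cyclicity of the trace. For any odd product of quadratures $\hat{R}_{i_1}\cdots\hat{R}_{i_{2k+1}}$, write
\begin{align}
\Tr\!\left[\rho(V,0)\,\hat{R}_{i_1}\cdots\hat{R}_{i_{2k+1}}\right]
&=\Tr\!\left[\hat U\rho(V,0)\hat U^\dagger\,\hat{R}_{i_1}\cdots\hat{R}_{i_{2k+1}}\right]\nonumber\\
&=\Tr\!\left[\rho(V,0)\,\hat U^\dagger\hat{R}_{i_1}\cdots\hat{R}_{i_{2k+1}}\hat U\right]\nonumber\\
&=\Tr\!\left[\rho(V,0)\,\big(\hat U^\dagger\hat{R}_{i_1}\hat U\big)\cdots\big(\hat U^\dagger\hat{R}_{i_{2k+1}}\hat U\big)\right]\nonumber\\
&=(-1)^{2k+1}\,\Tr\!\left[\rho(V,0)\,\hat{R}_{i_1}\cdots\hat{R}_{i_{2k+1}}\right]\,,
\end{align}
where I inserted $\hat U\hat U^\dagger=\id$ between consecutive quadratures. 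Since $(-1)^{2k+1}=-1$, the moment equals its own negative and therefore vanishes; the same conclusion holds verbatim for symmetrized or anti-commutator-ordered odd moments, recovering Lemma~\ref{lemma_3_moment} as the case $2k+1=3$.

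The main obstacle is purely a matter of rigor in manipulating \emph{unbounded} operators under the trace: the displacement/cyclicity steps and the existence of $\hat U$ with the stated action on $\hat{\mathbf{R}}$ must be justified on a suitable dense domain. I expect this to be handled by noting that $\rho(V,0)$ is trace class with rapidly decaying characteristic function, so all moments $\Tr[\rho\,\hat{R}_{i_1}\cdots\hat{R}_{i_{2k+1}}]$ are finite and the operators $\rho(V,0)\,\hat{R}_{i_1}\cdots\hat{R}_{i_m}$ are trace class; the parity operator is the standard second-quantized implementation of $S=-\id$, and the trace identities then hold by a density/approximation argument. Everything else is elementary, which is precisely why this route is preferable to differentiating the characteristic function.
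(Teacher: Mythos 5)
Your proof is correct, and it rests on the same underlying symmetry as the paper's --- invariance of a centered Gaussian state under phase-space inversion --- but you deploy it along a genuinely more direct route. The paper first uses the Williamson decomposition to reduce to a tensor product of thermal states (exploiting that Gaussian unitaries preserve the degree of polynomials in the quadratures), then traces out modes to reduce to a single-mode thermal state, and only then invokes the inversion symmetry on that concrete, explicitly diagonal state. You instead observe that $S=-\id$ is symplectic and satisfies $SVS^\intercal=V$ for \emph{every} covariance matrix, so the parity unitary fixes the full $n$-mode centered state, and the sign flip $(-1)^{2k+1}=-1$ kills every odd moment in one line via trace cyclicity. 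What your route buys is economy: no reduction is needed, and the argument covers symmetrized and anti-commutator-ordered moments uniformly (subsuming the paper's separate third-moment computation). What it costs is that the burden shifts entirely onto justifying cyclicity of the trace against unbounded quadrature products; you correctly flag this, and it is indeed covered by the Schwartz-operator framework the paper already establishes (centered Gaussian states are Schwartz operators, so $\rho\,\hat{R}_{i_1}\cdots\hat{R}_{i_m}$ is trace class and conjugation by the bounded parity unitary commutes with the trace). The paper's reduction sidesteps this functional-analytic care by landing on a single-mode thermal state where the cancellation can be seen concretely. Both proofs are sound; yours is arguably the cleaner one to state.
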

\begin{proof}
    We need to show that the overlap between a Gaussian state with vanishing first moment and a product of an odd number of quadratures vanishes.

    To this end, note that any Gaussian unitary transformation preserves the degree of any polynomial in the quadratures. Hence, thanks to Williamson decomposition, we can assume without loss of generality that the Gaussian state is a tensor product of thermal states. Additionally, taking the trace of each mode, we can further assume without loss of generality the Gaussian state to be a single-mode thermal state. Thus, it suffices to demonstrate that any single-mode thermal state has vanishing odd moments.

    Since thermal states are invariant under phase-space inversion and the product of an odd number of quadratures changes sign under such inversion, it follows that the overlap between a single-mode thermal state and a product of an odd number of quadratures must vanish.  
\end{proof}

\subsection{Derivative of a Gaussian state}\label{Sec_Der}
In this subsection, we first define the \emph{derivative of a Gaussian state} and then prove relevant properties. 
Throughout the rest of the subsection, we will use the following notation. For any $2n\times 2n$ real positive definite matrix $V$ and any real $2n$-dimensional vector $\mathbf{m}$, we define the operator $\rho(V,\mathbf{m})$ as 
\bb\label{def_V_gtr0}
\rho(V,\mathbf{m})\coloneqq\int_{\R^{2n}}\frac{\mathrm{d}\mathbf{r}}{(2\pi)^n}   e^{ -\frac{1}{4}(\Omega \mathbf{r})^{\intercal}V\Omega \mathbf{r} -i\mathbf{r}^\intercal \Omega \mathbf{m} }\hat{D}_{\mathbf{r}}^\dagger\,.
\ee 
If $V$ satisfies the uncertainty principle $V+i\Omega\ge0$, the operator $\rho(V,\mathbf{m})$ is exactly equal to the Gaussian state with first moment $\mathbf{m}$ and covariance matrix $V$. This follows from the Fourier-Weyl relation in~\eqref{eq_fourier_weyl} and from the expression of the characteristic function of a Gaussian state in~\eqref{eq:charact_Gauss_Prel}. For completeness, note that if $V$ does not satisfy the uncertainty principle, the operator $\rho(V,\mathbf{m})$ fails to be positive semi-definite (this can be seen e.g.~by contradiction and by observing that the covariance matrix of $\rho(V,\mathbf{m})$ is exactly $V$), and hence it is not a quantum state.

We begin with the definition of the directional derivative of a Gaussian state. 
\begin{Def}[(Derivative of a Gaussian state)]\label{def:dir_dev}
    Let $\rho(V,\mathbf{m})$ be an $n$-mode Gaussian state with first moment $\mathbf{m}$ and covariance matrix $V$. Let $X$ be a $2n\times 2n$ real symmetric matrix $X$ and let $\mathbf{x}$ be a real $2n$-dimensional vector. We define the directional derivative of the Gaussian state $\rho(V,\mathbf{m})$ along the direction $(X,\mathbf{x})$ as
    \bb\label{eq_der_def_0}
        \partial_{X,\mathbf{x}}\,\rho(V,\mathbf{m}) \coloneqq \lim\limits_{\alpha\to 0^+}\frac{\rho(V + \alpha X,\mathbf{m}+\alpha\mathbf{x}) - \rho(V,\mathbf{m})}{\alpha} \, ,
    \ee 
    where the limit is taken with respect to the trace norm. 
\end{Def}
Note that since $V$ is positive definite, the matrix $V+\alpha X$ is positive definite for sufficiently small values of $\alpha>0$, and thus the operator $\rho(V + \alpha X,\mathbf{m}+\alpha\mathbf{x})$ in~\eqref{eq_der_def_0} is well defined. The existence of this limit is guaranteed by the fact that, for any $V$ and $\mathbf{m}$, the state $\rho(V,\mathbf{m})$ is continuously differentiable, as proven in Theorem~\ref{thm:gauss_regularity} of Section~\ref{sec:proof_diff}.

\subsubsection{Compact formula for the derivative of a Gaussian state}
Here, we prove a compact formula for the derivative of a Gaussian state.
\begin{thm}[(Compact formula for the derivative of a Gaussian state)]\label{lemma_der_state}
    The derivative of the Gaussian state $\rho(V,\mathbf{m})$ along the direction $(X,\mathbf{x})$ can be expressed as
    \begin{align}
    \partial_{X,\mathbf{x}}\,\rho(V,\mathbf{m}) 
    = -\frac{1}{4} \sum_{i,j=1}^{2n} \tilde{X}_{ij} [\hat{R}_i, [\hat{R}_j, \rho(V, \mathbf{m})]] + i \sum_{j=1}^{2n} \tilde{x}_j [\hat{R}_j, \rho(V, \mathbf{m})],
    \end{align}
    where
    \begin{align}
    \tilde{X} &\coloneqq \Omega^\intercal X \Omega, \\
    \tilde{\mathbf{x}} &\coloneqq \Omega \mathbf{x}.
    \end{align}
\end{thm}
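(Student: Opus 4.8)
The plan is to prove the identity at the level of \emph{characteristic functions}, exploiting that the Fourier--Weyl transform~\eqref{eq_fourier_weyl} is a linear bijection between operators and their characteristic functions. First I would note that, by Theorem~\ref{thm:gauss_regularity}, the incremental ratios in Definition~\ref{def:dir_dev} converge in trace norm, and since $|\Tr[\Theta\hat{D}_{\mathbf{r}}]|\le\|\Theta\|_1$ for every $\mathbf{r}$ (because $\hat{D}_{\mathbf{r}}$ is unitary), trace-norm convergence forces pointwise convergence of the associated characteristic functions. Hence the characteristic function of $\partial_{X,\mathbf{x}}\,\rho(V,\mathbf{m})$ equals the $\alpha$-derivative at $\alpha=0$ of $\chi_{\rho(V+\alpha X,\mathbf{m}+\alpha\mathbf{x})}$. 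Differentiating the explicit Gaussian formula~\eqref{eq:charact_Gauss_Prel} with respect to $\alpha$ is then elementary and yields $\chi_{\partial_{X,\mathbf{x}}\rho}(\mathbf{r})=\bigl(-\tfrac14(\Omega\mathbf{r})^\intercal X\Omega\mathbf{r}+i(\Omega\mathbf{r})^\intercal\mathbf{x}\bigr)\,\chi_{\rho(V,\mathbf{m})}(\mathbf{r})$.

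The core of the argument is then a ``dictionary'' translating multiplication of a characteristic function by components of $\mathbf{r}$ into commutators with quadratures. The key identity I would establish is $[\hat{R}_k,\hat{D}_{\mathbf{r}}]=r_k\hat{D}_{\mathbf{r}}$, which follows from the canonical commutation relation $[\hat{R}_k,\hat{R}_l]=i\Omega_{kl}$ together with $\Omega\Omega^\intercal=\id$, so that $[\hat{R}_k,-i\mathbf{r}^\intercal\Omega\hat{\mathbf{R}}]=r_k$ is central and can be pulled out of the exponential. Using cyclicity of the trace, this gives $\chi_{[\hat{R}_k,\Theta]}(\mathbf{r})=\Tr\bigl[\Theta\,[\hat{D}_{\mathbf{r}},\hat{R}_k]\bigr]=-r_k\,\chi_{\Theta}(\mathbf{r})$, the exact commutator analogue of the anticommutator relation in Lemma~\ref{der_charact_general} (itself built on Lemma~\ref{lemma_anticomm_dr}). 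Iterating once more yields $\chi_{[\hat{R}_i,[\hat{R}_j,\Theta]]}(\mathbf{r})=r_ir_j\,\chi_{\Theta}(\mathbf{r})$.

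Finally I would match the prefactors. Writing the quadratic and linear forms componentwise gives $(\Omega\mathbf{r})^\intercal X\Omega\mathbf{r}=\sum_{i,j}\tilde{X}_{ij}r_ir_j$ with $\tilde{X}=\Omega^\intercal X\Omega$, and $(\Omega\mathbf{r})^\intercal\mathbf{x}=-\mathbf{r}^\intercal\Omega\mathbf{x}=-\sum_j\tilde{x}_jr_j$ with $\tilde{\mathbf{x}}=\Omega\mathbf{x}$. Substituting the dictionary $r_ir_j\,\chi_{\rho}=\chi_{[\hat{R}_i,[\hat{R}_j,\rho]]}$ and $r_j\,\chi_{\rho}=-\chi_{[\hat{R}_j,\rho]}$ into the expression for $\chi_{\partial_{X,\mathbf{x}}\rho}$ reproduces exactly the characteristic function of the claimed right-hand side; injectivity of the Fourier--Weyl transform then gives the operator identity. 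The main obstacle is not the algebra but the \emph{analytic justification} underlying the first two paragraphs: one must ensure that the unbounded double commutators $[\hat{R}_i,[\hat{R}_j,\rho]]$ are genuinely trace-class and that the cyclic trace manipulations defining the dictionary are legitimate when unbounded quadratures are involved. This is controlled by the rapid (Gaussian/Schwartz) decay of $\chi_{\rho}$ and the regularity established in Theorem~\ref{thm:gauss_regularity}, which together guarantee that every operator appearing is trace-class and that differentiation commutes with the transform.
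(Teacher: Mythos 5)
Your proposal is correct and follows essentially the same route as the paper: the paper differentiates the Fourier--Weyl integral representation under the integral sign (justified by Theorem~\ref{thm:gauss_regularity}) and then converts the $r_i$-weighted integrals into commutators via Lemma~\ref{lemma_comm_char}, which is itself proved exactly by your characteristic-function dictionary $\chi_{[\Theta,\hat R_i]}(\mathbf{r})=r_i\chi_\Theta(\mathbf{r})$ built on the identity $[\hat R_j,\hat D_{\mathbf{r}}]=r_j\hat D_{\mathbf{r}}$ (Lemma~\ref{identity_commutator}). Your phrasing of the argument on the characteristic-function side with injectivity of the Fourier--Weyl transform, and your closing remark on the trace-class/boundedness issues for the nested commutators (handled in the paper via the Schwartz-operator framework of~\cite{schwartz_op}), match the paper's proof in all essentials.
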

Before proving Theorem~\ref{lemma_der_state}, let us prove some preliminary results.

\begin{lemma}\label{identity_commutator}
    Let $j \in \{1,2,\ldots,2n\}$ and $\mathbf{r}\in\R^{2n}$. The commutator between the quadrature operator $\hat{R}_i$ and the displacement operator $\hat{D}_\mathbf{r} \coloneqq e^{-i \mathbf{r}^\intercal \Omega \hat{R}}$ can be expressed as
    \begin{align}
        [\hat{R}_j,\hat{D}_\mathbf{r}] = r_j \hat{D}_\mathbf{r}\,,
    \end{align}
    where $r_j$ is the $j$th component of the vector $\mathbf{r}$.
\end{lemma}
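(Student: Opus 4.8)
The plan is to turn the commutator into a one-line algebraic computation governed by the canonical commutation relation. First I would write $\hat D_\mathbf{r}=e^{A}$ with the anti-Hermitian exponent $A\coloneqq -i\,\mathbf{r}^\intercal\Omega\hat{\mathbf{R}}=-i\sum_{k,l}r_k\Omega_{kl}\hat R_l$, so that the statement to prove becomes $[\hat R_j,e^{A}]=r_j\,e^{A}$.

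The key observation is that $[\hat R_j,A]$ is a scalar, i.e.\ a multiple of the identity operator. Using the canonical commutation relation $[\hat R_j,\hat R_l]=i\,\Omega_{jl}$, a direct computation gives
\begin{align}
[\hat R_j,A]=-i\sum_{k,l}r_k\Omega_{kl}\,[\hat R_j,\hat R_l]=\sum_{k,l}r_k\,\Omega_{kl}\Omega_{jl}=\sum_{k}r_k\,(\Omega\Omega^\intercal)_{kj}\,.
\end{align}
Since $\Omega^\intercal=-\Omega$ and $\Omega^2=-\mathbb{1}$ as $2n\times 2n$ matrices, one has $\Omega\Omega^\intercal=\mathbb{1}$, so $(\Omega\Omega^\intercal)_{kj}=\delta_{kj}$ and hence $[\hat R_j,A]=r_j\,\hat{\mathbb{1}}$.

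It then remains to feed this into the elementary operator identity: whenever $[B,A]$ commutes with $A$ --- in particular when $[B,A]$ is proportional to the identity --- one has $[B,e^{A}]=[B,A]\,e^{A}$. I would justify this by first proving $[\hat R_j,A^{m}]=m\,[\hat R_j,A]\,A^{m-1}$ by induction (using that $[\hat R_j,A]$ is central) and then summing the exponential series term by term. Taking $B=\hat R_j$ and substituting $[\hat R_j,A]=r_j\hat{\mathbb{1}}$ yields $[\hat R_j,\hat D_\mathbf{r}]=r_j\,\hat D_\mathbf{r}$, which is the claim.

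The algebra is entirely routine, so I expect the only genuine obstacle to be analytic rather than algebraic: $\hat R_j$ is unbounded, so the series manipulation and the identity $[\hat R_j,e^{A}]=[\hat R_j,A]\,e^{A}$ must be justified on a suitable common invariant dense domain (for instance the Schwartz space, or the finite span of Hermite functions), on which all the formal steps above hold rigorously. Once such a domain is fixed, the argument goes through unchanged.
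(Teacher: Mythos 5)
Your proposal is correct and follows essentially the same route as the paper: expand the exponential, use $[B,A^{k}]=k[B,A]A^{k-1}$ when $[B,A]$ is central, and compute $[\hat R_j,-i\,\mathbf{r}^\intercal\Omega\hat{\mathbf{R}}]=r_j\hat{\mathbb{1}}$ from the canonical commutation relation together with $\Omega\Omega^\intercal=\mathbb{1}$. Your closing remark about justifying the series manipulation on a common invariant dense domain is a point the paper's proof silently skips, so it is a welcome (if minor) addition rather than a divergence.
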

\begin{proof}
    It holds that
    \bb
    [\hat{R}_j, \hat{D}_\mathbf{r}] &= [\hat{R}_j, e^{-i \mathbf{r}^\intercal \Omega \hat{R}}]\\
    &= \sum_{k=1}^\infty \frac{1}{k!} \left[\hat{R}_j, \left(-i \mathbf{r}^\intercal \Omega \hat{R}\right)^k \right]\\
    &\eqt{(i)} \left[\hat{R}_j, -i \mathbf{r}^\intercal \Omega \hat{R} \right] \sum_{k=1}^\infty \frac{1}{(k-1)!} \left(-i \mathbf{r}^\intercal \Omega \hat{R}\right)^{k-1}\\
    &=\left[\hat{R}_j, -i \mathbf{r}^\intercal \Omega \hat{R} \right] \hat{D}_\mathbf{r},\\
    &\eqt{(ii)} r_j \hat{D}_\mathbf{r}\,.
    \ee
    Here, in (i) we exploited that 
    \bb
    \left[A, B^k\right] = k[A, B] B^{k-1}
    \ee
    for any $A,B$ satisfying $[A,[A, B]]=[B,[A, B]]=0$. Moreover, in (ii), by leveraging the canonical commutation relation $[\hat{R}_j, \hat{R}_k] = i \Omega_{jk}$ and the fact that $\Omega \Omega^\intercal = \mathbb{1}$, we observed that
    \bb
    \left[\hat{R}_j, -i \mathbf{r}^\intercal \Omega \hat{R} \right] &= -i \sum_{l,k=1}^{2n} r_l \Omega_{lk} \left[\hat{R}_j, \hat{R}_k\right]\\
    &=\sum_{l,k=1}^{2n} r_l \Omega_{lk} \Omega_{jk}\hat{\mathbb{1}}\\
    &=\sum_{l=1}^{2n} r_l \delta_{lj} \hat{\mathbb{1}}\\
    &= r_j\hat{\mathbb{1}}\,.
    \ee
\end{proof}

\begin{lemma}\label{lemma_comm_char}
    Let $i \in \{1,2,\ldots,2n\}$. Let $\Theta$ be a bounded linear operator on $L^2(\R^{2n})$ such that $[\hat{R}_i,\Theta]$ is also bounded. Then, it holds that
    \bb
    \int_{\R^{2n}} \frac{\mathrm{d} \mathbf{r}}{(2\pi)^n} r_i \chi_\Theta(\mathbf{r}) \hat{D}_{\mathbf{r}}^\dagger = [\Theta,\hat{R}_i]\,.
    \ee
\end{lemma}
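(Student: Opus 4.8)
The plan is to recognise that the right-hand side $[\Theta,\hat{R}_i]$ is simply the Fourier--Weyl transform of an operator whose characteristic function equals $r_i\,\chi_\Theta(\mathbf{r})$. Concretely, I would prove the pointwise identity $r_i\,\chi_\Theta(\mathbf{r})=\chi_{[\Theta,\hat{R}_i]}(\mathbf{r})$ between characteristic functions, and then invoke the Fourier--Weyl inversion formula~\eqref{eq_fourier_weyl} applied to the bounded operator $[\Theta,\hat{R}_i]$.

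First I would establish the characteristic-function identity using Lemma~\ref{identity_commutator}, which gives $[\hat{R}_i,\hat{D}_{\mathbf{r}}]=r_i\hat{D}_{\mathbf{r}}$, equivalently $r_i\hat{D}_{\mathbf{r}}=[\hat{R}_i,\hat{D}_{\mathbf{r}}]$. Multiplying by $\Theta$, taking the trace, and using cyclicity of the trace yields
\bb
r_i\,\chi_\Theta(\mathbf{r}) = \Tr\!\left[\Theta\,[\hat{R}_i,\hat{D}_{\mathbf{r}}]\right] = \Tr\!\left[[\Theta,\hat{R}_i]\,\hat{D}_{\mathbf{r}}\right] = \chi_{[\Theta,\hat{R}_i]}(\mathbf{r})\,.
\ee
Since $[\Theta,\hat{R}_i]$ is bounded by hypothesis, its characteristic function is well defined, and the Fourier--Weyl inversion~\eqref{eq_fourier_weyl} applies to it, giving
\bb
[\Theta,\hat{R}_i] = \int_{\R^{2n}}\frac{\mathrm{d}\mathbf{r}}{(2\pi)^n}\,\chi_{[\Theta,\hat{R}_i]}(\mathbf{r})\,\hat{D}_{\mathbf{r}}^\dagger = \int_{\R^{2n}}\frac{\mathrm{d}\mathbf{r}}{(2\pi)^n}\,r_i\,\chi_\Theta(\mathbf{r})\,\hat{D}_{\mathbf{r}}^\dagger\,,
\ee
which is exactly the claimed identity.

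The main obstacle is making the cyclicity step $\Tr[\Theta\,\hat{D}_{\mathbf{r}}\,\hat{R}_i]=\Tr[\hat{R}_i\,\Theta\,\hat{D}_{\mathbf{r}}]$ rigorous, since $\hat{R}_i$ is unbounded and the product $\Theta\hat{D}_{\mathbf{r}}$ need not be trace class. I would handle this by testing against matrix elements $\bra{\psi}\,\cdot\,\ket{\phi}$ for $\ket{\psi},\ket{\phi}$ in a dense domain of Schwartz vectors, on which $\hat{R}_i$, the displacement operators, and $\Theta$ all act well, and on which the required integration-by-parts/cyclicity manipulations are genuinely justified. The hypothesis that $[\hat{R}_i,\Theta]$ is bounded is precisely what guarantees that $\chi_{[\Theta,\hat{R}_i]}$ is a legitimate (sufficiently integrable) characteristic function, so that the Fourier--Weyl integral on the right converges and the reconstruction step is valid; without it the integrand $r_i\,\chi_\Theta(\mathbf{r})$ could fail to be integrable. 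Apart from this functional-analytic bookkeeping, the argument is a short computation.
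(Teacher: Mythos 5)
Your proposal is correct and follows essentially the same route as the paper's own proof: both establish the pointwise identity $\chi_{[\Theta,\hat{R}_i]}(\mathbf{r})=r_i\,\chi_\Theta(\mathbf{r})$ via Lemma~\ref{identity_commutator} together with cyclicity of the trace, and then conclude by the Fourier--Weyl relation~\eqref{eq_fourier_weyl} applied to the bounded operator $[\Theta,\hat{R}_i]$. Your additional remarks on justifying the cyclicity step on a dense domain and on the role of the boundedness hypothesis only make explicit what the paper leaves implicit.
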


\begin{proof}
	Note that the characteristic function of $[\Theta,\hat{R}_i]$ satisfies
	\bb
	\chi_{[\Theta,\hat{R}_i]}(\mathbf{r})=\Tr\!\left[ [\Theta,\hat{R}_i]\hat{D}_{\mathbf{r}}  \right]=\Tr\!\left[ \Theta [\hat{R}_i,\hat{D}_{\mathbf{r}}]  \right]\eqt{(i)} r_i\Tr\!\left[ \Theta \hat{D}_{\mathbf{r}}  \right]=r_i\chi_{\Theta}(\mathbf{r})\,,
	\ee
	where in (i) we exploited Lemma~\ref{identity_commutator}. Hence, the Fourier-Weyl relation in~\eqref{eq_fourier_weyl} concludes the proof.
\end{proof}
Now, we are ready to prove Theorem~\ref{lemma_der_state}.
\begin{proof}[Proof of Theorem~\ref{lemma_der_state}]
By exploiting~\eqref{def_V_gtr0} and the expression of the characteristic function of a Gaussian state in~\eqref{eq:charact_Gauss_Prel}, we have that
\bb
	\rho(V+\alpha X,\mathbf{m}+\alpha\textbf{x})&=\int_{\R^{2n}}\frac{\mathrm{d}\mathbf{r}}{(2\pi)^n} \chi_{\rho(V,\mathbf{m})}(\mathbf{r})  e^{-\alpha\frac{1}{4}(\Omega \mathbf{r})^{\intercal}X\Omega \mathbf{r}-i\alpha\mathbf{r}^\intercal \Omega \textbf{x}}\hat{D}_{\mathbf{r}}^\dagger \,.
\ee
Consequently, it holds that
\bb
    \partial_{X,\mathbf{x}}\,\rho(V,\mathbf{m})&=\lim\limits_{\alpha\to 0^+}\int_{\R^{2n}}\frac{\mathrm{d}\mathbf{r}}{(2\pi)^n} \chi_{\rho(V,\mathbf{m})}(\mathbf{r})  \left( \frac{e^{-\alpha\frac{1}{4}(\Omega \mathbf{r})^{\intercal}X\Omega \mathbf{r}-i\alpha\mathbf{r}^\intercal \Omega \textbf{x}}-1}{\alpha} \right)\hat{D}_{\mathbf{r}}^\dagger \\
    &\eqt{(i)}\int_{\R^{2n}}\frac{\mathrm{d}\mathbf{r}}{(2\pi)^n} \chi_{\rho(V,\mathbf{m})}(\mathbf{r})  \left( \lim\limits_{\alpha\to 0^+}\frac{e^{-\alpha\frac{1}{4}(\Omega \mathbf{r})^{\intercal}X\Omega \mathbf{r}-i\alpha\mathbf{r}^\intercal \Omega \textbf{x}}-1}{\alpha} \right)\hat{D}_{\mathbf{r}}^\dagger \\
    &=\int_{\R^{2n}}\frac{\mathrm{d}\mathbf{r}}{(2\pi)^n} \chi_{\rho(V,\mathbf{m})}(\mathbf{r})  \left( -\frac{1}{4}(\Omega \mathbf{r})^{\intercal}X\Omega \mathbf{r}-i\mathbf{r}^\intercal \Omega \textbf{x}\right)\hat{D}_{\mathbf{r}}^\dagger \\
    &= -\int_{\R^{2n}}\frac{\mathrm{d}\mathbf{r}}{(2\pi)^n}  \chi_{\rho(V,\mathbf{m})}(\mathbf{r}) 
	\frac{1}{4}\mathbf{r}^{\intercal} \tilde{X} \mathbf{r} \,  \hat{D}_{\mathbf{r}}^\dagger-i\int_{\R^{2n}}\frac{\mathrm{d}\mathbf{r}}{(2\pi)^n} \chi_{\rho(V,\mathbf{m})}(\mathbf{r})  
	\mathbf{r}^\intercal \tilde{\textbf{x}}   \,\hat{D}_{\mathbf{r}}^\dagger  \\
    &=-\frac{1}{4}\sum_{i,j=1}^{2n}\tilde{X}_{ij} \int_{\R^{2n}}\frac{\mathrm{d}\mathbf{r}}{(2\pi)^n}   \chi_{\rho(V,\mathbf{m})}(\mathbf{r})
	r_ir_j    \hat{D}_{\mathbf{r}}^\dagger-i\sum_{j=1}^{2n}\tilde{x}_j\int_{\R^{2n}}\frac{\mathrm{d}\mathbf{r}}{(2\pi)^n} \chi_{\rho(V,\mathbf{m})}(\mathbf{r})  
	r_j  \hat{D}_{\mathbf{r}}^\dagger \,.
\ee
Here, the equality (i) is justified by the differentiability of Gaussian states proved in Theorem~\ref{thm:gauss_regularity}. Additionally, Lemma~\ref{lemma_comm_char} guarantees that
\bb
    \int_{\R^{2n}}\frac{\mathrm{d}\mathbf{r}}{(2\pi)^n} \chi_{\rho(V,\mathbf{m})}(\mathbf{r})  
	r_j  \hat{D}_{\mathbf{r}}^\dagger &=[\rho(V,\mathbf{m}),\hat{R}_j]\,,\\
    \int_{\R^{2n}}\frac{\mathrm{d}\mathbf{r}}{(2\pi)^n}   \chi_{\rho(V,\mathbf{m})}(\mathbf{r})
	r_ir_j    \hat{D}_{\mathbf{r}}^\dagger&=[[\rho(V,\mathbf{m}),\hat{R}_i],\hat{R}_j]\,.
\ee
The assumptions of Lemma~\ref{lemma_comm_char} are satisfied because $[\rho(V,\mathbf{m}),\hat{R}_i]$ and $[[\rho(V,\mathbf{m}),\hat{R}_i],\hat{R}_j]$ are bounded for any $i,j\in\{1,2,\ldots,2n\}$, as a consequence of Proposition 3.18 in the arXiv version of~\cite{schwartz_op}. In particular, we conclude that
\bb
\partial_{X,\mathbf{x}}\,\rho(V,\mathbf{m})&=-\frac{1}{4}\sum_{i,j=1}^{2n}\tilde{X}_{ij} [[\rho(V,\mathbf{m}),\hat{R}_j],\hat{R}_i]+i\sum_{j=1}^{2n}\tilde{x}_j[\hat{R}_j,\rho(V,\mathbf{m})]  \\
	&=-\frac{1}{4}\sum_{i,j=1}^{2n}\tilde{X}_{ij} [\hat{R}_i,[\hat{R}_j,\rho(V,\mathbf{m})]]+i\sum_{j=1}^{2n}\tilde{x}_j[\hat{R}_j,\rho(V,\mathbf{m})]\,.
\ee
\end{proof}
\subsubsection{Upper bound on the trace norm of the derivative of a Gaussian state}
Here, we derive a simple upper bound on the trace norm of the derivative of a Gaussian state.
\begin{lemma}[(Upper bound on the trace norm of the derivative of a Gaussian state)]\label{lemma_upp_mixed}
    Let $\rho(V,0)$ be a Gaussian state with covariance matrix $V$ and zero first moment. Then, the trace norm of the derivative of the Gaussian state $\rho(V,0)$ along the direction $(X,0)$ can be upper bounded as
	\bb
	\|  \partial_{X,0}\rho(V,0) \|_1&\leq \frac{1+\sqrt{3}}{4}\Tr[|X|\Omega^\intercal V\Omega]\,.
	\ee
\end{lemma}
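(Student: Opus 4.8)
The plan is to reduce the general (mixed) statement to the case of a \emph{pure} Gaussian state, where the exact formula of Lemma~\ref{exact_der_main} is available, and then to bound the two resulting terms separately, each contributing one of the two summands in $\tfrac{1+\sqrt3}{4}=\tfrac14+\tfrac{\sqrt3}{4}$. For the reduction, I would invoke Lemma~\ref{lemma_mixed_gauss} to write $\rho(V,0)=\NN_K\big(\rho(W,0)\big)$, where $W=SS^\intercal\le V$ is a pure covariance matrix and $K=S(D-\mathbb 1)S^\intercal\ge 0$. Since the Gaussian noise channel $\NN_K$ merely adds $K$ to the covariance matrix (Lemma~\ref{lemma_add_gauss}), it commutes with the shift $V\mapsto V+\alpha X$, so $\rho(V+\alpha X,0)=\NN_K\big(\rho(W+\alpha X,0)\big)$ for small $\alpha>0$; passing to the incremental ratio and using linearity and trace-norm continuity of $\NN_K$ gives $\partial_{X,0}\rho(V,0)=\NN_K\big(\partial_{X,0}\rho(W,0)\big)$. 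As $\NN_K$ is a quantum channel and the derivative is a traceless Hermitian operator, contractivity of the trace norm under CPTP maps yields $\|\partial_{X,0}\rho(V,0)\|_1\le\|\partial_{X,0}\rho(W,0)\|_1$. Finally, $W\le V$ gives $\Tr[|X|\Omega^\intercal W\Omega]\le\Tr[|X|\Omega^\intercal V\Omega]$, because $\Omega^\intercal(V-W)\Omega\ge 0$ and $|X|\ge0$. It therefore suffices to prove the bound with $V$ replaced by the pure $W$.

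Writing $\tilde X=\Omega X\Omega^\intercal$ as in Lemma~\ref{exact_der_main} and recalling that $\Omega$ is orthogonal (so $|\tilde X|=\Omega|X|\Omega^\intercal$ and $\Tr[|\tilde X|W]=\Tr[|X|\Omega^\intercal W\Omega]$), the goal becomes $T_1+T_2\le\tfrac{1+\sqrt3}{4}\Tr[|\tilde X|W]$, where $T_1,T_2$ are the two summands of Lemma~\ref{exact_der_main}. For the displacement-type term $T_2=\tfrac14\|A\tilde X A\|_1$ with $A=\sqrt{W+i\Omega}$, I would use the elementary inequality $\|A\Theta A^\dagger\|_1\le\Tr[|\Theta|\,A^\dagger A]$, valid for Hermitian $\Theta$ (expand $\Theta$ in its eigenbasis and apply the triangle inequality to the resulting sum of rank-one terms). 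Since $A^\dagger A=W+i\Omega$ and $\Tr[|\tilde X|\,\Omega]=0$ (trace of a symmetric times an antisymmetric matrix), this gives $T_2\le\tfrac14\Tr[|\tilde X|W]$, the $\tfrac14$ contribution.

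It remains to show $T_1\le\tfrac{\sqrt3}{4}\Tr[|\tilde X|W]$, i.e.
\[
\big(\Tr[\tilde X W]\big)^2+2\Tr[\tilde X W\tilde X W]+2\Tr[\Omega\tilde X\Omega\tilde X]\le 3\big(\Tr[|\tilde X|W]\big)^2 .
\]
Using the pure-state Williamson form $W=SS^\intercal$ with $S$ symplectic ($S\Omega S^\intercal=\Omega$) and substituting $Z\coloneqq S^\intercal\tilde X S$, a short computation (exploiting $\Omega=S\Omega S^\intercal$) shows the left-hand side equals $g(Z)\coloneqq(\Tr Z)^2+2\Tr[Z^2]+2\Tr[\Omega Z\Omega Z]$, while $Z$ is orthogonally conjugate to $W^{1/2}\tilde X W^{1/2}$, so that $\Tr|Z|=\|W^{1/2}\tilde X W^{1/2}\|_1\le\Tr[|\tilde X|W]$ by the same inequality as above. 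Thus it is enough to prove the clean matrix inequality $g(Z)\le 3(\Tr|Z|)^2$ for every real symmetric $Z$. The key step is to split $Z=Z_{\mathrm s}+Z_{\mathrm a}$ into its $\Omega$-symmetric and $\Omega$-antisymmetric parts, $Z_{\mathrm a}\coloneqq\tfrac12(Z-\Omega Z\Omega^\intercal)$; one checks $\Tr[Z_{\mathrm s}Z_{\mathrm a}]=0$ and $\Tr[Z_{\mathrm a}]=0$, which collapse $g$ to $g(Z)=(\Tr Z)^2+4\Tr[Z_{\mathrm a}^2]$. Now $Z_{\mathrm a}$ anticommutes with $\Omega$, so its spectrum is symmetric about $0$; hence $\Tr[Z_{\mathrm a}^2]\le\tfrac12(\Tr|Z_{\mathrm a}|)^2$, and the triangle inequality together with $\Tr|\Omega Z\Omega^\intercal|=\Tr|Z|$ gives $\Tr|Z_{\mathrm a}|\le\Tr|Z|$. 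Combining with $(\Tr Z)^2\le(\Tr|Z|)^2$ yields $g(Z)\le(\Tr|Z|)^2+2(\Tr|Z|)^2=3(\Tr|Z|)^2$, as desired.

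Adding the two bounds gives $\|\partial_{X,0}\rho(W,0)\|_1=T_1+T_2\le\tfrac{1+\sqrt3}{4}\Tr[|\tilde X|W]=\tfrac{1+\sqrt3}{4}\Tr[|X|\Omega^\intercal W\Omega]$, and the reduction step upgrades $W$ to $V$, completing the proof. I expect the main obstacle to be the first-term analysis: finding the symplectic substitution $Z=S^\intercal\tilde X S$ that turns the exact formula into the $\Omega$-invariant quantity $g(Z)$, and then recognizing that the $\Omega$-(anti)symmetric decomposition both annihilates the cross term and forces the $\pm$ symmetric spectrum of $Z_{\mathrm a}$ that is responsible for the sharp constant $\sqrt3$.
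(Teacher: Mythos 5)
Your proof is correct, but it takes a genuinely different route from the paper's. The paper proves Lemma~\ref{lemma_upp_mixed} directly for mixed states: it inserts the compact commutator formula of Lemma~\ref{lemma_der_state}, splits the double commutator into the two operators $\hat{\mathbf{R}}^\intercal \tilde{X} \hat{\mathbf{R}}\rho+\rho\hat{\mathbf{R}}^\intercal \tilde{X} \hat{\mathbf{R}}$ and $2\sum_{i,j}\tilde{X}_{ij}\hat{R}_i\rho\hat{R}_j$, and bounds each trace norm by H\"older's inequality --- the first term via the second-moment formula of Lemma~\ref{lemma_xx_p_gauss}, with the $\sqrt{3}$ arising from the estimate $\Tr[\tilde{X}V\tilde{X}V]+\Tr[\Omega\tilde{X}\Omega\tilde{X}]=\|\sqrt{V-i\Omega}\tilde{X}\sqrt{V+i\Omega}\|_2^2\le(\Tr[|\tilde{X}|V])^2$, and the second term via the spectral decomposition of $\tilde{X}$. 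You instead reduce to the pure case by writing $\rho(V,0)=\NN_K(\rho(W,0))$ with $W=SS^\intercal\le V$ and invoking data processing, then feed the exact pure-state formula of Lemma~\ref{exact_der_main} into the finite-dimensional matrix inequality $(\Tr Z)^2+2\Tr[Z^2]+2\Tr[\Omega Z\Omega Z]\le 3(\Tr|Z|)^2$ for $Z=S^\intercal\tilde{X}S$; your proof of that inequality via the $\Omega$-symmetric/antisymmetric splitting $Z=Z_{\mathrm s}+Z_{\mathrm a}$, where the cross terms cancel and the $\pm$-symmetric spectrum of $Z_{\mathrm a}$ forces $\Tr[Z_{\mathrm a}^2]\le\frac12(\Tr|Z_{\mathrm a}|)^2$, is a genuinely illuminating structural explanation of where the constant $\sqrt{3}$ comes from, and all the supporting steps (the rank-one bound $\|A\Theta A^\dagger\|_1\le\Tr[|\Theta|A^\dagger A]$, the vanishing of $\Tr[|\tilde{X}|\Omega]$, the monotonicity $\Tr[|X|\Omega^\intercal W\Omega]\le\Tr[|X|\Omega^\intercal V\Omega]$) check out. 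The paper's version is self-contained at the operator level and needs nothing beyond the compact formula; yours buys a cleaner finite-dimensional core and sharper structural insight, but leans on Theorem~\ref{exact_der} as a black box and on the identity $\NN_K(\rho(W+\alpha X,0))=\rho(V+\alpha X,0)$ for the possibly non-state operators of~\eqref{def_V_gtr0} (since $W+\alpha X$ need not satisfy the uncertainty relation when $W$ is pure); the latter deserves one explicit sentence of justification via the characteristic-function calculus, namely that both sides have characteristic function $e^{-\frac14(\Omega\mathbf{r})^\intercal(V+\alpha X)\Omega\mathbf{r}}$, after which the reduction and the final constant $\frac{1+\sqrt{3}}{4}$ are recovered exactly.
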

\begin{proof}
We have that
	\bb\label{two_terms_to_bound}
	\|  \partial_{X,0}\rho(V,0) \|_1&\eqt{(i)}\frac{1}{4}\left\|  \sum_{ij}\tilde{X}_{ij} [\hat{R}_i,[\hat{R}_j,\rho]]  \right\|_1\\
	&\eqt{(ii)}\frac14\left\|   \hat{\mathbf{R}}^\intercal \tilde{X} \hat{\mathbf{R}}\rho  +\rho\hat{\mathbf{R}}^\intercal \tilde{X} \hat{\mathbf{R}}-2\sum_{i,j}\tilde{X}_{ij}\hat{R}_i\rho\hat{R}_j \right\|_1\\
 &\leqt{(iii)} \frac{1}{2}\left\|   \hat{\mathbf{R}}^\intercal \tilde{X} \hat{\mathbf{R}}\rho\right\|_1+\frac12\left\|\sum_{i,j}\tilde{X}_{ij}\hat{R}_i\rho\hat{R}_j \right\|_1\,,
\ee
where in (i) we employed Lemma~\ref{lemma_der_state}, (ii) is a consequence of the fact $X$ is symmetric, and in (iii) we used triangle inequality and the fact that 
\bb
\|\hat{\mathbf{R}}^\intercal \tilde{X} \hat{\mathbf{R}}\rho\|_1=\|(\hat{\mathbf{R}}^\intercal \tilde{X} \hat{\mathbf{R}}\rho)^\dagger\|_1=\|\rho\hat{\mathbf{R}}^\intercal \tilde{X} \hat{\mathbf{R}}\|_1\,.
\ee
Let us upper bound each of the two terms in~\eqref{two_terms_to_bound}. The first term can be upper bounded as follows:
\bb\label{bound0000000001}
    \left\|   \hat{\mathbf{R}}^\intercal \tilde{X} \hat{\mathbf{R}}\rho\right\|_1&=\left\|   \hat{\mathbf{R}}^\intercal \tilde{X} \hat{\mathbf{R}}\sqrt{\rho}\sqrt{\rho}\right\|_1\\
    &\leqt{(iv)} \left\|   \hat{\mathbf{R}}^\intercal \tilde{X} \hat{\mathbf{R}}\sqrt{\rho}\right\|_2\|\sqrt{\rho}\|_2\\
    &= \sqrt{\Tr\left[(\hat{\mathbf{R}}^\intercal \tilde{X} \hat{\mathbf{R}})^2\rho\right]}\\
    &\eqt{(v)} \frac{\sqrt{(\Tr[\tilde{X}V])^2+2\Tr[\tilde{X}V\tilde{X}V]+2\Tr[\Omega \tilde{X}\Omega \tilde{X}]}}{2}\\
    &\leqt{(vi)} \frac{\sqrt{\left(\Tr[|X|\Omega^\intercal V\Omega]\right)^2+2\Tr[\tilde{X}V\tilde{X}V]+2\Tr[\Omega \tilde{X}\Omega \tilde{X}]}}{2}\\
    &\leqt{(vii)}\frac{\sqrt{3}}{2}\Tr[|X|\Omega^\intercal V \Omega]\,.
\ee
Here, in (iv), we exploited H\"older's inequality. In (v), we exploited Lemma~\ref{lemma_xx_p_gauss}. In (vi), we used that
	\bb\label{eq_1_re}
	(\Tr[\tilde{X}V])^2\le (\Tr[|\tilde{X}|V])^2 = \left(\Tr[ |X|\Omega^\intercal V\Omega]\right)^2.
	\ee
In (vii), we first considered the real spectral decomposition of the real symmetric matrix $\tilde{X}$, i.e.
\bb
    \tilde{X}=\sum_{j=1}^{2n}\lambda_j \textbf{e}_j\textbf{e}_j^\intercal
\ee
where $\lambda_1,\lambda_2,\ldots,\lambda_{2n}\in\R$ are the eigenvalues and $\textbf{e}_1,\textbf{e}_2,\ldots,\textbf{e}_{2n}\in\R^{2n}$ are the real orthonormal eigenvectors, and second we observed that
\bb\label{eq_2_re}
	\Tr[\tilde{X}V\tilde{X}V]+\Tr[\Omega \tilde{X}\Omega \tilde{X}]&=\Tr[\tilde{X}(V+i\Omega)\tilde{X}(V-i\Omega)]\\
 &\eqt{(viii)}\,\Tr\!\left[\sqrt{V-i\Omega}\tilde{X}\sqrt{V+i\Omega}\sqrt{V+i\Omega}\tilde{X}\sqrt{V-i\Omega}\right]\\
 &= \left\|\sqrt{V-i\Omega}\tilde{X}\sqrt{V+i\Omega}\right\|_2^2\\
 &\le \left(\sum_{j=1}^{2n}|\lambda_j| \left\|\sqrt{V-i\Omega}\textbf{e}_j\textbf{e}_j^\intercal \sqrt{V+i\Omega}\right\|_2\right)^2\\
 &\eqt{(ix)} \left(\sum_{j=1}^{2n}|\lambda_j| \sqrt{\textbf{e}_j ^\intercal (V-i\Omega) \textbf{e}_j \textbf{e}_j ^\intercal (V+i\Omega) \textbf{e}_j}\right)^2\\
 &\eqt{(x)}\left(\sum_{j=1}^{2n}|\lambda_j| \textbf{e}_j ^\intercal V \textbf{e}_j\right)^2\\
 &= \left(\Tr[|\tilde{X}|V]\right)^2\\
&= \left(\Tr[ |X|\Omega^\intercal V\Omega]\right)^2\,.
\ee
Here, in (viii) we exploited the uncertainty relation $V+i\Omega\ge0$, which implies that $V-i\Omega=(V+i\Omega)^\intercal\ge0$. In (ix), we exploited that $\|\ketbraa{v}{w}\|_2=\sqrt{\braket{v|v}\braket{w|w}}$. Finally, in (x), we exploited that $\textbf{e}_j$ is a real vector and $\Omega$ is skew-symmetric, implying that $\textbf{e}^{\intercal}_j\Omega\textbf{e}_j=0$. This completes the proof of $\left\|   \hat{\mathbf{R}}^\intercal \tilde{X} \hat{\mathbf{R}}\rho\right\|_1\le\frac{\sqrt{3}}{2}\Tr[|X|\Omega^\intercal V \Omega]$.

Now, let us proceed with the bound of the second term in~\eqref{two_terms_to_bound}. Similarly as above, let us write the real spectral decomposition of $\tilde{X}$ as $\tilde{X}=O^\intercal DO$, with $D=\diag(\lambda_1,\ldots,\lambda_{2n})$ and $O$ orthogonal. Then, by defining $\hat{Y}_l\coloneqq \sum_{i=1}^{2n} O_{li}\hat{R}_i$, we have that
    \bb\label{bound0000000002}
        \left\|  \sum_{i,j=1}^{2n}\tilde{X}_{ij} \hat{R}_i\rho \hat{R}_j \right\|_1&=\left\|  \sum_{l=1}^{2n}\lambda_l\hat{Y}_l\rho \hat{Y}_l \right\|_1\\
        &\le \sum_{l=1}^{2n}|\lambda_l|\left\| \hat{Y}_l\rho \hat{Y}_l \right\|_1\\
        &\leqt{(xi)}\sum_{l=1}^{2n}|\lambda_l|\| \hat{Y}_l\sqrt{\rho}\|_2\|\sqrt{\rho} \hat{Y}_l \|_2\\
        &\le\sum_{l=1}^{2n}|\lambda_l|\Tr[\rho \hat{Y}_l^2]\\
        &=\sum_{l=1}^{2n}|\lambda_l|\sum_{i,j=1}^{2n}O_{li}O_{lj}\Tr[\rho \hat{R}_i\hat{R}_j]\\
        &=\sum_{i,j=1}^{2n}|\tilde{X}|_{ij}\Tr[\rho \hat{R}_i\hat{R}_j]\\
        &=\frac{1}{2}\sum_{i,j=1}^{2n}|\tilde{X}|_{ij}\Tr[\rho \{\hat{R}_i,\hat{R}_j\}]\\
        &=\frac{1}{2}\sum_{i,j=1}^{2n}|\tilde{X}|_{ij}V_{ij}\\
        &=\frac{1}{2}\Tr[V|\tilde{X}|]\,\\
        &=\frac12\Tr[|X|\Omega^\intercal V \Omega]\,,
    \ee
    where in (xi) we employed H\"older's inequality. Consequently, by using the inequalities~\eqref{two_terms_to_bound},~\eqref{bound0000000001}, and~\eqref{bound0000000002}, we conclude that
    \bb
        \|  \partial_{X,0}\rho(V,0) \|_1\le \frac{\sqrt{3}+1}{4} \Tr[|X|\Omega^\intercal V \Omega] \,.
    \ee
\end{proof}
\subsubsection{Exact formula for the trace norm of the derivative of a pure Gaussian state}
In this subsection, we determine a closed formula for the trace norm of the derivative of a pure Gaussian state $\rho(V,0)$ in terms of its covariance matrix $V$.
\begin{thm}[(Closed formula for the trace norm of the derivative of a pure Gaussian state)]\label{exact_der}
Let $\rho(V,0)$ be a pure Gaussian state with a (pure) covariance matrix $V$ and zero first moment. Then, the trace norm of the derivative of the Gaussian state $\rho(V,0)$ along the direction $(X,0)$ is given by 
	\bb
	\|  \partial_{X,0}\rho(V,0) \|_1= \frac{\sqrt{ (\Tr[\tilde{X}V])^2+2\Tr[\tilde{X}V\tilde{X}V]+2\Tr[\Omega \tilde{X}\Omega \tilde{X}]
 }+\|\sqrt{V+i\Omega}\tilde{X}\sqrt{V+i\Omega}\|_1}{4}\,,
	\ee
	where $\tilde{X}\coloneqq \Omega X\Omega^\intercal$.
\end{thm}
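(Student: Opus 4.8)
The plan is to start from the compact commutator expression for the derivative (Theorem~\ref{lemma_der_state}) and diagonalise it explicitly using purity. Writing $\rho = \ketbra{\psi}$, setting $Q \coloneqq \hat{\mathbf{R}}^\intercal \tilde{X} \hat{\mathbf{R}}$ with $\tilde{X} = \Omega X \Omega^\intercal$ (which is exactly the matrix appearing in Theorem~\ref{lemma_der_state}, since $\Omega^\intercal X \Omega = \Omega X \Omega^\intercal$ for symmetric $X$), the same rearrangement used in step (ii) of the proof of Lemma~\ref{lemma_upp_mixed} gives
\bb
\partial_{X,0}\rho(V,0) = -\tfrac14\big(Q\rho + \rho Q\big) + \tfrac12\sum\nolimits_{i,j} \tilde{X}_{ij}\, \hat{R}_i \rho \hat{R}_j \,.
\ee
I would then split $Q\ket{\psi} = c_0 \ket{\psi} + \ket{\chi}$ into the component along $\ket{\psi}$ (with $c_0 = \Tr[Q\rho]$ real, since $Q$ is Hermitian) and the orthogonal remainder $\ket{\chi}\perp\ket{\psi}$. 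This yields a decomposition $\partial_{X,0}\rho = D_{\mathrm{even}} + D_{\mathrm{odd}}$, where $D_{\mathrm{even}} = -\tfrac14\big(2c_0\ketbra{\psi} + \ketbraa{\chi}{\psi} + \ketbraa{\psi}{\chi}\big)$ is supported on $\Span\{\ket{\psi},\ket{\chi}\}$, while $D_{\mathrm{odd}} = \tfrac12\sum_{ij}\tilde{X}_{ij}\ketbraa{w_i}{w_j}$, with $\ket{w_i}\coloneqq \hat{R}_i\ket{\psi}$, is supported on the excitation subspace $\HH_1 \coloneqq \Span\{\ket{w_i}\}$.

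The crucial step is to show that these two supports are orthogonal, so that the trace norm splits additively, $\|\partial_{X,0}\rho\|_1 = \|D_{\mathrm{even}}\|_1 + \|D_{\mathrm{odd}}\|_1$. Orthogonality of $\ket{\psi}$ to $\HH_1$ is immediate because $\braket{\psi|w_i} = m_i = 0$. For $\ket{\chi}$, I would compute $\braket{w_i|\chi} = \Tr[\rho\, \hat{R}_i(Q-c_0)] = \Tr[\rho\,\hat{R}_i Q]$, which is a third moment of a zero-mean Gaussian state and hence vanishes by Lemma~\ref{lemma_3_moment}. Morally $D_{\mathrm{even}}$ and $D_{\mathrm{odd}}$ live in the even and odd excitation-number sectors relative to $\psi$; the vanishing of the odd moments is precisely what decouples them, and it is the only place where purity ($\rho = \ketbra{\psi}$) together with Lemma~\ref{lemma_3_moment} is used essentially.

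It then remains to evaluate the two pieces. The operator $D_{\mathrm{even}}$ is Hermitian of rank at most two; in the orthonormal basis $\{\ket{\psi}, \ket{\chi}/r\}$ with $r = \|\ket{\chi}\|$ it reads $-\tfrac14\lsmatrix 2c_0 & r \\ r & 0\rsmatrix$, whose two eigenvalues have opposite signs, so $\|D_{\mathrm{even}}\|_1 = \tfrac12\sqrt{c_0^2 + r^2} = \tfrac12\sqrt{\Tr[Q^2\rho]}$. Inserting the closed form of $\Tr[Q^2\rho]$ from Lemma~\ref{lemma_xx_p_gauss} (with $\mathbf{m}=0$ and $X \mapsto \tilde{X}$) reproduces exactly the first term $\tfrac14\sqrt{(\Tr[\tilde{X} V])^2 + 2\Tr[\tilde{X} V \tilde{X} V] + 2\Tr[\Omega\tilde{X}\Omega\tilde{X}]}$. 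For the odd part, I would use that the Gram matrix of the generators is $\braket{w_i|w_j} = \Tr[\rho\,\hat{R}_i \hat{R}_j] = \tfrac12(V+i\Omega)_{ij}$. Writing $D_{\mathrm{odd}} = \tfrac12\, W\tilde{X} W^\dagger$ with $W\ket{e_i}=\ket{w_i}$, so that $W^\dagger W = \tfrac12(V+i\Omega)$, the nonzero eigenvalues of $W\tilde{X} W^\dagger$ equal those of $\tilde{X} W^\dagger W = \tfrac12\tilde{X}(V+i\Omega)$, which in turn are half the nonzero eigenvalues of the Hermitian matrix $\sqrt{V+i\Omega}\,\tilde{X}\,\sqrt{V+i\Omega}$ (again by the $AB$/$BA$ spectral identity). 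Hence $\|D_{\mathrm{odd}}\|_1 = \tfrac14\big\|\sqrt{V+i\Omega}\,\tilde{X}\,\sqrt{V+i\Omega}\big\|_1$, the second term.

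The main obstacle is twofold. First, one must rigorously justify the additivity $\|D_{\mathrm{even}}+D_{\mathrm{odd}}\|_1 = \|D_{\mathrm{even}}\|_1 + \|D_{\mathrm{odd}}\|_1$ from orthogonality of supports; this is exactly the feature that promotes the mixed-state upper bound of Lemma~\ref{lemma_upp_mixed} to an \emph{equality} in the pure case, and it hinges on purity and the vanishing third moment. Second, the evaluation $\|D_{\mathrm{odd}}\|_1 = \tfrac14\|\sqrt{V+i\Omega}\,\tilde{X}\,\sqrt{V+i\Omega}\|_1$ needs care because the $\ket{w_i}$ form an overcomplete, rank-$n$ (linearly dependent) system; the clean route is the $W\tilde{X} W^\dagger$ factorisation combined with the observation that $\sqrt{V+i\Omega}\,\tilde{X}\,\sqrt{V+i\Omega}$ is Hermitian and cospectral (on its nonzero part) with $\tilde{X}(V+i\Omega)$, so that the modulus of its eigenvalues is controlled despite $V+i\Omega$ being only positive semidefinite.
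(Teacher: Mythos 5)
Your proposal is correct and follows essentially the same route as the paper: it decomposes the derivative via the commutator formula of Theorem~\ref{lemma_der_state} into the anticommutator piece and the $\sum_{ij}\tilde{X}_{ij}\hat{R}_i\rho\hat{R}_j$ piece, establishes orthogonality of their supports through the vanishing first and third moments (Lemma~\ref{lemma_3_moment}), evaluates the rank-two part via $\Tr[Q^2\rho]$ and Lemma~\ref{lemma_xx_p_gauss}, and evaluates the other part via the Gram matrix $\tfrac12(V+i\Omega)$. The only cosmetic differences are your explicit parallel/orthogonal splitting of $Q\ket{\psi}$ and your proof of the Gram-matrix cospectrality via the $W\tilde{X}W^\dagger$ factorisation, which the paper instead cites as a known fact.
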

\begin{proof}
Since $\rho(V,0)$ is pure, we can write $\rho(V,0)=\ketbra{\psi}$ for some $\ket{\psi}$.
Lemma~\ref{lemma_der_state} establishes that
\bb
    \partial_{X,0}\rho(V,0)&=-\frac{1}{4}\sum_{ij}\tilde{X}_{ij} [\hat{R}_i,[\hat{R}_j,\ketbra{\psi}]]  \\
    &\eqt{(i)}-\frac{1}{4}\left(\hat{\mathbf{R}}^\intercal \tilde{X} \hat{\mathbf{R}}\ketbra{\psi}  +\ketbra{\psi}\hat{\mathbf{R}}^\intercal \tilde{X} \hat{\mathbf{R}}-2\sum_{i,j}\tilde{X}_{ij}\hat{R}_i\ketbra{\psi}\hat{R}_j \right)\\
    &=-\frac{1}{4}\left(\Theta_1-\Theta_2\right)
\ee
where in (i) we exploited that $\tilde{X}$ is symmetric, and in (ii) we defined the hermitian operators 
\bb
    \Theta_1&\coloneqq \hat{\mathbf{R}}^\intercal \tilde{X} \hat{\mathbf{R}}\ketbra{\psi}  +\ketbra{\psi}\hat{\mathbf{R}}^\intercal \tilde{X} \hat{\mathbf{R}}\,,\\
    \Theta_2&\coloneqq2\sum_{i,j}\tilde{X}_{ij}\hat{R}_i\ketbra{\psi}\hat{R}_j \,.
\ee
Note that these operators are orthogonal to each other, i.e.~$\Theta_1\Theta_2=\Theta_2\Theta_1=0$. This follows from the fact that the first moment of $\rho(V,0)$ vanishes and from Lemma~\ref{lemma_3_moment}, which implies that $\Tr[\psi\hat{R}_i\hat{R}_j\hat{R}_k]=0$ for all $i,j,k\in\{1,2,\ldots,2n\}$. Consequently, we have that
	\bb\label{expr_to_calcul}
	\|  \partial_{X,0}\rho(V,0) \|_1&=\frac{\|\Theta_1\|_1}{4}+\frac{\|\Theta_2\|_1}{2}\,,
	\ee
	where 
	\bb
	\|\Theta_1\|_1 & =\left\|   \hat{\mathbf{R}}^\intercal \tilde{X} \hat{\mathbf{R}}\ketbra{\psi}  +\ketbra{\psi}\hat{\mathbf{R}}^\intercal \tilde{X} \hat{\mathbf{R}}\right\|_1\,\\
	\|\Theta_2\|_1 &= \left\|\sum_{i,j}\tilde{X}_{ij}\hat{R}_i\ketbra{\psi}\hat{R}_j \right\|_1\,.
	\ee
	Let us proceed to calculate the  term $\|\Theta_1\|_1$ in~\eqref{expr_to_calcul}. By defining the non-normalised vector $\ket{v}\coloneqq \hat{\mathbf{R}}^\intercal \tilde{X} \hat{\mathbf{R}}\ket{\psi}$, we have that
 \bb
    \|\Theta_1\|_1=\|\ketbraa{v}{\psi}+\ketbraa{\psi}{v}\|_1\,.
\ee 
The dimension of the support of the operator $\ketbraa{v}{\psi}+\ketbraa{\psi}{v}$ is two. Hence, its trace norm can be easily calculated and it reads: 
	\begin{align}
	\|\Theta_1\|_1=\|\ketbraa{v}{\psi}+\ketbraa{\psi}{v}\|_1 
    &=2\sqrt{\langle v | v\rangle-\mathrm{Im}(\langle \psi | v\rangle)^2}\,.
    \end{align}
    Note that $\langle v | v\rangle= \Tr[  (\hat{\mathbf{R}}^\intercal \tilde{X} \hat{\mathbf{R}})^2\psi   ]$ and that $\mathrm{Im}(\langle \psi | v\rangle)=\mathrm{Im}(\Tr[\hat{\mathbf{R}}^\intercal \tilde{X} \hat{\mathbf{R}}\psi])=0$. To show the latter, note that $\hat{\mathbf{R}}^\intercal \tilde{X} \hat{\mathbf{R}}$ is an hermitian operator and thus the expectation value $\Tr[\hat{\mathbf{R}}^\intercal \tilde{X} \hat{\mathbf{R}}\psi]$ is real.
    Consequently, we have that
	\bb
	\|\Theta_1\|_1&=\sqrt{   4\Tr[  (\hat{\mathbf{R}}^\intercal \tilde{X} \hat{\mathbf{R}})^2\psi   ] }\\
        &=\sqrt{ (\Tr[\tilde{X}V])^2+2\Tr[\tilde{X}V\tilde{X}V]+2\Tr[\Omega \tilde{X}\Omega \tilde{X}]
    }\,,
	\ee
 where in the last line we employed Lemma~\ref{lemma_xx_p_gauss}.

 Let us now calculate the term $\|\Theta_2\|_1$ in~\eqref{expr_to_calcul}. To this end, we exploit the simple-to-prove fact that, given (non-orthonormal) vectors $(\ket{\psi_i})_i$ and a matrix $A$, the non-zero singular values of the operator $\sum_{ij} A_{ij}\ketbraa{\psi_i}{\psi_j}$ are the same of the matrix $\sqrt{G}A\sqrt{G}$, where $G$ is the \emph{Gram matrix} defined as $G_{ij}\coloneqq \langle \psi_i|\psi_j\rangle$. Consequently, we have that
	\bb
	\|\Theta_2\|_1 &=\left\|\sum_{i,j}\tilde{X}_{ij}\hat{R}_i\ketbra{\psi}\hat{R}_j \right\|_1=  \left\|\sqrt{G}\tilde{X}\sqrt{G} \right\|_1\,,
	\ee
	where 
	\bb
	G_{ij}\coloneqq \bra{\psi} \hat{R}_i\hat{R}_j\ket{\psi}=\frac{V_{ij}+i\Omega_{ij}}{2},.
	\ee
	Hence, we have that $\|\Theta_2\|_1=\frac12\|\sqrt{V+i\Omega}\tilde{X}\sqrt{V+i\Omega}\|_1$. Consequently, by substituting into~\eqref{expr_to_calcul}, we conclude that
	\bb
	\|  \partial_{X,0}\rho(V,0) \|_1= \frac{\sqrt{ (\Tr[\tilde{X}V])^2+2\Tr[\tilde{X}V\tilde{X}V]+2\Tr[\Omega \tilde{X}\Omega \tilde{X}] }+\|\sqrt{V+i\Omega}\tilde{X}\sqrt{V+i\Omega}\|_1}{4}\,.
	\ee
\end{proof}

\subsection{Upper bound on the trace distance between Gaussian states}\label{Sec_upppp_bounds}
 In this subsection, we derive our main result: an upper bound on the trace distance between Gaussian states in terms of the norm distance of their first moments and covariance matrices.
\begin{thm}[(Upper bound on the trace distance between Gaussian states)]\label{thm_sm}
Let $\rho(V,\mathbf{m})$ be the Gaussian state with first moment $\mathbf{m}$ and covariance matrix $V$. Similarly, let $\rho(W,\mathbf{t})$ be the Gaussian state with first moment $\mathbf{t}$ and covariance matrix $W$. Then, the trace distance can be upper bounded as follows:
    \bb
        \frac12\left\|\rho(V,\mathbf{m})-\rho(W,\mathbf{t})\right\|_1\le \frac{1+\sqrt{3}}{8}\Tr\!\left[|V-W|\Omega^\intercal\left(\frac{V+W}{2}\right)\Omega\right]+\sqrt{\frac{\min(\|V\|_\infty,\|W\|_\infty)}{2} }\|\mathbf{m}-\mathbf{t}\|_2\,.
\ee
In particular, H\"older's inequality implies that
    \bb\label{eq_smsmsm}
	\frac{1}{2}\left\|\rho(V,\mathbf{m})-\rho(W,\mathbf{t})\right\|_1&\le \frac{1+\sqrt{3}}{8}\max(\|V\|_\infty,\|W\|_\infty) \,\|V-W\|_1+ \sqrt{\frac{\min(\|V\|_\infty,\|W\|_\infty)}{2} }\|\mathbf{m}-\mathbf{t}\|_2\,,\\
 \frac{1}{2}\left\|\rho(V,\mathbf{m})-\rho(W,\mathbf{t})\right\|_1&\le \frac{1+\sqrt{3}}{8}\max(\Tr V ,\Tr W) \,\|V-W\|_\infty+ \sqrt{\frac{\min(\|V\|_\infty,\|W\|_\infty)}{2} }\|\mathbf{m}-\mathbf{t}\|_2\,.
    \ee
\end{thm}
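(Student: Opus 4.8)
The plan is to reduce the bound to two independent steps via the triangle inequality, interpolating the covariance matrices and the first moments separately, and in each step to write the difference of the two endpoint states as the integral of a directional derivative (a fundamental-theorem-of-calculus argument in trace norm), whose integrand is then controlled by the derivative estimates already established. Concretely, I would write
\[
\rho(V,\mathbf{m})-\rho(W,\mathbf{t})=\big[\rho(V,\mathbf{m})-\rho(W,\mathbf{m})\big]+\big[\rho(W,\mathbf{m})-\rho(W,\mathbf{t})\big],
\]
so that the first bracket changes only the covariance matrix and the second only the first moment, and bound each in trace norm. The differentiability of $\alpha\mapsto\rho(V+\alpha X,\mathbf{m}+\alpha\mathbf{x})$ in trace norm, which legitimises the integral representations used below, is supplied by Theorem~\ref{thm:gauss_regularity}; throughout I will also use $\rho(V,\mathbf{m})=\hat{D}_\mathbf{m}\rho(V,0)\hat{D}_\mathbf{m}^\dagger$ from the preliminaries.

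For the covariance step, I would parametrise the segment $V_s\coloneqq V+s(W-V)$ for $s\in[0,1]$; since the uncertainty relation $V_s+i\Omega\ge0$ is a convex constraint, every $V_s$ is a valid covariance matrix. The fundamental theorem of calculus gives $\rho(W,\mathbf{m})-\rho(V,\mathbf{m})=\int_0^1\partial_{W-V,\,0}\,\rho(V_s,\mathbf{m})\,\mathrm{d}s$, hence $\|\rho(W,\mathbf{m})-\rho(V,\mathbf{m})\|_1\le\int_0^1\|\partial_{W-V,\,0}\,\rho(V_s,\mathbf{m})\|_1\,\mathrm{d}s$. Because a displacement is a fixed unitary conjugation that passes through the limit defining the derivative and leaves the trace norm invariant, the integrand is independent of $\mathbf{m}$, so I may apply Lemma~\ref{lemma_upp_mixed} with $X=W-V$ to obtain $\|\partial_{W-V,0}\rho(V_s,\mathbf{m})\|_1\le\frac{1+\sqrt3}{4}\Tr[|V-W|\,\Omega^\intercal V_s\Omega]$. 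Integrating in $s$, pulling the integral inside the (linear) trace, and using $\int_0^1 V_s\,\mathrm{d}s=\tfrac{V+W}{2}$ yields exactly $\tfrac12\|\rho(V,\mathbf{m})-\rho(W,\mathbf{m})\|_1\le\frac{1+\sqrt3}{8}\Tr[|V-W|\,\Omega^\intercal(\tfrac{V+W}{2})\Omega]$.

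For the displacement step, writing $\mathbf{d}\coloneqq\mathbf{t}-\mathbf{m}$ and using the same integral representation along $\mathbf{m}+s\mathbf{d}$ (again with an $\mathbf{m}$-independent integrand by displacement invariance), I reduce to bounding $\tfrac12\|\partial_{0,\mathbf{d}}\rho(W,0)\|_1$. By Lemma~\ref{lemma_der_state} this equals $\tfrac12\|[\hat{A},\rho(W,0)]\|_1$ with the Hermitian operator $\hat{A}\coloneqq(\Omega\mathbf{d})^\intercal\hat{\mathbf{R}}$; the triangle inequality together with $\|\rho(W,0)\hat A\|_1=\|\hat A\rho(W,0)\|_1$ gives $\tfrac12\|[\hat A,\rho(W,0)]\|_1\le\|\hat A\rho(W,0)\|_1$, and Hölder's inequality gives $\|\hat A\rho(W,0)\|_1\le\|\hat A\sqrt{\rho(W,0)}\|_2\,\|\sqrt{\rho(W,0)}\|_2=\sqrt{\Tr[\hat A^2\rho(W,0)]}$. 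A direct computation of the second moment of the linear quadrature $\hat A$ at vanishing first moment yields $\Tr[\hat A^2\rho(W,0)]=\tfrac12(\Omega\mathbf{d})^\intercal W\,\Omega\mathbf{d}\le\tfrac{\|W\|_\infty}{2}\|\mathbf{d}\|_2^2$, where the last inequality uses that $\Omega$ is orthogonal. This produces precisely the first-moment term with coefficient $\sqrt{\|W\|_\infty/2}$.

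Finally, I would repeat the argument with the alternative ordering $\rho(V,\mathbf{m})\to\rho(V,\mathbf{t})\to\rho(W,\mathbf{t})$: this produces the identical covariance term but replaces $\|W\|_\infty$ by $\|V\|_\infty$ in the first-moment term, so taking whichever ordering is smaller yields $\min(\|V\|_\infty,\|W\|_\infty)$ and completes the main inequality. The two displayed corollaries in~\eqref{eq_smsmsm} then follow immediately by applying Hölder's inequality to $\Tr[|V-W|\,\Omega^\intercal(\tfrac{V+W}{2})\Omega]$ in its two natural forms. I expect the main obstacle to be analytic rather than algebraic: rigorously justifying the fundamental-theorem-of-calculus representations in trace norm, together with the interchange of the $\alpha\to0^+$ limit with the displacement conjugation, all of which rest on the continuous differentiability of $\rho(V,\mathbf{m})$ guaranteed by Theorem~\ref{thm:gauss_regularity}; once that regularity is in hand, the remaining steps are the routine derivative estimates quoted above.
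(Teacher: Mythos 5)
Your proposal is correct, and its two halves relate to the paper's proof differently. The covariance-matrix step is essentially identical to the paper's: both parametrise the segment $V_s = V + s(W-V)$, write the difference of states as the integral (or limiting Riemann sum) of the directional derivative justified by Theorem~\ref{thm:gauss_regularity}/Corollary~\ref{cor:int_rem}, invoke Lemma~\ref{lemma_upp_mixed} on the integrand, and average $V_s$ over $s$ to produce $\tfrac{V+W}{2}$. The first-moment step, however, is a genuinely different route. The paper reduces $\tfrac12\|\rho(W,\mathbf{m})-\rho(W,\mathbf{t})\|_1$ to the \emph{pure} case by writing $\rho(W,\cdot)$ as the output of the Gaussian noise channel $\NN_{S(D-\mathbb{1})S^\intercal}$ applied to $\rho(SS^\intercal,\cdot)$ (Lemma~\ref{lemma_add_gauss}) and using data processing, then applies the exact identity $\tfrac12\|\psi-\varphi\|_1=\sqrt{1-|\braket{\psi|\varphi}|^2}$, the closed-form Gaussian overlap, and $1-e^{-x}\le x$. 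You instead stay entirely within the derivative framework: Lemma~\ref{lemma_der_state} with $X=0$ gives $\partial_{0,\mathbf{d}}\,\rho(W,0)=i[\hat{A},\rho(W,0)]$ with $\hat{A}=(\Omega\mathbf{d})^\intercal\hat{\mathbf{R}}$, and the chain $\tfrac12\|[\hat A,\rho]\|_1\le\|\hat A\rho\|_1\le\sqrt{\Tr[\hat A^2\rho]}=\sqrt{\tfrac12(\Omega\mathbf{d})^\intercal W\Omega\mathbf{d}}\le\sqrt{\|W\|_\infty/2}\,\|\mathbf{d}\|_2$ lands on exactly the stated coefficient (the antisymmetric $i\Omega$ part of the second moment cancels against the symmetric contraction, as you implicitly use). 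Your version is arguably more uniform, since both terms are then controlled by the same derivative machinery; the paper's version yields the slightly sharper intermediate quantity $\sqrt{1-e^{-\frac12\mathbf{d}^\intercal(SS^\intercal)^{-1}\mathbf{d}}}$ with $SS^\intercal\le W$, which is what powers the improved bound of Theorem~\ref{thm_improvement} and the bounded estimate of Theorem~\ref{bound_trace_distance_thm} — refinements your route does not directly recover. Your symmetrisation over the two orderings to obtain the $\min(\|V\|_\infty,\|W\|_\infty)$, and the final application of H\"older's inequality for~\eqref{eq_smsmsm}, match the paper. The only point to make explicit in a full write-up is that products such as $\hat A\rho$ are trace class (which holds because Gaussian states are Schwartz operators, as the paper notes via Proposition~\ref{prop:schw_estimate}), so that the H\"older step is legitimate.
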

\begin{proof}
    We have that
    \bb\label{hey}
        \left\|\rho(V,\mathbf{m})-\rho(W,\mathbf{t})\right\|_1&\le\|\rho(V,\mathbf{m})-\rho(W,\mathbf{m})\|_1+\|\rho(W,\mathbf{m})-\rho(W,\mathbf{t})\|_1\\
        &=\|\rho(V,0)-\rho(W,0)\|_1+\|\rho(W,\mathbf{m})-\rho(W,\mathbf{t})\|_1\,
    \ee
    where in the last line we exploited the invariance of the trace norm under (displacement) unitaries. The first term in~\eqref{hey} can be upper bounded as follows:
    \bb\label{step_same_first}
        \|\rho(W,0)-\rho(V,0)\|_1&\eqt{(i)}\left\|\lim\limits_{m\rightarrow\infty}\frac1m\sum_{i=1}^m \partial_{W-V,0}\rho\!\left(V+\frac{i-1}{m}(W-V),0\right)   \right\|_1\\
        &\leqt{(ii)}\lim\limits_{m\rightarrow\infty}\frac1m\sum_{i=1}^m\left\| \partial_{W-V,0}\rho\!\left(V+\frac{i-1}{m}(W-V),0\right) \right\|_1\\
        &\leqt{(iii)} \frac{\sqrt{3}+1}{4} \lim\limits_{m\rightarrow\infty}\frac1m\sum_{i=1}^m \Tr\!\left[|V-W|\Omega^\intercal \left(V+\frac{i-1}{m}(W-V)\right) \Omega\right]\\
        &=\frac{\sqrt{3}+1}{4}  \Tr\!\left[|V-W|\Omega^\intercal \left(\frac{V+W}{2}\right) \Omega\right]\,.
    \ee
    Here, equality (i) is a consequence of Corollary~\ref{cor:int_rem} (together with a `discretization' of the integral), in (ii), we exploited triangle inequality; 
    and in (iii), we employed Lemma~\ref{lemma_upp_mixed}.

    Now, let us upper bound the second term in~\eqref{hey}. Let us consider the Williamson decomposition $W=SDS^\intercal$, as in~\eqref{def_will}. Then, Lemma~\ref{lemma_add_gauss} guarantees that 
    \bb
        \rho(W,\mathbf{m})&=\mathcal{N}_{S(D-{\mathbb{1}})S^\intercal}\left(\rho(SS^\intercal,\mathbf{m})\right)\\
        \rho(W,\mathbf{t})&=\mathcal{N}_{S(D-{\mathbb{1}})S^\intercal}\left(\rho(SS^\intercal,\mathbf{t})\right)\,,
    \ee
    where $\mathcal{N}_{S(D-\mathbb{1})S^\intercal}$ is the Gaussian noise channel defined in Subsection~\ref{subsubsec_gauss_noise}. Consequently, we have that
\bb\label{step_same_cov}
    \frac12\|\rho(W,\mathbf{m})-\rho(W,\mathbf{t})\|_1&=\frac12\|\mathcal{N}_{S(D-1)S^\intercal}\left(\rho(SS^\intercal,\mathbf{m})-\rho(SS^\intercal,\mathbf{t})\right)\|_1\\
    &\leqt{(iv)}\frac12\|\rho(SS^\intercal,\mathbf{m})-\rho(SS^\intercal,\mathbf{t})\|_1\\
&\eqt{(v)} \sqrt{1 - \Tr [\rho(SS^\intercal,\mathbf{m})\,\rho(SS^\intercal,\mathbf{t})]} \\
&\eqt{(vi)} \sqrt{1 - e^{-\frac12 (\mathbf{m}-\mathbf{t})^\intercal (SS^\intercal)^{-1} (\mathbf{m}-\mathbf{t})}} \\
&\leqt{(vii)} \sqrt{\frac{(\mathbf{m}-\mathbf{t})^\intercal (SS^\intercal)^{-1} (\mathbf{m}-\mathbf{t})}{2}}\\
&\le \sqrt{\frac{ \|(SS^\intercal)^{-1}\|_\infty}{2}}\|\mathbf{m}-\mathbf{t}\|_2\\
&\eqt{(viii)}\sqrt{\frac{ \|SS^\intercal\|_\infty}{2}}\|\mathbf{m}-\mathbf{t}\|_2\\
&\leqt{(ix)}\sqrt{\frac{ \|W\|_\infty}{2}}\|\mathbf{m}-\mathbf{t}\|_2\,.
\ee
Here, in (iv) we applied the monotonicity of the trace norm under quantum channels; in (v) we exploited that $\rho_{SS^\intercal,\mathbf{m}}$ and $\rho_{SS^\intercal,\mathbf{t}}$ are pure (Gaussian) states; in (vi), we employed the known formula for the overlap between Gaussian states~\cite[Eq~(4.51)]{BUCCO}; in (vii), we exploited that $e^x\ge 1+x$ for all $x\in\R$; in (viii), we exploited that for any symplectic matrix $T$ it holds that 
\bb
    \|T^{-1}\|_\infty= \|\Omega T^\intercal\Omega^\intercal\|_\infty=\|T^\intercal\|_\infty=\|T\|_\infty\,,
\ee
where we exploited that $T\Omega T^\intercal =\Omega$ and the fact that $\Omega$ is orthogonal; finally, in (ix), we used the fact that $SS^\intercal\le SDS^\intercal=W$.

By substituting~\eqref{step_same_first} and~\eqref{step_same_cov} into~\eqref{hey}, we obtain the desired bound.
\end{proof}

\subsubsection{Improved upper bound on the trace distance between Gaussian states}
In this section, we derive an improved upper bound on the trace distance between Gaussian states, which involves the Williamson decomposition of the covariance matrices.
\begin{thm}[(Improved upper bound)]\label{thm_improved_sm}
    Let $\rho(V,\mathbf{m})$ be the Gaussian state with first moment $\mathbf{m}$ and covariance matrix $V$. Similarly, let $\rho(W,\mathbf{t})$ be the Gaussian state with first moment $\mathbf{t}$ and covariance matrix $W$. Moreover, let $V = S_1 D_1 S_1^\intercal$ and $W = S_2 D_2 S_2^\intercal$ be the Williamson decompositions of $V$ and $W$, as in~\eqref{def_will}. Then, the following better bound holds:
\bb 
 \frac{1}{2}\left\|\rho(V,\mathbf{m})-\rho(W,\mathbf{t})\right\|_1\le    
 \frac{1+\sqrt{3}}{8}\Tr\!\left[|V-W|\Omega^\intercal\left(\frac{S_1  S_1^\intercal+ S_2 S_2^\intercal}{2}\right)\Omega\right]+\frac{\min(\|S_1\|_\infty,\|S_2\|_\infty)}{\sqrt{2} }\|\mathbf{m}-\mathbf{t}\|_2\,. 
\ee
In particular,  H\"older's inequality implies that
 	\bb
	       \frac{1}{2}\left\|\rho(V,\mathbf{m})-\rho(W,\mathbf{t})\right\|_1&\le    \frac{1+\sqrt{3}}{8}\max(\|S_1\|^2_\infty,\|S_2\|^2_\infty) \|V-W\|_1+\frac{\min(\|S_1\|_\infty,\|S_2\|_\infty)}{\sqrt{2} }\|\mathbf{m}-\mathbf{t}\|_2\,,\\
            \frac{1}{2}\left\|\rho(V,\mathbf{m})-\rho(W,\mathbf{t})\right\|_1&\le    \frac{1+\sqrt{3}}{8}\max(\|S_1\|^2_2,\|S_2\|^2_2) \|V-W\|_\infty+\frac{\min(\|S_1\|_\infty,\|S_2\|_\infty)}{\sqrt{2} }\|\mathbf{m}-\mathbf{t}\|_2\,.
	\ee
\end{thm}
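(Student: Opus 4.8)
The plan is to mirror the two-part structure of the proof of Theorem~\ref{thm_sm}: split the trace distance by the triangle inequality into a term in which only the covariance matrix changes and a term in which only the first moment changes, and then sharpen each estimate so that the full covariance matrices $V,W$ are replaced by their pure parts $S_1S_1^\intercal$ and $S_2S_2^\intercal$.

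For the first-moment term I would reuse the computation in~\eqref{step_same_cov} verbatim, but stop one step earlier. Writing $\rho(W,\mathbf m)=\mathcal N_{S_2(D_2-\id)S_2^\intercal}(\rho(S_2S_2^\intercal,\mathbf m))$ via Lemma~\ref{lemma_add_gauss}, the monotonicity of the trace norm under the Gaussian-noise channel and the pure-state overlap formula give $\tfrac12\|\rho(W,\mathbf m)-\rho(W,\mathbf t)\|_1\le\sqrt{\|S_2S_2^\intercal\|_\infty/2}\,\|\mathbf m-\mathbf t\|_2=(\|S_2\|_\infty/\sqrt2)\,\|\mathbf m-\mathbf t\|_2$; I would simply not apply the last bound $\|S_2S_2^\intercal\|_\infty\le\|W\|_\infty$ of line~(ix). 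Splitting instead through $\rho(V,\mathbf t)$ yields the same expression with $S_1$ in place of $S_2$, and since the covariance-only term $\|\rho(V,0)-\rho(W,0)\|_1$ is identical in both splits, I may keep the smaller of the two, producing $\min(\|S_1\|_\infty,\|S_2\|_\infty)/\sqrt2$.

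The crux is an improved bound on the derivative norm: for \emph{every} covariance matrix $Q$ with $Q\le V$ I would establish $\|\partial_{X,0}\rho(V,0)\|_1\le\frac{1+\sqrt3}{4}\Tr[|X|\Omega^\intercal Q\Omega]$, which replaces $V$ in Lemma~\ref{lemma_upp_mixed} by any sub-covariance $Q$. Since $V-Q\ge0$, Lemma~\ref{lemma_add_gauss} gives $\rho(V,0)=\mathcal N_{V-Q}(\rho(Q,0))=\int\mathrm d u\,P_{V-Q}(u)\,\rho(Q,u)$; because the compact formula of Theorem~\ref{lemma_der_state} expresses $\partial_{X,0}$ as a fixed, state-independent linear combination of double commutators, the derivative passes through the mixture, $\partial_{X,0}\rho(V,0)=\int\mathrm d u\,P_{V-Q}(u)\,\partial_{X,0}\rho(Q,u)$; displacement covariance gives $\|\partial_{X,0}\rho(Q,u)\|_1=\|\partial_{X,0}\rho(Q,0)\|_1$, so convexity of the trace norm together with Lemma~\ref{lemma_upp_mixed} applied at covariance $Q$ yields the claim.

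Finally I would run the path argument of~\eqref{step_same_first} along $V(\alpha)=(1-\alpha)V+\alpha W$ in the constant direction $X=W-V$, but at parameter $\alpha$ apply the improved derivative bound with the \emph{affine} choice $Q(\alpha)=(1-\alpha)S_1S_1^\intercal+\alpha S_2S_2^\intercal$. This $Q(\alpha)$ is admissible because it is a convex combination of the valid (pure) covariance matrices $S_iS_i^\intercal$, hence a valid covariance matrix, and because $S_1S_1^\intercal\le V$ and $S_2S_2^\intercal\le W$ force $Q(\alpha)\le V(\alpha)$. Integrating and using $\int_0^1 Q(\alpha)\,\mathrm d\alpha=\tfrac12(S_1S_1^\intercal+S_2S_2^\intercal)$ reproduces the first term of the bound, after which Hölder's inequality (exactly as in~\eqref{eq_smsmsm}) gives the two displayed consequences. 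The one genuinely delicate point is the admissibility check for $Q(\alpha)$: it is precisely the freedom to choose any sub-covariance $Q$, rather than the honest Williamson pure part of $V(\alpha)$ (which is not affine in $\alpha$), that decouples the estimate from the Williamson decomposition of the interpolant and makes the path integral collapse to the chord average $\tfrac12(S_1S_1^\intercal+S_2S_2^\intercal)$.
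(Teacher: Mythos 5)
Your proof is correct and takes essentially the same route as the paper's: the triangle-inequality split with the first-moment estimate stopped at $\|S_i S_i^\intercal\|_\infty$ (step~(viii) of~\eqref{step_same_cov}), followed by a path integral along the chord from $V$ to $W$ in which the derivative bound of Lemma~\ref{lemma_upp_mixed} is applied with the affine sub-covariance $Q(\alpha)=(1-\alpha)S_1S_1^\intercal+\alpha S_2S_2^\intercal$, justified exactly as you say by $S_1S_1^\intercal\le V$, $S_2S_2^\intercal\le W$ and $\int_0^1 Q(\alpha)\,\mathrm{d}\alpha=\tfrac12(S_1S_1^\intercal+S_2S_2^\intercal)$. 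The only organisational difference is that you upgrade Lemma~\ref{lemma_upp_mixed} to a standalone statement by commuting $\partial_{X,0}$ with the Gaussian-noise mixture, whereas the paper applies monotonicity of the trace norm under $\mathcal{N}_{K_i}$ to the finite differences of the telescoping sum \emph{before} passing to the derivative limit --- which delivers the same inequality $\|\partial_{X,0}\rho(V,0)\|_1\le\|\partial_{X,0}\rho(Q,0)\|_1$ while sidestepping the interchange of the unbounded commutator superoperator with the integral over displacements that your mixture argument implicitly requires.
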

\begin{proof}
    Similarly to the proof of Theorem~\ref{thm_sm}, we have that
    \bb\label{hey2}
        \frac12\left\|\rho(V,\mathbf{m})-\rho(W,\mathbf{t})\right\|_1&\leqt{(i)}\frac12\|\rho(V,0)-\rho(W,0)\|_1+\frac12\|\rho(W,\mathbf{m})-\rho(W,\mathbf{t})\|_1\,\\
        &\leqt{(ii)} \frac12\|\rho(V,0)-\rho(W,0)\|_1+\sqrt{\frac{ \|S_2S_2^\intercal\|_\infty}{2}}\|\mathbf{m}-\mathbf{t}\|_2 \\
        &=\frac12\|\rho(V,0)-\rho(W,0)\|_1+\frac{ \|S_2\|_\infty}{\sqrt{2}}\|\mathbf{m}-\mathbf{t}\|_2\,.
        \ee
    Here, in (i), we exploited~\eqref{hey}, and in (ii), we employed step~(viii) of~\eqref{step_same_cov}. Hence, we just need to prove that
    \bb
        \|\rho(V,0)-\rho(W,0)\|_1\le  \frac{1+\sqrt{3}}{4}\Tr\!\left[|V-W|\Omega^\intercal\left(\frac{S_1  S_1^\intercal+ S_2 S_2^\intercal}{2}\right)\Omega\right]\,.
    \ee
    To this end, note that the triangle inequality implies that
    \bb\label{eqz}
        \|\rho(W,0)-\rho(V,0)\|_1&\le \lim\limits_{m\rightarrow\infty}\sum_{i=1}^m\left\| \rho\!\left(V+\frac{i}{m}(W-V),0\right)-\rho\!\left(V+\frac{i-1}{m}(W-V),0\right)   \right\|_1\,.
    \ee
    Moreover, since $V-S_1S_1^\intercal\ge0$ and $W-S_2S_2^\intercal\ge0$, it holds that
\bb
    K_i\coloneqq  \left(1-\frac{i-1}{m}\right)(V-S_1S_1^\intercal)+\frac{i-1}{m}(W-S_2S_2^\intercal)\ge 0\,.
\ee
Additionally, note that
\bb
    \rho\!\left(V+\frac{i-1}{m}(W-V),0\right)&=\mathcal{N}_{K_i}\left( S_1S_1^\intercal+\frac{i-1}{m}(S_2S_2^\intercal-S_1S_1^\intercal) \right)\,,\\
    \rho\!\left(V+\frac{i}{m}(W-V),0\right)&=\mathcal{N}_{K_i}\left( S_1S_1^\intercal+\frac{i-1}{m}(S_2S_2^\intercal-S_1S_1^\intercal)+\frac{W-V}{m} \right)\,,
\ee
where we denoted as $\mathcal{N}_{K_i}$ the Gaussian noise channel defined in Subsection~\ref{subsubsec_gauss_noise} and we employed Lemma~\ref{lemma_add_gauss}. Consequently, by employing~\eqref{eqz} and the monotonicity of the trace norm under quantum channels, we have that
    \bb
        \|\rho(W,0)-\rho(V,0)\|_1&\le \lim\limits_{m\rightarrow\infty}\sum_{i=1}^m\Bigg\| \rho\!\left(S_1S_1^\intercal+\frac{i-1}{m}(S_2S_2^\intercal-S_1S_1^\intercal)+\frac{W-V}{m},0\right)\\
        &\qquad\qquad\qquad-\rho\!\left(S_1S_1^\intercal+\frac{i-1}{m}(S_2S_2^\intercal-S_1S_1^\intercal),0\right)   \Bigg\|_1\,.
    \ee
Hence, we obtain that
\bb
    \|\rho(W,0)-\rho(V,0)\|_1&\leqt{(iii)}\lim\limits_{m\rightarrow\infty}\frac{1}{m}\sum_{i=1}^m\left\| \partial_{ W-V,0 }\rho\!\left(S_1S_1^\intercal+\frac{i-1}{m}(S_2S_2^\intercal-S_1S_1^\intercal),0\right)  \right\|_1\\
    &\leqt{(iv)} \frac{\sqrt{3}+1}{4} \lim\limits_{m\rightarrow\infty}\frac{1}{m}\sum_{i=1}^m \Tr\!\left[|W-V|\Omega^\intercal \left(  S_1S_1^\intercal+\frac{i-1}{m}(S_2S_2^\intercal-S_1S_1^\intercal\right)\right]\\
    &=\frac{\sqrt{3}+1}{4}\Tr\!\left[|W-V|\Omega^\intercal \left(  \frac{S_1S_1^\intercal+S_2S_2^\intercal}{2}
 \right) \Omega\right]\,,
\ee
where in (iii) we employed Theorem~\ref{thm:gauss_regularity}, while in (iv) we exploited Lemma~\ref{lemma_upp_mixed}. This, together with~\eqref{hey2}, concludes the proof.
\end{proof}
\subsubsection{Tightness of the upper bound}
In this subsection, we prove that the bounds on the trace distance derived in Theorem~\ref{thm_sm} are \emph{tight}. The key tool here is to employ the closed formula for the trace norm of the derivative of a pure Gaussian state proved in Theorem~\ref{exact_der}.
\begin{lemma}[(Tightness of the upper bounds)]\label{lemma_tight}
Let $\varepsilon\in(0,1)$ and $a>1$. Let $\rho(V_{a},0)$ and $\rho(W_{a,\varepsilon},0)$ be Gaussian states with zero first moments and covariance matrices given by
\bb
    V_{a}&\coloneqq\begin{pmatrix}
        a &0 \\ 0& a^{-1}
    \end{pmatrix}
    \,,\\
    W_{a,\varepsilon}&\coloneqq \begin{pmatrix}
        a &0 \\ 0 &a^{-1}+\epsilon
    \end{pmatrix}\,.
\ee
Remarkably, the trace distance satisfies
\bb
    \lim\limits_{\varepsilon\rightarrow0^+}\frac{\frac{1}{2}\|\rho(W_{a,\varepsilon},0)-\rho(V_{a},0)\|_1}{\frac{1+\sqrt{3}}{8}\max(\|W_{a,\varepsilon}\|_\infty,\|V_{a}\|_\infty) \,\|W_{a,\varepsilon}-V_{a}\|_1}=1\,,
\ee
and also
\bb
    \lim\limits_{a\rightarrow+\infty}\lim\limits_{\varepsilon\rightarrow0^+}\frac{\frac{1}{2}\|\rho(W_{a,\varepsilon},0)-\rho(V_{a},0)\|_1}{\frac{1+\sqrt{3}}{8}\max(\Tr W_{a,\varepsilon}, \Tr V_{a} ) \,\|W_{a,\varepsilon}-V_{a}\|_\infty}=1\,,
\ee
establishing the tightness of the upper bounds provided in Theorem~\ref{thm_sm}. 
\end{lemma}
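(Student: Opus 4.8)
The plan is to recognise that, since $\det V_a = a\cdot a^{-1}=1$, the single symplectic eigenvalue of $V_a$ equals $1$, so $\rho(V_a,0)$ is a \emph{pure} Gaussian state and the exact formula of Theorem~\ref{exact_der} applies to it. Writing the perturbed covariance matrix as $W_{a,\varepsilon}=V_a+\varepsilon X$ with $X=\diag(0,1)$, I would first express the numerator as a directional derivative: by Definition~\ref{def:dir_dev} the incremental ratio converges to $\partial_{X,0}\rho(V_a,0)$ in trace norm, so the reverse triangle inequality gives
\[
\lim_{\varepsilon\to 0^+}\frac{1}{\varepsilon}\bigl\|\rho(W_{a,\varepsilon},0)-\rho(V_a,0)\bigr\|_1=\bigl\|\partial_{X,0}\rho(V_a,0)\bigr\|_1 .
\]

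Second, I would evaluate this trace norm through Theorem~\ref{exact_der} with $\tilde X=\Omega X\Omega^\intercal=\diag(1,0)$. The three traces under the first square root are $\Tr[\tilde X V_a]=a$, $\Tr[\tilde X V_a\tilde X V_a]=a^2$ and $\Tr[\Omega\tilde X\Omega\tilde X]=0$, so that term equals $\sqrt{3a^2}=\sqrt3\,a$. For the remaining term $\|\sqrt{V_a+i\Omega}\,\tilde X\sqrt{V_a+i\Omega}\|_1$ I would exploit purity: $V_a+i\Omega$ is positive semidefinite with $\det(V_a+i\Omega)=1-1=0$, hence rank one, and writing $V_a+i\Omega=\ketbra{w}$ one gets $\sqrt{V_a+i\Omega}\,\tilde X\sqrt{V_a+i\Omega}=\tfrac{\bra{w}\tilde X\ket{w}}{\braket{w|w}}\ketbra{w}$, whose trace norm is $|\Tr[\tilde X(V_a+i\Omega)]|=a$. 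Combining the two contributions yields $\|\partial_{X,0}\rho(V_a,0)\|_1=\tfrac{(1+\sqrt3)a}{4}$, i.e.\ $\tfrac12\|\rho(W_{a,\varepsilon},0)-\rho(V_a,0)\|_1=\tfrac{(1+\sqrt3)a}{8}\,\varepsilon+o(\varepsilon)$.

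Third, I would compute the two denominators in the stated limits. Since $W_{a,\varepsilon}-V_a=\diag(0,\varepsilon)$, both $\|W_{a,\varepsilon}-V_a\|_1$ and $\|W_{a,\varepsilon}-V_a\|_\infty$ equal $\varepsilon$; moreover, for fixed $a>1$ and $\varepsilon$ small enough one has $\max(\|W_{a,\varepsilon}\|_\infty,\|V_a\|_\infty)=a$, while $\max(\Tr W_{a,\varepsilon},\Tr V_a)=a+a^{-1}+\varepsilon$. Plugging these in, the first ratio tends to $\tfrac{(1+\sqrt3)a/8}{(1+\sqrt3)a/8}=1$, establishing the first claim; the second ratio tends to $\tfrac{(1+\sqrt3)a/8}{(1+\sqrt3)(a+a^{-1})/8}=\tfrac{a^2}{a^2+1}$, and taking $a\to\infty$ gives the second claim.

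The computations are all elementary once the derivative formula is in hand; the one genuinely delicate step is the evaluation of the trace-norm term $\|\sqrt{V_a+i\Omega}\,\tilde X\sqrt{V_a+i\Omega}\|_1$ in Theorem~\ref{exact_der}, where the rank-one structure of $V_a+i\Omega$ (a direct consequence of purity) is exactly what collapses the trace norm to the single number $a$ and makes the chosen direction $X$ saturate the optimal constant $\tfrac{1+\sqrt3}{8}$.
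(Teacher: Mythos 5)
Your proposal is correct and follows essentially the same route as the paper: rewrite the numerator as $\varepsilon\,\|\partial_{X,0}\rho(V_a,0)\|_1+o(\varepsilon)$ with $X=\diag(0,1)$, evaluate that trace norm via the exact pure-state formula of Theorem~\ref{exact_der} (obtaining $\tfrac{1+\sqrt{3}}{4}a$), and then divide by the elementary denominators to get $1$ and $\tfrac{a^2}{a^2+1}\to 1$. Your explicit rank-one argument for $\|\sqrt{V_a+i\Omega}\,\tilde X\sqrt{V_a+i\Omega}\|_1=|\Tr[\tilde X(V_a+i\Omega)]|=a$ is a nice justification of a value the paper simply states.
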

\begin{proof}
Note that
\bb
    W_{a,\varepsilon}=V_a+\varepsilon X\,,
\ee
where
\bb
    X\coloneqq \begin{pmatrix}
        0 &0 \\ 0 & 1
    \end{pmatrix}\,.
\ee
Consequently, it holds that
\bb
    \lim\limits_{\varepsilon\rightarrow 0^+}\frac{ \|\rho(W_{a,\varepsilon},0)-\rho(V_{a},0)\|_1}{\varepsilon}&=\lim\limits_{\varepsilon\rightarrow 0^+}\frac{ \|\rho(V_a+\varepsilon X,0)-\rho(V_{a},0)\|_1}{\varepsilon}\\
    &= \|\partial_{X,0}\rho(V_a,0)\|_1\\
    &\eqt{(i)} \frac{\sqrt{ (\Tr[\tilde{X}V_a])^2+2\Tr[\tilde{X}V_a\tilde{X}V_a]+2\Tr[\Omega \tilde{X}\Omega \tilde{X}] }+\|\sqrt{V_a+i\Omega}\tilde{X}\sqrt{V_a+i\Omega}\|_1}{4}\,,
\ee
where in (i) we used the closed formula for the derivative of a pure Gaussian state reported in Lemma~\ref{exact_der} and we denoted
\bb
    \tilde{X}\coloneqq \Omega^\intercal X\Omega=\begin{pmatrix}
        1 &0 \\ 0 & 0
    \end{pmatrix}\,.
\ee
Moreover, we have that
\bb
    \Tr[\tilde{X}V_a]&=a\,,\\
    \Tr[\tilde{X}V_a\tilde{X}V_a]&=a^2\,,\\
    \Tr[\Omega \tilde{X}\Omega \tilde{X}]&=0\,,\\
    \|\sqrt{V_a+i\Omega}\tilde{X}\sqrt{V_a+i\Omega}\|_1&=a\,.
\ee
As a consequence, we obtain that
\bb
    \lim\limits_{\varepsilon\rightarrow 0^+}\frac{ \|\rho(W_{a,\varepsilon},0)-\rho(V_{a},0)\|_1}{\varepsilon}=\frac{1+\sqrt{3}}{4}a
\ee
Since 
\bb
    \|W_{a,\varepsilon}-V_{a}\|_\infty&=\varepsilon\,,\\
    \max(\|W_{a,\varepsilon}\|_\infty,\|V_{a}\|_\infty)&=\max(a,a^{-1} + \varepsilon)\,,
\ee
we thus obtain that
\bb
    \lim\limits_{\varepsilon\rightarrow0^+}\frac{\frac{1}{2}\|\rho(W_{a,\varepsilon},0)-\rho(V_{a},0)\|_1}{\frac{1+\sqrt{3}}{8}\max(\|W_{a,\varepsilon}\|_\infty,\|V_{a}\|_\infty) \,\|W_{a,\varepsilon}-V_{a}\|_1}=1\,,
\ee
Additionally, since
\bb
    \|W_{a,\varepsilon}-V_{a}\|_1&=\varepsilon\,,\\
    \max(\Tr W_{a,\varepsilon}, \Tr V_{a} )&=a+a^{-1}+\varepsilon\,,
\ee 
we obtain that
\bb
\lim\limits_{\varepsilon\rightarrow0^+}\frac{\frac{1}{2}\|\rho(W_{a,\varepsilon},0)-\rho(V_{a},0)\|_1}{\frac{1+\sqrt{3}}{8}\max(\Tr W_{a,\varepsilon}, \Tr V_{a} ) \,\|W_{a,\varepsilon}-V_{a}\|_\infty}=\frac{1}{1+a^{-2}}\,,
\ee
and hence
\bb
\lim\limits_{a\rightarrow+\infty}\lim\limits_{\varepsilon\rightarrow0^+}\frac{\frac{1}{2}\|\rho(W_{a,\varepsilon},0)-\rho(V_{a},0)\|_1}{\frac{1+\sqrt{3}}{8}\max(\Tr W_{a,\varepsilon}, \Tr V_{a} ) \,\|W_{a,\varepsilon}-V_{a}\|_\infty}=1\,.
\ee
\end{proof}

\section{Differentiability of Gaussian states} \label{sec:proof_diff}
In this section we study the regularity properties of Gaussian states, when they are seen as functions of their statistical moments. In particular, we prove that they are continuously differentiable. The section is organised as follows:

\begin{itemize}
    \item In Subsection~\ref{sec:prel_func_an} we introduce all the preliminaries required to prove the main result of this section. 
    \item In Subsection~\ref{subsec:diff} we give the full proof of the regularity of Gaussian states.
\end{itemize}

\subsection{Preliminaries on functional analysis} \label{sec:prel_func_an}
In this subsection we review some basic facts about functional analysis. In particular, we have to define Fr\'{e}chet spaces and some properties characterizing them. For the sake of brevity, in the following section the Hilbert space $L^2(\R^{n})$ will be called $\mathcal{H}$.
As it is well-known, we can deal with vector spaces in infinite dimensions through Banach spaces. In Banach spaces we can define a norm, so that the distance between vectors can be written as a difference in that norm. Normed vector spaces have useful properties, unfortunately there are instances of vector spaces which are not \textit{normable}, i.~e.~vector spaces whose topology cannot be derived from a norm. One example of those spaces are Fr\'{e}chet spaces. Fortunately, Fr\'{e}chet spaces are manageable, because their continuity properties can be derived by a countable number of seminorms. Formally we say that: 

\begin{Def}[(Fr\'{e}chet space~\cite{Narici2010})]  \label{def:frechet_sp}
    A Fr\'{e}chet space is a topological vector space which is complete with respect to a countable family of seminorms $\{\left\| \cdot \right\|_k\}_{k\in\mathbb{N}}$. Moreover, a complete translationally-invariant metric for the space can be defined by the quantity
    \bb \label{eq:fre_dist}
        d(x,y) \coloneqq \sum_{k\in\mathbb{N}}2^{-k}\frac{\left\| x - y \right\|_k}{1 + \left\| x - y \right\|_k} \, .
    \ee 
\end{Def}
As we have seen above Fr\'{e}chet spaces are complete metric spaces, a distance is translationally-invariant if, for any $x,y,z$ vectors, it satisfies 
\bb 
    d(x+z,y+z) = d(x,y) \, .
\ee 
Now, we have to clarify what we mean when we talk about continuity in Fr\'{e}chet spaces. In particular, we need to define the concept of limit in these sets. As we see in the following Definition, analogously to Banach spaces, where the limit is defined on the space norm, in Fr\'{e}chet spaces we have to take into account the whole countable family of seminorms.

\begin{Def}[{(Continuity in Fr\'{e}chet spaces)}] \label{def:fre_cont}
    We say that a function $f : A \to B$, where $A$ and $B$ are Fr\'{e}chet spaces with families of seminorms $\{\|\cdot \|_{A,k}\}_{k\in\mathbb{N}}$ and $\{\| \cdot \|_{B,k}\}_{k\in\mathbb{N}}$, respectively, is continuous at point $\bar{a}\in A$ in the Fr\'{e}chet topology, if $\exists \, \bar{b}\in B$ such that $f(\bar{a}) = \bar{b}$ and for any sequence $\{a_n\}_{n\in\mathbb{N}}$ satisfying the following condition for any $k\in\mathbb{N}$
    \bb
        \lim_{n\to\infty}\| \bar{a} - a_n \|_{A,k} = 0 \, ,
    \ee 
    it holds that
    \bb
        \lim_{n\to\infty}\| \bar{b} - f(a_n) \|_{B,k} = 0 \, ,
    \ee 
    for any $k\in\mathbb{N}$. Equivalently, we can define continuity in Fr\'{e}chet topology by employing the distance introduced in Equation~\eqref{eq:fre_dist} of Definition~\ref{def:frechet_sp}, and applying the standard notion of continuity in metric spaces. If the above condition is met we write that
    \bb
        \lim_{a\to\bar{a}} f(a) = \bar{b} \, .
    \ee 
\end{Def}
Even though Fr\'{e}chet space may seem obscure mathematical constructions, a lot of widely used vector spaces are of this kind. One example is the space of infinitely differentiable functions over $\R^{n}$, here the countable family of seminorms is given by the uniform norms of the derivatives of the function. Another important case is represented by the \textit{regular and rapidly decreasing functions}, also known as Schwartz functions. Here, we define the Schwartz space of functions as:

\begin{Def}[(Schwartz space~\cite{Hormander1990})] \label{def:schwartz_space}
	For any $k\in\mathbb{N}$, we call Schwartz space $\pazocal{S}\left(\R^k\right)$ the set of $C^{\infty}$ complex-valued functions $f:\R^k \to \C$ such that
	\bb 
	\sup_{\mathbf{r}\in\R^k}\left|\mathbf{r}^a\partial_bf(\mathbf{r})\right| < \infty \, ,
	\ee 
	for any multi-indices $a,b\in\mathbb{N}^k$. Here, $\mathbf{r}^a \coloneqq r_1^{a_1}r_2^{a_2} \ldots r^{a_k}_k$ and $\partial_b f \coloneqq \frac{\partial^{b_1}}{\partial r_{1}^{b_1}}\frac{\partial^{b_2}}{\partial r_{2}^{b_2}} \ldots \frac{\partial^{b_k}}{\partial r_{k}^{b_k}} f$.
\end{Def}

Up to now, we have discussed sets of complex-valued functions, but in this work we deal with quantum states, so we need to define an analogue of the Schwartz space for infinite-dimensional linear operators. In~\cite{schwartz_op} authors introduce precisely the set that is needed in the following, i.~e.~the set of Schwartz operators. Besides, they prove many different properties of these operators. Here, we review the definition of this vector space and of the properties required for our work. 

\begin{Def}[(Schwartz operators~\cite{schwartz_op})] \label{def:schwartz_op}
    For any number of bosonic modes $n\in\mathbb{N}$, we define the set of Schwartz operators $\pazocal{S}(\mathcal{H})$ as the linear operators acting on a dense subspace of $\mathcal{H}$ such that all the following seminorms are bounded:
	\bb
	\left\|\hat{T}\right\|_{a,b} \coloneqq \sup\left\{\left|\langle \mathbf{\hat{R}}^a \psi| \hat{T} \mathbf{ \hat{R}}^b\varphi\rangle \right| 
	: \varphi,\psi\in\pazocal{S}\left(\R^{2n}\right), \left\|\psi\right\|_{L^2}, \left\|\varphi\right\|_{L^2} \leq 1\right\} \; ,
	\ee
	for any multi-indices $a,b\in\mathbb{N}^{2n}$.
\end{Def}
The set $\pazocal{S}(\mathcal{H})$ is a Fr\'{e}chet space. Moreover, as we are going to see, $\pazocal{S}(\mathcal{H}) \subset \TT(\mathcal{H})$, is the set of trace-class operators on $\mathcal{H}$. The seminorms defined above are slightly contrived, so we now want to find a better way to express them.

\begin{prop} \label{prop:schwartz_norm_equiv}
    For any linear operator $\hat{T}\in\pazocal{S}(\mathcal{H})$, and for any $a,b$ multi-indices, it holds that
	\bb
	       \| \hat{T} \|_{a,b}  = \| \mathbf{\hat{R}}^a\hat{T}\mathbf{\hat{R}}^b \|_{\infty} \, .
	\ee 
\end{prop}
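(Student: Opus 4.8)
The plan is to derive the identity from the variational characterization of the operator norm, $\|A\|_\infty = \sup\{|\langle \psi|A\varphi\rangle| : \psi,\varphi\in\mathcal{H},\ \|\psi\|_{L^2},\|\varphi\|_{L^2}\le 1\}$, applied to the (a priori only densely defined) operator $A\coloneqq \mathbf{\hat{R}}^a\hat{T}\mathbf{\hat{R}}^b$, whose natural domain contains the Schwartz vectors $\pazocal{S}(\R^{2n})$. First I would observe that, since each quadrature $\hat{R}_j$ is self-adjoint and preserves the Schwartz class, for all $\psi,\varphi\in\pazocal{S}(\R^{2n})$ the factors of $\mathbf{\hat{R}}^a$ can be transferred from the left entry of the inner product onto $\hat{T}$, giving $\langle \mathbf{\hat{R}}^a\psi|\hat{T}\mathbf{\hat{R}}^b\varphi\rangle = \langle \psi|A\varphi\rangle$. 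Thus the seminorm $\|\hat{T}\|_{a,b}$ is exactly the supremum of $|\langle\psi|A\varphi\rangle|$ over unit Schwartz vectors $\psi,\varphi$, and the whole statement reduces to showing that restricting this supremum to the dense subspace $\pazocal{S}(\R^{2n})$ does not change its value.

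One inequality is immediate. For unit Schwartz $\psi,\varphi$, Cauchy--Schwarz gives $|\langle\psi|A\varphi\rangle|\le \|A\varphi\|_{L^2}\le \|A\|_\infty$, so taking the supremum yields $\|\hat{T}\|_{a,b}\le \|A\|_\infty$ (this holds trivially, with both sides possibly infinite, before boundedness of $A$ is established).

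For the reverse inequality I would use the finiteness of $\|\hat{T}\|_{a,b}$, guaranteed by the hypothesis $\hat{T}\in\pazocal{S}(\mathcal{H})$, to upgrade $A$ to a genuinely bounded operator. Fix a unit Schwartz vector $\varphi$. The mapping properties of Schwartz operators ensure that $A\varphi=\mathbf{\hat{R}}^a\hat{T}\mathbf{\hat{R}}^b\varphi$ is a well-defined element of $\mathcal{H}$ (indeed of $\pazocal{S}(\R^{2n})$): $\mathbf{\hat{R}}^b\varphi$ is Schwartz, $\hat{T}$ preserves the Schwartz class, and $\mathbf{\hat{R}}^a$ maps Schwartz to Schwartz. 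Then, by density of $\pazocal{S}(\R^{2n})$ in $\mathcal{H}$ and continuity of the functional $\psi\mapsto\langle\psi|A\varphi\rangle$, the norm $\|A\varphi\|_{L^2}=\sup_{\|\psi\|_{L^2}\le1}|\langle\psi|A\varphi\rangle|$ is already attained as a supremum over unit Schwartz vectors $\psi$, and each such term is bounded by $\|\hat{T}\|_{a,b}$. Hence $\|A\varphi\|_{L^2}\le\|\hat{T}\|_{a,b}$ for every unit Schwartz $\varphi$, so $A$ is bounded on the dense Schwartz domain and extends to a bounded operator on $\mathcal{H}$ with $\|A\|_\infty\le\|\hat{T}\|_{a,b}$. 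Combining the two inequalities gives the claimed equality.

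The main obstacle is precisely this bridge between the two suprema: the seminorm only tests $A$ against Schwartz vectors, whereas the operator norm ranges over all of $\mathcal{H}$, so one must invoke both the density of $\pazocal{S}(\R^{2n})$ in $L^2(\R^{2n})$ and the fact---drawn from the Schwartz-operator calculus of the cited reference---that $\hat{T}$ and the quadrature monomials map the Schwartz class into itself, which guarantees $A\varphi\in\mathcal{H}$ and legitimizes the duality step. Once boundedness of $A$ is secured from the finiteness of the seminorm, the equality of norms is a routine consequence.
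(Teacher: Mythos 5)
Your proof is correct. The paper itself offers no argument here: it simply defers to Lemma~3.2 of the arXiv version of the Schwartz-operators reference, so your write-up supplies the standard duality/density argument that underlies that lemma. Both inequalities are handled properly: the easy direction is Cauchy--Schwarz once you identify $\langle \mathbf{\hat{R}}^a\psi|\hat{T}\mathbf{\hat{R}}^b\varphi\rangle=\langle\psi|A\varphi\rangle$, and the reverse direction correctly uses density of the Schwartz class in $L^2$ together with finiteness of the seminorm to bound $\|A\varphi\|_{L^2}$ uniformly and extend $A$ to a bounded operator. The one step that genuinely leans on external input is the transfer of $\mathbf{\hat{R}}^a$ across the inner product, which requires $\hat{T}\mathbf{\hat{R}}^b\varphi$ to lie in the domain of $\mathbf{\hat{R}}^a$; you attribute this to the mapping properties of Schwartz operators, which is legitimate, though it is worth noting that one can avoid even that by a Riesz-representation argument: finiteness of $\|\hat{T}\|_{a,b}$ makes $\psi\mapsto\langle \mathbf{\hat{R}}^a\psi|\hat{T}\mathbf{\hat{R}}^b\varphi\rangle$ a bounded antilinear functional, whose representing vector \emph{is} $A\varphi$ by definition of the weak product. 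Either way the argument closes, and your version has the advantage of being self-contained where the paper is not.
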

\begin{proof}
The proof is a consequence of Lemma 3.2 in the arXiv version of~\cite{schwartz_op}.
\end{proof}

Thanks to Proposition~\ref{prop:schwartz_norm_equiv}, we have a nice way to express the seminorms of the space $\pazocal{S}(\mathcal{H})$. These quantities are expressed in terms of operator norms (or Schatten infinity-norms), since in this work we are studying quantum states, it is interesting to have estimates in terms of the trace-norm. This is precisely the topis of the following Proposition: 

\begin{prop}[(Lemma 3.6 in the arXiv version of~\cite{schwartz_op})] \label{prop:schw_estimate}
    For any complex separable Hilbert space, the set of the Schwartz operators $\pazocal{S}(\mathcal{H})$ is contained in the set of the trace-class operators $\TT(\mathcal{H})$. Moreover, for any $a,b$ multi-indices it is holds that
    \bb 
        \|\mathbf{\hat{R}}^a \hat{T} \mathbf{\hat{R}}^b\|_1 \leq \| \hat{E}_n^{-2} \|_1\| \hat{E}_n^2 \mathbf{\hat{R}}^a \hat{T} \mathbf{\hat{R}}^b\|_{\infty} < \infty \, ,
    \ee 
    where $\hat{E}_n \coloneqq \frac12\sum_{j=1}^{2n}\hat{R}_j^2$ is the energy operator. 
\end{prop}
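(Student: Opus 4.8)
The plan is to reduce the statement to H\"older's inequality for Schatten norms, $\|XY\|_1 \le \|X\|_1\,\|Y\|_\infty$, after factoring out a trace-class inverse power of the energy operator. Running the argument for a general power $s$ (the power in the statement being the intended instance), I would write $\mathbf{\hat{R}}^a \hat{T} \mathbf{\hat{R}}^b = \hat{E}_n^{-s}\big(\hat{E}_n^{s}\,\mathbf{\hat{R}}^a \hat{T} \mathbf{\hat{R}}^b\big)$, so that the trace norm of the left-hand side is controlled by the product of $\|\hat{E}_n^{-s}\|_1$ and the operator norm of $\hat{E}_n^{s}\,\mathbf{\hat{R}}^a \hat{T} \mathbf{\hat{R}}^b$. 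The inclusion $\pazocal{S}(\mathcal{H}) \subset \TT(\mathcal{H})$ is then the special case $a=b=0$. Two ingredients remain: the trace-classness of $\hat{E}_n^{-s}$ and the boundedness of the second factor.

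For the first ingredient, I would diagonalize $\hat{E}_n = \tfrac12\sum_{j=1}^{2n}\hat{R}_j^2$, which is the $n$-mode harmonic-oscillator Hamiltonian: its spectrum is $\{N + n/2 : N \in \mathbb{N}\}$, with the eigenvalue $N+n/2$ carrying multiplicity $\binom{N+n-1}{n-1}$, the number of ways to distribute $N$ quanta among the $n$ modes. Hence $\hat{E}_n^{-s}$ is positive with $\Tr[\hat{E}_n^{-s}] = \sum_{N\ge 0}\binom{N+n-1}{n-1}(N+n/2)^{-s}$; since the multiplicity grows like $N^{n-1}$, this series converges, and $\|\hat{E}_n^{-s}\|_1 < \infty$, precisely when $s$ exceeds $n$. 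This is the quantitative heart of the estimate and dictates how large the power must be taken relative to the number of modes.

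For the second ingredient, I would use the characterization of the Schwartz seminorms from Proposition~\ref{prop:schwartz_norm_equiv}, namely $\|\mathbf{\hat{R}}^{a'}\hat{T}\mathbf{\hat{R}}^{b}\|_\infty = \|\hat{T}\|_{a',b} < \infty$ for every pair of multi-indices. Expanding $\hat{E}_n^{s}$ as a polynomial in the quadratures, the operator $\hat{E}_n^{s}\mathbf{\hat{R}}^a$ is a finite linear combination of monomials $\mathbf{\hat{R}}^{a'}$ with $|a'|\le 2s+|a|$, so $\hat{E}_n^{s}\,\mathbf{\hat{R}}^a \hat{T} \mathbf{\hat{R}}^b$ is a finite sum of operators $\mathbf{\hat{R}}^{a'}\hat{T}\mathbf{\hat{R}}^{b}$, each of finite operator norm; by the triangle inequality the whole sum is bounded. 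Combining the two ingredients through the factorization and H\"older then yields the claimed chain $\|\mathbf{\hat{R}}^a \hat{T} \mathbf{\hat{R}}^b\|_1 \le \|\hat{E}_n^{-s}\|_1\,\|\hat{E}_n^{s}\mathbf{\hat{R}}^a \hat{T} \mathbf{\hat{R}}^b\|_\infty < \infty$.

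The main obstacle is not the algebra but the operator-theoretic rigor around unbounded operators. One must justify that $\hat{E}_n^{s}\,\mathbf{\hat{R}}^a \hat{T} \mathbf{\hat{R}}^b$, a priori only densely defined on the Schwartz vectors, genuinely extends to a bounded operator on all of $\mathcal{H}$, and that the factorization through $\hat{E}_n^{-s}$ --- an unbounded operator inverted on its range --- is legitimate, so that the product is a bona fide trace-class operator whose trace norm obeys the H\"older bound. Care is also needed because the exhibited bound requires the power to be large enough relative to $n$ for $\hat{E}_n^{-s}$ to be trace class; verifying this convergence and tracking the domains cleanly is where the work lies.
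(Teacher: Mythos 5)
The paper never proves this proposition: it is imported purely by citation from~\cite{schwartz_op} (the label itself reads ``Lemma 3.6 in the arXiv version''), so there is no in-paper argument to compare yours against line by line. Your route --- factor out an inverse power of the oscillator Hamiltonian, apply H\"older's inequality $\|XY\|_1\le\|X\|_1\|Y\|_\infty$, bound the second factor by the Schwartz seminorms of Proposition~\ref{prop:schwartz_norm_equiv} after expanding $\hat{E}_n^{s}\mathbf{\hat{R}}^a$ into a finite combination of canonically ordered monomials (the reordering via the CCR only produces lower-order monomials, so this is fine), and read off the inclusion $\pazocal{S}(\mathcal{H})\subset\TT(\mathcal{H})$ from the case $a=b=0$ --- is exactly the standard proof of this lemma, and it is sound in outline. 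The domain issues you flag (that $\hat{E}_n^{s}\mathbf{\hat{R}}^a\hat{T}\mathbf{\hat{R}}^b$ genuinely extends to a bounded operator and that the factorization through the unbounded $\hat{E}_n^{s}$ is legitimate) are real but routine for Schwartz operators, which map $\mathcal{H}$ into smooth vectors lying in $\dom(\hat{E}_n^{s})$ for every $s$.

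The one point you must push to its conclusion is the trace-classness of the prefactor, because your own criterion shows that the constant displayed in the statement is finite only in the single-mode case. With the paper's definition $\hat{E}_n=\tfrac12\sum_{j=1}^{2n}\hat{R}_j^2=\hat{N}+\tfrac{n}{2}$, the eigenvalue $N+n/2$ has multiplicity $\binom{N+n-1}{n-1}\sim N^{n-1}$, so $\Tr[\hat{E}_n^{-s}]=\sum_{N\ge0}\binom{N+n-1}{n-1}(N+n/2)^{-s}$ converges precisely when $s>n$ --- as you note. But then $\|\hat{E}_n^{-2}\|_1=\infty$ for every $n\ge2$, the right-hand side of the displayed inequality is $+\infty$, and the concluding ``$<\infty$'' fails; your argument as written therefore only establishes the proposition as transcribed for $n=1$. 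You identify the condition $s>n$ and remark that ``the power must be taken large enough relative to $n$,'' but you stop short of drawing the consequence: to close the proof for general $n$ you must either take $s\ge n+1$ (which changes the constant in the statement and makes it $n$-dependent) or replace $\hat{E}_n^{-2}$ by a per-mode regularizer such as $\prod_{k=1}^{n}\bigl(\hat{a}_k^\dagger\hat{a}_k+\tfrac12\bigr)^{-2}$, whose trace factorizes over modes and is finite for every $n$. Everything else in your write-up goes through.
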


\begin{prop}[{(Proposition 3.18 in the arXiv version of~\cite{schwartz_op})}] \label{prop:schwartz_iso}
	For any $n\in\mathbb{N}$ the so Fourier--Weyl transform is a continuous linear homeomorphism between the Schwartz space of functions $\pazocal{S}\left(\R^{2n}\right)$ and the set of Schwartz operators $\pazocal{S}\left(\mathcal{H}\right)$. The map is continuous in the natural Fr\'{e}chet-space topology as in Definition~\ref{def:fre_cont}.
\end{prop}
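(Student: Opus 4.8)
The plan is to establish three things: that the Fourier--Weyl transform $\mathcal{W}$ is a linear bijection between $\pazocal{S}(\R^{2n})$ and $\pazocal{S}(\mathcal{H})$, that it is continuous, and that its inverse is continuous. Since continuity in a Fr\'echet space is controlled by the countable family of seminorms, the whole argument reduces to comparing the Schwartz seminorms $\sup_{\mathbf{r}}|\mathbf{r}^a \partial_b f(\mathbf{r})|$ of a function $f$ with the Schwartz-operator seminorms $\|\hat{T}\|_{a,b} = \|\hat{\mathbf{R}}^a \hat{T}\hat{\mathbf{R}}^b\|_\infty$ of its image $\hat{T} = \mathcal{W}(f)$, the latter expression being available thanks to Proposition~\ref{prop:schwartz_norm_equiv}. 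The linearity of $\mathcal{W}$ is immediate from its integral definition~\eqref{eq_fourier_weyl}, and the bijectivity follows from the one-to-one correspondence between bounded operators and their characteristic functions; the real content is to show that this correspondence matches the two Schwartz classes exactly, which is precisely what the seminorm comparison will deliver.

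The key technical ingredient, which I would isolate first, is a pair of \emph{intertwining relations} expressing how $\mathcal{W}$ converts the elementary operations on the function side into operations on the operator side. From Lemma~\ref{identity_commutator} one has $[\hat{R}_j, \hat{D}_{\mathbf{r}}] = r_j \hat{D}_{\mathbf{r}}$, and from Lemma~\ref{der_charact_general} the complementary relation $\partial_{r_j}\hat{D}_{\mathbf{r}} = -\tfrac{i}{2}\{(\Omega\hat{\mathbf{R}})_j, \hat{D}_{\mathbf{r}}\}$. Combining commutator and anticommutator lets me write both $\hat{R}_j \hat{D}_{\mathbf{r}}^\dagger$ and $\hat{D}_{\mathbf{r}}^\dagger\hat{R}_j$ as first-order differential operators in $\mathbf{r}$ (with polynomial coefficients) applied to $\hat{D}_{\mathbf{r}}^\dagger$. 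Inserting these into~\eqref{eq_fourier_weyl} and integrating by parts shows that, for each pair of multi-indices, $\hat{\mathbf{R}}^a \mathcal{W}(f)\hat{\mathbf{R}}^b = \mathcal{W}(D_{a,b}f)$ for a differential operator $D_{a,b}$ with polynomial coefficients of total order $|a|+|b|$. Since $\|\hat{D}_{\mathbf{r}}\|_\infty = 1$, the crude bound $\|\mathcal{W}(g)\|_\infty \le (2\pi)^{-n}\|g\|_{L^1}$ holds, and for a Schwartz function $g$ the $L^1$-norm is dominated by a finite combination of Schwartz seminorms (absorbing one extra factor $(1+|\mathbf{r}|^2)^{-N}$ with $N>n$). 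Applying this to $g = D_{a,b}f$ yields $\|\mathcal{W}(f)\|_{a,b}\le C_{a,b}\,\max(\ldots)$, a finite combination of Schwartz seminorms of $f$, which proves continuity of $\mathcal{W}$.

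For the continuity of the inverse $\Theta \mapsto \chi_\Theta$, I would run the intertwining relations in the opposite direction. Multiplication by $r_j$ corresponds to the commutator, $r_j\chi_\Theta(\mathbf{r}) = \chi_{[\Theta,\hat{R}_j]}(\mathbf{r})$ (the content of Lemma~\ref{lemma_comm_char} read through~\eqref{eq_fourier_weyl}), while differentiation corresponds to the anticommutator, $\partial_{r_j}\chi_\Theta = -\tfrac{i}{2}\chi_{\{\Theta,(\Omega\hat{\mathbf{R}})_j\}}$, by Lemma~\ref{der_charact_general}. Iterating, $\mathbf{r}^a\partial_b\chi_\Theta = \chi_{\hat{Q}_{a,b}(\Theta)}$, where $\hat{Q}_{a,b}(\Theta)$ is a nested string of commutators and anticommutators that expands into a finite linear combination of terms $\hat{\mathbf{R}}^c\Theta\hat{\mathbf{R}}^d$ with $|c|+|d|\le|a|+|b|$. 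Bounding the characteristic function pointwise by the trace norm, $\sup_{\mathbf{r}}|\chi_{A}(\mathbf{r})| = \sup_{\mathbf{r}}|\Tr[A\hat{D}_{\mathbf{r}}]|\le \|A\|_1$, and invoking Proposition~\ref{prop:schw_estimate} to control each $\|\hat{\mathbf{R}}^c\Theta\hat{\mathbf{R}}^d\|_1$ by the Schwartz-operator seminorms of $\Theta$, I obtain that $\sup_{\mathbf{r}}|\mathbf{r}^a\partial_b\chi_\Theta(\mathbf{r})|$ is finite and bounded by those seminorms. This simultaneously shows $\chi_\Theta\in\pazocal{S}(\R^{2n})$ and that $\mathcal{W}^{-1}$ is continuous.

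The main obstacle I anticipate is not the algebra of the intertwining relations but their rigorous justification for unbounded operators: differentiating under the integral sign in~\eqref{eq_fourier_weyl}, performing the integration by parts without boundary terms, and treating the products $\hat{\mathbf{R}}^a\Theta\hat{\mathbf{R}}^b$ of a Schwartz operator with unbounded quadratures as genuine trace-class operators. All of these require working on the common dense invariant domain of the quadratures and leaning on the Schwartz-operator estimates (Propositions~\ref{prop:schwartz_norm_equiv} and~\ref{prop:schw_estimate}) to guarantee that every intermediate expression stays within $\pazocal{S}(\mathcal{H})\subset\TT(\mathcal{H})$; it is here, in the functional-analytic bookkeeping rather than in the formal identities, that the real care of the proof lies.
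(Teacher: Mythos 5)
The paper does not prove this statement at all: it is imported verbatim, with attribution, as Proposition~3.18 of the cited reference \cite{schwartz_op}, so there is no in-paper proof to compare against. Your sketch is a correct reconstruction of the argument one finds in that reference. The two intertwining relations you isolate --- $[\hat{R}_j,\hat{D}_{\mathbf{r}}]=r_j\hat{D}_{\mathbf{r}}$ (Lemma~\ref{identity_commutator}) and $\partial_{r_j}\hat{D}_{\mathbf{r}}=-\tfrac{i}{2}\{(\Omega\hat{\mathbf{R}})_j,\hat{D}_{\mathbf{r}}\}$ (which is Lemma~\ref{lemma_anticomm_dr} rather than Lemma~\ref{der_charact_general}, a harmless misattribution) --- are exactly the right currency: combined they express $\hat{R}_j\hat{D}_{\mathbf{r}}$ and $\hat{D}_{\mathbf{r}}\hat{R}_j$ as first-order differential operators with linear coefficients, so that after integration by parts $\hat{\mathbf{R}}^a\mathcal{W}(f)\hat{\mathbf{R}}^b=\mathcal{W}(D_{a,b}f)$, and the elementary bounds $\|\mathcal{W}(g)\|_\infty\le(2\pi)^{-n}\|g\|_{L^1}$ and $\sup_{\mathbf{r}}|\chi_A(\mathbf{r})|\le\|A\|_1$ then close the seminorm estimates in both directions, with Propositions~\ref{prop:schwartz_norm_equiv} and~\ref{prop:schw_estimate} supplying the translation between operator-norm and trace-norm seminorms on the operator side. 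Bijectivity comes for free from the Fourier--Weyl inversion once both inclusions are established. You are also right that the genuine labour lies in the functional-analytic bookkeeping (differentiation under the integral, vanishing of boundary terms, domains of the unbounded products), which is precisely why the present paper outsources the result to \cite{schwartz_op} rather than reproving it; as a blind reconstruction of that proof your outline is sound.
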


Proposition~\ref{prop:schwartz_iso} gives us a way to connect Schwartz operators to Schwartz functions by competely preserving the structure of Fr\'{e}chet spaces. As we are going to see in the next Section, this result allows us to deal with classical objects (Schwartz functions) in order to prove properties of quantum objects (Schwartz operators).

\subsection{Proof of differentiability} \label{subsec:diff}

In this subsection we prove that any quantum Gaussian state $\rho(V,\mathbf{m})$ is continuously differentiable as a function of its statistical moments. First of all we notice that the characteristic function of any Gaussian state, which takes the form $\chi_{\rho(V,\mathbf{m})}(\mathbf{r}) = \exp\left( -\frac14 \left(\Omega\mathbf{r}\right)^{\intercal}V\Omega\mathbf{r} + i\mathbf{m}^{\intercal}\Omega\mathbf{r} \right)$, belongs to the Schwartz space for any $V$ and $\mathbf{m}$, so any Gaussian state is a Schwartz operator. The Fourier--Weyl transform of a Gaussian function is a Schwartz operator even if $V$ is strictly positive but it is not an admissible quantum covariance matrix. In addition, it is also important to remark that any polynomial multiplied by a Gaussian function is a Schwartz function. 
First, we explain a sketch of the proof. We want to prove that Gaussian states are continuously differentiable, and we know that Gaussian states can be expressed as the Fourier--Weyl transform of Gaussian functions, so we can first prove that Gaussian functions are continuously differentiable in the Fr\'{e}chet topology of the Schwartz space. Then, we can apply Proposition~\ref{prop:schwartz_iso} in order to lift up the regularity properties of Gaussian states to Gaussian functions.

We begin by stating this simple proposition on the Taylor expansion of Gaussian functions:

\begin{prop} \label{prop:taylor_exp}
    Given a Gaussian function of the form 
    \bb 
        \exp\left( -\frac14 \left(\Omega\mathbf{r}\right)^{\intercal}(V+X)\Omega\mathbf{r} + i(\mathbf{m}+\mathbf{l})^{\intercal}\Omega\mathbf{r} \right) \, ,
    \ee    
        where $V$ is a $2n \times 2n$ strictly positive real symmetric matrix, $X$ is a $2n\times 2n$ real and symmetric matrix, and $\mathbf{m},\;\mathbf{l}\in\R^{2n}$, then for any $\mathbf{r}\in\R^{2n}$ its Taylor polynomial reads
    \bb
        \exp\left( -\frac14 \left(\Omega\mathbf{r}\right)^{\intercal}V\Omega\mathbf{r} + i\mathbf{m}^{\intercal}\Omega\mathbf{r} \right) \sum_{k,\ell\in\mathbb{N}}\frac{(-1)^k [(\Omega\mathbf{r})^{\intercal}X\Omega\mathbf{r}]^k i^{\ell}(\mathbf{l}^{\intercal}\Omega\mathbf{r})^{\ell} }{4^kk!\ell !} \, .
    \ee 

\end{prop}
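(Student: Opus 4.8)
The plan is to prove the identity by splitting the quadratic exponent additively and then expanding the resulting factors as ordinary scalar power series. The key observation is that the exponent is a \emph{scalar}-valued affine function of the perturbation data $(X,\mathbf{l})$: it decomposes as $A+B+C$, where
\[
A \coloneqq -\tfrac14(\Omega\mathbf{r})^{\intercal}V\Omega\mathbf{r}+i\mathbf{m}^{\intercal}\Omega\mathbf{r}, \qquad B \coloneqq -\tfrac14(\Omega\mathbf{r})^{\intercal}X\Omega\mathbf{r}, \qquad C \coloneqq i\mathbf{l}^{\intercal}\Omega\mathbf{r}.
\]
Here $A$ is the unperturbed part, $B$ is of degree one in the entries of $X$, and $C$ is of degree one in the entries of $\mathbf{l}$. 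Since $A,B,C\in\C$ are numbers and hence commute, one has $\exp(A+B+C)=\exp(A)\exp(B)\exp(C)$, and $\exp(A)$ is precisely the unperturbed Gaussian prefactor appearing in the claim.

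First I would isolate this prefactor and then expand the two remaining factors via the everywhere-convergent exponential series,
\[
\exp(B)=\sum_{k\in\mathbb{N}}\frac{B^k}{k!}=\sum_{k\in\mathbb{N}}\frac{(-1)^k[(\Omega\mathbf{r})^{\intercal}X\Omega\mathbf{r}]^k}{4^k k!},
\qquad
\exp(C)=\sum_{\ell\in\mathbb{N}}\frac{C^\ell}{\ell!}=\sum_{\ell\in\mathbb{N}}\frac{i^\ell(\mathbf{l}^{\intercal}\Omega\mathbf{r})^\ell}{\ell!}.
\]
For each fixed $\mathbf{r}\in\R^{2n}$ (and each $X,\mathbf{l}$) both series converge absolutely, so their product may be rearranged freely into the single double sum over $(k,\ell)\in\mathbb{N}^2$ that appears in the statement. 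Multiplying the three factors then reproduces the claimed expression verbatim, with $\exp(A)$ pulled out front and the summand $\frac{(-1)^k[(\Omega\mathbf{r})^{\intercal}X\Omega\mathbf{r}]^k\,i^{\ell}(\mathbf{l}^{\intercal}\Omega\mathbf{r})^{\ell}}{4^k k!\,\ell!}$ inside.

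Finally I would note that this convergent double sum is genuinely the Taylor expansion in the variables $(X,\mathbf{l})$ about the origin, not merely some series representation: the summand indexed by $(k,\ell)$ is homogeneous of degree $k$ in the entries of $X$ and degree $\ell$ in the entries of $\mathbf{l}$, so collecting terms of each fixed bidegree recovers the multivariable Taylor coefficients by uniqueness of absolutely convergent power series (the exponential series being, by definition, the Taylor series of $\exp$ at $0$). The only point warranting an explicit word is this absolute convergence, which legitimizes writing the product of the two exponential series as a single double sum; the rest is the elementary additivity of the exponent and the standard exponential series, so there is no substantial obstacle here.
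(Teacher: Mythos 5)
Your argument is correct: the paper states this proposition without proof, treating it as immediate, and your write-up supplies exactly the intended elementary justification (additivity of the scalar exponent, the everywhere-convergent power series of $\exp$, and absolute convergence to justify the rearrangement into a double sum indexed by the bidegree in $(X,\mathbf{l})$). Nothing further is needed.
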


The following Lemma is important because it tells us that the integral remainder of the Taylor formula of a Gaussian at any order is a Schwartz function. This is going to be the first ingredient of our proof.

\begin{lemma} \label{lemma:domination}
For any $n,k\in\N$, for any $\mathbf{m},\mathbf{l}\in\R^{2n}$, for any polynomial with complex coefficients $\PP(\mathbf{r})$, for any, multi-indices $a,b$, for any $V>0$, and any $X$ symmetric matrix such that $V+X > 0$, the function

\bb 
    f_{\PP,V,X,\mathbf{m},\mathbf{l},a,b,k}(\mathbf{r}) \coloneqq \int_{0}^{1}  (1-\alpha)^{k}\PP(\mathbf{r})  \exp\left( -\frac14 \left(\Omega\mathbf{r}\right)^{\intercal}(V+\alpha X)\Omega\mathbf{r} + i(\mathbf{m}+\alpha \mathbf{l})^{\intercal}\Omega\mathbf{r} \right)  {\rm d}\alpha 
\ee 
belongs to $\pazocal{S}(\R^{2n})$.

\end{lemma}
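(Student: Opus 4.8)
The plan is to verify directly that $f \coloneqq f_{\PP,V,X,\mathbf{m},\mathbf{l},a,b,k}$ satisfies the defining estimates of the Schwartz space in Definition~\ref{def:schwartz_space}, i.e.\ that $\sup_{\mathbf{r}}\left|\mathbf{r}^{a'}\partial_{b'}f(\mathbf{r})\right|<\infty$ for every pair of multi-indices $a',b'$. The crucial observation is that the Gaussians appearing under the integral sign decay uniformly in $\alpha\in[0,1]$. Indeed, writing $V+\alpha X = (1-\alpha)V + \alpha(V+X)$ exhibits $V+\alpha X$ as a convex combination of the two strictly positive matrices $V$ and $V+X$; hence there is a constant $c>0$ — one may take $c$ to be the smaller of the least eigenvalues of $V$ and of $V+X$ — such that $V+\alpha X \geq c\,\mathbb{1}$ for all $\alpha\in[0,1]$. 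Since $\Omega$ is orthogonal, this yields the $\alpha$-independent lower bound $(\Omega\mathbf{r})^\intercal(V+\alpha X)\Omega\mathbf{r}\geq c\,\|\mathbf{r}\|_2^2$.

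Next I would differentiate under the integral sign. For any multi-index $b'$, applying $\partial_{b'}$ to the integrand produces $Q_{b'}(\alpha,\mathbf{r})\,\exp\!\big(-\tfrac14(\Omega\mathbf{r})^\intercal(V+\alpha X)\Omega\mathbf{r}+i(\mathbf{m}+\alpha\mathbf{l})^\intercal\Omega\mathbf{r}\big)$, where $Q_{b'}$ is a polynomial in $\mathbf{r}$ whose coefficients are themselves polynomials in $\alpha$: each $\mathbf{r}$-derivative either differentiates the prefactor $(1-\alpha)^k\PP(\mathbf{r})$ or brings down a factor affine in $\mathbf{r}$ coming from the gradient of the exponent. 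Taking the supremum of these coefficients over the compact set $\alpha\in[0,1]$ gives an $\alpha$-independent polynomial $\widetilde{Q}_{b'}$ with $\left|\partial_{b'}(\text{integrand})\right|\leq \widetilde{Q}_{b'}(\mathbf{r})\,e^{-\frac{c}{4}\|\mathbf{r}\|_2^2}$. This dominating function is independent of $\alpha$, hence trivially integrable on $[0,1]$, so the standard differentiation-under-the-integral theorem applies and yields $\partial_{b'}f(\mathbf{r})=\int_0^1 \partial_{b'}(\text{integrand})\,\mathrm{d}\alpha$.

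Finally I would conclude: from the previous step, $\left|\mathbf{r}^{a'}\partial_{b'}f(\mathbf{r})\right|\leq |\mathbf{r}^{a'}|\,\widetilde{Q}_{b'}(\mathbf{r})\,e^{-\frac{c}{4}\|\mathbf{r}\|_2^2}$, and since any polynomial multiplied by a Gaussian is bounded on $\R^{2n}$, the supremum over $\mathbf{r}$ is finite for every choice of $a',b'$. This establishes $f\in\pazocal{S}(\R^{2n})$.

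I expect the main obstacle to be the bookkeeping behind the uniform Gaussian domination — specifically, making precise that \emph{all} $\mathbf{r}$-derivatives of the integrand remain dominated by a single polynomial-times-Gaussian uniformly in $\alpha$, which is exactly what the uniform lower bound $V+\alpha X\geq c\,\mathbb{1}$ secures. A more structural alternative would be to show that $\alpha\mapsto g_\alpha$, with $g_\alpha$ denoting the integrand, is a continuous map from $[0,1]$ into the Fréchet space $\pazocal{S}(\R^{2n})$, and then invoke completeness to place the (Riemann) integral $f=\int_0^1 g_\alpha\,\mathrm{d}\alpha$ back into $\pazocal{S}(\R^{2n})$; the estimates above are precisely what verify this continuity.
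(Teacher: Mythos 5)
Your proposal is correct and follows essentially the same route as the paper's proof: establish a uniform lower bound $V+\alpha X\ge c\,\mathbb{1}$ on $\alpha\in[0,1]$, dominate every $\mathbf{r}$-derivative of the integrand by an $\alpha$-independent polynomial times a fixed Gaussian, exchange derivative and integral by dominated convergence, and conclude that all Schwartz seminorms are finite. The only (cosmetic) difference is that you obtain the uniform eigenvalue bound via the convex-combination identity $V+\alpha X=(1-\alpha)V+\alpha(V+X)$, whereas the paper argues via continuity of $\lambda_{\min}(\alpha)$ on the compact interval; both are valid.
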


\begin{proof}
    In order to prove this Lemma we need to show that for any $k$, for any polynomial $\PP$, for any multi-indices $a$ and $b$, for any covariance matrix $V$, for any mean $\mathbf{m}$, and for any variations $X$ and $\mathbf{l}$ the following holds true:
    \bb \label{eq:abs_int}
        \sup_{\mathbf{r}\in\R^{2n}} \left| \mathbf{r}^a\partial_b \int_{0}^{1}  (1-\alpha)^{k}\PP(\mathbf{r})  \exp\left( -\frac14 \left(\Omega\mathbf{r}\right)^{\intercal}(V+\alpha X)\Omega\mathbf{r} + i(\mathbf{m}+\alpha \mathbf{l})^{\intercal}\Omega\mathbf{r} \right)  {\rm d}\alpha \right| < \infty \, .
    \ee 
    So, we need to find a clever upper bound. The key point to prove this lemma is to find a function of $\alpha$ which dominates the following function:
    \bb \label{eq:to_be_dom}
       (1-\alpha)^k \left|\mathbf{r}^a \partial_b \PP(\mathbf{r}) \exp\left( -\frac14 \left(\Omega\mathbf{r}\right)^{\intercal}(V+\alpha X)\Omega\mathbf{r} + i(\mathbf{m}+\alpha \mathbf{l})^{\intercal}\Omega\mathbf{r} \right) \right| .
    \ee 
    To this end, we begin by considering the least eigenvalue of the positive matrix $V + \alpha X$, calling it $\lambda_{min}(\alpha)$. It is a continuous function of $\alpha$, and by positivity of the aformentioned matrix we can conclude that 
    \bb 
        \lambda_{\min}^* \coloneqq \min_{\alpha\in[0,1]}\lambda_{\min}(\alpha) > 0 \, .
    \ee
    Moreover, we know that, for any $\alpha\in[0,1]$, $V +\alpha X \geq \lambda^*_{\min}\mathbb{1}$. The previous statement holds true for any $V$ and $X$. So we obtain this simple upper bound on the exponential part of~\eqref{eq:to_be_dom}:
    \bb 
        \exp\left( -\frac14 \left(\Omega\mathbf{r}\right)^{\intercal}(V+\alpha X)\Omega\mathbf{r} + i(\mathbf{m}+\alpha \mathbf{l})^{\intercal}\Omega\mathbf{r} \right) \leq \exp\left( -\frac14 \left(\Omega\mathbf{r}\right)^{\intercal}\lambda^*_{\min}\Omega\mathbf{r} + i(\mathbf{m}+\alpha \mathbf{l})^{\intercal}\Omega\mathbf{r} \right) \, .
    \ee 
    The above upper bound is independent of $\alpha$. We now focus on the derivatives of the exponential, in particular we want to dominate the following expression:
    \bb 
        \left | \partial_b \exp\left( -\frac14 \left(\Omega\mathbf{r}\right)^{\intercal}(V+\alpha X)\Omega\mathbf{r} + i(\mathbf{m}+\alpha \mathbf{l})^{\intercal}\Omega\mathbf{r} \right) \right | \, .
    \ee 
    To do so, we compute the polynomial generated by the derivatives, i.e.~we define $\QQ_b(\alpha,\mathbf{r})$ through the following formula:
    \bb 
       \partial_b \exp\left( -\frac14 \left(\Omega\mathbf{r}\right)^{\intercal}(V+\alpha X)\Omega\mathbf{r} + i(\mathbf{m}+\alpha \mathbf{l})^{\intercal}\Omega\mathbf{r} \right) = \QQ_b(\alpha,\mathbf{r})\exp\left( -\frac14 \left(\Omega\mathbf{r}\right)^{\intercal}(V+\alpha X)\Omega\mathbf{r} + i(\mathbf{m}+\alpha \mathbf{l})^{\intercal}\Omega\mathbf{r} \right)  \, .
    \ee 
    It is easy to notice that $\QQ_b(\alpha,\mathbf{r})$ is continuous in $\alpha$, so we define
    \bb 
        \QQ^*_b(\mathbf{r}) \coloneqq \max_{\alpha\in[0,1]}|\QQ_b(\alpha,\mathbf{r})| \, .
    \ee 
    The above function is non-negative and scales at most polynomially in $\mathbf{r}$. In addition, it is independent of $\alpha$. Thus, we have 
    \bb 
        &\left | \partial_b \exp\left( -\frac14 \left(\Omega\mathbf{r}\right)^{\intercal}(V+\alpha X)\Omega\mathbf{r} + i(\mathbf{m}+\alpha \mathbf{l})^{\intercal}\Omega\mathbf{r} \right) \right | \\ 
        &\leq \QQ^*_b(\mathbf{r}) \left |\exp\left( -\frac14 \left(\Omega\mathbf{r}\right)^{\intercal}(V+\alpha X)\Omega\mathbf{r} + i(\mathbf{m}+\alpha \mathbf{l})^{\intercal}\Omega\mathbf{r} \right)\right|\, .
    \ee 
    Therefore, for any $\mathbf{r}\in\R^{2n}$, we are able to give the follwing upper bound:
    \bb 
         &(1-\alpha)^k\left|\mathbf{r}^a \partial_b \PP(\mathbf{r}) \exp\left( -\frac14 \left(\Omega\mathbf{r}\right)^{\intercal}(V+\alpha X)\Omega\mathbf{r} + i(\mathbf{m}+\alpha \mathbf{l})^{\intercal}\Omega\mathbf{r} \right) \right| \\
         & \leq (1-\alpha)^k\left | \mathbf{r}^{a}\left(\PP'_{b}(\mathbf{r}) + \QQ^*_b(\mathbf{r}) \right) \exp\left( -\frac14 \left(\Omega\mathbf{r}\right)^{\intercal}\lambda^*_{\min}\Omega\mathbf{r} + i(\mathbf{m}+\alpha \mathbf{l})^{\intercal}\Omega\mathbf{r} \right) \right | \\ 
         & \leq (1-\alpha)^k M_{\PP,V,X,\mathbf{m},\mathbf{l},a,b} \, .
    \ee 
    Here we have $\PP'_{b}(\mathbf{r}) \coloneqq \partial_b\PP(\mathbf{r})$. Furthermore, we obtain
    \bb 
        M_{\PP,V,X,\mathbf{m},\mathbf{l},a,b} \coloneqq \sup_{\mathbf{r}\in\R^{2n}}\left | \mathbf{r}^{a}\left(\PP'_{b}(\mathbf{r}) + \QQ^*_b(\mathbf{r}) \right) \exp\left( -\frac14 \left(\Omega\mathbf{r}\right)^{\intercal}\lambda^*_{\min}\Omega\mathbf{r} + i(\mathbf{m}+\alpha \mathbf{l})^{\intercal}\Omega\mathbf{r} \right) \right | < \infty \, .
    \ee 
    The last quantity is finite because we are taking the supremum of the absolute value of a Schwartz function. We are now able to define the non-negative dominating function as
    \bb 
        \xi_{\PP,V,X,\mathbf{m},\mathbf{l},a,b,k}(\alpha) \coloneqq (1-\alpha)^kM_{\PP,V,X,\mathbf{m},\mathbf{l},a,b} \, .
    \ee 
    It clearly is integrable. Moreover, we have 
    \bb 
       \int_{0}^{1}\xi_{\PP,V,X,\mathbf{m},\mathbf{l},a,b,k}(\alpha) {\rm d}\alpha = \frac{M_{\PP,V,X,\mathbf{m},\mathbf{l},a,b}}{k+1} < \infty \, .
    \ee
    Now, we need to prove that we can exchange the derivatives and the integral. To this aim we initially consider $|b|=1$; so, we have to find a dominating function. Luckily, $\xi_{\PP,V,X,\mathbf{m},\mathbf{l},0,b,k}(\alpha)$ does the job. Then, we can apply the Leibniz integral rule to exchange the derivative and the integral. For $|b|>1$ we iterate the same argument. After that, for any $V$, any $X$, any $a$ and $b$, any $k$, and any $\mathbf{m}$ and $\mathbf{l}$, we can estimate the expression in~\eqref{eq:abs_int} as
    \bb 
        &\sup_{\mathbf{r}\in\R^{2n}} \left| \int_{0}^{1} \mathbf{r}^a\partial_b (1-\alpha)^{k}\PP(\mathbf{r})  \exp\left( -\frac14 \left(\Omega\mathbf{r}\right)^{\intercal}(V+\alpha X)\Omega\mathbf{r} + i(\mathbf{m}+\alpha \mathbf{l})^{\intercal}\Omega\mathbf{r} \right)  {\rm d}\alpha \right| \\
        &\leq \sup_{\mathbf{r}\in\R^{2n}}  \int_{0}^{1}\left| \mathbf{r}^a\partial_b (1-\alpha)^{k}\PP(\mathbf{r})  \exp\left( -\frac14 \left(\Omega\mathbf{r}\right)^{\intercal}(V+\alpha X)\Omega\mathbf{r} + i(\mathbf{m}+\alpha \mathbf{l})^{\intercal}\Omega\mathbf{r} \right) \right| {\rm d}\alpha \\
        &\leq \int_{0}^{1}(1-\alpha)^{k}\sup_{\mathbf{r}\in\R^{2n}} \left| \mathbf{r}^a\partial_b \PP(\mathbf{r})  \exp\left( -\frac14 \left(\Omega\mathbf{r}\right)^{\intercal}(V+\alpha X)\Omega\mathbf{r} + i(\mathbf{m}+\alpha \mathbf{l})^{\intercal}\Omega\mathbf{r} \right) \right| {\rm d}\alpha \\ 
        & \leq \int_{0}^{1}\xi_{\PP,V,X,\mathbf{m},\mathbf{l},a,b}(\alpha) {\rm d}\alpha < \infty \, .
    \ee 
    Hence, we conclude. 
    \end{proof}
    After that, we would like to lift up these regularity properties to the Schwartz operators. We can easily do that thanks to the previous Lemma combined with Proposition~\ref{prop:schwartz_iso}. In this way we obtain the following Corollary:

\begin{cor} \label{coroll:bound_transform}
    For any $n,k\in\N$, for any $\mathbf{m},\mathbf{l}\in\R^{2n}$, for any polynomial with complex coefficients $\PP(\mathbf{r})$, for any $V>0$, and any $X$ symmetric matrix such that $V+X > 0$, it holds that
    \bb 
        \left\|\mathcal{W}\left( \int_{0}^{1}  (1-\alpha)^{k}\PP(\mathbf{r})  \exp\left( -\frac14 \left(\Omega\mathbf{r}\right)^{\intercal}(V+\alpha X)\Omega\mathbf{r} + i(\mathbf{m}+\alpha \mathbf{l})^{\intercal}\Omega\mathbf{r} \right)  {\rm d}\alpha \right) \right\|_1 < \infty \, .
    \ee 
    Here, the linear map $\mathcal{W}$ is the Fourier--Weyl transform defined in Section~\ref{sub:prelCV}. 
\end{cor}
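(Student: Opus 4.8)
The plan is to deduce the statement as an immediate consequence of Lemma~\ref{lemma:domination}, fed through Proposition~\ref{prop:schwartz_iso} and Proposition~\ref{prop:schw_estimate}; indeed, the label \enquote{Corollary} is apt precisely because Lemma~\ref{lemma:domination} already carries out all of the analytic work. Write
\bbb
g(\mathbf{r}) \coloneqq \int_{0}^{1}(1-\alpha)^{k}\,\PP(\mathbf{r})\,\exp\!\left( -\tfrac14 (\Omega\mathbf{r})^{\intercal}(V+\alpha X)\Omega\mathbf{r} + i(\mathbf{m}+\alpha\mathbf{l})^{\intercal}\Omega\mathbf{r} \right){\rm d}\alpha
\eee
for the function appearing inside $\mathcal{W}$. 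This is exactly the function $f_{\PP,V,X,\mathbf{m},\mathbf{l},a,b,k}$ of Lemma~\ref{lemma:domination}, the multi-indices $a,b$ being mere labels for the seminorm under control and not entering the definition of the function itself. Since the hypotheses $V>0$ and $V+X>0$ coincide verbatim with those of the lemma, it follows that $g\in\pazocal{S}(\R^{2n})$.

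Next I would push this regularity through the Fourier--Weyl transform. By Proposition~\ref{prop:schwartz_iso}, $\mathcal{W}$ is a continuous linear homeomorphism from $\pazocal{S}(\R^{2n})$ onto the space of Schwartz operators $\pazocal{S}(\mathcal{H})$; applied to $g$, this yields $\mathcal{W}(g)\in\pazocal{S}(\mathcal{H})$. Finally, I would invoke Proposition~\ref{prop:schw_estimate} with the trivial multi-indices $a=b=\mathbf{0}$: since $\pazocal{S}(\mathcal{H})\subset\TT(\mathcal{H})$, the operator $\mathcal{W}(g)$ is trace-class, and the accompanying quantitative bound gives
\bbb
\left\|\mathcal{W}(g)\right\|_1 \leq \left\|\hat{E}_n^{-2}\right\|_1\left\|\hat{E}_n^{2}\,\mathcal{W}(g)\right\|_\infty < \infty\,,
\eee
which is precisely the asserted inequality.

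I do not anticipate any genuine obstacle here, since the difficult estimate---dominating the Taylor integral remainder uniformly in $\alpha$ by the integrable majorant $\xi_{\PP,V,X,\mathbf{m},\mathbf{l},a,b,k}(\alpha)$ and exchanging derivative with integral via the Leibniz rule---is already discharged inside Lemma~\ref{lemma:domination}. The only point deserving a line of care is the bookkeeping: one must confirm that the function transformed by $\mathcal{W}$ is literally the $f$ of the lemma, so that its Schwartz-space membership is inherited legitimately, and that the convex-combination identity $V+\alpha X=(1-\alpha)V+\alpha(V+X)>0$ for $\alpha\in[0,1]$---which keeps each Gaussian factor genuinely decaying---is the reason the lemma's positivity hypotheses apply without modification.
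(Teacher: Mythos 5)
Your proposal is correct and follows the paper's own proof essentially verbatim: identify the integrand with the function $f_{\PP,V,X,\mathbf{m},\mathbf{l},0,0,k}$ of Lemma~\ref{lemma:domination} to get membership in $\pazocal{S}(\R^{2n})$, map it into $\pazocal{S}(\mathcal{H})$ via Proposition~\ref{prop:schwartz_iso}, and conclude trace-class membership and finiteness of the trace norm from Proposition~\ref{prop:schw_estimate}. No gaps.
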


\begin{proof}
    The strategy we employ to prove the above inequality is to prove that the operator
    \bb 
        \mathcal{W}\left( \int_{0}^{1}  (1-\alpha)^{k}\PP(\mathbf{r})  \exp\left( -\frac14 \left(\Omega\mathbf{r}\right)^{\intercal}(V+\alpha X)\Omega\mathbf{r} + i(\mathbf{m}+\alpha \mathbf{l})^{\intercal}\Omega\mathbf{r} \right)  {\rm d}\alpha \right)
    \ee 
    is a Schwartz operator. Then, the bound follows directly from Proposition~\ref{prop:schw_estimate}. Due to Lemma~\ref{lemma:domination} we know that, for any $\PP(\mathbf{r})$, $V$, $X$, $\mathbf{m}$, $\mathbf{l}$, and $k$, the function 
    \bb
        f_{\PP,V,X,\mathbf{m},\mathbf{l},0,0,k}(\mathbf{r}) = \int_{0}^{1} (1-\alpha)^{k}\PP(\mathbf{r})  \exp\left( -\frac14 \left(\Omega\mathbf{r}\right)^{\intercal}(V+\alpha X)\Omega\mathbf{r} + i(\mathbf{m}+\alpha \mathbf{l})^{\intercal}\Omega\mathbf{r} \right) {\rm d}\alpha
    \ee 
    belongs to the Schwartz space. Hence, by invoking Proposition~\ref{prop:schwartz_iso}, we conclude.
    \end{proof}

We have proved that the operator obtained by the Fourier--Weyl transform of any integral remainder of the Taylor expansion of a Gaussian function is a Schwartz operator, therefore it is a trace-class operator. Now, we have to prove some regularity properties of the terms of the Taylor expansion of a generic Gaussian function. To do so, we define the positive cone of the real symmetric matrices as 
\bb
    \mathcal{P}^+_{2n}(\R) \coloneqq \left\{ M\in{\R}^{2n, 2n} : M^{\intercal} = M, \, M > 0 \right\} \, . 
\ee
Then, we state the following Lemma:

\begin{lemma} \label{lemma:cont_pol_schw}
    For any $n\in\mathbb{N}$, symmetric $V > 0$, $\mathbf{m}\in\R^{2n}$, and any polynomial $\PP(\mathbf{r})$ with complex coefficients, the map $\Phi_{\PP} : \mathcal{P}^+_{2n}(\R) \times \R^{2n} \to \pazocal{S}(\mathcal{\R}^{2n})$, defined as
    \bb
        \Phi_{\PP}[V,\mathbf{m}] \coloneqq \pazocal{P}(\mathbf{r})\exp\left( -\frac14 \left(\Omega\mathbf{r}\right)^{\intercal}V\Omega\mathbf{r} + i\mathbf{m}^{\intercal}\Omega\mathbf{r} \right) \, ,
    \ee
    is continuous in the Fr\'{e}chet topology of the Schwartz space.
\end{lemma}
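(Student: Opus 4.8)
The plan is to establish sequential continuity in each Schwartz seminorm. Recall that the Fr\'echet topology of $\pazocal{S}(\R^{2n})$ is generated by the seminorms $\|f\|_{a,b}=\sup_{\mathbf{r}\in\R^{2n}}|\mathbf{r}^a\partial_b f(\mathbf{r})|$, and that (by Definition~\ref{def:fre_cont}, since $\mathcal{P}^+_{2n}(\R)\times\R^{2n}$ is a finite-dimensional metric space) it suffices to fix $(V,\mathbf{m})$ together with a sequence $(V_j,\mathbf{m}_j)\to(V,\mathbf{m})$ and prove $\|\Phi_{\PP}[V_j,\mathbf{m}_j]-\Phi_{\PP}[V,\mathbf{m}]\|_{a,b}\to 0$ for every pair of multi-indices $a,b$. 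Writing $g_{V,\mathbf{m}}(\mathbf{r})\coloneqq\exp\!\left(-\tfrac14(\Omega\mathbf{r})^{\intercal}V\Omega\mathbf{r}+i\mathbf{m}^{\intercal}\Omega\mathbf{r}\right)$, so that $\Phi_{\PP}[V,\mathbf{m}]=\PP\,g_{V,\mathbf{m}}$, the first step is to record that, since $V>0$ and the least eigenvalue is continuous, for all $j$ large enough the entire segment $(V_t,\mathbf{m}_t)\coloneqq\big((1-t)V+tV_j,\,(1-t)\mathbf{m}+t\mathbf{m}_j\big)$, $t\in[0,1]$, stays in $\mathcal{P}^+_{2n}(\R)$ with a uniform spectral lower bound $\lambda^*_{\min}>0$; this is exactly the domination input isolated in Lemma~\ref{lemma:domination}. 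Because $\Omega$ is orthogonal, this yields the $t$- and $j$-uniform Gaussian bound $|g_{V_t,\mathbf{m}_t}(\mathbf{r})|\le\exp\!\left(-\tfrac14\lambda^*_{\min}\|\mathbf{r}\|_2^2\right)$.

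The second step is a fundamental-theorem-of-calculus representation of the increment. Differentiating the exponent along the segment gives
\bb
    g_{V_j,\mathbf{m}_j}(\mathbf{r})-g_{V,\mathbf{m}}(\mathbf{r})=\int_0^1 h(\mathbf{r})\,g_{V_t,\mathbf{m}_t}(\mathbf{r})\,\mathrm{d}t\,,
\ee
where $h(\mathbf{r})\coloneqq-\tfrac14(\Omega\mathbf{r})^{\intercal}(V_j-V)\Omega\mathbf{r}+i(\mathbf{m}_j-\mathbf{m})^{\intercal}\Omega\mathbf{r}$ is a polynomial in $\mathbf{r}$ (independent of $t$) all of whose coefficients are $O\!\left(\|V_j-V\|+\|\mathbf{m}_j-\mathbf{m}\|\right)$. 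Multiplying by $\PP$ and applying $\mathbf{r}^a\partial_b$ then gives
\bb
    \mathbf{r}^a\partial_b\big[\PP(g_{V_j,\mathbf{m}_j}-g_{V,\mathbf{m}})\big]=\int_0^1\mathbf{r}^a\partial_b\big[\PP\,h\,g_{V_t,\mathbf{m}_t}\big]\,\mathrm{d}t\,,
\ee
where the exchange of $\partial_b$ with the integral is justified verbatim as in Lemma~\ref{lemma:domination}, using the dominating functions constructed there.

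The third step is the seminorm estimate. Expanding $\partial_b[\PP\,h\,g_{V_t,\mathbf{m}_t}]$ by the Leibniz rule, each resulting summand has the form $R(\mathbf{r};V_t,\mathbf{m}_t)\,g_{V_t,\mathbf{m}_t}(\mathbf{r})$ with $R$ a polynomial in $\mathbf{r}$ whose coefficients depend polynomially (hence boundedly, for $j$ large) on the entries of $V_t,\mathbf{m}_t$ and carry an overall factor $O(\|V_j-V\|+\|\mathbf{m}_j-\mathbf{m}\|)$ inherited from $h$. Bounding $|g_{V_t,\mathbf{m}_t}|$ by the fixed Gaussian $\exp(-\tfrac14\lambda^*_{\min}\|\mathbf{r}\|_2^2)$ and taking the supremum over $\mathbf{r}$ reduces every summand to a finite quantity of the form $\sup_{\mathbf{r}}\big[Q(\mathbf{r})\exp(-\tfrac14\lambda^*_{\min}\|\mathbf{r}\|_2^2)\big]$ for a fixed polynomial $Q$. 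Integrating over $t\in[0,1]$ and summing the finitely many Leibniz terms gives
\bb
    \big\|\Phi_{\PP}[V_j,\mathbf{m}_j]-\Phi_{\PP}[V,\mathbf{m}]\big\|_{a,b}\le C_{a,b}\big(\|V_j-V\|+\|\mathbf{m}_j-\mathbf{m}\|\big)\,,
\ee
with $C_{a,b}$ independent of $j$, and the right-hand side tends to $0$. Since $a,b$ were arbitrary, this establishes the claimed Fr\'echet continuity.

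The main obstacle is the uniform Gaussian domination of the first step: one must guarantee that \emph{all} intermediate functions $g_{V_t,\mathbf{m}_t}$ along the interpolating segment decay at a single, $j$-independent rate, so that the supremum over $\mathbf{r}$ can be controlled by one fixed Schwartz bound rather than by a family of Gaussians whose widths could, a priori, blow up. This rests on the convexity of the positive cone $\mathcal{P}^+_{2n}(\R)$ together with continuity of the smallest eigenvalue, and it is precisely the ingredient already available from Lemma~\ref{lemma:domination}; once it is in place, the remaining Leibniz-rule bookkeeping is routine.
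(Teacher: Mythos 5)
Your proposal is correct and follows essentially the same route as the paper: the paper also writes the increment as the first-order Taylor formula with integral remainder along the segment $V+\alpha X$, $\mathbf{m}+\alpha\mathbf{l}$ (your fundamental-theorem-of-calculus representation with $X=V_j-V$, $\mathbf{l}=\mathbf{m}_j-\mathbf{m}$), justifies exchanging $\partial_b$ with the $\alpha$-integral via the domination constructed in Lemma~\ref{lemma:domination}, and then extracts the vanishing factors $|\tilde{X}_{jk}|$, $|\tilde{l}_j|$ while the remaining suprema stay bounded. Your version merely makes the uniform spectral lower bound along the segment and the resulting Lipschitz-type estimate $C_{a,b}(\|V_j-V\|+\|\mathbf{m}_j-\mathbf{m}\|)$ slightly more explicit, which is a harmless (indeed mildly stronger) restatement of the same argument.
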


\begin{proof}
    According to Definition~\ref{def:fre_cont}, in order to prove continuity in the Schwartz space, we need to show that the above map is continuous with respect to all of the seminorms in~\eqref{def:schwartz_space}; that is, we need to show that for all $a,b\in \N^k$ it holds that
    \bb \label{eq:lim_fre}
        \lim_{(X,\mathbf{l})\to (0,0)} \sup_{\mathbf{r}\in\R^{2n}} \bigg| \mathbf{r}^a\partial_b \PP(\mathbf{r})\bigg( &\exp\left( -\frac14 \left(\Omega\mathbf{r}\right)^{\intercal}(V + X)\Omega\mathbf{r} + i(\mathbf{m} + \mathbf{l})^{\intercal}\Omega\mathbf{r} \right) \\
        &- \exp\left( -\frac14 \left(\Omega\mathbf{r}\right)^{\intercal}V\Omega\mathbf{r} + i\mathbf{m}^{\intercal}\Omega\mathbf{r} \right) \bigg) \bigg| = 0 \, .
    \ee 
    Here $X$ is always a $2n\times 2n$ real symmetric matrix. In Proposition~\ref{prop:taylor_exp} we write the Taylor expansion of a Gaussian function, then by truncating this expression to the first order and using the integral remainder of the Taylor formula, we obtain
    \bb \label{eq:expanded}
        &\exp\left( -\frac14 \left(\Omega\mathbf{r}\right)^{\intercal}(V + X)\Omega\mathbf{r} + i(\mathbf{m} + \mathbf{l})^{\intercal}\Omega\mathbf{r} \right) \\
        &= \exp\left( -\frac14 \left(\Omega\mathbf{r}\right)^{\intercal}V\Omega\mathbf{r} + i\mathbf{m}^{\intercal}\Omega\mathbf{r} \right) \\
        &\quad +\int_{0}^{1}\bigg[\exp\left( -\frac14 \left(\Omega\mathbf{r}\right)^{\intercal}(V+\alpha X)\Omega\mathbf{r} + i(\mathbf{m}+\alpha \mathbf{l})^{\intercal}\Omega\mathbf{r} \right)
        \left(- \frac14 \left(\Omega\mathbf{r}\right)^{\intercal}X\Omega\mathbf{r}  + i \mathbf{l}^{\intercal}\Omega\mathbf{r} \right)\bigg]{\rm d}\alpha \, .   
    \ee
    Before we can prove the expression~\eqref{eq:lim_fre}, we need to prove that for any multi-index $b$ the following relation holds
    \bb
        &\partial_b \PP(\mathbf{r})\int_{0}^{1} \exp\left( -\frac14 \left(\Omega\mathbf{r}\right)^{\intercal}(V+\alpha X)\Omega\mathbf{r} + i(\mathbf{m}+\alpha \mathbf{l})^{\intercal}\Omega\mathbf{r} \right) \left(- \frac14 \left(\Omega\mathbf{r}\right)^{\intercal}X\Omega\mathbf{r}  + i \mathbf{l}^{\intercal}\Omega\mathbf{r} \right) {\rm d}\alpha \\
        &= \int_{0}^{1}\partial_b\PP(\mathbf{r}) \exp\left( -\frac14 \left(\Omega\mathbf{r}\right)^{\intercal}(V+\alpha X)\Omega\mathbf{r} + i(\mathbf{m}+\alpha \mathbf{l})^{\intercal}\Omega\mathbf{r} \right) \left(- \frac14 \left(\Omega\mathbf{r}\right)^{\intercal}X\Omega\mathbf{r} + i\mathbf{l}^{\intercal}\Omega\mathbf{r} \right) {\rm d}\alpha \, .
    \ee
    To do so, we recall the first part of the proof of Lemma~\ref{lemma:domination}, where the above expression is proved in greater generality. 
    We are now ready to prove the main statement of this Lemma. By exchanging the integral and the derivative and then plugging equation~\eqref{eq:expanded} into~\eqref{eq:lim_fre}, we obtain
     \bb
        &\lim_{(X,\mathbf{l})\to(0,0)} \sup_{\mathbf{r}\in\R^{2n}} \left | \int_{0}^{1}\mathbf{r}^a \partial_b\PP(\mathbf{r}) \exp\left( -\frac14 \left(\Omega\mathbf{r}\right)^{\intercal}(V+\alpha X)\Omega\mathbf{r} + i(\mathbf{m}+\alpha \mathbf{l})^{\intercal}\Omega\mathbf{r} \right) \left(- \frac14 \left(\Omega\mathbf{r}\right)^{\intercal}X\Omega\mathbf{r} + i\mathbf{l}^{\intercal}\Omega\mathbf{r} \right) {\rm d}\alpha\right | \\
        &\quad \leq \lim_{(X,\mathbf{l})\to(0,0)} \frac14\sum_{j,k=1}^{2n}|\tilde{X}_{jk}|\sup_{\mathbf{r}\in\R^{2n}} \left | \int_{0}^{1}\mathbf{r}^a \partial_b\PP(\mathbf{r}) \exp\left( -\frac14 \left(\Omega\mathbf{r}\right)^{\intercal}(V+\alpha X)\Omega\mathbf{r} + i(\mathbf{m}+\alpha \mathbf{l})^{\intercal}\Omega\mathbf{r} \right)r_jr_k {\rm d}\alpha\right | \\
        & \qquad +\lim_{(X,\mathbf{l})\to(0,0)} \sum_{j=1}^{2n}|\tilde{l}_j|\sup_{\mathbf{r}\in\R^{2n}}\left | \int_{0}^{1}\mathbf{r}^a \partial_b\PP(\mathbf{r}) \exp\left( -\frac14 \left(\Omega\mathbf{r}\right)^{\intercal}(V+\alpha X)\Omega\mathbf{r} + i(\mathbf{m}+\alpha \mathbf{l})^{\intercal}\Omega\mathbf{r} \right)r_j {\rm d}\alpha\right | \, .
     \ee 
    Here $\tilde{X}\coloneqq \Omega^{\intercal}X\Omega$, and $\tilde{\mathbf{l}}\coloneqq \Omega^{\intercal}\mathbf{l}$. Since we can choose $X$ arbitrarily small in any given norm, we have that $V + \alpha X > 0$ for any $\alpha\in[0,1]$. Both $\tilde{X}$ and $\tilde{\mathbf{l}}$ vanish as the pair $(X,\mathbf{l})$ goes to zero, while the suprema of the integrals are always bounded thanks to Lemma~\ref{lemma:domination}. Hence, we conclude.
\end{proof}

\begin{cor} \label{coroll:schwartz_op_cont}
    For any $n\in\mathbb{N}$, $V > 0$, $\mathbf{m}\in\R^{2n}$ and any polynomial $\PP(\mathbf{r})$ with complex coefficients the map $\Psi_{\PP} : \mathcal{P}^+_{2n}(\R) \times \R^{2n} \to \pazocal{S}(\mathcal{\mathcal{H}})$, defined as
    \bb  
        \Psi_{\PP}[V,\mathbf{m}] &\coloneqq \mathcal{W} \left(\pazocal{P}(\mathbf{r})\exp\left( -\frac14 \left(\Omega\mathbf{r}\right)^{\intercal}V\Omega\mathbf{r} + i\mathbf{m}^{\intercal}\Omega\mathbf{r} \right)\right) \\
        &= (2\pi)^{-n}\int_{\R^{2n}}\pazocal{P}(\mathbf{r})\exp\left( -\frac14 \left(\Omega\mathbf{r}\right)^{\intercal}V\Omega\mathbf{r} + i\mathbf{m}^{\intercal}\Omega\mathbf{r} \right)\hat{D}_{-\mathbf{r}} {\rm d}\mathbf{r} \, ,
    \ee
    is continuous in the Fr\'{e}chet topology of the Schwartz operators.
\end{cor}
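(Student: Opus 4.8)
The plan is to recognise that $\Psi_{\PP}$ factors as the composition $\Psi_{\PP} = \mathcal{W}\circ\Phi_{\PP}$, where $\Phi_{\PP}$ is the polynomial-times-Gaussian map of Lemma~\ref{lemma:cont_pol_schw} and $\mathcal{W}$ is the Fourier--Weyl transform. Indeed, by the very definitions of $\Psi_{\PP}$ and $\Phi_{\PP}$, for every $(V,\mathbf{m})\in\mathcal{P}^+_{2n}(\R)\times\R^{2n}$ one has $\Psi_{\PP}[V,\mathbf{m}] = \mathcal{W}\!\left(\Phi_{\PP}[V,\mathbf{m}]\right)$, and $\Phi_{\PP}[V,\mathbf{m}]$ lies in the Schwartz space $\pazocal{S}(\R^{2n})$ because any polynomial multiplied by a strictly positive Gaussian is a Schwartz function, as already remarked at the start of this subsection. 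Hence $\mathcal{W}$ is applied on its natural domain, and the factorisation makes sense.

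Once this factorisation is in place, continuity follows by composing two statements already at our disposal. First, Lemma~\ref{lemma:cont_pol_schw} guarantees that $\Phi_{\PP}:\mathcal{P}^+_{2n}(\R)\times\R^{2n}\to\pazocal{S}(\R^{2n})$ is continuous in the Fr\'{e}chet topology of the Schwartz space of functions. Second, Proposition~\ref{prop:schwartz_iso} asserts that $\mathcal{W}:\pazocal{S}(\R^{2n})\to\pazocal{S}(\mathcal{H})$ is a continuous linear homeomorphism in the corresponding Fr\'{e}chet topologies. Since a composition of continuous maps between Fr\'{e}chet spaces is again continuous --- which one may phrase directly via the translationally-invariant metrics of Definition~\ref{def:frechet_sp}, or via the sequential continuity of Definition~\ref{def:fre_cont} --- the composite $\Psi_{\PP}=\mathcal{W}\circ\Phi_{\PP}$ is continuous from $\mathcal{P}^+_{2n}(\R)\times\R^{2n}$ into $\pazocal{S}(\mathcal{H})$, which is exactly the claim.

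I expect no substantial obstacle to remain: all the analytic content has been absorbed into Lemma~\ref{lemma:domination} and Lemma~\ref{lemma:cont_pol_schw} on the function side, and into the isomorphism theorem for the Fourier--Weyl transform (Proposition~\ref{prop:schwartz_iso}) on the operator side. The only point deserving a moment of care is the bookkeeping one of checking that $\Phi_{\PP}[V,\mathbf{m}]$ indeed takes values in $\pazocal{S}(\R^{2n})$ for every admissible $(V,\mathbf{m})$, so that $\mathcal{W}$ may legitimately be composed with it; this is immediate, since $V>0$ makes the Gaussian rapidly decaying and multiplication by a polynomial preserves the Schwartz class. With this observed, the corollary is a one-line consequence of the factorisation together with the two cited results.
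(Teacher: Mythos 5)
Your proof is correct and follows exactly the route the paper takes: it writes $\Psi_{\PP}=\mathcal{W}\circ\Phi_{\PP}$ and composes the continuity of $\Phi_{\PP}$ from Lemma~\ref{lemma:cont_pol_schw} with the continuity of the Fourier--Weyl transform from Proposition~\ref{prop:schwartz_iso}. Your additional remark that $\Phi_{\PP}[V,\mathbf{m}]$ indeed lands in $\pazocal{S}(\R^{2n})$ is a welcome (if minor) piece of bookkeeping that the paper leaves implicit.
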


\begin{proof}
    Due to Proposition~\ref{prop:schwartz_iso} we know that it is a continuous isomorpohism between the Schwartz space and the set of Schwartz operators. In Lemma~\ref{lemma:cont_pol_schw} we have proved that a polynomial times a Gaussian function is continuous in the topology of the Schwartz space. Hence, we conclude.
\end{proof}

Equipped with these results we can now prove the main result of this section concerning the regularity of the Gaussian states as functions of their statistical moments.

\begin{thm} \label{thm:gauss_regularity}
    For any number of modes $n\in\mathbb{N}$ the map acting as $(V,\mathbf{m}) \mapsto \omega(V,\mathbf{m})$, where
    \bb 
        \omega(V,\mathbf{m}) \coloneqq (2\pi)^{-n}\int_{\R^{2n}} \exp\left( -\frac14 (\Omega\mathbf{r})^{\intercal}V(\Omega\mathbf{r}) + i\mathbf{m}^{\intercal}\Omega\mathbf{r} \right) \hat{D}_{-\mathbf{r}}{\rm d}\mathbf{r} \, ,
    \ee
    is continuously differentiable; that is, it belongs to $C^{1}\left( \mathcal{P}^+_{2n}(\R)\times \R^{2n} \to \TT(\mathcal{H}) \right)$. Moreover, for any $V$ and $\mathbf{m}$ the Jacobian operator $\pazocal{J}_{V,\mathbf{m}}$ acts on the variations $X$ and $\mathbf{l}$ as
    \bb 
        \pazocal{J}_{V,\mathbf{m}}[X,\mathbf{l}] = \mathcal{W}\left( \exp\left( -\frac14 \left(\Omega\mathbf{r}\right)^{\intercal}V\Omega\mathbf{r} + i\mathbf{m}^{\intercal}\Omega\mathbf{r} \right) \left( -\frac14\left(\Omega\mathbf{r}\right)^{\intercal}X\Omega\mathbf{r} + i\mathbf{l}^{\intercal}\Omega\mathbf{r} \right) \right) = \partial_{X,\mathbf{l}}\,\rho(V,\mathbf{m}) \, .
    \ee 
    Where $X$ is a generic $2n \times 2n$ symmetric matrix, $\mathbf{l\in\R^{2n}}$, and $\partial_{X,\mathbf{l}}\,\rho(V,\mathbf{m})$ is the directional derivative defined in Equation~\ref{eq_der_def_0}.
\end{thm}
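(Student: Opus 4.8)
The plan is to exploit that $\omega(V,\mathbf{m}) = \mathcal{W}(g_{V,\mathbf{m}})$ is the Fourier--Weyl image of the Gaussian function $g_{V,\mathbf{m}}(\mathbf{r}) \coloneqq \exp\!\left(-\tfrac14(\Omega\mathbf{r})^\intercal V\Omega\mathbf{r} + i\mathbf{m}^\intercal\Omega\mathbf{r}\right)$, and that $\mathcal{W}$ is a continuous linear map from Schwartz functions onto the Schwartz operators (Proposition~\ref{prop:schwartz_iso}), which embed continuously into $\TT(\mathcal{H})$ (Proposition~\ref{prop:schw_estimate}). Concretely, setting $a=b=0$ in Proposition~\ref{prop:schw_estimate} and composing with the continuity of $\mathcal{W}$ shows that $\|\mathcal{W}(f)\|_1$ is dominated by finitely many Schwartz seminorms of $f$. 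Hence, to prove Fr\'echet differentiability of $\omega$ in trace norm, it will be enough to identify a linear candidate derivative and to bound the resulting increment error, expressed as the Fourier--Weyl image of an explicit Gaussian remainder, through those seminorms.

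First I would pin down the candidate Jacobian. Writing $L_{X,\mathbf{l}}(\mathbf{r}) \coloneqq -\tfrac14(\Omega\mathbf{r})^\intercal X\Omega\mathbf{r} + i\mathbf{l}^\intercal\Omega\mathbf{r}$, the key algebraic observation is the factorisation $g_{V+\alpha X,\mathbf{m}+\alpha\mathbf{l}} = g_{V,\mathbf{m}}\,e^{\alpha L_{X,\mathbf{l}}}$, so that the map $\alpha \mapsto g_{V+\alpha X,\mathbf{m}+\alpha\mathbf{l}}$ has $\alpha$-derivatives $g_{V,\mathbf{m}}L_{X,\mathbf{l}}^k e^{\alpha L_{X,\mathbf{l}}}$. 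Taylor's formula with integral remainder to second order, building on the expansion of Proposition~\ref{prop:taylor_exp}, then yields (pointwise in $\mathbf{r}$)
\bb
g_{V+X,\mathbf{m}+\mathbf{l}} - g_{V,\mathbf{m}} - g_{V,\mathbf{m}}L_{X,\mathbf{l}} = \int_0^1 (1-\alpha)\, g_{V+\alpha X,\mathbf{m}+\alpha\mathbf{l}}\, L_{X,\mathbf{l}}^2 \,\mathrm{d}\alpha \eqqcolon R_{X,\mathbf{l}}(\mathbf{r}).
\ee
Applying the linear map $\mathcal{W}$ then identifies the candidate derivative $\mathcal{J}_{V,\mathbf{m}}[X,\mathbf{l}] = \mathcal{W}(g_{V,\mathbf{m}}L_{X,\mathbf{l}})$, which is exactly the claimed expression, and leaves the increment error equal to $\mathcal{W}(R_{X,\mathbf{l}})$.

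The core estimate, and the main obstacle, is to show $\|\mathcal{W}(R_{X,\mathbf{l}})\|_1 = o(\|(X,\mathbf{l})\|)$. Here I would note that $L_{X,\mathbf{l}}^2$ is a polynomial in $\mathbf{r}$ whose coefficients are quadratic forms in the entries of $(X,\mathbf{l})$, hence bounded by $C\|(X,\mathbf{l})\|^2$; expanding $L_{X,\mathbf{l}}^2$ into monomials $\mathbf{r}^\mu$ writes $R_{X,\mathbf{l}}$ as a finite combination $\sum_\mu c_\mu(X,\mathbf{l})\int_0^1(1-\alpha)g_{V+\alpha X,\mathbf{m}+\alpha\mathbf{l}}\,\mathbf{r}^\mu\,\mathrm{d}\alpha$, where each integral is a Schwartz function of the type controlled by Lemma~\ref{lemma:domination} (with $k=1$) and $|c_\mu|\le C\|(X,\mathbf{l})\|^2$. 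The crucial quantitative input is that the domination bounds of Lemma~\ref{lemma:domination} are \emph{uniform} for $(X,\mathbf{l})$ in a small neighbourhood of $0$: the least eigenvalue $\lambda^*_{\min}$ of $V+\alpha X$ over the segment stays bounded away from $0$ by continuity and $V>0$, so the Gaussian decay governing every Schwartz seminorm is uniform, and the polynomial factor $\mathbf{Q}^*_b$ stabilises as $(X,\mathbf{l})\to 0$. Consequently each Schwartz seminorm of $R_{X,\mathbf{l}}$ is $O(\|(X,\mathbf{l})\|^2)$, and transferring through $\mathcal{W}$ (Proposition~\ref{prop:schwartz_iso}) and the trace-norm domination of Proposition~\ref{prop:schw_estimate} gives $\|\mathcal{W}(R_{X,\mathbf{l}})\|_1 = O(\|(X,\mathbf{l})\|^2)$. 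This establishes Fr\'echet differentiability with the stated Jacobian, which then coincides with the directional derivative $\partial_{X,\mathbf{l}}\,\rho(V,\mathbf{m})$ of Equation~\eqref{eq_der_def_0}, since the latter is precisely the Gateaux derivative and agrees with the Fr\'echet derivative once the latter exists.

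Finally, for the $C^1$ claim I would observe that for each fixed direction $(X,\mathbf{l})$ the map $(V,\mathbf{m}) \mapsto \mathcal{J}_{V,\mathbf{m}}[X,\mathbf{l}] = \Psi_{L_{X,\mathbf{l}}}[V,\mathbf{m}]$ is continuous into the Schwartz operators, hence into $\TT(\mathcal{H})$, by Corollary~\ref{coroll:schwartz_op_cont}. Since the directions range over the finite-dimensional space of symmetric $2n\times2n$ matrices times $\R^{2n}$, continuity on a basis upgrades to continuity of $(V,\mathbf{m}) \mapsto \mathcal{J}_{V,\mathbf{m}}$ in operator norm, completing the proof that $\omega \in C^1\!\left(\mathcal{P}^+_{2n}(\R)\times\R^{2n} \to \TT(\mathcal{H})\right)$.
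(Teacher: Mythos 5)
Your proposal follows essentially the same route as the paper's proof: both identify the first-order Taylor term $\mathcal{W}\!\left(g_{V,\mathbf{m}}\,L_{X,\mathbf{l}}\right)$ as the candidate Jacobian, control the trace norm of the Fourier--Weyl image of the second-order integral remainder by $O\!\left(\|(X,\mathbf{l})\|^2\right)$ via Lemma~\ref{lemma:domination} and Corollary~\ref{coroll:bound_transform}, and deduce the $C^1$ property from the continuity statement of Corollary~\ref{coroll:schwartz_op_cont}. Your explicit observation that the domination constants must be uniform for $(X,\mathbf{l})$ in a neighbourhood of $0$ (via the lower bound on $\lambda^*_{\min}$) is a detail the paper leaves implicit, but it does not change the argument.
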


\begin{proof}
    First, we do not specify the norm of the set $\mathcal{P}_{2n}^+(\R)\times \R^{2n}$, because it is an open subset of a finite-dimensional vector space, so all norms are equivalent; while in the set of trace-class operators $\TT(\mathcal{H})$ we use the operator trace-norm $\| \cdot\|_1$. We start by noticing that $\omega(V,\mathbf{m})$ is a Schwartz operator, so for any $V$ and $\mathbf{m}$ it is a trace-class operator. In Proposition~\ref{prop:taylor_exp} we recall the Taylor expansion of the Gaussian function, for any symmetric real matrix $X$ and vector $\mathbf{l}$. We call the first  of the expansion as
    \bb
        g^{(1)}_{V,X,\mathbf{m},\mathbf{l}}(\mathbf{r}) \coloneqq \exp\left( -\frac14 \left(\Omega\mathbf{r}\right)^{\intercal}V\Omega\mathbf{r} + i\mathbf{m}^{\intercal}\Omega\mathbf{r} \right) \left( -\frac14\left(\Omega\mathbf{r}\right)^{\intercal}X\Omega\mathbf{r} + i\mathbf{l}^{\intercal}\Omega\mathbf{r} \right) \, .
    \ee
    While, we take the second-order integral remainder and we call it     
    \bb
        g^{(2)}_{V,X,\mathbf{m},\mathbf{l}}(\mathbf{r}) \coloneqq \int_{0}^{1}\bigg[&(1-\alpha)\exp\left( -\frac14 \left(\Omega\mathbf{r}\right)^{\intercal}(V+\alpha X)\Omega\mathbf{r} + i(\mathbf{m}+\alpha\mathbf{l})^{\intercal}\Omega\mathbf{r} \right) \\
        &\times\left( \frac{1}{32}\left[(\Omega\mathbf{r})^{\intercal}X\Omega\mathbf{r}\right]^2 - \frac12 \left(\mathbf{l}^{\intercal}\Omega\mathbf{r} \right)^2 -\frac{i}{4}\left[\left(\Omega\mathbf{r}\right)^{\intercal}X\Omega\mathbf{r}\right]\left[\mathbf{l}^{\intercal}\Omega\mathbf{r}\right] \right)\bigg] {\rm d}\alpha \, .
    \ee
    In order to prove the differentiability of $\omega(V,\mathbf{m})$ we have to evaluate the following limit and prove that it vanishes,
    \bb \label{eq:lim_cen}
        \lim_{\left\|(X,\mathbf{l}) \right\|\to 0} \frac{\left\|\omega(V+X,\mathbf{m}+\mathbf{l})-\omega(V,\mathbf{m})-\mathcal{W}\left(g^{(1)}_{V,X,\mathbf{m},\mathbf{l}}\right)\right\|_1}{\left\|(X,\mathbf{l}) \right\|} \, ,
    \ee
    where $X$ is always a real $2n\times 2n$ matrix. 
    We know that any polynomial multiplied for a Gaussian function is a Schwartz function, so, if we apply the Fourier--Weyl transform, we always obtain a Schwartz operator, therefore a trace-class operator. In particular, this is true for any term of the Taylor expansion in Proposition~\ref{prop:taylor_exp}; therefore $\mathcal{W}\left(g^{(1)}_{V,X,\mathbf{m},\mathbf{l}}\right)$ is continuous. 
    Moreover, by Lemma~\ref{coroll:schwartz_op_cont} every term of the expansion is also continuous in the set of Schwartz operators as a function of the statistical moments, in addition Proposition~\ref{prop:schw_estimate} tells us that continuity in the set of Schwartz operators implies continuity of trace-class operators, then every term of the expansion is continuous in trace-norm as a function of the statistical moments $(V,\mathbf{m})$. 
    Finally $g^{(1)}_{V,X,\mathbf{m},\mathbf{l}}$ is linear in $(V,\mathbf{m})$ by construction. We now have to deal with the term $\mathcal{W}\left(g^{(2)}_{V,X,\mathbf{m},\mathbf{l}}\right)$. We want to give an estimate to the following norm: 
    \bb 
        &\left\|\mathcal{W}\left(g^{(2)}_{V,X,\mathbf{m},\mathbf{l}}\right)\right\|_1 \\
        &\leq \left\|\mathcal{W}\left(\int_{0}^{1}(1-\alpha)\exp\left( -\frac14 \left(\Omega\mathbf{r}\right)^{\intercal}(V+\alpha X)\Omega\mathbf{r} + i(\mathbf{m}+\alpha\mathbf{l})^{\intercal}\Omega\mathbf{r}\right) \frac{1}{32}\left[(\Omega\mathbf{r})^{\intercal}X\Omega\mathbf{r}\right]^2 {\rm d}\alpha  \right)\right\|_1 \\
        & \quad +  \left \| \mathcal{W}\left(\int_{0}^{1}(1-\alpha)\exp\left( -\frac14 \left(\Omega\mathbf{r}\right)^{\intercal}(V+\alpha X)\Omega\mathbf{r} + i(\mathbf{m}+\alpha\mathbf{l})^{\intercal}\Omega\mathbf{r} \right)\frac12 \left(\mathbf{l}^{\intercal}\Omega\mathbf{r} \right)^2 {\rm d}\alpha \right)\right  \|_1 \\
        & \quad + \left \| \mathcal{W}\left(\int_{0}^{1}(1-\alpha)\exp\left( -\frac14 \left(\Omega\mathbf{r}\right)^{\intercal}(V+\alpha X)\Omega\mathbf{r} + i(\mathbf{m}+\alpha\mathbf{l})^{\intercal}\Omega\mathbf{r}\right)\frac{i}{4}\left[\left(\Omega\mathbf{r}\right)^{\intercal}X\Omega\mathbf{r}\right]\left[\mathbf{l}^{\intercal}\Omega\mathbf{r}\right] {\rm d}\alpha  \right)\right  \|_1 \\ 
        &\leq \frac{1}{32}\|\tilde{X}\|^2_{\infty}\sum_{j,k,l,m=1}^{2n}\left\|\mathcal{W}\left(\int_{0}^{1}(1-\alpha)\exp\left( -\frac14 \left(\Omega\mathbf{r}\right)^{\intercal}(V+\alpha X)\Omega\mathbf{r} + i(\mathbf{m}+\alpha\mathbf{l})^{\intercal}\Omega\mathbf{r}\right)(B\mathbf{r})_j(B\mathbf{r})_k(B\mathbf{r})_l(B\mathbf{r})_m  \right)\right\|_1 \\
        & \quad + \frac12 \|\mathbf{l}\|^2_{\infty}\sum_{j,k=1}^{2n}\left\|\mathcal{W}\left(\int_{0}^{1}(1-\alpha)\exp\left( -\frac14 \left(\Omega\mathbf{r}\right)^{\intercal}(V+\alpha X)\Omega\mathbf{r} + i(\mathbf{m}+\alpha\mathbf{l})^{\intercal}\Omega\mathbf{r}\right)(\Omega\mathbf{r})_j(\Omega\mathbf{r})_k\right)\right\|_1 \\
        & \quad + \frac14 \|\tilde{X}\|_{\infty} \|\mathbf{l}\|_{\infty}\sum_{j,k,l=1}^{2n}\left\|\mathcal{W}\left(\int_{0}^{1}(1-\alpha)\exp\left( -\frac14 \left(\Omega\mathbf{r}\right)^{\intercal}(V+\alpha X)\Omega\mathbf{r} + i(\mathbf{m}+\alpha\mathbf{l})^{\intercal}\Omega\mathbf{r}\right)(\Omega\mathbf{r})_j(B\mathbf{r})_k(B\mathbf{r})_l\right)\right\|_1 \, .
    \ee 
    Here $\tilde{X} = \Omega^{\intercal}X\Omega$ as in Lemma~\ref{lemma:cont_pol_schw}, and $B$ is an orthogonal matrix such that $B\tilde{X}B^{\intercal} = D$, with $D$ diagonal. We recall that $\|\tilde{X}\|_{\infty}=\|X\|_{\infty}$. The three sums in the above expression are all bounded thanks to Corollary~\ref{coroll:bound_transform}. Moreover, since we $(X,\mathbf{l})$ belongs to a finite-dimensional vector space, then there exists a constant $K$ such that $\|(X,\mathbf{l})\| \leq K (\| X \|_{\infty} + \| \mathbf{l} \|_{\infty})$. Given all these pieces of information, we can finally evaluate the limit~\eqref{eq:lim_cen} as
    \bb
        &\lim_{\left\|(X,\mathbf{l}) \right\|\to 0} \frac{\left\|\omega(V+X,\mathbf{m}+\mathbf{l})-\omega(V,\mathbf{m})-\mathcal{W}\left(g^{(1)}_{V,X,\mathbf{m},\mathbf{l}}\right)\right\|_1}{\left\|(X,\mathbf{l}) \right\|} \\
        & \quad = \lim_{\left\|(X,\mathbf{l}) \right\|\to 0} \frac{\left\|\mathcal{W}\left( g^{(2)}_{V',X,\mathbf{m}',\mathbf{l}} \right)\right\|_1}{\left\|(X,\mathbf{l}) \right\|} \\ 
        &\quad \leq \lim_{\left\|(X,\mathbf{l}) \right\|\to 0}\frac{C_1\| X \|^2_{\infty} + C_2 \|\mathbf{l} \|^2_{\infty} + C_3 \|X \|_{\infty}\|\mathbf{l} \|_{\infty} }{K (\| X \|_{\infty} + \| \mathbf{l} \|_{\infty})} = 0 \, .
    \ee 
    Here $C_1$, $C_2$, and $C_3$ are positive constants. Since $\mathcal{W}\left(g^{(1)}_{V,X,\mathbf{m},\mathbf{l}}\right)$ is continuous in trace-norm as a function of $(V,\mathbf{m})$ we have proved that the map $(V,\mathbf{m})\mapsto \omega(V,\mathbf{m})$ belongs to $C^1\left( \mathcal{P}^+_{2n}(\R)\times \R^{2n} \to \TT(\mathcal{H}) \right)$.

\end{proof}

The theorem above allows us to write any difference of Gaussian state as the first-order integral remainder.

\begin{cor} \label{cor:int_rem}
For any $V$ and $W$ quantum covariance matrices, and $\mathbf{m}, \mathbf{t} \in \R^{2n}$, the difference between two Gaussian states reads:
\bb 
    \rho(W,\mathbf{t}) - \rho(V,\mathbf{m}) = \int_{0}^{1} \partial_{X,\mathbf{x}}\rho(V + \alpha X, \mathbf{m} + \alpha \mathbf{x}) {\rm d}\alpha \, ,
\ee 
where $X\coloneqq W - V$ and $\mathbf{x} \coloneqq \mathbf{t} - \mathbf{m}$.
\end{cor}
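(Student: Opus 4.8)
The plan is to recognise the claimed identity as the fundamental theorem of calculus applied to the trace-class-valued curve
\bb
    \gamma(\alpha) \coloneqq \rho(V + \alpha X,\, \mathbf{m} + \alpha \mathbf{x})\,, \qquad \alpha \in [0,1]\,,
\ee
with $X = W - V$ and $\mathbf{x} = \mathbf{t} - \mathbf{m}$, so that $\gamma(0) = \rho(V,\mathbf{m})$ and $\gamma(1) = \rho(W,\mathbf{t})$. First I would check that the whole segment lies in the domain of the regularity result, i.e.~that $V + \alpha X = (1-\alpha)V + \alpha W$ is strictly positive definite for every $\alpha \in [0,1]$. This is immediate: any quantum covariance matrix satisfies the uncertainty relation $V + i\Omega \ge 0$, which forces $V > 0$ (its symplectic eigenvalues are at least one), and a convex combination of positive-definite matrices is again positive definite. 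Hence $\gamma$ is a well-defined map into $\TT(\mathcal{H})$.

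Next I would invoke Theorem~\ref{thm:gauss_regularity}, which states that $(V,\mathbf{m}) \mapsto \omega(V,\mathbf{m}) = \rho(V,\mathbf{m})$ is of class $C^1$ from $\mathcal{P}^+_{2n}(\R) \times \R^{2n}$ into the Banach space $(\TT(\mathcal{H}), \|\cdot\|_1)$, with Jacobian $\pazocal{J}_{V,\mathbf{m}}[X,\mathbf{x}] = \partial_{X,\mathbf{x}}\rho(V,\mathbf{m})$. Since $\gamma$ is the composition of this $C^1$ map with the affine path $\alpha \mapsto (V + \alpha X,\, \mathbf{m} + \alpha \mathbf{x})$, the chain rule yields that $\gamma$ is continuously differentiable on $[0,1]$ with
\bb
    \gamma'(\alpha) = \pazocal{J}_{V + \alpha X,\, \mathbf{m} + \alpha \mathbf{x}}[X,\mathbf{x}] = \partial_{X,\mathbf{x}}\,\rho(V + \alpha X,\, \mathbf{m} + \alpha \mathbf{x})\,.
\ee
In particular $\gamma'$ is continuous in trace norm on the compact interval $[0,1]$, hence bounded and Bochner integrable there.

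Finally I would apply the fundamental theorem of calculus in its vector-valued form: for a continuously differentiable $\gamma: [0,1] \to \TT(\mathcal{H})$ one has $\gamma(1) - \gamma(0) = \int_0^1 \gamma'(\alpha)\,\mathrm{d}\alpha$, the right-hand side being a Bochner integral. Substituting the endpoints and the expression for $\gamma'$ gives exactly the asserted formula. The step I expect to be the main obstacle—or at least the one demanding the most care—is this last one: unlike the scalar case, the FTC must be invoked in the Banach-space-valued setting, and its hypotheses (continuity of $\gamma'$ in $\|\cdot\|_1$ together with completeness of $\TT(\mathcal{H})$) are precisely what the $C^1$ conclusion of Theorem~\ref{thm:gauss_regularity} supplies. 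A minor additional point worth recording is that the directional derivative of Definition~\ref{def:dir_dev} was introduced as a one-sided limit ($\alpha \to 0^+$); the full Fr\'{e}chet differentiability established in Theorem~\ref{thm:gauss_regularity} guarantees that this one-sided object coincides with the genuine two-sided derivative, which is what legitimises both the chain rule and the FTC above.
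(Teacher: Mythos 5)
Your proposal is correct and follows essentially the same route as the paper: both invoke the $C^1$ regularity of $(V,\mathbf{m})\mapsto\rho(V,\mathbf{m})$ from Theorem~\ref{thm:gauss_regularity} and then apply the Banach-space-valued fundamental theorem of calculus (equivalently, the zeroth-order Taylor formula with integral remainder), with continuity of the derivative in trace norm guaranteeing Bochner integrability. Your write-up merely spells out a few details the paper leaves implicit (positivity of the matrices along the segment, the chain rule, and the agreement of the one-sided directional derivative with the Fr\'{e}chet derivative), all of which are sound.
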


\begin{proof}
    A straightforward consequence of Theorem~\ref{thm:gauss_regularity} is that we can write the difference of two Gaussian states as the first-order integral remainder of the Taylor formula~\cite{Zeidler1995}. The only caveat we have to address is the fact that the integrand must be integrable. Since any Gaussian state $\rho(V,\mathbf{m})$ in continuously differentiable as a function of the pair $(V,\mathbf{m})$ considering the trace-norm, we have that, for any $X$ and $\mathbf{x}$, any directional derivative of a Gaussian state $\partial_{X,\mathbf{x}}\rho(V), \mathbf{m})$ is continuous in trace-norm. Then, it is also Bochner integrable. Hence, we conclude.
\end{proof}
Finally, we would like to stress that the tools used in the proof suffices to prove that Gaussian states are infinitely differentiable:

\begin{remark}
It is possible to extend the proof of Theorem~\ref{thm:gauss_regularity} to prove that, for any number of modes $n\in\mathbb{N}$, the map acting as $(V,\mathbf{m}) \mapsto \omega(V,\mathbf{m})$ is infinitely differentiable; that is, it belongs to $C^{\infty}\left( \mathcal{P}^+_{2n}(\R)\times \R^{2n} \to \TT(\mathcal{H}) \right)$.
\end{remark}

\section{Alternative approach based on the Gaussian noise channel}\label{Sec_proof_LL}
In this section, we present an alternative approach to establish stringent upper bounds on the trace distance between Gaussian states in terms of the norm distance of their first moments and covariance matrices. The forthcoming Theorem~\ref{thm0_ll} provides an upper bound that, while less tight than the one established using the derivative approach in Section~\ref{Sec_proof_derivative}, is proved using an insightful different method. This approach takes advantage of novel properties of the Gaussian noise channel introduced in Subsection~\ref{subsubsec_gauss_noise}, which may hold independent interest. We now state the bound derived in this section.
\begin{thm}\label{thm0_ll}
    Let $\rho(V,\mathbf{m})$ and $\rho(W,\mathbf{t})$ be Gaussian states with covariance matrices $V,W$ and first moments $\mathbf{t},\mathbf{m}$, respectively. Then, it holds that
\bb
\frac12 \left\| \rho(V,\mathbf{m}) - \rho(W,\mathbf{t}) \right\|_1 \leq \, \frac{1+\sqrt{5}}{8}\max\big\{\|V\|_\infty,\,\|W\|_\infty\big\}\, \|V-W\|_1 +\sqrt{1 - e^{-\frac12 \min\{\|V\|_\infty,\,\|W\|_\infty\}\, \|\mathbf{t}-\mathbf{m}\|_2^2}}\,.
 \ee
\end{thm}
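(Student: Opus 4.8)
The plan is to isolate the covariance-matrix error from the first-moment error by the triangle inequality. Using invariance of the trace norm under displacement unitaries, one writes $\|\rho(V,\mathbf{m})-\rho(W,\mathbf{t})\|_1 \le \|\rho(V,0)-\rho(W,0)\|_1 + \|\rho(W,\mathbf{m})-\rho(W,\mathbf{t})\|_1$. The first-moment term I would treat exactly as in the proof of Theorem~\ref{thm_sm}: with the Williamson decomposition $W=SDS^\intercal$, Lemma~\ref{lemma_mixed_gauss} realizes both $\rho(W,\mathbf{m})$ and $\rho(W,\mathbf{t})$ as the image under the same noise channel $\mathcal{N}_{S(D-\mathbb{1})S^\intercal}$ of the pure states $\rho(SS^\intercal,\mathbf{m})$ and $\rho(SS^\intercal,\mathbf{t})$; data processing removes the channel, the pure-state identity $\tfrac12\|\psi-\varphi\|_1=\sqrt{1-|\langle\psi|\varphi\rangle|^2}$ together with the Gaussian overlap formula gives $\sqrt{1-e^{-\frac12(\mathbf{m}-\mathbf{t})^\intercal (SS^\intercal)^{-1}(\mathbf{m}-\mathbf{t})}}$, and the symplectic identity $\|(SS^\intercal)^{-1}\|_\infty=\|SS^\intercal\|_\infty\le\|W\|_\infty$ produces the stated term (splitting through $\rho(V,\cdot)$ instead gives $\|V\|_\infty$, hence the minimum).

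For the covariance term I would use the characteristic feature of this section, the Gaussian noise channel, to express both endpoints as noise-channel images of a single common state. Decomposing the symmetric matrix $X:=W-V$ into its positive and negative parts $X=X_+-X_-$, Lemma~\ref{lemma_add_gauss} gives $\mathcal{N}_{X_+}(\rho(V,0))=\rho(V+X_+,0)=\rho(W+X_-,0)=\mathcal{N}_{X_-}(\rho(W,0))$, because $V+X_+=W+X_-$. Inserting this common state and applying the triangle inequality bounds $\|\rho(V,0)-\rho(W,0)\|_1$ by $\|\rho(V,0)-\mathcal{N}_{X_+}(\rho(V,0))\|_1 + \|\rho(W,0)-\mathcal{N}_{X_-}(\rho(W,0))\|_1$, reducing everything to estimating the trace distance between a zero-mean Gaussian state and its own noisy version.

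The crux, and the main obstacle, is therefore the single-channel estimate $\tfrac12\|\rho(V,0)-\mathcal{N}_K(\rho(V,0))\|_1\le\frac{1+\sqrt5}{8}\|V\|_\infty\|K\|_1$ for PSD $K$, where it is essential that the \emph{input} covariance $\|V\|_\infty$ (not the output $\|V+K\|_\infty$) appears, otherwise the two pieces will not recombine into $\max\{\|V\|_\infty,\|W\|_\infty\}$. I would establish it from the new noise-channel properties developed in this section by controlling $\tfrac12\|\rho-\mathcal{N}_{tK}(\rho)\|_1$ along the one-parameter family $t\mapsto\mathcal{N}_{tK}$ (exploiting additivity $\mathcal{N}_{K_1}\circ\mathcal{N}_{K_2}=\mathcal{N}_{K_1+K_2}$ and a rank-one reduction of $K$) by the explicit concave profile $\phi$ of Theorem~\ref{bound_trace_distance_thm_main_LL}, evaluated at argument $\tfrac14\|V\|_\infty\|K\|_1$. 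The constant then comes from the slope $\phi'(0)=\tfrac{1+\sqrt5}{2}$, with the $\sqrt5=\sqrt{1/4+1}$ arising as the sum of a population contribution ($\lim_{x\to0}\sinh^2(x/2)/x^2=1/4$) and an overlap/coherence contribution ($\lim_{x\to0}(1-e^{-x^2/(1+4x)})/x^2=1$). The delicate parts will be the explicit single-mode trace-distance computation underlying $\phi$ and keeping the whole estimate linear in $\|K\|_1$.

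Finally, applying this single-channel estimate to the pairs $(V,X_+)$ and $(W,X_-)$, bounding $\|V\|_\infty,\|W\|_\infty\le\max\{\|V\|_\infty,\|W\|_\infty\}$, and using $\|X_+\|_1+\|X_-\|_1=\|X\|_1=\|V-W\|_1$, the covariance term is at most $\frac{1+\sqrt5}{8}\max\{\|V\|_\infty,\|W\|_\infty\}\|V-W\|_1$ (for the refined version of Theorem~\ref{bound_trace_distance_thm_main_LL} one aggregates instead via concavity, $\phi(a)+\phi(b)\le2\phi(\tfrac{a+b}{2})$); adding the first-moment term yields the claimed inequality.
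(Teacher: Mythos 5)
Your proposal follows essentially the same route as the paper: the same triangle-inequality split of first-moment and covariance errors, the same positive/negative-part decomposition $W-V=X_+-X_-$ producing the common intermediate state $\rho(V+X_+,0)=\rho(W+X_-,0)$ via the Gaussian noise channel, the same key single-channel estimate $\tfrac12\|(\NN_K-\Id)(\rho)\|_1\le\widebar\phi\big(\tfrac14\|V\|_\infty\Tr K\big)$ with the crucial \emph{input}-covariance dependence (the paper's Lemma~\ref{lemma_K_pos}, proved there by reducing to the vacuum via symplectic conjugation rather than your one-parameter-family sketch), and the same Williamson/data-processing/overlap treatment of the displacement term. You also correctly identify the constant $\tfrac{1+\sqrt5}{2}=\phi'(0)$ and the aggregation $\|X_+\|_1+\|X_-\|_1=\|V-W\|_1$, so the argument is sound and matches the paper's.
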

\begin{proof}
    It suffices to apply the forthcoming Theorem~\ref{bound_trace_distance_thm} with the choice $\bar\phi(x)\coloneqq \frac{1+\sqrt{5}}{2}x$. Indeed, using the elementary bounds $\sinh(x/2) \leq \frac12\, x e^{x/2}$ and $1 - e^{-\frac{x^2}{1+4x}} \leq x^2 \leq x^2 e^x$, it is not difficult to show that 
    \bb
        \phi(x) \leq \frac{1+\sqrt{5}}{2}x \,,
    \ee
    where the function $\phi(x)$ is defined in the statement of Theorem~\ref{bound_trace_distance_thm} (note that $\frac{1+\sqrt{5}}{2}x$ is in fact the dominant term in the Taylor expansion of $\phi$ around $0$, i.e.\ $\phi(x) = \frac{1+\sqrt{5}}{2}x + O\big(x^2\big)$ for $x\to 0$.)
\end{proof}
The forthcoming Theorem~\ref{bound_trace_distance_thm} is the key result of this section, which directly implies Theorem~\ref{thm0_ll}.

\begin{thm} \label{bound_trace_distance_thm}
Let $\widebar{\phi}: [0,\infty) \to \R^+$ be a concave function such that $\widebar{\phi}(x) \geq \phi(x)$ for all $x\geq 0$, where the function $\phi: [0,\infty) \to [0,1]$ is defined by
\bb
\phi(x) \coloneqq \frac{1-e^{-x}}{2} + e^{-x/2} \sqrt{\sinh^2(x/2) + 1 - e^{-\frac{x^2}{1+4x}}}\, .
\label{phi_function}
\ee
    Let $\rho(V,\mathbf{m})$ and $\rho(W,\mathbf{t})$ be Gaussian states with covariance matrices $V,W$ and first moments $\mathbf{t},\mathbf{m}$, respectively. Then, it holds that
\bb
\frac12 \left\| \rho(V,\mathbf{m}) - \rho(W,\mathbf{t}) \right\|_1 \leq  2\, \widebar{\phi}\!\left(\frac18 \max\big\{\|V\|_\infty,\,\|W\|_\infty\big\}\, \|V-W\|_1 \right) +\sqrt{1 - e^{-\frac12 \min\{\|V\|_\infty,\,\|W\|_\infty\}\, \|\mathbf{t}-\mathbf{m}\|_2^2}}\,.
\label{bound_trace_distance} 
\ee
\end{thm}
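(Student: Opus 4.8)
The plan is to split the two sources of error---different covariance matrices and different first moments---and treat them separately via the triangle inequality, exactly as in the proof of Theorem~\ref{thm_sm}, but replacing the linear estimate of the covariance contribution by one that stays bounded even for arbitrarily large $\|V-W\|_1$. First I would write
\bb
\tfrac12\|\rho(V,\mathbf{m})-\rho(W,\mathbf{t})\|_1 \le \tfrac12\|\rho(V,0)-\rho(W,0)\|_1 + \tfrac12\|\rho(W,\mathbf{m})-\rho(W,\mathbf{t})\|_1,
\ee
using displacement invariance of the trace norm to set the first moment to zero in the covariance term. The first-moment term is then handled precisely as in steps (iv)--(ix) of~\eqref{step_same_cov}: the Williamson decomposition $W=SDS^\intercal$ together with Lemma~\ref{lemma_add_gauss} reduces it to the overlap of two equal-covariance pure Gaussian states, which by the Gaussian overlap formula equals $\sqrt{1-e^{-\frac12(\mathbf{m}-\mathbf{t})^\intercal (SS^\intercal)^{-1}(\mathbf{m}-\mathbf{t})}}$; bounding $(SS^\intercal)^{-1}\le\|W\|_\infty\,\mathbb{1}$ yields $\sqrt{1-e^{-\frac12\|W\|_\infty\|\mathbf{m}-\mathbf{t}\|_2^2}}$, and routing the intermediate state through $\rho(V,\mathbf{t})$ instead reproduces the same expression with $\|V\|_\infty$. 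Taking the better of the two produces the $\min$ appearing in the statement.

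The heart of the argument is the covariance term, where I would use the Gaussian noise channel to realise both $\rho(V,0)$ and $\rho(W,0)$ as noisy versions of a single common reference. Setting $K_1\coloneqq(V-W)_+$ and $K_2\coloneqq(V-W)_-$ (the positive and negative parts of $V-W$) and $Y\coloneqq\frac12(V+W-|V-W|)$, one has $V=Y+K_1$, $W=Y+K_2$ with $K_1,K_2\ge0$ and $\Tr K_1+\Tr K_2=\|V-W\|_1$. Reading Lemma~\ref{lemma_add_gauss} as an identity of characteristic functions (hence of operators), this gives $\rho(V,0)=\NN_{K_1}(\rho(Y,0))$ and $\rho(W,0)=\NN_{K_2}(\rho(Y,0))$, so that the triangle inequality through $\rho(Y,0)$ reduces the problem to a single-channel estimate for the action of $\NN_K$ on a Gaussian reference. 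The function $\phi$ is designed to control exactly this quantity: the target estimate is
\bb
\tfrac12\|\rho(Y,0)-\rho(Y+K,0)\|_1 \le \phi\!\left(\tfrac14\|Y+K\|_\infty\,\Tr K\right).
\ee

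Proving this single-channel bound is the main obstacle. I would reduce to a single mode by bringing the reference into Williamson normal form and decomposing $\NN_K$ into its rank-one constituents, then bound the trace distance by splitting it into a \emph{population} part---the total-variation distance between the Fock-diagonal (thermal) weights, which accounts for the $\frac{1-e^{-x}}{2}$ summand---and a \emph{coherence} part, handled by a Hilbert--Schmidt/Cauchy--Schwarz estimate together with the Gaussian overlap formula, which accounts for the $e^{-x/2}\sqrt{\sinh^2(x/2)+1-e^{-x^2/(1+4x)}}$ summand. Two subtleties must be confronted here. First, $Y$ need not satisfy the uncertainty relation, so $\rho(Y,0)$ is only a trace-class (Schwartz) operator rather than a genuine state; this is precisely why the differentiability and Schwartz-operator machinery of Section~\ref{sec:proof_diff} is needed to make all the manipulations rigorous. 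Second, the estimate must carry the \emph{total} noise $\Tr K$ as a single argument of $\phi$, rather than a sum over the rank-one pieces, since a naive telescoping would only yield a sum of $\phi$-terms.

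Finally I would assemble the pieces. Applying the single-channel estimate to $K_1$ and $K_2$, using $\|Y+K_1\|_\infty=\|V\|_\infty$ and $\|Y+K_2\|_\infty=\|W\|_\infty$, bounding both by $a\coloneqq\max\{\|V\|_\infty,\|W\|_\infty\}$ and invoking monotonicity of $\phi$, gives $\frac12\|\rho(V,0)-\rho(W,0)\|_1\le\phi(\frac14 a\Tr K_1)+\phi(\frac14 a\Tr K_2)$. Replacing $\phi$ by its concave majorant $\widebar{\phi}$ and applying Jensen's inequality,
\bb
\widebar{\phi}(s_1)+\widebar{\phi}(s_2)\le 2\,\widebar{\phi}\!\left(\tfrac{s_1+s_2}{2}\right),
\ee
with $s_i=\frac14 a\Tr K_i$ and $s_1+s_2=\frac14 a\|V-W\|_1$, collapses the sum to $2\,\widebar{\phi}(\frac18 a\|V-W\|_1)$, which is exactly the covariance term in the statement. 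The passage $\phi\le\widebar{\phi}$ is indispensable precisely because the concavity of $\phi$ itself is not established: it is the concavity of $\widebar{\phi}$ that powers the Jensen step. Combining this with the first-moment bound completes the proof.
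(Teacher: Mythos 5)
Your overall architecture --- triangle inequality to separate moments from covariances, the first-moment term handled via Lemma~\ref{lemma_add_gauss}, the pure-state overlap formula, and the Jensen/concavity step that merges two $\widebar{\phi}$-terms into $2\widebar{\phi}\big(\tfrac18 a\|V-W\|_1\big)$ --- coincides with the paper's proof. The gap is in your choice of common reference for the covariance part. You set $Y\coloneqq\tfrac12(V+W-|V-W|)$ and write $\rho(V,0)=\NN_{K_1}(\rho(Y,0))$, $\rho(W,0)=\NN_{K_2}(\rho(Y,0))$. As you yourself note, $Y$ generally violates the uncertainty relation (e.g.\ $V=\diag(a,a^{-1})$, $W=\diag(a^{-1},a)$ gives $Y=a^{-1}\id$ for $a>1$), so $\rho(Y,0)$ is merely a self-adjoint trace-class operator, not a state. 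But the single-channel estimate $\tfrac12\|(\NN_K-\Id)(\Theta)\|_1\le\widebar{\phi}(\tfrac14 M\Tr K)$ of Lemma~\ref{lemma_K_pos} is proved only for Gaussian \emph{states}: its proof reduces a mixed Gaussian state to a convex combination of displaced pure ones (Lemma~\ref{lemma_mixed_gauss}), computes a trace distance to the vacuum, and invokes the state-overlap formula --- none of which is available for a non-positive $\rho(Y,0)$. Indeed, for a unit-trace operator with large trace norm, $\|(\NN_K-\Id)(\Theta)\|_1$ can exceed $2$, so no uniform bound by $2\phi(\cdot)$ with $\phi$ ranging in $[0,1]$ can hold at the generality you need; and the Schwartz-operator machinery of Section~\ref{sec:proof_diff}, which you invoke to "make the manipulations rigorous", supplies differentiability and trace-class membership but not the positivity structure on which Lemma~\ref{lemma_K_pos} rests.

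The repair is exactly the paper's move: run the noise channels \emph{forward} from the genuine states rather than backward from a common ancestor. Choosing $X\coloneqq(W-V)_+$, one has $X\ge0$, $X+V-W=(W-V)_-\ge0$, and $\NN_X(\rho(V,0))=\rho\big(\tfrac{V+W+|V-W|}{2},0\big)=\NN_{X+V-W}(\rho(W,0))$, so both channels act on actual Gaussian states and Lemma~\ref{lemma_K_pos} applies with $M=\|V\|_\infty$ and $M=\|W\|_\infty$ respectively; since $\Tr X+\Tr[X+V-W]=\|V-W\|_1$, your Jensen step then goes through verbatim and yields the stated bound. A secondary remark: the paper's proof of Lemma~\ref{lemma_K_pos} does not decompose $K$ into rank-one constituents as you sketch (a route you correctly suspect would only produce a sum of $\phi$-terms); it treats the full $K$ at once on the vacuum via the Williamson decomposition of $\id+K$, an intermediate operator $\mu\,\ketbra{0}$, and a determinant estimate, and then transfers to arbitrary Gaussian states by symplectic covariance and convexity --- which is precisely how the total $\Tr K$ ends up as a single argument of $\phi$.
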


\begin{rem}
By looking at a plot of the function $\phi$, it seems clear that $\phi$ itself is concave. However, proving this explicitly seems quite difficult. If this could be done, then in the above Theorem~\ref{bound_trace_distance_thm} we could take $\widebar{\phi} \coloneqq \phi$. Since $\widebar{\phi}$ is bounded, this would give us a bound on the trace distance between any two Gaussian states that is both faithful and bounded. Note that the right-hand side of~\eqref{eq_ineq_main}, on the contrary, is not bounded. In any case, we can definitely take $\widebar{\phi}(x) \coloneqq \frac{1+\sqrt{5}}{2}x$, which yields Theorem~\ref{thm0_ll} above.
\end{rem}

Before proving Theorem~\ref{bound_trace_distance_thm}, let us establish some preliminary results about the \emph{Gaussian noise channel} $\NN_K$, defined in subsection~\ref{subsubsec_gauss_noise}. Here, we recall that, given $K\ge0$, the Gaussian noise channel $\NN_K$ is a Gaussian channel that acts on covariance matrices as $V\mapsto V+K$ and leaves unchanged the first moments.

\begin{lemma}\label{lemma_K_pos}
    Let $K$ be a $2n\times2n$ real positive semi-definite matrix. Let $\widebar{\phi}: [0,\infty) \to \R^+$ be a function defined as in Theorem~\ref{bound_trace_distance_thm}. Then for all $M\ge1$ it holds that
    \bb
        \sup_{\rho_\G:\ \|V(\rho_\G)\|_\infty \leq M} \frac12 \left\| \left(\NN_K - \Id \right)(\rho_\G) \right\|_1\le\widebar{\phi}\left(\frac14\, M \Tr K \right)\,,
    \ee
    where the supremum is over all the Gaussian states $\rho_G$ whose covariance matrix $V(\rho_\G)$ has operator norm at most $M$. In particular, it holds that
    \bb
            \sup_{\rho_\G:\ \|V(\rho_\G)\|_\infty \leq M} \frac12 \left\| \left(\NN_K - \Id \right)(\rho_\G) \right\|_1\le\frac{1+\sqrt{5}}{8}\ M \Tr K \,.
    \ee
\end{lemma}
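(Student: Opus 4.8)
The plan is to reduce the supremum first to a single \emph{pure} Gaussian state and then to exploit the random--displacement representation~\eqref{Gaussian_noise} of $\NN_K$. Since $\NN_K-\Id$ is linear and the trace norm is convex, the functional $\rho_\G\mapsto\frac12\|(\NN_K-\Id)(\rho_\G)\|_1$ is convex. By Lemma~\ref{lemma_mixed_gauss}, any Gaussian state with $\|V(\rho_\G)\|_\infty\le M$ is a convex combination of pure Gaussian states that are displaced copies of a single pure state $\psi$ whose covariance $W$ satisfies $W\le V$, hence $\|W\|_\infty\le M$. Because $\NN_K$ commutes with displacements and the trace norm is displacement--invariant, each term in this mixture contributes equally, so the triangle inequality collapses the problem to bounding $\frac12\|(\NN_K-\Id)(\psi)\|_1$ for this single pure $\psi$.

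Next I would write $(\NN_K-\Id)(\psi)=\int_{\R^{2n}}P_K(u)\big(\hat{D}_u\psi\hat{D}_u^\dagger-\psi\big)\,\mathrm{d}u$ and analyse the integrand through the few pure states $\psi,\hat{D}_{\pm u}\psi\hat{D}_{\pm u}^\dagger$ that it involves. All relevant overlaps are explicit Gaussian functions of the single scalar $a\coloneqq u^\intercal\tilde{W}u$ with $\tilde{W}\coloneqq\Omega^\intercal W\Omega$: one finds $\langle\psi|\hat{D}_u|\psi\rangle=e^{-a/4}$ and $\langle\hat{D}_u\psi|\hat{D}_{-u}\psi\rangle=e^{-a}$. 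Since these contributions span a subspace of dimension at most three, the trace norm of the corresponding operators is computed by diagonalising a small Gram--weighted matrix, producing a closed expression in $a$ assembled precisely from the $\tfrac{1-e^{-a}}{2}$, $\sinh(a/2)$ and $e^{-a/2}$ ingredients visible in $\phi$. The idea is to split this into an \emph{incoherent} part (yielding the $\tfrac{1-e^{-x}}{2}$ term) and a \emph{coherent} part (yielding the square--root term), and to majorise the coherent contribution by $1-e^{-x^2/(1+4x)}$ in order to obtain a clean, globally valid per--displacement estimate of the form $\phi(a)$ (up to the appropriate rescaling of the argument).

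Finally, I would integrate over $u$: using $\widebar{\phi}\ge\phi$, the concavity of $\widebar{\phi}$ (Jensen's inequality), and the identity $\int_{\R^{2n}}P_K(u)\,u^\intercal\tilde{W}u\,\mathrm{d}u=\Tr[\tilde{W}K]\le\|W\|_\infty\Tr K\le M\Tr K$, the per--displacement estimate upgrades to $\widebar{\phi}$ evaluated at the stated multiple of $M\Tr K$; the linear choice $\widebar{\phi}(x)=\tfrac{1+\sqrt{5}}{2}x$ then gives the explicit $\tfrac{1+\sqrt{5}}{8}M\Tr K$. I expect the middle step to be the main obstacle. A term--by--term triangle inequality over displacements is too lossy to reach this constant, since it discards the cancellation between the contributions of different $u$ (and between $\hat{D}_u$ and $\hat{D}_{-u}$), which is exactly what makes $\frac12\|(\NN_K-\Id)(\psi)\|_1$ second order in the displacement and hence \emph{linear} in $\Tr K$. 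The delicate points are therefore to extract enough of this cancellation through the low--dimensional diagonalisation and the coherent/incoherent split to land on exactly the function $\phi$ with the correct prefactor, and to verify that the majorising device $1-e^{-x^2/(1+4x)}$ genuinely dominates the coherent contribution for all $a\ge 0$ while preserving the correct small--$a$ slope $\tfrac{1+\sqrt{5}}{2}$.
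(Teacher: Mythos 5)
Your reduction to a single zero-mean pure Gaussian state $\psi$ with covariance $W\le V$, $\|W\|_\infty\le M$ (via Lemma~\ref{lemma_mixed_gauss}, displacement covariance of $\NN_K$, and convexity of the trace norm) is exactly what the paper does, and your final assembly (Jensen on the concave $\widebar\phi$, $\Tr[\tilde W K]\le\|W\|_\infty\Tr K\le M\Tr K$, monotonicity of $\widebar\phi$) matches the paper's closing steps. The gap is the entire middle of the argument, which you yourself flag as "the main obstacle" but do not resolve: you propose a per-displacement analysis of the integrand of $\NN_K-\Id$, pairing $\hat D_{\pm u}$ to capture the cancellation that makes the bound first order in $K$, diagonalising a $3\times 3$ Gram matrix, and majorising the result by $\phi(a)$ with $a=u^\intercal\tilde W u$. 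Pairing $\pm u$ does indeed recover linear-in-$\Tr K$ scaling (the symmetrised integrand is a double commutator at leading order), so a bound of the right \emph{shape} could plausibly be extracted this way. But it is very unlikely to produce the specific function $\phi$ of the statement: the term $1-e^{-x^2/(1+4x)}$ in $\phi$ originates in the paper from the global spectral estimate $\Tr\!\left[K^2(\id+K)^{-1}\right]\le(\Tr K)^2/(1+\Tr K)$ applied to the full $2n\times 2n$ noise matrix $K$, combined with a matrix-geometric-mean/AM--GM bound on $\det(\id+\gamma)$ for the pure part $\gamma$ of the Williamson decomposition of $\id+K$. That is a statement about the output state as a whole and has no per-displacement analogue; at the level of a single pair $\pm u$ the effective noise is essentially rank one, and there is no mechanism for the $1/(1+4x)$ denominator to appear in a two- or three-vector overlap computation. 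So even if your Gram-matrix computation closes, you would land on a different concave majorant, not on $\phi$, and the lemma as stated would remain unproven.

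For comparison, the paper never decomposes $\NN_K$ into displacements. It observes that $\NN_K(\ketbra{0})$ is simply the Gaussian state with covariance $\id+K$, writes $\id+K=SDS^\intercal$, and inserts the intermediate operator $\mu\ketbra{0}$ with $\mu=e^{-\frac14\Tr K}$ into the triangle inequality in the Williamson frame. One resulting term is a trace distance between \emph{commuting} operators (a product of thermal states versus a scaled vacuum), computed exactly; the other is the trace norm of a rank-$\le 2$ Hermitian operator, computed from the single overlap $|\!\braket{0|U_S|0}\!|^2=2^n/\sqrt{\det(\id+SS^\intercal)}$. The function $\phi$ is then read off after bounding this determinant. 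If you want to salvage your route, the honest outcome would be a (possibly weaker, possibly incomparable) alternative concave bound obtained from the exact per-pair trace norm; to prove the lemma \emph{as stated}, you need the global, output-state-level argument.
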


\begin{proof} 
Let us start by upper bounding the quantity $ \frac12 \left\| \left(\NN_K - \Id \right)(\ketbra{0}) \right\|_1$. Note that $\NN_K(\ketbra{0})$ is a Gaussian state with zero first moment and covariance matrix $\id + K$. Let us consider the Williamson decomposition
\bb
\id + K = S D S^\intercal\, ,\qquad D = \bigoplus_j d_j \id_2\, ,
\ee
with $S$ symplectic and $d_j\geq 1$. Moreover, let us define
\bb
\gamma \coloneqq (\id+K) \# \big( \Omega (\id+K)^{-1}\Omega^\intercal\big)\,,
\label{tighter_estimate_proof_eq03}
\ee
where we introduced the \emph{matrix geometric mean} defined as $A\#B\coloneqq \sqrt{A}\sqrt{A^{-1/2}BA^{-1/2}}\sqrt{A}$ for all positive definite matrices $A,B$~\cite{BHATIA}. By exploiting that for any invertible matrix $X$ it holds that~\cite{BHATIA} 
\bb
    (XAX^\dagger)\#(XBX^\dagger)=X(A\#B)X^\dagger\,,
\ee
it follows that
\bb
\gamma = SS^\intercal \,.
\label{tighter_estimate_proof_eq3}
\ee
In particular, note that $\gamma$ is the covariance matrix associated with the pure Gaussian state $U_S\ket{0}$. Introducing a parameter $\mu\geq 0$ to be fixed later, we now write that
\bb
\frac12 \left\| \left(\NN_K - \Id \right)(\ketbra{0}) \right\|_1 &= \frac12 \left\| U_S^\dag \NN_K(\ketbra{0}) U_S^{\vphantom{\dag}} - U_S^\dag \ketbra{0} U_S^{\vphantom{\dag}} \right\|_1 \\
&\leq \frac12 \left\| U_S^\dag \NN_K(\ketbra{0}) U_S^{\vphantom{\dag}} - \mu \ketbra{0} \right\|_1 + \frac12 \left\| \mu \ketbra{0} - U_S^\dag \ketbra{0} U_S^{\vphantom{\dag}} \right\|_1 \\
&\eqt{(i)} \frac{1-\mu}{2} + \left(\mu - \prodno_j \frac{2}{1+d_j}\right)_+ + \sqrt{\left(\frac{1+\mu}{2}\right)^2 - \mu |\!\braket{0|U_S|0}\!|^2} \\
&\eqt{(ii)} \frac{1-\mu}{2} + \left(\mu - \prodno_j \frac{2}{1+d_j}\right)_+ + \sqrt{\left(\frac{1+\mu}{2}\right)^2 - \mu\, \frac{2^n}{\sqrt{\det(\id+\gamma)}}} \\
&\leqt{(iii)} \frac{1-\mu}{2} + \left(\mu - e^{-\frac14 \Tr K}\right)_+ + \sqrt{\left(\frac{1+\mu}{2}\right)^2 - \mu\, \frac{2^n}{\sqrt{\det(\id+\gamma)}}}\, .
\label{tighter_estimate_proof_eq4}
\ee
The justification of the above manipulations is as follows. In~(i) we used the fact that $U_S^\dag \NN_K(\ketbra{0}) U_S^{\vphantom{\dag}}$ is a Gaussian state with zero first moment and covariance matrix equal to $S^{-1} (\id+K) S^{-\intercal} = D$, and it is thus a product of thermal states with mean photon number $(d_j-1)/2$~\cite{BUCCO}; the trace distance from the vacuum state is easy to compute, given that the two density operators commute. In~(ii) we substituted the formula~\cite[Eq~(4.51)]{BUCCO}. In~(iii), instead, we observed that
\bb
\ln \prod_j \frac{1\!+\!d_j}{2} = \sum_j \ln \left(1 + \frac12(d_j\!-\!1) \right) \leq \frac12 \sum_j (d_j-1) = \frac14 \left( \Tr D - 2n \right) \leq \frac14 \big( \Tr [\id\! +\! K] - 2n \big) = \frac14 \Tr K\, ,
\ee
where the inequality $\Tr D \leq \Tr[\id+K]$ follows from~\cite[Theorem~5(i)]{bhatia15}. 

We can simplify~\eqref{tighter_estimate_proof_eq4} by choosing $\mu = e^{-\frac14 \Tr K}$, which gives immediately
\bb
\frac12 \left\| \left(\NN_K - \Id \right)(\ketbra{0}) \right\|_1 &\leq \frac{1-e^{-\frac14 \Tr K}}{2} + \sqrt{\left(\frac{1+e^{-\frac14 \Tr K}}{2}\right)^2 - e^{-\frac14 \Tr K}\, \frac{2^n}{\sqrt{\det(\id+\gamma)}}} \\
&= \frac{1-e^{-\frac14 \Tr K}}{2} + \sqrt{\left(\frac{1 - e^{-\frac14 \Tr K}}{2}\right)^2 + e^{-\frac14 \Tr K}\left(1 - \frac{2^n}{\sqrt{\det(\id+\gamma)}}\right)} .
\label{tighter_estimate_proof_eq5}
\ee
Numerically, this choice of $\mu$ seems to be optimal.

We now set out to manipulate the term containing the determinant of $\id+\gamma$. Since $\gamma$ is a pure covariance matrix, it has spectrum of the form $\bigcup_j \big\{ g_j,\, 1/g_j\big\}$, with $g_j > 0$. Thus,
\begin{align}
\sqrt{\det(\id+\gamma)}\ &=\ \prod_j \sqrt{2 + g_j + 1/g_j}   \\
&\leqt{(iv)}\ \prod_j 2 \left( 1 + \frac18 \left( g_j + 1/g_j - 2 \right) \right)   \\
&=\ 2^n \exp\left[ \sumno_j \ln \left( 1 + \frac18 \left( g_j + 1/g_j - 2 \right) \right) \right]   \\
&\leqt{(v)}\ 2^n \exp\left[ \frac18 \sumno_j \left( g_j + 1/g_j - 2 \right) \right]   \\
&= 2^n \exp\left[ \frac18 \Tr\left[ \gamma - \id \right] \right] \label{tighter_estimate_proof_eq6} \\
&\leqt{(vi)}\ 2^n \exp\left[ \frac18 \Tr\left[ \frac{\id + K + \Omega(\id+K)^{-1}\Omega^\intercal}{2} - \id \right] \right]   \\
&\leqt{(vii)}\ 2^n \exp\left[ \frac{1}{16} \Tr\left[ 
(\id+K)^{-1} - \id + K \right] \right]   \\
&=\ 2^n \exp\left[ \frac{1}{16} \Tr \left[ \frac{K^2}{\id+K} \right] \right]   \\
&\leqt{(viii)}\ 2^n \exp\left[ \frac{1}{16} \frac{(\Tr K)^2}{1+\Tr K} \right]   .
\end{align}
The `miracle' here has happened on the second-to-last line, where the two linear terms have cancelled out. In the above derivation: (iv)~follows from the elementary inequality
\bb
\sqrt{2+x+1/x} \leq 2 + \frac14 \left(x+1/x-2\right) ,
\ee
valid for all $x > 0$; in~(v) we used simply $\ln(1+x) \leq x$; (vi)~holds due to the definition of $\gamma$ given in~\eqref{tighter_estimate_proof_eq3}, together with the AM-GM inequality $A\#B \leq (A+B)/2$ for matrix means~\cite{BHATIA}; and in~(vii) we used the cyclicity of the trace. 
To justify~(viii), define, 
for an arbitrary $A\geq 0$ with spectral decomposition $A = \sum_i a_i \ketbra{i}$, the probability distribution $p_i \coloneqq \frac{a_i}{\Tr A}$; then, we have that
\bb
\Tr \left[ \frac{A^2}{\id+A} \right] &= \sum_i \frac{a_i^2}{1+a_i} \\
&= \sum_i \frac{p_i^2 (\Tr A)^2}{1 + p_i \Tr A} \\
&= \frac{(\Tr A)^2}{1 + \Tr A} \sum_i p_i\, \frac{p_i (1 + \Tr A)}{1 + p_i \Tr A} \\
&\leq \frac{(\Tr A)^2}{1 + \Tr A} \sum_i p_i \\
&= \frac{(\Tr A)^2}{1 + \Tr A}\, .
\ee
This completes the justification of~\eqref{tighter_estimate_proof_eq6}.

Plugging~\eqref{tighter_estimate_proof_eq6} into~\eqref{tighter_estimate_proof_eq5} yields the estimate
\bb
 \frac12 \left\| \left(\NN_K - \Id \right)(\ketbra{0}) \right\|_1\leq \phi\left(\frac14 \Tr K\right) \leq \widebar{\phi}\left(\frac14 \Tr K\right) \,.
\label{tighter_estimate_proof_eq8}
\ee

Now, for fixed $M\ge0$, let us consider the following quantity:
\bb
    \sup_{\rho_\G:\ \|V(\rho_\G)\|_\infty \leq M} \frac12 \left\| \left(\NN_K - I \right)(\rho_\G) \right\|_1\, ,
\label{tighter_estimate_proof_eq9}
\ee
where the supremum is over all Gaussian states $\rho_\G$ whose covariance matrix $V(\rho_\G)$ has operator norm at most $M$. The supremum is clearly achieved when $\rho_\G$ has zero first moment, because displacements commute with $\NN_K$, and also when $\rho_\G$ is pure, because any mixed Gaussian state $\rho_\G$ with covariance matrix $V(\rho_\G)$ can be written as a convex combination of displaced pure Gaussian states with covariance matrices $\theta \leq V(\rho_\G)$ (see Lemma~\ref{lemma_mixed_gauss}). Let $\theta$ be a generic pure covariance matrix such that $\theta \leq V(\rho_\G)$, so that $\|\theta\|_\infty\leq M$, and let $\psi_\G$ be the pure Gaussian state with covariance matrix $\theta$ and zero first moment. We write $\theta = ZZ^\intercal$, with $Z$ symplectic, so that $\psi_\G = U_Z^{\vphantom{\dag}} \ketbra{0} U_Z^\dag$. 
We now have that
\begin{align}
\frac12 \left\| \left(\NN_K - I \right)(\psi_G) \right\|_1\ &\eqt{(ix)}\ \frac12 \left\| U_Z^{\dag} \NN_K\big( U_Z^{\vphantom{\dag}} \ketbra{0} U_Z^{\dag} \big) U_Z^{\vphantom{\dag}} - \ketbra{0} \right\|_1   \\
&\eqt{(x)}\ \frac12 \left\| \NN_{Z^{-1} K Z^{-\intercal}}(\ketbra{0}) - \ketbra{0} \right\|_1   \\
&\leq\ F\big(Z^{-1} K Z^{-\intercal}\big)   \\
&\leqt{(xi)}\ \widebar{\phi}\left(\frac14 \Tr Z^{-1} K Z^{-\intercal} \right)   \\
&= \widebar{\phi}\left(\frac14 \Tr \theta^{-1} K \right) \label{tighter_estimate_proof_eq15} \\
&\leqt{(xii)}\ \widebar{\phi}\left(\frac14\, \big\| \theta^{-1}\big\|_\infty \Tr K \right)   \\
&\eqt{(xiii)}\ \widebar{\phi}\left(\frac14\, \| \theta\|_\infty \Tr K \right)   \\
&\leqt{(xiv)}\ \widebar{\phi}\left(\frac14\, M \Tr K \right)\, .  
\end{align}
Here: (ix)~is by unitary invariance; (x)~holds because the Gaussian channel given by $\rho \mapsto U_Z^{\dag} \NN_K\big( U_Z^{\vphantom{\dag}} \rho U_Z^{\dag} \big) U_Z^{\vphantom{\dag}}$ acts on input covariance matrices as $V \mapsto Z^{-1} \left( ZVZ^\intercal +K \right) Z^{-\intercal} = V + Z^{-1} K Z^{-\intercal}$, does not change the first moment, and hence it coincides with $\NN_{Z^{-1} K Z^{-\intercal}}$; in~(xi) we employed~\eqref{tighter_estimate_proof_eq8}, (xii)~follows from H\"older's inequality, remembering that $K\geq 0$ is positive semi-definite and observing that $\widebar{\phi}\geq \phi$, being concave, must be non-decreasing, because $\lim_{x\to\infty} \phi(x) = 1$; in~(xiii) we noticed that $\theta^{-1} = \Omega \theta \Omega^\intercal$, because $\theta$ is symplectic; finally, in~(xiv) we remembered that $\|\theta\|_\infty\leq M$, leveraging once again the fact that $\widebar{\phi}$ is non-decreasing.
\end{proof}

We are now ready to prove Theorem~\ref{bound_trace_distance_thm}.
\begin{proof}[Proof of Theorem~\ref{bound_trace_distance_thm}]
We use a similar idea of~\cite{mele2024learningquantumstatescontinuous}. First, given a covariance matrix $V$, let $\rho(V,0)$ be the Gaussian state with zero first moment and covariance matrix equal to $V$. Given two covariance matrices $V$ and $W$, our crucial idea is to consider a matrix $X$ (to be chosen later) such that 
\bb\label{constX}
    X&\ge0\,,\\
    X+V-W&\ge0\,,
\ee
so that 
\bb
    \NN_{X}\!\left(\rho(V,0)\right)=\rho\!\left(X+V,0\right)=\NN_{X+V-W}\!\left(\rho(W,0)\right)\,.
\ee
Consequently, we have that
\bb
    \frac12\|\rho(V,0)-\rho(W,0)\|_1&\le \frac12\|\rho(V,0)-\NN_{X}(\rho(V,0))\|_1+\frac12\|\NN_{X+V-W}(\rho(W,0))-\rho(W,0)\|_1\\
    &\le\frac12\|(\NN_{X}-\Id)(\rho(V,0))\|_1+\frac12\|(\NN_{X+V-W}-\Id)(\rho(W,0))\|_1\\
    &\leqt{(i)} \widebar{\phi}\left(\frac14\, \|V\|_\infty \Tr X \right)+\widebar{\phi}\left(\frac14\, \|W\|_\infty \Tr[ X+V-W] \right)\\
    &\leqt{(ii)} 2\widebar{\phi}\left(\frac18\, \|V\|_\infty \Tr X +\frac18\, \|W\|_\infty \Tr[ X+V-W] \right)\\
    &\le 2\widebar{\phi}\left(\frac18\, \max(\|V\|_\infty,\|W\|_\infty ) \Tr [2X+V-W] \right)\,,
\ee
where in (i) we exploited Lemma~\ref{lemma_K_pos} and (ii) follows by concavity of $\widebar{\phi}$. Now let us choose $X$ in order to minimise $\Tr X$, satisfying the constraints in~\eqref{constX}. That is, we choose 
\bb
    X\coloneqq (W-V)_+= \frac{W-V+|W-V|}{2}\,.
\ee
Hence, we obtain that
\bb
    \frac12\|\rho(V,0)-\rho(W,0)\|_1&\le2\widebar{\phi}\left(\frac18\, \max(\|V\|_\infty,\|W\|_\infty ) \Tr|W-V| \right)\\
    &=2\widebar{\phi}\left(\frac18\, \max(\|V\|_\infty,\|W\|_\infty ) \|W-V\|_1 \right)\,.
\ee
In the general case of Gaussian states with non-zero first moments, we can apply the following reasoning. Let $\rho(V,\mathbf{m})$ be the Gaussian state with first moment $\mathbf{m}$ and covariance matrix $V$. For any first moments $\mathbf{m},\mathbf{t}$ and covariance matrices $V,W$, we have that
\bb\label{eq_LL_1}
    \frac12\left\|\rho(V,\mathbf{m})-\rho(W,\mathbf{t})\right\|_1&\le \frac12\|\rho(V,\mathbf{m})-\rho(W,\mathbf{m})\|_1+\frac12\|\rho(W,\mathbf{m})-\rho(W,\mathbf{t})\|_1\\
    &=\frac12\|\rho_{V,0}-\rho_{W,0}\|_1+\frac12\|\rho(W,\mathbf{m})-\rho(W,\mathbf{t})\|_1\\
    &\le 2\widebar{\phi}\left(\frac18\, \max(\|V\|_\infty,\|W\|_\infty ) \|W-V\|_1 \right)  +\frac12\|\rho(W,\mathbf{m})-\rho(W,\mathbf{t})\|_1\,.
\ee
Finally, by writing the Williamson decomposition $W=SDS^\intercal$ and applying Lemma~\ref{lemma_mixed_gauss}, we have that
\bb\label{eq_LL_2}
    \frac12\|\rho(W,\mathbf{m})-\rho(W,\mathbf{t})\|_1&\leqt{(iii)}\frac12\|\rho_{SS^\intercal,\mathbf{m}}-\rho_{SS^\intercal,\mathbf{t}}\|_1\\
&\eqt{(iv)} \sqrt{1 - \Tr [\rho_{SS^\intercal,\mathbf{m}}\rho_{SS^\intercal,\mathbf{t}}]} \\
&\eqt{(v)} \sqrt{1 - e^{-\frac12 (\mathbf{m}-\mathbf{t})^\intercal (SS^\intercal)^{-1} (\mathbf{m}-\mathbf{t})}} \\
&\leq \sqrt{1 - e^{-\frac12 \|(SS^\intercal)^{-1}\|_\infty \|\mathbf{m}-\mathbf{t}\|^2}} \\
&\eqt{(vi)}\sqrt{1 - e^{-\frac12 \|SS^\intercal\|_\infty \|\mathbf{m}-\mathbf{t}\|^2}} \\
&\leqt{(vii)} \sqrt{1 - e^{-\frac12 \|W\|_\infty \|\mathbf{m}-\mathbf{t}\|^2}}\, .
\ee
Here, in (iii), we applied the monotonicity of the trace norm under quantum channels; in (iv), we exploited that $\rho_{SS^\intercal,\mathbf{m}}$ and $\rho_{SS^\intercal,\mathbf{t}}$ are pure states; in (v), we employed the known formula for the overlap between Gaussian states~\cite[Eq~(4.51)]{BUCCO}; in (vi), we exploited that for any symplectic matrix $T$ it holds that 
\bb
    \|T^{-1}\|_\infty= \|\Omega T^\intercal\Omega^\intercal\|_\infty=\|T^\intercal\|_\infty=\|T\|_\infty\,,
\ee
where we exploited that $T\Omega T^\intercal =\Omega$ and the fact that $\Omega$ is orthogonal; finally, in (vii), we used the fact that $SS^\intercal\le SDS^\intercal=W$. By combining~\eqref{eq_LL_1} and~\eqref{eq_LL_2}, we conclude the proof of~\eqref{bound_trace_distance}.
\end{proof}

\section{Sample complexity for estimating the covariance matrix using heterodyne measurements}\label{sec_sample}
In this section, we analyse an improved sample complexity upper bound for estimating the covariance matrix of a Gaussian state to a given target accuracy, measured with respect to Schatten norms. The strategy relies on the fact that performing heterodyne measurements on a Gaussian state is equivalent to sampling from a classical Gaussian probability distribution whose covariance matrix and mean vector are directly related to the true quantum covariance matrix and mean vector of the underlying Gaussian state.

The heterodyne measurement is a positive operator-valued measure (POVM) defined by the set of operators 
\[
\left\{ \frac{1}{(2\pi)^n} \ket{\mathbf{r}}\!\bra{\mathbf{r}} \right\}_{\mathbf{r} \in \R^{2n}},
\]
where \(\ket{\mathbf{r}} \coloneqq \hat{D}_{\mathbf{r}} \ket{0}\) is a coherent state~\cite{BUCCO}, and \(\hat{D}_{\mathbf{r}}\) denotes the displacement operator. This measurement constitutes an experimentally feasible scheme that is routinely implemented in CV quantum experiments~\cite{BUCCO}.

For a quantum state \(\rho\), the probability distribution associated with these POVMs is described by the function 
\(Q_\rho(\mathbf{r}) : \R^{2n} \to \R\), defined as
\begin{align}
    Q_\rho(\mathbf{r}) \coloneqq \frac{1}{(2\pi)^n} \bra{\mathbf{r}} \rho \ket{\mathbf{r}}\,,
\end{align}
commonly referred to as the Husimi function.

It is well known~\cite{BUCCO} that for a Gaussian state \(\rho\), the Husimi function is itself a Gaussian probability distribution. It is characterised by the mean vector \(\mathbf{m}(\rho)\) and the covariance matrix \(\frac{V(\rho) + \mathbb{1}}{2}\), where \(V(\rho)\) is the covariance matrix of \(\rho\). Specifically,
\begin{align} \label{husimi_gaussian_state}
    Q_\rho(\mathbf{r}) = \mathcal{N}\!\left(\mathbf{m}(\rho), \frac{V(\rho) + \mathbb{1}}{2} \right)\!(\mathbf{r})\,,
\end{align}
where \(\mathcal{N}(\mathbf{m}, V)(\mathbf{r})\) denotes a Gaussian probability distribution with mean vector \(\mathbf{m}\) and covariance matrix \(V\), explicitly given by
\begin{align}
    \mathcal{N}(\mathbf{m}, V)(\mathbf{r}) \coloneqq \frac{\exp\!\left(-\frac{1}{2} (\mathbf{r} - \mathbf{m})^\intercal V^{-1} (\mathbf{r} - \mathbf{m}) \right)}{(2\pi)^n \sqrt{\det V}}\,.
\end{align}

We now recall established results in the classical literature~\cite{lugosi2017subgaussianestimatorsmeanrandom,vershynin_2018} concerning explicit bounds for estimating the mean vector and covariance matrix from samples of a multivariate Gaussian distribution.
Using Eq.~(1.1) from~\cite{lugosi2017subgaussianestimatorsmeanrandom}, which follows from results such as the Hanson-Wright inequality~\cite{vershynin_2018}, we state the following lemma for estimating the mean vector:

\begin{lemma}[(First moment estimation~\cite{lugosi2017subgaussianestimatorsmeanrandom})]
\label{lem:first_moment_estimation}
Let $\varepsilon, \delta > 0$ denote the desired accuracy and failure probability, respectively. Consider a multivariate Gaussian distribution $\mathcal{N}(\mu, \Sigma)$, where $\mu \in \R^n$ is the mean vector and $\Sigma \in \R^{n \times n}$ is the covariance matrix. Suppose $X_1, X_2, \dots, X_N$ are independent and identically distributed (i.i.d.) samples drawn from $\mathcal{N}(\mu, \Sigma)$. Define the empirical mean estimator as:
\[
\hat{\mu}_N = \frac{1}{N} \sum_{i=1}^N X_i.
\]
Then, with probability at least $1 - \delta$, the following bound holds:
\begin{align}
    \|\hat{\mu}_N - \mu\|_2 &\leq \sqrt{\frac{\|\Sigma\|_1}{N}} + \sqrt{\frac{2 \|\Sigma\|_\infty \log\left( \frac{1}{\delta} \right)}{N}}.
\end{align}
\end{lemma}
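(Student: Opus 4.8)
The plan is to observe that the empirical mean is itself Gaussian and then split the target bound into a mean term and a fluctuation term, matching the two summands on the right-hand side. Concretely, since the $X_i$ are i.i.d.\ samples from $\mathcal{N}(\mu,\Sigma)$, the centered estimator $Y \coloneqq \hat{\mu}_N - \mu$ is distributed as $\mathcal{N}(0,\,\Sigma/N)$. Thus everything reduces to controlling the Euclidean norm $\|Y\|_2$ of a single centered Gaussian vector with covariance $\Sigma/N$, and the two terms in the bound will come from (i) the expected size of $\|Y\|_2$ and (ii) its deviation above the expectation.

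First I would bound the expectation. By Jensen's inequality and linearity of the trace,
\[
\E\,\|Y\|_2 \le \sqrt{\E\,\|Y\|_2^2} = \sqrt{\Tr(\Sigma/N)} = \sqrt{\tfrac{\|\Sigma\|_1}{N}}\,,
\]
where the last equality uses that $\Sigma$ is positive semi-definite, so its trace norm coincides with its trace. This produces exactly the first summand.

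For the fluctuation, I would invoke Gaussian concentration of Lipschitz functions (the Borell--TIS / Gaussian isoperimetric inequality). The map $y \mapsto \|y\|_2$ is $1$-Lipschitz, so for $Y \sim \mathcal{N}(0,C)$ with $C = \Sigma/N$ one has
\[
\Pr\!\left[\|Y\|_2 \ge \E\,\|Y\|_2 + t\right] \le \exp\!\left(-\tfrac{t^2}{2\|C\|_\infty}\right) = \exp\!\left(-\tfrac{N t^2}{2\|\Sigma\|_\infty}\right).
\]
Setting the right-hand side equal to $\delta$ gives $t = \sqrt{2\|\Sigma\|_\infty \log(1/\delta)/N}$, and combining with the expectation bound yields the claim with probability at least $1-\delta$. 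Alternatively, one can control $\|Y\|_2^2$ directly via the Hanson--Wright inequality for quadratic forms and then take square roots, which is the route suggested by the cited Eq.~(1.1).

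The main obstacle --- and the reason the two terms carry different spectral quantities --- is the anisotropy of $\Sigma$: the \emph{typical} size of $\|Y\|_2$ is governed by the full spectrum through $\Tr\Sigma = \|\Sigma\|_1$, whereas its \emph{tail} is governed only by the worst-case directional variance $\|\Sigma\|_\infty$. A naive coordinate-wise union bound would instead pick up a spurious dimension factor in the tail; the Gaussian Lipschitz concentration is precisely the tool that replaces this by the single spectral norm $\|\Sigma\|_\infty$, and getting this separation right is the crux of the argument.
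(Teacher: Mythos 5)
Your proof is correct. The paper itself does not prove this lemma: it simply imports it as Eq.~(1.1) of the cited work of Lugosi and Mendelson, remarking only that it follows from results such as the Hanson--Wright inequality. Your argument is a sound, self-contained reconstruction of the standard derivation of that inequality: $\hat{\mu}_N-\mu\sim\mathcal{N}(0,\Sigma/N)$, the first term comes from $\E\|Y\|_2\le\sqrt{\Tr(\Sigma)/N}=\sqrt{\|\Sigma\|_1/N}$ via Jensen (using $\Sigma\ge 0$ so that trace and trace norm coincide), and the second from Borell--TIS applied to the $1$-Lipschitz map $y\mapsto\|y\|_2$, whose relevant variance proxy is $\sup_{\|u\|_2=1}\mathrm{Var}(u^\intercal Y)=\|\Sigma\|_\infty/N$. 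Both steps check out, and your closing remark correctly identifies why the typical size is governed by $\|\Sigma\|_1$ while the tail is governed by $\|\Sigma\|_\infty$ --- which is exactly the content the paper relies on when it later combines this lemma with the covariance estimate. Nothing is missing.
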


For estimating the covariance matrix, we use the result of~\cite[Example 6.3]{Wainwright_2019} (also see~\cite[Page 31]{vershynin2011introductionnonasymptoticanalysisrandom}), which provides explicit sample complexity bound under the assumption of a zero-mean Gaussian distribution:
\begin{lemma}[(Covariance matrix estimation~\cite{Wainwright_2019})]
\label{lem:covariance_estimation}
Let $\varepsilon, \delta > 0$ denote the desired accuracy and failure probability, respectively. Consider a multivariate Gaussian distribution $\mathcal{N}(0, \Sigma)$, where $0 \in \R^n$ is the mean zero vector and $\Sigma \in \R^{n \times n}$ is the covariance matrix. Suppose $X_1, X_2, \dots, X_N$ are independent and identically distributed (i.i.d.) samples drawn from $\mathcal{N}(0, \Sigma)$. Define the empirical covariance matrix as:
\[
\hat{\Sigma}_N = \frac{1}{N} \sum_{i=1}^N X_i X_i^T.
\]
Then, with probability at least $1 - \delta$, the following bound holds:
\begin{align}
\|\hat{\Sigma}_N - \Sigma\|_\infty &\leq \| \Sigma\|_{\infty} \left( 2 \sqrt{\frac{n}{N}} + 2 \sqrt{\frac{2}{N} \ln\left(\frac{2}{\delta}\right)}  + \left( \sqrt{\frac{n}{N}} + \sqrt{\frac{2}{N} \log\left(\frac{1}{\delta}\right)} \right)^2 \right).
\end{align}
\end{lemma}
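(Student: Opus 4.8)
The plan is to reduce the anisotropic estimation problem to the isotropic one by \emph{whitening}, and then to control the residual deviation through the extreme singular values of a standard Gaussian matrix. Writing each sample as $X_i = \Sigma^{1/2} g_i$ with $g_i \sim \mathcal{N}(0,\id_n)$ i.i.d., the empirical covariance factorises as $\hat{\Sigma}_N = \Sigma^{1/2} \hat{G}_N \Sigma^{1/2}$, where $\hat{G}_N \coloneqq \frac1N \sum_{i=1}^N g_i g_i^\intercal$ is the empirical covariance of the isotropic sample. Submultiplicativity of the operator norm then gives
\begin{align}
\|\hat{\Sigma}_N - \Sigma\|_\infty = \big\|\Sigma^{1/2}(\hat{G}_N - \id_n)\Sigma^{1/2}\big\|_\infty \le \|\Sigma\|_\infty\, \|\hat{G}_N - \id_n\|_\infty\,,
\end{align}
so it suffices to bound $\|\hat{G}_N - \id_n\|_\infty$ by the bracketed expression, and the prefactor $\|\Sigma\|_\infty$ appears automatically.

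First I would arrange the $g_i$ as the columns of an $n\times N$ matrix $W$ with i.i.d.\ standard normal entries, so that $\hat{G}_N = \frac1N W W^\intercal$ has eigenvalues $s_j(W)^2/N$, where $s_j(W)$ are the singular values of $W$; hence $\|\hat{G}_N - \id_n\|_\infty = \max_j \big| s_j(W)^2/N - 1 \big|$, and the whole problem reduces to two-sided control of $s_{\max}(W)$ and $s_{\min}(W)$. For this I would invoke the Davidson--Szarek/Gordon expectation bounds $\E[s_{\max}(W)] \le \sqrt{N}+\sqrt{n}$ and $\E[s_{\min}(W)] \ge \sqrt{N}-\sqrt{n}$ (valid for $N\ge n$), together with the fact that $W\mapsto s_{\max}(W)$ and $W\mapsto s_{\min}(W)$ are $1$-Lipschitz in the entries; Gaussian concentration then yields $s_{\max}(W)\le \sqrt{N}+\sqrt{n}+t$ and $s_{\min}(W)\ge \sqrt{N}-\sqrt{n}-t$, each with probability at least $1-e^{-t^2/2}$. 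Choosing $t=\sqrt{2\ln(2/\delta)}$ and taking a union bound over the two tails gives both estimates simultaneously with probability at least $1-\delta$. Setting $u\coloneqq \sqrt{n/N}+\sqrt{(2/N)\ln(2/\delta)}$, one then has $s_{\max}^2/N\le(1+u)^2$ and $s_{\min}^2/N\ge(1-u)^2$, so that $\max_j|s_j^2/N-1|\le 2u+u^2$; expanding $2u+u^2$ reproduces the three terms of the claimed bound.

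The step I expect to be the main obstacle is the \emph{lower} tail, i.e.\ the estimate on $s_{\min}(W)$: the upper bound on $s_{\max}$ can be obtained by an elementary $\varepsilon$-net argument combined with Gaussian concentration, whereas the sharp lower bound $\E[s_{\min}(W)] \ge \sqrt{N}-\sqrt{n}$ genuinely requires Gordon's minimax comparison inequality for Gaussian processes and cannot be recovered by a naive net. A secondary, purely bookkeeping difficulty is matching the exact mixed form of the stated bound, in which the two linear terms carry $\ln(2/\delta)$ while the quadratic term carries $\log(1/\delta)$; this asymmetry stems from allocating the failure probability slightly unevenly between the upper and lower tails, and I would track the two deviation parameters separately rather than symmetrising them. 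Since all of this is standard non-asymptotic random matrix theory, in the write-up I would reproduce only the whitening reduction explicitly and invoke the packaged statement of~\cite{Wainwright_2019} for the isotropic concentration estimate.
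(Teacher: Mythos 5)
Your proposal is correct, but note that the paper does not actually prove this lemma: it is imported verbatim from~\cite[Example 6.3]{Wainwright_2019} (Theorem~6.1 there), and your whitening-plus-singular-value argument is precisely the standard proof behind that citation, so there is no divergence of approach to report. The only point worth flagging is the $\log(1/\delta)$ versus $\ln(2/\delta)$ asymmetry in the quadratic term: your clean symmetric allocation $t=\sqrt{2\ln(2/\delta)}$ yields $2u+u^2$ with $\ln(2/\delta)$ throughout, which is marginally \emph{weaker} than the literal statement (whose quadratic term carries the smaller $\log(1/\delta)$); your suggested fix of splitting the failure probability unevenly does not obviously recover the exact stated form either, since only the upper tail contributes the $+u^2$ term. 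This is almost certainly a transcription artefact in the lemma as quoted, and is immaterial downstream given the generous constants $C_1=2^4$, $C_2=2^{10}$ used in Lemma~\ref{le:classicalemma}.
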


Combining the two previous Lemmas, we derive explicit sample complexity bounds for estimating the covariance matrix of a multivariate Gaussian distribution without assuming that the mean vector is zero.
\begin{lemma}[(Estimation of covariance matrix and mean vector~\cite{vershynin_2018})]
\label{le:classicalemma}
Let $\varepsilon, \delta > 0$ denote the desired accuracy and failure probability, respectively. Consider a multivariate normal distribution $\mathcal{N}(\mu, \Sigma)$, where $\mu \in \R^n$ is the mean vector and $\Sigma \in \R^{n \times n}$ is the covariance matrix. Let $X_1, X_2, \dots, X_N$ be i.i.d. samples drawn from $\mathcal{N}(\mu, \Sigma)$. Assume that the sample size $N$ satisfies the condition
\begin{align}
    N \ge  \left(2^4+\frac{2^{10}}{n}\right) \frac{\|\Sigma\|^2_\infty}{\varepsilon^2}\left( n+\log\left(\frac{2}{\delta}\right)\right).
\end{align}
Let $\hat{\mu}_N = \frac{1}{N} \sum_{i=1}^N X_i$ be the sample mean, and let $\hat{\Sigma}_N$ be the sample covariance matrix defined as
\begin{align}
    \hat{\Sigma}_N = \frac{1}{N-1} \sum_{i=1}^N (X_i - \hat{\mu}_N)(X_i - \hat{\mu}_N)^T.
\end{align}
Then, with probability at least $1 - \delta$, the following bounds hold:
\begin{align}
    \|\hat{\Sigma}_N - \Sigma\|_\infty &\leq \varepsilon, \\
    \|\hat{\mu}_N - \mu\|_2 &\leq \frac{\varepsilon}{2 \| \Sigma\|^{1/2}_{\infty}}.
\end{align}
\end{lemma}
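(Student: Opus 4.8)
The plan is to obtain Lemma~\ref{le:classicalemma} by invoking the two preceding lemmas and combining them through a union bound, after first reducing the mean-centered covariance estimator to the zero-mean setting required by Lemma~\ref{lem:covariance_estimation}.

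The key preparatory step concerns the covariance bound. Lemma~\ref{lem:covariance_estimation} is stated for the uncentered estimator $\frac1N\sum_i X_iX_i^\intercal$ of a \emph{zero-mean} Gaussian, whereas here we use the mean-centered, Bessel-corrected estimator $\hat\Sigma_N=\frac{1}{N-1}\sum_i(X_i-\hat\mu_N)(X_i-\hat\mu_N)^\intercal$. To bridge this gap I would write $X_i=\mu+Z_i$ with $Z_i\sim\mathcal N(0,\Sigma)$ i.i.d., so that centering removes $\mu$ entirely and $X_i-\hat\mu_N=Z_i-\bar Z_N$. I would then use the classical fact (Cochran's theorem, realised concretely by multiplying the data matrix by a Helmert orthogonal matrix) that $\hat\Sigma_N$ is distributed exactly as $\frac{1}{N-1}\sum_{j=1}^{N-1}Y_jY_j^\intercal$ with $Y_j\sim\mathcal N(0,\Sigma)$ i.i.d.; equivalently, $(N-1)\hat\Sigma_N$ is Wishart with $N-1$ degrees of freedom. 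Applying Lemma~\ref{lem:covariance_estimation} to these $N-1$ effective samples with failure probability $\delta/2$ then yields a high-probability bound on $\|\hat\Sigma_N-\Sigma\|_\infty$ in terms of $n$, $\log(2/\delta)$, and $N-1$.

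For the mean, I would apply Lemma~\ref{lem:first_moment_estimation} directly with failure probability $\delta/2$, and then bound the nuclear norm by the operator norm via $\|\Sigma\|_1=\Tr\Sigma\le n\|\Sigma\|_\infty$ (using $\Sigma\succeq0$), so that the resulting estimate on $\|\hat\mu_N-\mu\|_2$ depends on $\Sigma$ only through $\|\Sigma\|_\infty$. A union bound over the two failure events, each of probability at most $\delta/2$, guarantees that both estimates hold simultaneously with probability at least $1-\delta$.

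It then remains to check that the stated sample size makes both right-hand sides small enough. This is the routine part: using $N-1\ge N/2$ and the elementary inequality $\sqrt a+\sqrt b\le\sqrt{2(a+b)}$, one reduces both bounds to multiples of $\|\Sigma\|_\infty\sqrt{(n+\log(2/\delta))/N}$ (the quadratic term in the covariance bound being controlled by the same quantity once $N$ is large), and the explicit constants $2^4$ and $2^{10}/n$ in the hypothesis are precisely what is needed to force the covariance bound below $\varepsilon$ and the mean bound below $\varepsilon/(2\|\Sigma\|_\infty^{1/2})$. I expect the only genuinely non-mechanical ingredient to be the distributional reduction of the centered estimator to a Wishart matrix with $N-1$ degrees of freedom; everything after that is a union bound followed by constant-chasing.
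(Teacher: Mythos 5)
Your proposal is correct, but for the covariance estimate it takes a genuinely different route from the paper. The paper handles the mismatch between the mean-centered estimator $\hat{\Sigma}_N$ and the zero-mean Lemma~\ref{lem:covariance_estimation} by a triangle inequality $\|\hat{\Sigma}_N-\Sigma\|_\infty\le\|\hat{\Sigma}_N-\hat{W}_N\|_\infty+\|\hat{W}_N-\Sigma\|_\infty$, where $\hat{W}_N\coloneqq\frac1N\sum_i(X_i-\mu)(X_i-\mu)^\intercal$ is centered at the \emph{true} mean; the second term is controlled by Lemma~\ref{lem:covariance_estimation} applied to $X_i-\mu\sim\mathcal N(0,\Sigma)$, and the first by the exact algebraic identity $\hat{\Sigma}_N-\hat{W}_N=\frac{1}{1-1/N}\bigl(\frac{\hat{W}_N}{N}-(\hat\mu_N-\mu)(\hat\mu_N-\mu)^\intercal\bigr)$, whose norm is then bounded using the already-established estimates on $\|\hat{W}_N-\Sigma\|_\infty$ and $\|\hat\mu_N-\mu\|_2$. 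You instead invoke the Helmert/Cochran reduction to show that $(N-1)\hat{\Sigma}_N$ is exactly a sum of $N-1$ i.i.d.\ rank-one terms $Y_jY_j^\intercal$ with $Y_j\sim\mathcal N(0,\Sigma)$, and apply Lemma~\ref{lem:covariance_estimation} directly with $N-1$ effective samples. Both are valid. Your route is cleaner in that it eliminates the extra error term entirely (it even gives independence of $\hat\mu_N$ and $\hat{\Sigma}_N$, though the union bound does not need it), at the price of relying on an exact distributional fact specific to Gaussian data; the paper's decomposition is purely algebraic and would survive for non-Gaussian samples once the two concentration inputs are available. The mean estimate (Lemma~\ref{lem:first_moment_estimation} with $\|\Sigma\|_1\le n\|\Sigma\|_\infty$ and failure probability $\delta/2$) and the final union bound are identical in both arguments. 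The only things to make explicit in your write-up are the replacement $N\mapsto N-1$ in the concentration bound, which costs at most a factor $\sqrt{N/(N-1)}\le\sqrt2$ and is comfortably absorbed by the constants $2^4$ and $2^{10}/n$, and the same normalization the paper itself uses in its constant-chasing (it invokes $\|\Sigma\|_\infty\ge1$ to tame the quadratic term), which is an assumption inherited from the intended application rather than from your reduction.
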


\begin{proof}
We first consider
\begin{align}
    \hat{W}_N \coloneqq \frac{1}{N} \sum_{i=1}^N (X_i - \mu)(X_i - \mu)^T.
\end{align}
We have that:
\begin{align}
\label{eq:bound0}
    \|\hat{\Sigma}_N - \Sigma\|_\infty &\leq \|\hat{\Sigma}_N - \hat{W}_N\|_\infty + \|\hat{W}_N - \Sigma\|_\infty.
\end{align}
We now analyse the second term \(\|\hat{W}_N - \Sigma\|_\infty\). Note that the random variables \(X_i - \mu\) appearing in \(\hat{W}_N\) are distributed as \(\mathcal{N}(0, \Sigma)\). Consequently, \(\hat{W}_N\) serves as an unbiased estimator for the covariance matrix \(\Sigma\), according to Lemma~\ref{lem:covariance_estimation}. Specifically, if 
\begin{align}
    N \ge  \left(C_1+\frac{C_2}{n}\right) \frac{\|\Sigma\|^2_\infty}{\varepsilon^2}\left( n+\log\left(\frac{2}{\delta}\right)\right)
\end{align}
where $C_1, C_2\ge1$ are constants to be fixed later, then, with probability at least \(1 - \delta/2\), we have
\begin{align}
\label{eq:bound1}
\|\hat{W}_N - \Sigma\|_\infty &\leq \| \Sigma\|_{\infty} \left( 2 \sqrt{\frac{n}{N}} + 2 \sqrt{\frac{2}{N} \log\left(\frac{2}{\delta}\right)}  + \left( \sqrt{\frac{n}{N}} + \sqrt{\frac{2}{N} \log\left(\frac{2}{\delta}\right)} \right)^2 \right)\nonumber\\
&\leq \left(2 \sqrt{\frac{1}{C_1}} + 2 \sqrt{\frac{2}{C_1+C_2}}  + \left( \sqrt{\frac{1}{C_1} } + \sqrt{\frac{2}{C_1+C_2} } \right)^2\right) \varepsilon,
\end{align}
where we used that $\| \Sigma\|_{\infty}\ge 1$.
Before analysing the first term \(\|\hat{\Sigma}_N - \hat{W}_N\|_\infty\), we consider the first moment. Using Lemma~\ref{lem:first_moment_estimation}, we have the following bound that holds with probability $\ge 1 - \delta/2$:
\begin{align}
\label{eq:estfirst}
    \|\hat{\mu}_N - \mu\|_2 &\leq \sqrt{\frac{\|\Sigma\|_1}{N}} + \sqrt{\frac{2 \|\Sigma\|_\infty \log\left( \frac{2}{\delta} \right)}{N}} \leq \left(\sqrt{\frac{2}{C_1}}+\sqrt{\frac{2}{C_1+C_2}}\right)\frac{\varepsilon}{\| \Sigma\|^{1/2}_{\infty}}.
\end{align}

By the union bound, with probability at least $1 - \delta$, both bounds hold simultaneously for the covariance matrix and the mean vector.

Finally, we consider the difference $\hat{\Sigma}_N - \hat{W}_N$. Note that this can be written as
\begin{align}
    \hat{\Sigma}_N - \hat{W}_N &= \frac{1}{1 - \frac{1}{N}} \left( \frac{\hat{W}_N}{N} - (\hat{\mu}_N - \mu)(\hat{\mu}_N - \mu)^T \right)\nonumber\\
    &= \frac{1}{1 - \frac{1}{N}} \left( \frac{\hat{W}_N - \Sigma}{N} + \frac{\Sigma}{N} - (\hat{\mu}_N - \mu)(\hat{\mu}_N - \mu)^T \right).
\end{align}
Taking the operator norms on both sides and applying the triangle inequality, we obtain:
\begin{align}
\label{eq:bound2}
    \|\hat{\Sigma}_N - \hat{W}_N\|_\infty &= \frac{1}{1 - \frac{1}{N}} \left( \frac{\|\hat{W}_N - \Sigma\|_\infty}{N} + \frac{\|\Sigma\|_\infty}{N} + \|(\hat{\mu}_N - \mu)(\hat{\mu}_N - \mu)^T \|_{\infty} \right)\nonumber \\
    &\leq 2\frac{\|\hat{W}_N - \Sigma\|_\infty}{N} + 2\frac{\|\Sigma\|_\infty}{N} + 2\|\hat{\mu}_N - \mu\|_2^2\nonumber\\
    &\leq 2\left(2 \sqrt{\frac{1}{C_1}} +  2 \sqrt{\frac{2}{C_1+C_2}}  + \left(\sqrt{\frac{1}{C_1} } + \sqrt{\frac{2}{C_1+C_2} } \right)^2\right)\frac{\varepsilon^3}{C_1+C_2}\nonumber\\&+2\frac{\varepsilon^2}{C_1+C_2} + 2\left(\sqrt{\frac{2}{C_1}}+\sqrt{\frac{2}{C_1+C_2}}\right)^2\varepsilon^2.
\end{align}
where we have applied the bound from Eq.~\eqref{eq:bound1} and Eq.~\eqref{eq:estfirst}.

Combining the bounds in Eqs.~\eqref{eq:bound0},~\eqref{eq:bound1}, and~\eqref{eq:bound2}, and by choosing $C_1=2^4, C_2=2^{10}$, we conclude the proof of the lemma.
\end{proof}
As an application of the previous lemma, we derive explicit sample complexity bounds for estimating the covariance matrix of an unknown Gaussian state using heterodyne measurements. 

\begin{thm}[(Explicit sample complexity bound for estimating the first moment and covariance matrix of a Gaussian state)]
\label{thm:covariance_est}
Let $\varepsilon, \delta \in (0,1)$ denote the desired accuracy and failure probability, respectively. 
Let \(\rho\) be an unknown \(n\)-mode Gaussian state, with \(\mathbf{m}\) and \(V\) denoting the first moment and covariance matrix of \(\rho\), respectively. Let \(N\) be the number of copies of the unknown state, which we assume to satisfy the condition
    \[
    N \ge  \left(2^{8}+\frac{2^{13}}{n}\right) \frac{\|V\|^2_\infty}{\varepsilon^2}\left( 2n+\log\left(\frac{2}{\delta}\right)\right).
    \]
    Then, heterodyne measurements performed on each of these \(N\) copies of the unknown state \(\rho\) suffice to construct a (valid) covariance matrix \(\tilde{V}\) and a vector $\hat{\mathbb{m}}$ such that, with probability at least \(1 - \delta\), the following bounds hold:
    \begin{align}
        \|\tilde{V} - V\|_\infty &\leq \varepsilon, \\
        \|\tilde{\mathbf{m}} - \mathbf{m}\|_2 &\leq \frac{\varepsilon}{4\|V\|_\infty^{1/2}}.
    \end{align}
\end{thm}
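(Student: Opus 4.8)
The plan is to reduce the quantum estimation task to the classical Gaussian estimation problem already solved in Lemma~\ref{le:classicalemma}, solve it there, and then convert the classical estimates back into quantum ones. The entry point is Eq.~\eqref{husimi_gaussian_state}: a heterodyne measurement on the Gaussian state $\rho$ yields an outcome distributed according to the Husimi function $Q_\rho$, which is exactly the multivariate Gaussian $\mathcal{N}\!\left(\mathbf{m},\frac{V+\mathbb{1}}{2}\right)$ over $\R^{2n}$. Hence the $N$ heterodyne outcomes are i.i.d.\ samples from a classical Gaussian with mean $\mu=\mathbf{m}$ and covariance $\Sigma \coloneqq \frac{V+\mathbb{1}}{2}$, and I would feed them directly into the empirical estimators $\hat{\mu}_N,\hat{\Sigma}_N$ of Lemma~\ref{le:classicalemma}, taking the ambient dimension there to be $2n$.

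Next I would invoke Lemma~\ref{le:classicalemma} with target accuracy $\varepsilon' \coloneqq \varepsilon/4$. The crucial numerical check is that the theorem's hypothesis on $N$ is strong enough to meet the lemma's requirement. Using the standard fact that every covariance matrix satisfies $\|V\|_\infty\ge 1$ (since all symplectic eigenvalues are $\ge 1$, forcing $\det V\ge 1$ and hence $\lambda_{\max}(V)\ge 1$), one has $\|\Sigma\|_\infty = \tfrac{\|V\|_\infty+1}{2}\le \|V\|_\infty$. Substituting $\varepsilon'=\varepsilon/4$ and dimension $2n$ turns the lemma's condition into $N \ge (2^8 + 2^{13}/n)\,\|\Sigma\|_\infty^2\,\varepsilon^{-2}(2n+\log(2/\delta))$, which is implied by the stated bound because $\|\Sigma\|_\infty^2\le\|V\|_\infty^2$; the factor $16=(\varepsilon/\varepsilon')^2$ is precisely what upgrades the constants $(2^4,2^{10})$ to $(2^8,2^{13})$. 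Thus, with probability at least $1-\delta$, I obtain $\|\hat{\Sigma}_N-\Sigma\|_\infty\le\varepsilon/4$ together with $\|\hat{\mu}_N-\mathbf{m}\|_2\le \frac{\varepsilon/4}{2\|\Sigma\|_\infty^{1/2}}$.

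From here I invert the affine relation $\Sigma=\frac{V+\mathbb{1}}{2}$, defining the raw covariance estimate $\hat{V}\coloneqq 2\hat{\Sigma}_N-\mathbb{1}$ and the first-moment estimate $\tilde{\mathbf{m}}\coloneqq\hat{\mu}_N$. Since $\hat{V}-V=2(\hat{\Sigma}_N-\Sigma)$, the covariance error is $\|\hat{V}-V\|_\infty\le\varepsilon/2$, while for the mean the bound $\|\Sigma\|_\infty\ge\|V\|_\infty/2$ gives $\|\tilde{\mathbf{m}}-\mathbf{m}\|_2\le \frac{\varepsilon}{8\|\Sigma\|_\infty^{1/2}}\le \frac{\varepsilon}{4\sqrt{2}\,\|V\|_\infty^{1/2}}\le \frac{\varepsilon}{4\|V\|_\infty^{1/2}}$, as required.

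The one genuinely non-routine step --- and the reason the analysis is run with the slack $\varepsilon'=\varepsilon/4$ rather than $\varepsilon/2$ --- is ensuring that the output is a \emph{valid} covariance matrix, i.e.\ one obeying the uncertainty relation $\tilde{V}+i\Omega\ge 0$, which $\hat{V}$ need not satisfy. I would resolve this with an explicit one-sided correction: set $\tilde{V}\coloneqq\hat{V}+\tfrac{\varepsilon}{2}\mathbb{1}$. From $\|\hat{V}-V\|_\infty\le\varepsilon/2$ we get $\hat{V}\ge V-\tfrac{\varepsilon}{2}\mathbb{1}$, hence $\tilde{V}\ge V$ and therefore $\tilde{V}+i\Omega\ge V+i\Omega\ge 0$, so $\tilde{V}$ is admissible; meanwhile the triangle inequality yields $\|\tilde{V}-V\|_\infty\le \varepsilon/2+\varepsilon/2=\varepsilon$, closing the argument. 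I expect the main subtlety to lie exactly in this validity-restoring step, together with the careful bookkeeping of constant factors and of the distinction between $\|\Sigma\|_\infty$ and $\|V\|_\infty$; everything else is a direct application of the classical sampling lemma.
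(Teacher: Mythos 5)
Your proposal is correct and follows essentially the same route as the paper's proof: reduce to the classical Gaussian $\mathcal{N}\big(\mathbf{m},\tfrac{V+\mathbb{1}}{2}\big)$ via the Husimi function, apply Lemma~\ref{le:classicalemma} in dimension $2n$ with accuracy $\varepsilon/4$ (which accounts exactly for the constants $(2^8,2^{13}/n)$), invert via $\hat{V}=2\hat{\Sigma}_N-\mathbb{1}$, and restore validity with the shift $\tilde{V}=\hat{V}+\tfrac{\varepsilon}{2}\mathbb{1}$. Your validity argument ($\tilde{V}\ge V$ hence $\tilde{V}+i\Omega\ge V+i\Omega\ge 0$) is a slightly cleaner phrasing of the paper's equivalent chain of inequalities, but the content is identical.
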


\begin{proof}
    By performing heterodyne measurements on the Gaussian state \(\rho\), we sample from the classical probability distribution
    \begin{equation}
        \mathcal{N}\left[\mathbf{m}(\rho), \frac{V + \mathbb{1}}{2}\right](\mathbf{r}).
    \end{equation}
    Using Lemma~\ref{le:classicalemma}, we assume that the number of copies \(N\) satisfies
    \begin{equation}
        N \geq \left(2^{8}+\frac{2^{13}}{n}\right) \frac{\|V\|_\infty^2}{\varepsilon^2} \left( 2n + \log\left(\frac{2}{\delta}\right)\right)\geq \left(2^{4}+\frac{2^{10}}{2n}\right) \frac{\left\|\frac{V + \mathbb{1}}{2}\right\|_\infty^2}{\left(\frac{\varepsilon}{4}\right)^2} \left( 2n + \log\left(\frac{2}{\delta}\right)\right),
    \end{equation}
    which allows us to construct estimators \(\hat{\Sigma}\) and \(\tilde{\mathbf{m}}\) such that, with probability at least \(1 - \delta\), the following bounds hold:
    \begin{align}
        \|\hat{\Sigma} - \frac{V + \mathbb{1}}{2}\|_\infty &\leq \frac{\varepsilon}{4}, \\
        \|\tilde{\mathbf{m}} - \mathbf{m}\|_2 &\leq \frac{1}{2\left\|\frac{V + \mathbb{1}}{2}\right\|_\infty^{1/2}} \cdot \frac{\varepsilon}{4} \leq \frac{\varepsilon}{4 \|V\|_\infty^{1/2}}.
    \end{align}
    Consequently, we define the estimator for the covariance matrix as
    \begin{equation}
        \hat{V} = 2 \hat{\Sigma} - \mathbb{1}.
    \end{equation}
    From this, we deduce that
    \begin{equation}
        \|\hat{V} - V\|_\infty \leq \frac{\varepsilon}{2}.
    \end{equation}

    However, \(\hat{V}\) may not satisfy the uncertainty relation, and thus may not be a proper covariance matrix. To ensure that \(\tilde{V}\) is a proper covariance matrix, we define
    \begin{equation}
        \tilde{V} = \hat{V} + \frac{\varepsilon}{2} \mathbb{1},
    \end{equation}
    so that by the triangle inequality, we have
    \begin{equation}
        \|\tilde{V} - V\|_\infty \leq \varepsilon.
    \end{equation}

    Next, we show that \(\tilde{V}\) satisfies the uncertainty relation \(\tilde{V} + i \Omega_n \geq 0\). We proceed as follows:
    \begin{align}
        \tilde{V} + i \Omega_n &= \hat{V} + i \Omega_n + \frac{\varepsilon}{2} \mathbb{1}, \\
        &\geq \hat{V} - V + \frac{\varepsilon}{2} \mathbb{1}, \\
        &\geq -\|\hat{V} - V\|_\infty \mathbb{1} + \frac{\varepsilon}{2} \mathbb{1}, \\
        &\geq 0.
    \end{align}
    In the first inequality, we used the fact that \(V + i \Omega_n \geq 0\). In the second inequality, we applied the result that for any operator \(\Theta\), we have \(\Theta \geq -\|\Theta\|_\infty \mathbb{1}\). Therefore, \(\tilde{V}\) satisfies the uncertainty relation and is a proper covariance matrix, completing the proof.
\end{proof}
\begin{remark}
    Note that the sample complexity bound in the previous theorem requires knowledge of an upper bound on the covariance matrix \(\|V\|_{\infty}\) of the unknown Gaussian state \(\rho\) in order to predict the number of measurements needed to accurately estimate the first moment and covariance matrix of the unknown state. However, if we assume that the unknown state \(\rho\) satisfies an energy constraint, i.e., 
    \[
    \Tr[\rho \hat{E}] \leq E, \quad \text{where} \quad \hat{E} \coloneqq \frac{1}{2} \hat{\mathbf{R}}^\intercal \hat{\mathbf{R}},
    \]
    denotes the energy operator~\cite{BUCCO}, then we have the bound
    \[
    \|V\|_{\infty} \leq 4E.
    \]
\end{remark}
This follows because
\begin{equation}
\label{eq:boundencov}
    \|V\|_{\infty} \leq \Tr V = 4 \Tr[\hat{E} \rho] - 2 \|\mathbf{m}\|_2^2 \leq 4E.
\end{equation}

By leveraging the trace-distance bounds introduced earlier, we can derive the following result:

\begin{thm}[(Upper bound on the sample complexity of tomography of Gaussian states)]
\label{th:appmain2}
Let $\varepsilon, \delta \in (0,1)$ denote the desired accuracy and failure probability, respectively. 
Let \(\rho\) be an unknown \(n\)-mode Gaussian state with an unknown covariance matrix \(V\) and first moment \(\mathbf{m}\). A number 
\begin{equation}
    N \geq \left(2^{9}+\frac{2^{14}}{n}\right)  \frac{\|V\|_1^2 \|V\|_\infty^2}{\varepsilon^2} \left( 2n + \log\left(\frac{2}{\delta}\right) \right)
\end{equation}
of copies of \(\rho\) suffices to construct a classical description of a Gaussian state \(\tilde{\rho}\) such that
\begin{equation}
    \Pr\left( \frac{1}{2} \|\tilde{\rho} - \rho\|_1 \leq \varepsilon \right) \geq 1 - \delta.
\end{equation}
\end{thm}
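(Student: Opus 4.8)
The plan is to reduce quantum tomography to the classical estimation of the Gaussian distribution produced by heterodyne detection, and then convert the resulting moment errors into a trace-distance error via the tight bound of Theorem~\ref{thm_sm}. Concretely, I would first perform heterodyne measurements on the $N$ available copies, which by Eq.~\eqref{husimi_gaussian_state} amounts to drawing $N$ i.i.d.\ samples from $\mathcal{N}(\mathbf{m},(V+\mathbb{1})/2)$, and use them to produce a valid covariance matrix $\tilde{V}$ and a vector $\tilde{\mathbf{m}}$ through Theorem~\ref{thm:covariance_est}. The output of the tomography algorithm is the Gaussian state $\tilde{\rho}\coloneqq\rho(\tilde{V},\tilde{\mathbf{m}})$, which is well defined precisely because Theorem~\ref{thm:covariance_est} guarantees that $\tilde{V}$ satisfies the uncertainty relation and is hence a legitimate covariance matrix.

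The crucial choice is which form of the trace-distance bound to invoke. Since the estimation error on the covariance matrix is controlled in operator norm, I would use the second inequality in~\eqref{eq_smsmsm}, namely
\[
\tfrac12\|\rho(V,\mathbf{m})-\rho(\tilde{V},\tilde{\mathbf{m}})\|_1\le \tfrac{1+\sqrt{3}}{8}\max(\Tr V,\Tr\tilde{V})\,\|V-\tilde{V}\|_\infty+\sqrt{\tfrac{\min(\|V\|_\infty,\|\tilde{V}\|_\infty)}{2}}\,\|\mathbf{m}-\tilde{\mathbf{m}}\|_2,
\]
recalling that $\Tr V=\|V\|_1$ since $V\succeq0$. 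I would then run Theorem~\ref{thm:covariance_est} at the target accuracy $\varepsilon'\coloneqq\varepsilon/(\sqrt{2}\,\|V\|_1)$: substituting $1/(\varepsilon')^2=2\|V\|_1^2/\varepsilon^2$ into the copy count of Theorem~\ref{thm:covariance_est} reproduces exactly the stated threshold $N\ge(2^9+2^{14}/n)\,\|V\|_1^2\|V\|_\infty^2\,\varepsilon^{-2}(2n+\log(2/\delta))$, the extra factor $2$ in the constants ($2^9=2\cdot2^8$, $2^{14}=2\cdot2^{13}$) being absorbed by this rescaling. On the success event, of probability at least $1-\delta$, one then has $\|\tilde{V}-V\|_\infty\le\varepsilon'$ and $\|\tilde{\mathbf{m}}-\mathbf{m}\|_2\le\varepsilon'/(4\|V\|_\infty^{1/2})$.

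It then remains to feed these two estimates into the displayed inequality. For the first-moment term, bounding $\min(\|V\|_\infty,\|\tilde{V}\|_\infty)\le\|V\|_\infty$ yields a contribution at most $\varepsilon'/(4\sqrt{2})=\varepsilon/(8\|V\|_1)\le\varepsilon/16$. For the covariance term the only subtlety is controlling the prefactor $\max(\|V\|_1,\|\tilde{V}\|_1)$: using $\|\tilde{V}\|_1\le\|V\|_1+2n\|\tilde{V}-V\|_\infty\le\|V\|_1+2n\varepsilon'$ together with the a priori inequality $\|V\|_1=\Tr V\ge\Tr D\ge 2n$ (each symplectic eigenvalue is at least $1$, and $\Tr V\ge\Tr D$ as already invoked in the proof of Lemma~\ref{lemma_K_pos}) shows $\max(\|V\|_1,\|\tilde{V}\|_1)\le 2\|V\|_1$, so this term is at most $\tfrac{1+\sqrt{3}}{4\sqrt{2}}\,\varepsilon<\tfrac12\varepsilon$. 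Summing the two contributions gives $\tfrac12\|\tilde{\rho}-\rho\|_1<\varepsilon$ on the success event, which is the claim.

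The main obstacle I anticipate is not any single hard inequality but the bookkeeping around the self-referential appearance of $\tilde{V}$ in the prefactor and the fact that the prescribed $N$ depends on the unknown quantities $\|V\|_1$ and $\|V\|_\infty$; both are handled by the crude but robust estimate $\|V\|_1\ge 2n$ and by the standard convention that sample-complexity statements are phrased in terms of (unknown) state parameters, with the energy-constrained version of Theorem~\ref{th:main2}---replacing $\|V\|_1\|V\|_\infty$ by $O(E^2)$ via $\|V\|_\infty\le\Tr V\le 4E$, cf.~\eqref{eq:boundencov}---furnishing an a priori computable threshold. A secondary point to check carefully is that the numerical constants align so that the two error terms sum to strictly less than $\varepsilon$ rather than merely $O(\varepsilon)$.
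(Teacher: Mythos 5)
Your proposal is correct and follows essentially the same route as the paper's proof: heterodyne sampling plus Theorem~\ref{thm:covariance_est} at accuracy $\varepsilon'=\varepsilon/(\sqrt{2}\,\|V\|_1)$, fed into the second inequality of~\eqref{eq_smsmsm}, with the rescaling of $\varepsilon'$ producing exactly the stated constants $2^9$ and $2^{14}$. Your handling of the prefactor $\max(\|V\|_1,\|\tilde{V}\|_1)$ via $\|\tilde{V}\|_1\le\|V\|_1+2n\|\tilde{V}-V\|_\infty$ together with $\|V\|_1\ge 2n$ is in fact slightly more careful than the paper's own step, which bounds it by $\|V\|_1+\varepsilon'$.
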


\begin{proof}
Using Theorem~\ref{thm:covariance_est}, a number of copies 
\begin{equation}
    N \geq \left(2^{8}+\frac{2^{13}}{n}\right)  \frac{\|V\|_\infty^2}{\varepsilon'^2} \left( 2n + \log\left(\frac{2}{\delta}\right) \right)
\end{equation}
suffices to construct estimates \(\tilde{\mathbf{m}}\) and \(\tilde{V}\), with the following guarantees (with probability at least \(1-\delta\)):
\bb
    \|\tilde{V} - V\|_\infty &\leq \varepsilon', \\
    \|\tilde{\mathbf{m}} - \mathbf{m}\|_2 &\leq \frac{\varepsilon'}{4 \|V\|_\infty^{1/2}}.
\ee
Define \(\tilde{\rho}\) as the Gaussian state with first moment \(\tilde{\mathbf{m}}\) and covariance matrix \(\tilde{V}\). Substituting these bounds into Eq.~\eqref{eq_smsmsm}, we obtain:
\begin{align}
    \frac{1}{2}\|\rho - \tilde{\rho}\|_1 &\leq 
    \frac{1 + \sqrt{3}}{8} \max(\|V\|_1, \|\tilde{V}\|_1) \|V - \tilde{V}\|_\infty 
    + \sqrt{\frac{\min(\|V\|_\infty, \|\tilde{V}\|_\infty)}{2}} \|\mathbf{m} - \tilde{\mathbf{m}}\|_2 \nonumber \\
    &\leq \frac{\varepsilon'}{2} \max(\|V\|_1, \|\tilde{V}\|_1) 
    +\frac{ \varepsilon'}{4\sqrt{2}} \nonumber \\
    &\leq \frac{\varepsilon'}{2} \|V\|_1+ \frac{\varepsilon'^2}{2}   + \frac{\varepsilon'}{4\sqrt{2}}.
\end{align}

Setting \(\varepsilon' = \frac{\varepsilon}{\sqrt{2} \|V\|_1}\) ensures that $\frac{1}{2}\|\rho - \tilde{\rho}\|_1 \leq \varepsilon$ holds with probability at least \(1-\delta\). 
\end{proof}

To address the energy-constrained case $\Tr[\rho \hat{E}]\le E$, note the following bound (as in Eq.\eqref{eq:boundencov}:
\begin{equation}
    \|V\|_\infty \leq \|V\|_1 \leq 4E.
\end{equation}
This implies the following corollary:

\begin{cor}[(Upper bound on the sample complexity for energy-constrained states)]
\label{cor:energy_constraint}
Let $\varepsilon, \delta \in (0,1)$ denote the desired accuracy and failure probability, respectively. 
Let \(\rho\) be an unknown \(n\)-mode Gaussian state satisfying the energy constraint \(\Tr[\rho \hat{E}] \leq E\). A number 
\begin{equation}
    N \geq \left(2^{15}+\frac{2^{22}}{n}\right) \frac{E^4}{\varepsilon^2} \left( 2n + \log\left(\frac{2}{\delta}\right) \right)
\end{equation}
of copies of \(\rho\) suffices to construct a classical description of a Gaussian state \(\tilde{\rho}\) such that
\begin{equation}
    \Pr\left( \frac{1}{2}\|\tilde{\rho} - \rho\|_1 \leq \varepsilon \right) \geq 1 - \delta.
\end{equation}
\end{cor}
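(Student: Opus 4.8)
The plan is to obtain the corollary as an immediate specialisation of the general tomography bound in Theorem~\ref{th:appmain2} to the energy-constrained setting. That theorem already guarantees that $N \ge \left(2^9 + \tfrac{2^{14}}{n}\right)\frac{\|V\|_1^2\,\|V\|_\infty^2}{\varepsilon^2}\left(2n + \log(2/\delta)\right)$ copies, processed by heterodyne detection, suffice to output a Gaussian estimate $\tilde\rho$ that is $\varepsilon$-close in trace distance with probability at least $1-\delta$. The only remaining work is to replace the state-dependent spectral quantities $\|V\|_1$ and $\|V\|_\infty$ by a uniform, a priori bound coming from the energy constraint.

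First I would use positivity of the covariance matrix: since $V \ge 0$, the operator norm is dominated by the trace norm, $\|V\|_\infty \le \|V\|_1 = \Tr V$. Next I would convert the energy constraint into a bound on $\Tr V$ via the identity already recorded in Eq.~\eqref{eq:boundencov}, namely $\Tr V = 4\,\Tr[\hat{E}\rho] - 2\|\mathbf{m}\|_2^2 \le 4E$, where the inequality discards the non-negative first-moment contribution and invokes $\Tr[\hat{E}\rho]\le E$. Combining the two gives the uniform estimate $\|V\|_\infty \le \|V\|_1 \le 4E$, valid for every Gaussian state meeting the energy constraint.

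Substituting $\|V\|_1^2\,\|V\|_\infty^2 \le (4E)^4 = 2^8 E^4$ into the threshold of Theorem~\ref{th:appmain2} and absorbing the factor $2^8$ into the numerical prefactor yields a sufficient sample count of exactly the announced form $\left(c_1 + \tfrac{c_2}{n}\right)\frac{E^4}{\varepsilon^2}\left(2n+\log(2/\delta)\right)$, completing the proof. The argument is conceptually routine, with no genuine analytic obstacle — only the bookkeeping of the explicit constants. The single point worth flagging is conceptual rather than technical: Theorem~\ref{th:appmain2} phrases its requirement in terms of the unknown spectral data of $V$, so an experimenter cannot directly read off how many copies to collect; the energy constraint is precisely what removes this circularity, furnishing the computable bound $4E$ that lets one fix $N$ in advance.
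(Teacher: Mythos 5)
Your proposal is correct and follows exactly the paper's route: substitute the energy bound $\|V\|_\infty \le \|V\|_1 = \Tr V = 4\Tr[\hat{E}\rho] - 2\|\mathbf{m}\|_2^2 \le 4E$ from Eq.~\eqref{eq:boundencov} into Theorem~\ref{th:appmain2}. One bookkeeping remark: carrying the constants through gives $2^9\cdot 2^8 = 2^{17}$ for the leading prefactor rather than the stated $2^{15}$ (the $2^{22}/n$ term does match), but this discrepancy sits in the paper's own statement and does not affect the validity of your argument.
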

The constant factors in our sample-complexity bounds are likely overestimated due to the conservative approximations employed in the proofs to ensure simplicity and generality across all parameter regimes. In practice, one can anticipate considerably better constants.
 
\end{document}